\newenvironment{proof}[1][] {\noindent{\bf Proof#1.}\hspace{0.75em}}
	       {\hspace{\fill}$\blacksquare$\vspace{0.3cm}}
\newtheorem{definition}{Definition}[chapter]
\newtheorem{proposition}{Proposition}[chapter]
\newtheorem{theorem}{Theorem}[chapter]
\newtheorem{lemma}{Lemma}[chapter]
\newtheorem{corollary}{Corollary}[chapter]
\newtheorem{claim}{Claim}[chapter]
\newcommand{\clearemptydoublepage}{
  \newpage{\pagestyle{empty}\cleardoublepage}}
\newcommand{\Index}[1]{\index{#1}#1}
\newcommand{\Pl}{{\sf P}}
\newcommand{\dP}{{\sf P}'}
\newcommand{\hP}{\hat{\sf P}}
\newcommand{\dhP}{\hat{\sf P}'}
\newcommand{\A}{{\sf A}}
\newcommand{\dA}{{\sf A}'}
\newcommand{\hA}{\hat{\sf A}}
\newcommand{\dhA}{\hat{\sf A}'}
\newcommand{\B}{{\sf B}}
\newcommand{\dB}{{\sf B}'}
\newcommand{\hB}{\hat{\sf B}}
\newcommand{\dhB}{\hat{\sf B}'}
\newcommand{\E}{{\sf E}}
\newcommand{\dhE}{\hat{\E}}
\newcommand{\dAlice}{\mathfrak{A}}
\newcommand{\dBob}{\mathfrak{B}}
\newcommand{\dPlayer}{\mathfrak{P}}
\newcommand{\dBobBQSM}{\dBob_{\text{\sc bqsm}}}
\newcommand{\dBobPoly}{\dBob_{\mathrm{poly}}}
\newcommand{\dAlicePoly}{\dAlice_{\mathrm{poly}}}
\newcommand{\dPlayerPoly}{\dPlayer_{\mathrm{poly}}}
\newcommand{\poly}{\mathit{poly}}
\newcommand{\negl}[1]{{\sc{negl}}\,(#1)} 
\newcommand{\eps}{\varepsilon}
\newcommand{\x}{\times} 
\newcommand{\+}{+}	
\newcommand{\hmin}[1]{\ensuremath{H_{\infty}(#1)}}
\newcommand{\hzero}[1]{\ensuremath{H_0(#1)}}
\newcommand{\Hmin}{H_{\infty}}
\newcommand{\Hmax}{H_0}
\newcommand{\approxp}{\stackrel{\text{\it\tiny p}}{\approx}}    %ind P
\newcommand{\approxs}{\stackrel{\text{\it\tiny s}}{\approx}}    %ind S
\newcommand{\approxc}{\stackrel{\text{\it\tiny c}}{\approx}}    %ind C
\newcommand{\approxq}{\stackrel{\text{\it\tiny q}}{\approx}}    %ind Q
\newcommand*{\assign}
	    {\ensuremath{\kern.5ex\raisebox{.1ex}{\mbox{\rm:}}\kern
	   -.3em =}}
\newcommand{\eqq}{\stackrel{\raisebox{-1ex}{\tiny?}}{=}}
\newcommand{\set}[1]{\{#1\}}
\newcommand{\Set}[2]{\{ #1 : #2\}}
\newcommand{\zo}{\{0,1\}}
\newcommand{\Hil}{\mathcal{H}}
\newcommand{\op}[1]{\mathtt{#1}}
\newcommand{\M}{\mathcal{M}}
\newcommand{\X}{\mathcal{X}}
\newcommand{\test}{t\hspace{-0.15ex}e\hspace{-0.1ex}st}
\newcommand{\Test}{T\hspace{-0.3ex}e\hspace{-0.15ex}st}
\newcommand{\err}{e\hspace{-0.1ex}r\hspace{-0.15ex}r}
\newcommand{\F}{\mathcal{F}}
\newcommand{\G}{\mathcal{G}} 
\newcommand{\ot}[2]{\mathtt{#1\text{--}#2 \ OT}\,}
\newcommand{\MC}[3]{#1 \leftrightarrow #2 \leftrightarrow #3}
\newcommand{\pro}[1]{\operatorname{Pr}_{\mathtt{#1}}}
\newcommand{\prob}[1]{\operatorname{Pr \,}[#1]}
\newcommand{\FF}{\mathbb{F}}
\newcommand{\NN}{\mathbb{N}}
\newcommand{\emphbf}[1]{\emph{\textbf{#1}}}
\newcommand{\CO}{\mathtt{Commit\&Open}}
\newcommand{\ignore}[1]{}
\newcommand{\mycitation}[2]{
  {\begin{flushright}
      \emph{#1} \\ 
      \textrm{--- #2}
    \end{flushright}}}
\newcounter{itm}
\newenvironment{myprotocol}[1]
  {\begin{minipage}{\columnwidth} 
    \begin{framed}\hspace{0ex} 
     \begin{minipage}{0.99\columnwidth}
       {\bf #1:}
       \setcounter{itm}{1}
       \begin{list}{\arabic{itm}.}{\usecounter{itm}}}
   {    \end{list}
       \vspace{-1.5ex} 
       \end{minipage} 
     \end{framed} 
    \end{minipage}\vspace{-0.6ex}}
\newcommand{\idF}{\mathcal{F_{\sf{ID}}}}
\newcommand{\QID}{\mathtt{QID}}
\newcommand{\QIDplus}{\compile(\QID^+)}
\newcommand{\IDplus}{{\sf \textsl{$ID^+$}}\xspace} 
\newcommand{\otF}{\mathcal{F_{\sf{OT}}}}
\newcommand{\QOT}{{\tt 1\text{-}2 \, QOT^\ell}}
\newcommand{\compile}{{\cal C}^{\alpha}}
\newcommand{\commitx}[2]{\mathtt{commit}\,(#1,#2)\,}
\newcommand{\commitk}[3]{\mathtt{commit}\,_{#3}\,(#1,#2)\,}
\newcommand{\commit}{\mathtt{commit}}
\newcommand{\Commit}{\mathtt{COMMIT}}
\newcommand{\Commitk}[3]{\mathtt{COMMIT}\,_{#3}\,\big( #1,#2 \big)\,} 
\newcommand{\open}[2]{\mathtt{open}\,(#1,#2)\,}
\newcommand{\sss}{\mathtt{sss}}
\newcommand{\sk}{{\tt sk}}
\newcommand{\pkH}{{\tt pkH}}
\newcommand{\pkB}{{\tt pkB}}
\newcommand{\GH}{{\cal G}_{\tt H}}
\newcommand{\GB}{{\cal G}_{\tt B}}
\newcommand{\xtr}[2]{{\tt{xtr}}_{#2}(#1)} 
\newcommand{\xtrx}{\tt{xtr}}
\newcommand{\coin}{\mathtt{coin}}
\newcommand{\start}{\mathtt{start}}
\newcommand{\la}{\leftarrow}
\newcommand{\ra}{\rightarrow}
\newcommand{\cPi}[2]{\Pi_{#1,#2}^{\, \COIN}}	  
\newcommand{\cxPi}[3]{\Pi_{#1,#2}^{\, #3-\COIN}}
\newcommand{\cF}{\F_{\COIN}}		
\newcommand{\cxF}[1]{\F_{{#1}-\COIN}}	
\newcommand{\clF}{\F_{\ell-\COIN}}	
\newcommand{\uncont}{\texttt{uncont}}
\newcommand{\force}{\texttt{force}}
\newcommand{\random}{\texttt{random}}
\newcommand{\defterm}[1]{\textsf{#1}}
\newcommand{\D}[1]{\mathcal{D}^{\, #1}}
\newcommand{\U}{\mathcal{U}}
\newcommand{\IQZK}{\mathtt{IQZK}}
\newcommand{\IQZKF}{\mathtt{IQZK^{\cxF{\kappa}}}}
\newcommand{\COIN}{\mathtt{COIN}}
\newcommand{\NIZK}{\mathtt{NIZK}} 
\newcommand{\SNIZK}{\sf{\hat{S}}_{\NIZK}}
\newcommand{\SIQZK}{\sf{\hat{S}}_{\IQZK}}
\newcommand{\SIQZKF}{\sf{\hat{S}}_{\IQZK^{\cxF{\kappa}}}}
\newcommand{\crs}{\omega}
\newcommand{\Forg}{\tilde{\mathcal{F}}} %forger
\newcommand{\Dist}{\tilde{\mathcal{D}}} %distinguisher
\newcommand{\abort}{\texttt{abort}}
\newcommand{\success}{\texttt{success}}
\newcommand{\ZKPK}{\mathtt{ZKPK(\Rel)}}
\newcommand{\zkpkF}{\F_{\ZKPK}}
\newcommand{\Rel}{\mathcal{R}}%relation
\newcommand{\NP}{\ensuremath{\mathcal{NP}}}
\newcommand{\Poly}{\ensuremath{\mathcal{P}}}
\renewcommand{\chaptermark}[1]%
{\markboth{#1}{}}
\renewcommand{\sectionmark}[1]%
{\markright{\thesection\ #1}}
\newcommand{\inmargin}[1]{
  \marginpar{\flushleft
    \vspace{4.10cm}
    \begin{minipage}{0.70cm}
      \begin{rotate}{90}
        \mbox{\parbox{6cm}{\vspace{0.7cm}\par{}\flushright{}#1}}
      \end{rotate}
    \end{minipage}}}
\def\@makechapterhead#1{%
  {\parindent \z@ \raggedright \normalfont \ifnum \c@secnumdepth
    >\m@ne \flushright\inmargin{\LARGE\bfseries{\sc{}chapter\space
        \Huge\thechapter}}
    \vspace*{41\p@}%
    \par\nobreak
    \vskip 35\p@ \fi \interlinepenalty\@M
    \flushleft\begin{minipage}{13.5cm}\flushleft\Huge \normalfont
    \bfseries #1\end{minipage}\par\nobreak
    \vspace*{35\p@}}}
\begin{document}
%%%%%%%%%%%%%%%%%%%%%%%%%%%%%%%%%%%%%%%%%%%%%%%%%%%%%%%%%%%%%%%%%%%%%%%%%%

%%% FRONT PAGES %%%%%%%%%%%%%%%%%%%%%%%%%%%%%%%%%%%%%%%%%%%%%%%%%%%%%%%%%%

\pagestyle{empty} 
\pagenumbering{roman} 
\setcounter{secnumdepth}{-1}
\vspace*{\fill}
\begin{flushright}  
  {\huge\sf Cryptographic Protocols under Quantum Attacks}\\[3ex]
  {\LARGE\sf Carolin Lunemann} 
\end{flushright}
\noindent\rule{\linewidth}{1mm}\\[-.5ex]
\noindent\rule{\linewidth}{2.5mm}
\vfill
\begin{center}
  {\huge\sf PhD Dissertation}\\[\fill]
  {\sf Department of Computer Science\\Aarhus University\\Denmark}
\end{center}
\vspace*{\fill}

\cleardoublepage

\begin{center}
  \vspace*{\stretch{1}}
  {\huge Cryptographic Protocols under Quantum Attacks}\\[\fill]
  A Dissertation\\
  Presented to the Faculty of Science\\
  of Aarhus University\\
  in Partial Fulfillment of the Requirements for the\\
  PhD Degree\\[\stretch{2}]
  by\\
  Carolin Lunemann\\
  August 31, 2010
\end{center}
\vspace*{\stretch{1}}

%%% ABSTRACT %%%%%%%%%%%%%%%%%%%%%%%%%%%%%%%%%%%%%%%%%%%%%%%%%%%%%%%%%%%%%

\clearemptydoublepage
\pagestyle{empty}
\chapter*{{\Huge Abstract}}
\addcontentsline{toc}{chapter}{Abstract} 

The realm of this thesis is cryptographic protocol theory in the
quantum world. We study the security of quantum and classical
protocols against adversaries that are assumed to exploit quantum
effects to their advantage. Security in the quantum world means that
quantum computation does not jeopardize the assumption, underlying the
protocol construction. But moreover, we encounter additional setbacks
in the security proofs, which are mostly due to the fact that some
well-known classical proof techniques are forbidden by certain
properties of a quantum environment. Interestingly, we can exploit
some of the very same properties to the benefit of quantum
cryptography. Thus, this work lies right at the heart of the conflict
between highly potential effects but likewise rather demanding
conditions in the quantum world.

%%% ACKNOWLEDGEMENTS %%%%%%%%%%%%%%%%%%%%%%%%%%%%%%%%%%%%%%%%%%%%%%%%%%%%%

\clearemptydoublepage
\chapter*{{\Huge Acknowledgments}}
\addcontentsline{toc}{chapter}{Acknowledgments}

Thanks to all the people, I met on the way of my PhD education, for
introducing me to new ways of thinking, inspiring me by fascinating
ideas, and helping me in so many other ways. Thanks to all the people
that stayed with me on this way without questioning these very same
aspects. To mention just a few $\ldots$\\[-2ex]

$\ldots$ special thanks to my supervisors Ivan Damg{\aa}rd---who
taught me so much about cryptography---and Louis Salvail---who taught
me so much about quantum---and often vice versa. Thanks to my
co-authors Christian Schaffner, Jesper Buus Nielsen, and Serge Fehr
for their support throughout my PhD studies and for teaching me many
fascinating details.

$\ldots$ thanks to Prof.\ Claude Cr\'{e}peau from McGill University in
Montr\'eal and Prof.\ Stefan Wolf from ETH Z\"urich as well as
Prof. Susanne B{\o}dker from Aarhus University for agreeing to
constitute the evaluation committee for my PhD thesis.

$\dots$ thanks for proof-reading some parts of the thesis to Chris,
Dorthe, and Jesper as well as for valuable comments to Claude and
Louis.

$\ldots$ thanks to the Crypto-Group with all its members that left,
stayed or newly arrived. This environment is a great place to do
research in (while drinking espresso and eating cake), and many
non-research activities (like launching julefrokost, fridaybaring,
dancing danske folkedanse, and often absurd discussions about
everything under the sun) are unforgettable. Thanks to the always
helpful staff at the department, especially to Dorthe, Ellen, and
Hanne.

$\ldots$ thanks to Aarhus University for enabling me to travel around
the world, and in that respect also to IACR, Universit\'e de
Montr\'eal, INTRIQ, and Princeton University. Furthermore, my studies
were supported by the MOBISEQ research project, which is funded by Det
Strategiske Forskningsr{\aa}ds Programkomite for Nanovidenskab og
-teknologi, Bioteknologi og IT (NABIIT). Many thanks also to Institut
Mittag-Leffler of the Royal Swedish Academy of Sciences for providing
time, grant, and space to think---as well as to prepare the defense
and to revise the thesis. Thanks to Prof.\ Mary Beth Ruskai for
pointing to the quantum information program at the institute.

$\dots$ thanks to Prof.\ Christoph Kreitz from the University of
Potsdam for sparking my interest in crypto in the first place.

$\ldots$ and thanks to my family and many friends for their support, often
from the distance. Grazie a Claudio, with whom I went through all
phases of ever-changing three years of PhD studies. Vielen Dank an
MaPa, Steffi, und Dande, whose support and (blind) acceptance was most
helpful during my education. And last but not least, mange tak til
Jesper.

\vspace{0.5ex}
\begin{flushright}
  \emph{Carolin Lunemann}\\
  \emph{{\AA}rhus, August 31, 2010}
\end{flushright}

%%% CONTENTS %%%%%%%%%%%%%%%%%%%%%%%%%%%%%%%%%%%%%%%%%%%%%%%%%%%%%%%%%%%%%

\clearemptydoublepage
\tableofcontents

%%% LET'S GET STARTED %%%%%%%%%%%%%%%%%%%%%%%%%%%%%%%%%%%%%%%%%%%%%%%%%%%%

\clearemptydoublepage
\pagenumbering{arabic}
\setcounter{secnumdepth}{3}
\setcounter{tocdepth}{2}
\pagestyle{fancy}
\pagenumbering{arabic}

%%%%%%%%%%%%%%%%%%%%%%%%%%%%%%%%%%%%%%%%%%%%%%%%%%%%%%%%%%%%%%%%%%%%%%%%%%
%%% INTRO %%%%%%%%%%%%%%%%%%%%%%%%%%%%%%%%%%%%%%%%%%%%%%%%%%%%%%%%%%%%%%%%
%%%%%%%%%%%%%%%%%%%%%%%%%%%%%%%%%%%%%%%%%%%%%%%%%%%%%%%%%%%%%%%%%%%%%%%%%%

\clearemptydoublepage
\chapter{Introduction}
\label{chapter:intro}

%%% CRYPTO INTRO %%%%%%%%%%%%%%%%%%%%%%%%%%%%%%%%%%%%%%%%%%%%%%%%%%%%%%%%

\section{On Cryptography}
\label{section:crypto}

\mycitation
{\\The multiple human needs and desires that demand privacy among two or
more people \\ in the midst of social life must inevitably lead to
cryptology wherever men thrive \\ and wherever they write.}
{David Kahn}

Cryptography is the art of \emph{secret writing} (from Greek $\kappa
\rho \upsilon \pi \tau o \varsigma$ and $\gamma \rho \alpha \varphi
\omega$) and may be considered almost as old as writing
itself. Cryptography played a crucial role throughout the history of
any society that depended on information, from the Greek Scytale and
the Roman Caesar cipher, over the Vigen\`ere cipher, electromechanical
rotor machines and encryption standards, to forming the backbone of
electronic infrastructures in modern life (see e.g.~\cite{Singh00}
for a historic survey of cryptography).

The first cryptographic methods are known as \emph{secret-key
cryptography}, based on one secret key shared between the
communicating parties and used both for encryption and
decryption. Already apparent from this description derives its main
problem, which lies in the logistics of distributing the key securely:
Prior to any secret communication, the involved parties must be in
possession of the same secret key. Nevertheless, secret-key
cryptography was in use for thousands of years, adjusting its
complexity to ever-increasing developments in technique and
technology.

\emph{Public-key cryptography} was the technological revolution,
solving the key distribution problem. The idea was independently
discovered by Diffie and Hellman in~\cite{DH76} with Rivest, Shamir,
and Adleman, providing the first implementation~\cite{RSA78}, and
slightly earlier but in secrecy, by Ellis, followed by Cocks' and
Williamson's practical application (e.g.~\cite{Cocks08}). Public-key
cryptography is based on a pair of keys for each communicating party,
namely a public key for encryption and a corresponding secret key for
decryption, where it must hold that it is computationally infeasible
(in polynomial time) of deriving the secret key from the public
one. Then, we require a family of trapdoor one-way functions defining
the encryption and decryption procedure. Informally, that means that
encryption is a one-way operation, which is efficiently computable,
given the public key, whereas the decryption function is hard to
compute, unless the trapdoor is known, i.e.\ the secret key. Thus, the
public key can be published without compromising security, and hence,
public-key cryptography does not suffer from key distribution
problems. Due to that and to the fact that the technique additionally
allows for digital signatures that are verifiable with the public key
and yet unforgeable without the secret key, the concept of public-key
cryptography is highly used and required in the age of the Internet
and the proliferation of electronic communication systems.

New potential in cryptography emerged with \emph{quantum
cryptography}, starting with Wiesner's groundbreaking
paper~\cite{Wiesner83}\footnote{The paper was written in the early
1970ies but rejected and only published retroactively in 1983.},
suggesting that ``quantum mechanics allows us novel forms of coding
without analogue [in classical physics]'' (p.~78). His approach of
conjugate coding did not only lay the foundations of the new
cryptographic technique but also suggested a system for sending ``two
mutually exclusive messages'' (p.~83), which is today known as the
powerful primitive of oblivious transfer. It took several years (and
the Caribbean sea) to establish quantum cryptography as a scientific
discipline, accomplished by Bennett and Brassard, mainly by the
BB84-protocol for quantum key distribution (QKD) in~\cite{BB84} after
preceding work such as~\cite{BBBW82, BB83}, culminating in the first
practical realizations~\cite{BB89, BBBSS92}. An alternative QKD scheme
was independently proposed by Ekert in~\cite{Ekert91}, based on a
different approach using quantum entanglement. Since then, QKD was
further researched, both on a theoretical and an experimental level.
Today, conjugate BB84-coding also forms the basis for various more
general quantum cryptographic tasks other than key distribution.

Modern cryptography concerns, besides the secrecy and authenticity of
communication, also the security of various other task. For instance,
theoretical research in the sub-field of \emph{cryptographic protocol
theory} covers cryptographic primitives with fundamental properties for
secure multi-party computation. Each primitive can be seen as a
building block that implements some basic functionality. Composition
of such primitives within outer protocols yield applications that
implement a specific task securely over a distance.

%%% QUANTUM INTRO %%%%%%%%%%%%%%%%%%%%%%%%%%%%%%%%%%%%%%%%%%%%%%%%%%%%%%%

\section{On the Quantum World}
\label{sec:quantum}

\mycitation 
{\\Anyone who is not shocked by quantum theory has not
understood it.}{Niels Bohr}

In the quantum world, we consider the behavior of systems at the
smallest scale, which cannot be explained nor described by classical
physics. A \emph{quantum}\footnote{quantus (Latin) - how much} is the
smallest unit of a physical entity, and the fundamental concept in
\Index{quantum information theory} is a quantum bit, or short,
a\index{qubit} qubit. Quantum information theory was established at
the beginning of the last century, but has been subject to different
interpretations ever since---both scientific and philosophical. This
thesis is divided into two subareas of quantum information theory,
constituting the following two main parts,
Part~\ref{part:quantum.cryptography} and
Part~\ref{part:cryptography.in.quantum.world}
(Part~\ref{part:preliminaries} is dedicated to preliminaries).

Part~\ref{part:quantum.cryptography} is in the realm of
quantum cryptography, where---informally speaking---the transmission
of qubits followed by some classical post-processing is employed to
accomplish a cryptographic task. The security is mainly derived by
the special properties of the qubits during and after transmission, and
therewith, directly from physical laws.

Part~\ref{part:cryptography.in.quantum.world} on cryptography in a
quantum world refers to the study of cryptography with completely
classical messages exchange, but where the environment around is
quantum. In other words, the security of the classical schemes must
withstand powerful quantum computing capabilities.

We now present---in brief and on a (counter-)intuitive level---the
aspects unique to the quantum world, which are relevant in the context
of this work. Interestingly, these quantum features can be exploited
to the benefit of quantum cryptography. However, the very same
properties impose intriguingly new challenges in classical
cryptography. In other words, ``what quantum mechanics takes away with
one hand, it gives back with the other''~\cite[p.~582]{NC00}. And so,
this work lies right at the heart of the conflict between highly
potential effects, but likewise rather demanding conditions.

  \paragraph{\sc Information gain vs.\ disturbance.}

  This aspect might be argued to constitute the most outstanding
  advantage of quantum cryptography over the classical world, and
  forms ``the engine that powers quantum
  cryptography''~\cite[p.~1]{Fuchs96}. In the classical case, a bit can
  simply be read in transmission, and the information gain solely
  depends on the security of the respective encryption used. In
  quantum cryptography, information is typically encoded in two
  complementary types of photon polarization or, in other words, a
  qubit is prepared in one out of two conjugate bases with orthogonal
  basis states. To gain information about such an unknown qubit, it
  must be observed, but observing in the quantum world means
  measuring. Measuring, or more precisely distinguishing between two
  non-orthogonal quantum states, is destructive and therewith any
  measurement disturbes the system. This is explained in the
  \emph{Heisenberg uncertainty principle}, which states that certain
  pairs of quantum properties are complementary in that measuring one
  of them necessarily disturbs the other.
  
  Consequently, eavesdropping on a qubit transmission disturbs the
  system, and can therefore be noticed in a statistically detectable
  way. Moreover, the quantitative trade-off between information gain
  and disturbance is useful not only against an external adversary,
  but it is also a main ingredient when proving security against a
  dishonest player. This fact is inherent in the basic security
  aspects of all our quantum two-party protocols, discussed later in
  Part~\ref{part:quantum.cryptography}.

  \paragraph{\sc An unknown quantum state cannot be copied.}

  This fact---unheard of in the case of classical data---is formalized
  in the \emph{no-cloning theorem}\index{no-cloning
  theorem}~\cite{WZ82}. The peculiar property constitutes another
  major security feature in quantum communications and underlies all
  our quantum protocols in
  Part~\ref{part:quantum.cryptography}. However, it also sets severe
  restriction in the theory of quantum computing. This becomes
  apparent in Part~\ref{part:cryptography.in.quantum.world}, where the
  commonly used classical proof technique \emph{rewinding}, which is
  also shortly discussed below, requires to copy certain data, and so
  has to be carefully reviewed in the quantum world.

  \paragraph{\sc Quantum memory is limited.}

  \index{quantum storage} A more practical issue concerns the
  limitation of the amount of qubits that can be stored and then
  retrieved undisturbed. This may be seen as a snapshot of current
  state of the art. However, ongoing research strongly suggest that it
  is---and will be---much easier to transmit and measure qubits than
  it is to store them for a non-negligible time.

  We will make use of this given fact in our quantum protocols in
  Chapter~\ref{chap:hybrid.security.applications}, which are designed
  such that dishonest parties would need large quantum memory to
  attack successfully---a security property that classical protocols
  cannot achieve. Yet, we do not exclusively rely on this condition
  only, but investigate a wider diversification of security that is
  not threatened by potential breakthroughs in developing quantum
  storage.

  \paragraph{\sc Quantum rewinding is tricky.}
  
  \index{rewinding!quantum} As already indicated, this statement is a
  key aspect in Part~\ref{part:cryptography.in.quantum.world}, and
  originates from most of the above mentioned properties ``all wrapped
  up together''. Rewinding is a very powerful technique in
  simulation-based proofs against a classical dishonest party: We can
  prove security against a cheating player by showing that a run of a
  protocol between him and the honest player can be efficiently
  simulated without interacting with the honest player, but with a
  simulator instead. A simulator is a machine which does not know the
  secrets of the honest party but yet it sends messages like the
  honest player would do but with more freedom, e.g.\ in how and when
  to generate these. Then to conclude the proof, we have to show that
  the running time of the simulation as well as the distribution of
  the conversation are according to expectations. A simulator
  basically prepares a valid conversation and tries it on the
  dishonest party. Now, in case this party does not send the expected
  reply, we need the possibility to rewind him.\footnote{More
  precisely, we model the player---similar to the simulator---as a
  machine, and thus, we can just set back this machine to an earlier
  status, i.e., erase parts of the memory and start a new
  conversation. In that sense, rewinding can be thought of as, for
  instance, rebooting a computer after it crashed.}

  Unfortunately, rewinding as a proof technique can generally not be
  directly applied in the quantum world, i.e., if the dishonest machine
  is a quantum computer. First, we cannot trivially copy and store an
  intermediate state of a quantum system, and second, quantum
  measurements are in general irreversible. In order to produce a
  classical transcript, the simulator would have to partially measure
  the quantum system without copying it beforehand, but then it would
  become impossible to reconstruct all information necessary for
  correct rewinding.

  Due to these difficulties, no simple and straightforward security
  proof for the quantum case was known. However, Watrous recently
  showed that in a limited setting an efficient quantum simulation,
  relying on the newly introduced \emph{quantum rewinding theorem}
  (see~\cite{Watrous09} and Section~\ref{sec:quantum.rewinding}), is
  possible. We will discuss this aspect in more detail in
  Chapters~\ref{chap:coin.flip} and \ref{chap:framework}: We will show
  that the quantum rewinding argument can also be applied to classical
  non-constant round coin-flipping in the quantum world, and propose a
  framework to weaken certain assumptions on the coin, in quest for a
  quantum-secure constant round protocol.

  \paragraph{\sc Spooky actions at a distance.} 

  This famous naming by Einstein\footnote{``Spooky actions at a
  distance'' was put down originally as ``spukhafte Fernwirkung''
  in~\cite{Einstein71}.} describes the phenomenon of
  \Index{entanglement}. Informally, two qubits are called entangled,
  if their state can only be described with reference to each
  other. This has the effect that a measurement on one particle has an
  instantaneous impact on the other one---despite any distance
  separating the qubits spatially.

  Entanglement is definitely a unique resource to the quantum world
  only. In the words of Schr\"{o}dinger, entanglement is not
  ``\emph{one} but rather \emph{the} characteristic trait of quantum
  mechanics, the one that enforces its entire departure from classical
  lines of thought''~\cite[p.~555]{Schroedinger35}. Besides
  constituting a disturbing aspect---intuitively and philosophically,
  entanglement opens up for interesting applications such as
  \emph{quantum teleportation}~\cite{BBCJPW93} and \emph{superdense
  coding}~\cite{BW92}, as well as for various aspects in quantum
  cryptography and computing. We will use entanglement as a thought
  experiment in our quantum protocols when analyzing an equivalent
  \emph{purified EPR-version}\footnote{An EPR-pair denotes a pair of
  entangled qubits. The name (ironically) originates from the
  paper~\cite{EPR35} by Einstein, Podolsky, and Rosen, in which they
  criticized quantum mechanics as an incomplete theory---due to
  entanglement.} (Chapter~\ref{chap:hybrid.security}).

%%% CONTRIBUTIONS %%%%%%%%%%%%%%%%%%%%%%%%%%%%%%%%%%%%%%%%%%%%%%%%%%%%%%%

\section{Contributions}
\label{sec:contributions}

This dissertation is based on research done during the three years of
my PhD studies at the Department of Computer Science, Aarhus
University, Denmark. Part of the research was conducted while visiting
Universit\'e de Montr\'eal, Qu\'ebec, Canada. The realm of this work
is quantum cryptography and classical cryptography in the quantum
world. More specifically, the thesis covers aspects of (quantum)
cryptographic protocol theory, based on cryptographic primitives. The
main results are outlined in the following sections and pictorially
represented in Figure~\ref{fig:pic.thesis}.

%%%%%%%%%%%%%%%%%%%%%%%%%%%%%%%%%%%%%%%%%%%%%%%%%%%%%%%%%%%%%%%%%%%%%%%%%%

\subsection{The Importance of Mixed Commitments}
\label{sec:cont.commit}
\index{rewinding}

  Classical mixed (or dual-mode) commitments are of great significance
  for most constructions discussed in this work. Here, we explain the
  challenges that the quantum world imposes on commitments in general
  and summarize the results of~\cite{DFLSS09,DL09,LN10} in that
  aspect.\\
 
  Security for classical constructions in the quantum world means that
  quantum computation does not jeopardize the underlying mathematical
  assumption that guarantees the security, for instance, in the
  context of commitments, the hiding and binding property. However, we
  encounter even more setbacks in the context of actually proving such
  constructions secure in an outer protocol, which, in regard of this
  work with its main focus on simulation-based security, are mostly
  due to the strong restrictions on rewinding in the quantum world.
  
  The first difficulty in any attempt to rewind the adversary regards
  the fact that the reduction from the computational security of an
  outer protocol to the \emph{computationally binding} property of a
  commitment does not simply translate from the classical to the
  quantum world. Computational binding means that if a dishonest party
  can open a commitment to two different values, then the
  computational assumption does not hold. In the classical case, a
  simulator simulates a run of the outer protocol with the committer,
  such that the latter outputs a valid commitment at some point during
  the execution. Later in the protocol he must then provide a correct
  opening. The simulator has the possibility to rewind the player to
  any step in the protocol execution, e.g.\ to a point after the
  commitment was sent. Then it can repeat the simulation of the outer
  protocol, which can now be adapted to the simulator's knowledge of
  the committed value. If the dishonest committer opened the same
  commitment to a different value than previously, he could break the
  underlying assumption guaranteeing computational binding. In other
  words, two valid openings of the same commitment imply the inversion
  of the underlying one-way function, which concludes the proof. Such
  a technique, however, is impossible to justify in the quantum world,
  since we cannot trivially copy and store an intermediate state, and
  measurements are in general irreversible. In order to succeed, the
  simulator would have to partially measure the quantum system without
  copying it beforehand to obtain the first transcript, but then it
  would become impossible to reconstruct all information necessary for
  correct rewinding.

  The second challenge we encounter is to prove an outer protocol with
  an embedded \emph{computationally hiding} commitment
  secure. Generally speaking, in a classical simulation of the outer
  protocol, the simulator aims e.g.\ at hitting an ideal outcome to a
  function of which it then commits. Then, if the reply from the
  possibly dishonest counterpart matches this prepared function such
  that both sides conclude on the ideal value as their result and the
  transcript is indistinguishable from a real run of the protocol, the
  simulation was successful. Otherwise, the simulator rewinds the
  dishonest player completely and repeats the simulation. We show a
  natural and direct translation of this scenario to the quantum world
  in Chapter~\ref{chap:coin.flip}, where we use a technique that
  allows quantum rewinding in this very setting when using bit
  commitments (see Section~\ref{sec:cont.coin.flip}). In case of
  string commitments however, we cannot rewind the other player in
  poly-time to hit the guess, since that guess consists of a
  bit-string. A possible solutions for simulating against a classical
  adversary is to let him commit to his message before the simulator
  commits. Then the player's message can be extracted and the
  simulation can be matched accordingly. This technique, however, is
  again doomed to fail in the quantum realm, since it reduces to the
  previous case where the simulator cannot preserve the other party's
  intermediate status as required during such a simulation.

  We will circumvent both of the above aspects by introducing mixed
  commitment schemes in our protocols. Generally speaking, the notion
  of mixed commitments requires some trapdoor information, given to
  the simulator in the ideal world. Depending on the instantiation,
  the trapdoor provides the possibility for \emph{extraction} of
  information out of the commitments or for \emph{equivocability} when
  opening the commitments. This allows us to circumvent the necessity
  of rewinding in the proof, while achieving high security in the real
  protocol. The idea of mixed commitment schemes is described in more
  detail in Section~\ref{sec:mixed.commit.idea} and a quantum-secure
  instantiation is proposed in
  Section~\ref{sec:mixed.commit.instantiation}. Various extensions are
  then discussed to match the construction to respective requirements
  in different outer protocols
  (Sections~\ref{sec:extended.commit.compiler},~\ref{sec:extended.commit.coin},
  and~\ref{sec:mixed.commit.trapdoor.opening}).

%%%%%%%%%%%%%%%%%%%%%%%%%%%%%%%%%%%%%%%%%%%%%%%%%%%%%%%%%%%%%%%%%%%%%%%%%%

\subsection{Improving the Security of Quantum Protocols}
\label{sec:cont.hybrid.security}

  The following results are joint work with Damg{\aa}rd, Fehr,
  Salvail, and Schaffner~\cite{DFLSS09} and will be addressed in
  detail in Chapter~\ref{chap:hybrid.security}.\\

  We propose a general compiler for improving the security of a large
  class of two-party quantum protocols, implementing different
  cryptographic tasks and running between mutually distrusting players
  Alice and Bob. Our main result states that if the original protocol
  is secure against a so-called \emph{benign} Bob, who is only
  required to treat the qubits ``almost honestly'' but can deviate
  arbitrarily afterwards, then the compiled protocol is secure against
  a \emph{computationally bounded} quantum Bob. The unconditional
  security against Alice is preserved during compilation and it
  requires only a constant increase of transmitted qubits and
  classical messages.

  The consequences of such a compiler are twofold. First, the basic
  assumption in designing new protocols for any two-party
  functionality is reduced to the relatively weak assumption on
  benignity. On the other hand, the proofs for already existing
  protocols within the specific class typically go through under the
  assumption (at least after some minor adaptions). And second,
  security in the bounded-quantum-storage model implies benign
  security. Therefore, by compilation of such protocols, we can
  achieve \emph{hybrid security}, which means that the adversary now needs
  \emph{both} large quantum memory \emph{and} large quantum computing
  power to break these new protocols.
 
  In more detail, the protocols we consider here start with a qubit
  transmission from Alice to Bob, where each qubit is encoded in one
  of two conjugate bases. This implies that, whenever Bob measures in
  the complementary basis, he obtains a random outcome. The second
  part of the protocol consist of arbitrary classical messages and
  local computations, depending on the task at hand but typically
  relying on the fact that a dishonest Bob has high uncertainty about
  a crucial piece of information.

  The basic technique to construct the compiler was already suggested
  in the first quantum oblivious transfer protocol~\cite{CK88}. We
  want to force Bob to measure by asking him to commit (using a
  classical scheme) to all his basis choices and measurement results,
  and then require him to open some of them later. While classical
  intuition suggests that the commitments should force Bob to measure
  (almost) all the qubits, it was previously very unclear what exactly
  it would achieve in the quantum world. To our best knowledge, it was
  never formally proven that the classical intuition also holds for a
  quantum Bob. We now give a full characterization of the
  commit\&open approach in general quantum settings, namely that it
  forces Bob to be benign.

  We propose a formal definition for \emph{benignity}, which might be
  of independent interest. A benign Bob is characterized by the
  following two conditions, which must be satisfied after the qubit
  transmission. First, his quantum storage is very small, and second,
  there exists a basis-string such that the uncertainty about Alice's
  encoded bit is essentially one bit whenever the encoding basis does
  not match the basis indicated in that string. These two conditions
  imply that a successfully passed opening of his commitments for a
  random test subset puts Bob in a situation, which is close to a
  scenario in which he measured as supposed to: His quantum memory is
  essentially of size zero, and furthermore, measuring the untested
  qubits in a basis complementary to the one Bob (claims to have)
  used, leads to a result with large uncertainty. The bounds on Bob's
  uncertainty and his quantum memory are proven for an ideal state
  that is negligible close to the real state. For the ideal state, we
  can then show that the remaining subsystem after the test is a
  superposition of states with \emph{relative Hamming distance upper
  bounded by the test estimate}.

  To conclude the proof, we assume that the original protocol
  implements some ideal functionality with statistical security
  against benign Bob. Then we show that the compiled protocol with the
  commitments also implements that functionality but now with security
  against any computationally bounded (quantum) Bob. To preserve the
  unconditional security of the original protocol, we require an
  unconditionally hiding commitment scheme. Since the common reduction
  from the computational security of the protocol to the computational
  binding property of a commitment would require rewinding, we use a
  mixed dual-mode commitment, which allows us to avoid rewinding Bob
  in this step (see also Section~\ref{sec:cont.commit}).

  We generalize our result to noisy quantum communication and show
  that the compilation does not render sequential composability
  insecure. We then extend the underlying commitment scheme for a more
  general composability guarantee and obtain that any compiled
  protocol \emph{computationally quantum-UC-emulates} its
  corresponding ideal functionality.

%%%%%%%%%%%%%%%%%%%%%%%%%%%%%%%%%%%%%%%%%%%%%%%%%%%%%%%%%%%%%%%%%%%%%%%%%%

\subsection{Classical Coin-Flipping in the Quantum World}
\label{sec:cont.coin.flip}

  The result on quantum-secure single coin-flipping is based
  on~\cite{DL09}, co-authored with Damg{\aa}rd, and will be fully
  discussed in Chapter~\ref{chap:coin.flip}. The proposed
  amplification framework for obtaining strong coin-strings from weak
  initial assumption on the coins is joint work with
  Nielsen~\cite{LN10} and will be addressed in more detail in
  Chapter~\ref{chap:framework}.\\

  We first investigate the standard coin-flipping protocol with
  classical messages exchange but where the adversary is assumed to be
  capable of quantum computing. The output of the protocol is a
  uniformly random unbiased bit, and the construction does not require
  any set-up assumptions. Therewith, the communicating parties can
  interactively generate true randomness from scratch in the quantum
  world. Our result constitutes the most \emph{direct quantum
  analogue} of the classical security proof by using a recent result
  of Watrous~\cite{Watrous09} that allows for quantum rewinding in
  this restricted setting and when flipping a single coin.

  The full potential of coin-flipping lies in the possibility of
  flipping a string of coins instead of a bit, such that the parties
  can interactively generate a common random string from
  scratch. Therewith, it is possible, for instance, to implement the
  theoretical assumption of the common-reference-string-model, which
  then implies that various interesting applications can be realized
  in a simple manner without any set-up assumptions.

  We show that with our definitions, the single coin-flipping protocol
  composes sequentially. Additionally, we sketch an extended
  construction of the underlying commitment scheme, allowing for
  efficient simulation on both sides, with which we achieve more
  general composition guarantees. Both compositions, however, are not
  fully satisfactory. Sequential coin-flipping allows for
  implementations without set-up assumptions but leads to a
  non-constant round application. In contrast, parallel composition
  achieves much better efficiency with constant round complexity but
  requires some set-up assumptions in our proposed construction
  here. Unfortunately, we do not know how to extend Watrous quantum
  rewinding to the case of bit-strings, while keeping the running time
  of the simulator polynomial. The proof technique in the purely
  classical setting is impossible to apply in the quantum world (see
  also Section~\ref{sec:cont.commit}). Other techniques to achieve
  constant round coin-flipping are not known to date.

  Our framework in Chapter~\ref{chap:framework} can be understood as a
  step towards \emph{constant round coin-flipping}. We first
  investigate different security degrees of a string of coins. We then
  propose protocol constructions that allow us to amplify the
  respective degrees of security such that weaker coins are converted
  into very strong ones. The final result constitutes an amplification
  towards a coin-flipping protocol with long outcomes, which is fully
  poly-time simulatable on both sides against quantum adversaries. The
  protocol can be implemented with quantum-computational security in
  the plain model without any set-up assumptions. It only assumes
  mixed commitment schemes, which we know how to construct with
  quantum security, and no other assumptions are put forward. With
  this solution, we still have to compose the single coin-flip as
  sketched above sequentially to obtain long outcomes, but we achieve
  coins with stronger security.

  Our method of amplifying the security strength of coins also applies
  to potential constant round coin-flipping. If the underlying weak
  protocol already produces string outcomes and is constant round,
  then the resulting strong protocol is also constant round, and we
  consider it a contribution in itself to define the weakest security
  notion for any potential candidate that allows to amplify to the
  final strong protocol using a constant round reduction.

%%%%%%%%%%%%%%%%%%%%%%%%%%%%%%%%%%%%%%%%%%%%%%%%%%%%%%%%%%%%%%%%%%%%%%%%%%

\subsection{Applications}
\label{sec:cont.applications}

  We consider our applications in both parts of the thesis
  (Chapters~\ref{chap:hybrid.security.applications}
  and~\ref{chap:coin.flip.applications}) well suited as examples for
  the respective precedent main results, since they all have some
  special properties. Depending on the context they are proposed in,
  they appeared in~\cite{DFLSS09,DL09,LN10}.\\

  The first quantum protocol in Section~\ref{sec:hybrid.security.ot}
  implements \emph{oblivious transfer} (OT), which constitutes a
  highly relevant cryptographic primitive that is complete for general
  two-party computation. Interestingly, the idea behind this primitive
  was introduced in the context of quantum cryptography, namely, in
  the pioneering paper of Wiesner~\cite{Wiesner83} that also paved the
  way for quantum cryptography by introducing the concept of conjugate
  coding. The very nature of conjugate coding implies oblivious
  transfer, and with that, it can be understood as a natural quantum
  primitive.

  Classical and quantum OT cannot be implemented without any
  additional restrictions. However, in contrast to classical OT,
  quantum OT reduces to classical commitment. The idea of using a
  classical commitment within quantum protocols was already suggested
  in the first quantum oblivious transfer protocol~\cite{CK88} and its
  follow-up work in~\cite{BBCS91}. Various partial results followed,
  such as assuming a perfect ideal
  commitment~\cite{Yao95,Mayers96,Unruh10} or a (theoretical) quantum
  string commitment~\cite{CDMS04}. Based on the analysis of our
  compilation (sketched in Section~\ref{sec:cont.hybrid.security}), we
  can now rather simply apply our compiler to (a variant of) the
  original quantum OT-protocol, and therewith, give a complete proof
  for a concrete commitment scheme.

  In a rather straightforward way, oblivious transfer as a building
  block easily extends to \emph{password-based identification}, which
  is needed for any authenticated set-up. The quantum identification
  scheme in Section~\ref{sec:hybrid.security.id} allows for
  identification by solely proving the knowledge of a secret password
  without actually announcing it in clear. Furthermore, it has some
  special properties, which indicates its utility value in
  practice. First, the only option without being in possession of the
  password is to guess it, which implies that the same password may be
  safely reused for a long time. Second, the scheme tolerates a
  possibly non-uniform password, which translates to a realistic
  assumption of user-memorizable passwords. And last, a typical
  setting for identification is not necessarily required to run over
  large distances to be considered useful, and as such, it can
  actually be implemented with existing technology. Naturally, an
  identification scheme, secure under diversified assumptions and
  against any external adversary, is an important step towards an
  actual implementation.

  The classical \emph{generation of commitment keys} in
  Section~\ref{sec:key.generation.coin} nicely combines the above
  applications with the results on quantum-secure coin-flipping,
  fulfilling the requirement on our mixed commitment construction. By
  running a coin-flipping protocol as an initial step in the quantum
  protocols above, the communicating players can interactively
  generate their commitment keys for compilation. This allows us to
  avoid the common-reference-string-model and yields implementations
  of \emph{entire} protocols in the quantum world without any set-up
  assumptions.

  The two application in the context of zero-knowledge are interesting
  in that the interactive generation of coins at the beginning or
  during outer protocols allows for quantum-secure realizations of
  classical schemes from scratch. First in
  Section~\ref{sec:coin.iqzk}, we show a simple transformation from
  non-interactive zero-knowledge to \emph{interactive quantum
  zero-knowledge}. Then in Section~\ref{sec:coin.zkpk}, we propose a
  \emph{quantum-secure zero-knowledge proof of knowledge}, which relies
  not only on initial randomness but also on enforceable randomness
  and is based on a witness encoding scheme providing a certain degree
  of extractability, defined for the quantum context to resemble
  special soundness of classical schemes. Both zero-knowledge
  constructions nicely highlight that the realization of coin-flipping
  as a stand-alone tool allows for using it rather freely in various
  contexts.

%%%%%%%%%%%%%%%%%%%%%%%%%%%%%%%%%%%%%%%%%%%%%%%%%%%%%%%%%%%%%%%%%%%%%%%%%%
  \begin{figure}[h]
    %\vspace{65ex}
    \includegraphics[width=\textwidth]{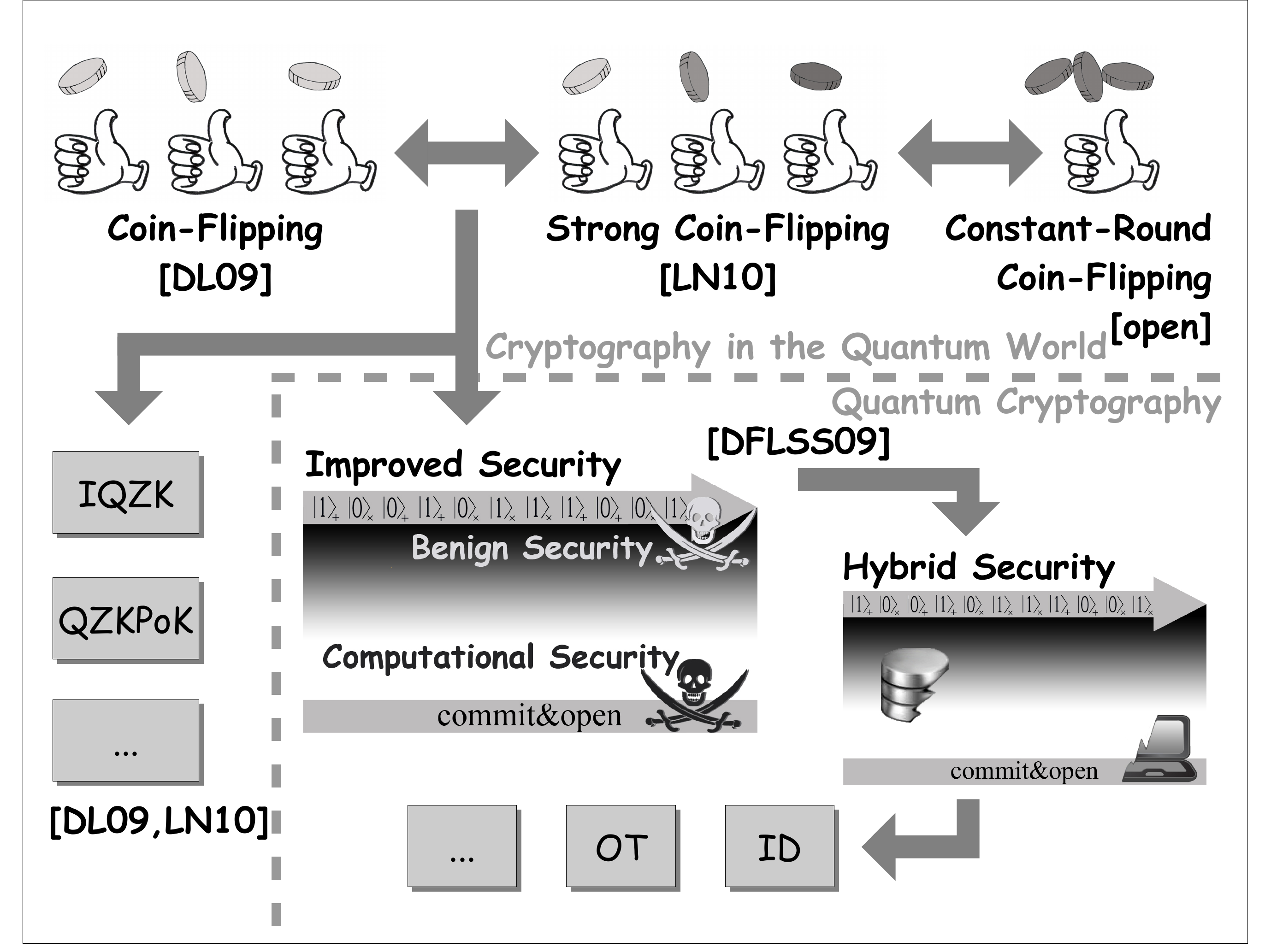}
    \vspace{-3ex}
    \caption{Picture of the Thesis.}
    \label{fig:pic.thesis}
  \end{figure}
%%%%%%%%%%%%%%%%%%%%%%%%%%%%%%%%%%%%%%%%%%%%%%%%%%%%%%%%%%%%%%%%%%%%%%%%%%

%%%%%%%%%%%%%%%%%%%%%%%%%%%%%%%%%%%%%%%%%%%%%%%%%%%%%%%%%%%%%%%%%%%%%%%%%%
%%%%%%%%%%%%%%%%%%%%%%%%%%%%%%%%%%%%%%%%%%%%%%%%%%%%%%%%%%%%%%%%%%%%%%%%%%
%%% PRELIMINARIES %%%%%%%%%%%%%%%%%%%%%%%%%%%%%%%%%%%%%%%%%%%%%%%%%%%%%%%%
%%%%%%%%%%%%%%%%%%%%%%%%%%%%%%%%%%%%%%%%%%%%%%%%%%%%%%%%%%%%%%%%%%%%%%%%%%
%%%%%%%%%%%%%%%%%%%%%%%%%%%%%%%%%%%%%%%%%%%%%%%%%%%%%%%%%%%%%%%%%%%%%%%%%%

\clearemptydoublepage
\part{Setting The Stage}
\label{part:preliminaries}

%%%%%%%%%%%%%%%%%%%%%%%%%%%%%%%%%%%%%%%%%%%%%%%%%%%%%%%%%%%%%%%%%%%%%%%%%%
%%% CRYPTO TOOLS %%%%%%%%%%%%%%%%%%%%%%%%%%%%%%%%%%%%%%%%%%%%%%%%%%%%%%%%%
%%%%%%%%%%%%%%%%%%%%%%%%%%%%%%%%%%%%%%%%%%%%%%%%%%%%%%%%%%%%%%%%%%%%%%%%%%

\chapter{Cryptographic Toolbox}
\label{chap:crypto.tools}

In this work, we are interested in classical and quantum cryptographic
two-party protocols, i.e., our focus lies on enabling two players to
accomplish a specific task securely by communicating over a
distance. In a perfect world of gentlemen, we could, of course, just
communicate over a distance without using cryptographic security
precautions. In an ideal world, we can simply assume a ``black-box''
that solves what we want while not leaking anything of
importance. However, we operate in the real world. This means that we
do not only have to take various dishonest players into account when
implementing our protocols, but also that we have to work within a
restricted framework of given conditions and existing
resources.\footnote{Note that, throughout this work, we will use the
terms \emph{ideal world} and \emph{real world} also in the more formal
context of the so-called \emph{two-world paradigm} (see
Section~\ref{sec:worlds}) for simulation-based proofs.}

In the following sections we formalize this intuitive description in
cryptographic terms. The chapter is not intended to provide a full
introduction to cryptographic protocol theory, but rather to give a
brief but complete overview of notation, tools, conditions, and
settings we will use, and to fix terminology that may vary in standard
literature. In short, we are setting the stage for the results in this
thesis.

%%% PLAYERS %%%%%%%%%%%%%%%%%%%%%%%%%%%%%%%%%%%%%%%%%%%%%%%%%%%%%%%%%%%%%

\section{Players}
\label{sec:players}

Our main characters are Alice and Bob, who are subject to different
roles and cheating capabilities. The \emph{correctness} of our
two-party protocols is ensured, if they implement the task at hand in
the desired way. This scenario only concerns
\emph{honest}\index{player!honest} parties Alice and Bob, who may have
different roles, such as sender, receiver, committer, verifier, user
and server, depending on the respective functionality to be carried
out. An honest player is denoted by~$\Pl$.

\emph{Security} is shown by investigating the case where one of the
parties is \emph{dishonest}\index{player!dishonest}. More precisely, a
dishonest party $\dP$ can try, for instance, to bias the outcome of
the protocol or to succeed illegitimately.

Between these two extremes, there are various nuances of cheating. For
instance, the common notion of
\emph{semi-honest}\index{player!semi-honest} describes an
``honest-but-curious'' player who is curious enough in trying to gain
additional information while following the protocol honestly. We will
in Chapters~\ref{chap:hybrid.security}
and~\ref{chap:hybrid.security.applications} use another intermediate
notion that captures \emph{benignly dishonest}\index{player!benign}
behavior in quantum protocols. The protocols consist of a quantum
transmission phase and some classical post-processing. A benign
receiver of qubits is assumed to treat these ``almost honestly'',
which means he immediately measures most of the qubits upon reception
in the specified bases. Afterwards during the classical
post-processing, he can deviate arbitrarily. Thus, in some sense, he
wants to cheat but is incapable of mastering the quantum information
in any other way than simply measuring it. We will define this newly
introduced notion in greater detail later on, as it forms the
foundation of our improved quantum protocols.

A very different external adversary is the so-called
\emph{man-in-the-middle}\index{player!man-in-the-middle} Eve (denoted
by $\E$), who tries to eavesdrop on the classical and quantum
communication between Alice and Bob, with the intention to break the
protocol---or at least gain some information---without being
detected. Quantum cryptography provides its protocols with
\emph{automatic intrusion detection}, due to the fact that here any
kind of intrusion will inevitably disturb the system. However, we have
to thoroughly implement the testing of qubits for interference as well
as investigate the potential information leakage of the classical
communication.

%%% FLAVORS, ASSUMPTION & MODELS %%%%%%%%%%%%%%%%%%%%%%%%%%%%%%%%%%%%%%%%

\section{Security Flavors, Assumptions, and Models}
\label{sec:flavors}

  The purpose and objective of theoretical cryptography is to design
  protocols with the highest security possible under \emph{any}
  condition, this means without any restriction on adversarial
  resources such as computing power and memory size. However, this
  \emph{unconditional security}\index{security!unconditional} is
  extremely hard to obtain for both players simultaneously in the
  classical and in the quantum world. In fact, some tasks are proven
  to be impossible to achieve with unconditional security for both
  players. The most well-known example thereof might be the
  impossibility results on unconditionally secure classical and
  quantum bit commitment (proven in the quantum case
  by~\cite{Mayers97,LC97}). Furthermore, for two distrusting parties,
  the only applications actually proven to be unconditionally secure
  regarding confidentiality are Vernam's symmetric one-time pad
  encryption~\cite{Vernam26,Shannon49} as well as quantum key
  distribution~\cite{BB84,SP00}.

  Thus, the level of security has to be lowered for implementing other
  functionalities, and we have to achieve a reasonable balance between
  realistic assumptions under consideration of current and future
  technology---as weak as possible---and yet meaningful security---as
  strong as possible. For that purpose, we specify cryptographic
  models to capture various notions of security and to impose
  realistic restrictions on the adversary. To mention just a few, such
  models consider limited computing power, limited memory
  size~\cite{Maurer92,DFSS05}, a common resource with special
  properties (e.g.\ initially shared randomness), noisy
  storage~\cite{WST08} or restricted quantum measurement (e.g.\ a
  limited set of measurements~\cite{KMP04} or a limited set of qubits
  to be measured at the same time~\cite{Salvail98}).

  \paragraph{\sc Computational Security.}
  \index{security!computational} Restricting the adversarial classical
  computing power and time is currently the most applied model in
  practical public-key cryptography. Thus, it is known as the
  \emph{plain model}\index{plain model}, achieving \emph{computational
  security} based on classical hardness assumptions that some problems
  are computationally infeasible to solve in polynomial\index{polynomial}
  time\footnote{An algorithm is poly-time, if its running time is
  upper bounded by a polynomial in the size of its input,
  i.e.~$O(n^c)$. In more detail, there exist constants $c > 1$ and $n_0$
  such that $poly(n) \leq n^{c}$ for all $n > n_0$. As synonyms, we
  often use \emph{feasible} or \emph{efficient}.}. Usually, security
  is shown by reducing the security of the actual scheme to that of a
  well-known mathematical problem. However, the hardness of such
  complexity assumptions is unproven.

  It should also not go unnoted that with the emergence of quantum
  computers which, due to their speed-up in running time, have great
  potential to solve several of the basic assumptions in polynomial
  time, security of various crypto-systems would fold. To give
  examples, Shor showed algorithms for efficiently factoring large
  integers~\cite{Shor97}, which would jeopardize the RSA assumption,
  and for the related problem of computing discrete logarithms
  underlying e.g.\ the ElGamal encryption system. Grover's algorithm
  for conducting a search through some unstructured search space shows
  a quadratic speed-up over classical computation. This, for instance,
  also affects the time of performing exhaustive search over the set
  of possible keys, used in symmetric crypto-systems (e.g.\ DES). Of
  course, these algorithms only yield profitable results, if
  large-scale quantum computers can be built. Interestingly, the very
  quantum effects that makes them so powerful, also makes them so
  difficult to control---so far.

  \paragraph{\sc Quantum-Computational Security.}
  \index{security!quantum-computational} Recently, the new sub-field
  of so-called post-quant\-um crypto\-gra\-phy has emerged within
  public-key cryptography.\footnote{The common classification might be
  slightly confusing, in that the notion ``post-quantum'' relates to
  the time after the successful development of large-scale
  \emph{quantum computers} as opposed to \emph{quantum cryptography.}}
  There, the focus lies on researching assumptions which are believed
  to be hard even on a quantum computer, and thus, on achieving
  \emph{quantum-computational security}. Post-quantum crypto-schemes
  include, for instance, the McEliece crypto-system based on a
  coding-theoretic problem~\cite{McEliece78} and lattice-based
  crypto-systems (e.g.~\cite{Ajtai96, Regev05}). The latter provide,
  besides good efficiency when en- and decoding, the merit that
  breaking the security of such protocols implies to solve a hard
  lattice problem in the \emph{worst case}. However, we should stress
  also in this context that this hardness is again assumed; formal
  proofs are still to come. In this work, we will use lattice-based
  crypto-systems for implementing mixed commitment schemes, secure in
  the quantum world (Chapters~\ref{chap:hybrid.security} and
  \ref{chap:framework}).

  \paragraph{\sc Quantum Security.}
  In contrast to security through mathematical hardness assumptions in
  classical cryptography, the security in quantum cryptography is
  based on quantum mechanical laws. Proofs for physical limitations
  are not by reduction as for computational limitations but in
  information-theoretic terms. That means that in such models, an
  adversary does not learn \emph{any} information, except with at most
  negligible\index{negligible} probability.\footnote{Negligible in $n$
  means that any function of $n$ is smaller than the inverse of any
  polynomial, provided $n$ is sufficiently large, i.e., for all
  constants $c$ there exists a constant $n_c$ such that $\negl{n} \leq
  n^{-c}$ for all $n > n_c$.}

  \paragraph{\sc Bounded-Quantum-Storage Model.}
  \index{bounded-quantum-storage model} In the quantum cryptographic
  setting, one such physical limitation is formalized in the
  \emph{bounded-quantum-storage model} (BQSM), proposed
  in~\cite{DFSS05}. The intuitive idea behind the model is that the
  most sensitive information is encoded in qubits that are transmitted
  in the first phase of the protocol. Then, at some later point,
  typically an announcement of the encoding bases follows to complete
  the task at hand. Now, under the assumption that an adversary's
  quantum memory size is limited, he cannot store all of the qubits
  but has to measure some fraction. Thus, by converting quantum
  information into classical information without complete knowledge of
  the right bases, information gets irreversibly destroyed.

  The protocols in this model achieve unconditional protection against
  cheating by one of the players, while if the other is corrupted, the
  protocols are secure under the sole assumption that his quantum
  storage is of limited size, namely of size at most a constant
  fraction of the qubits sent. Such a bound can also be applied to an
  external eavesdropper's quantum memory by slightly extending the
  respective original protocol. The underlying motivation for the BQSM
  is the fact that transmission and measurement of qubits is well
  within reach of current technology. Storing quantum information
  however requires keeping the qubit state stable under controlled
  conditions for a non-negligible time, which still constitutes a
  major technological challenge, and an attack would require large
  quantum storage with a long lifetime. In contrast, honest parties,
  following the protocol, do not need quantum memory at
  all. Furthermore, neither honest nor dishonest parties are bounded
  with respect to their classical storage or computing power. We want
  to stress that the impossibility results against the
  bounded-classical-storage model (see
  e.g.~\cite{Maurer90,Maurer92,CCM98,DM04}) do not hold in the quantum
  setting.\footnote{The bounded-classical-storage model ensures
  security as long as the adversary's memory size is at most quadratic
  in the memory size of the honest players. A favorably larger gap
  between the storage assumptions on honest and dishonest parties was
  shown to be impossible~\cite{DM04}.} Hence, the BQSM is realistic
  for fundamental physical reasons and potentially useful in practice.

  Many two-party applications investigated in the BQSM (like
  identification) are not necessarily required to run over large
  distances to be considered useful. Thus, such protocols can actually
  be implemented with existing devices, and many applications have
  been proven BQSM-secure~\cite{DFSS05,DFSS07,Schaffner07}. We will
  work in this model in
  Chapter~\ref{chap:hybrid.security.applications}, where it
  constitutes one of the security layers in our quantum protocols.

  \paragraph{\sc Common-Reference-String-Model.}
  \index{common-reference-string-model} Another useful model, which we
  will consider, is the \emph{common-reference-string-model}
  (CRS-model). In this model, as the name suggests, the parties are
  provided with a classical common public string before communication,
  taken from some fixed distribution that only depends on the security
  parameter. For efficiency and composability, we will often assume
  the model to allow for techniques, which require an initially shared
  random string. However, we consider a random string ``in the sky'' a
  set-up, which is only theoretically useful. To meet more practical
  demands, we suggest in Chapter~\ref{chap:coin.flip} a quantum-secure
  implementation of the CRS-model ``from scratch''.

%%% WORLDS %%%%%%%%%%%%%%%%%%%%%%%%%%%%%%%%%%%%%%%%%%%%%%%%%%%%%%%%%%%%%%

\section{Worlds}
\label{sec:worlds}

  \paragraph{\sc Classical vs.\ Quantum World.}
  We are interested in cryptography in the quantum world, covering
  both quantum and classical cryptographic protocol theory, which is
  evident in the separation of the thesis in the two main parts,
  Part~\ref{part:quantum.cryptography} on quantum cryptography and
  Part~\ref{part:cryptography.in.quantum.world} on classical
  cryptography in the quantum world. Thus, throughout this work, we
  consider quantum potential---achieving very high security in the
  first case but also imposing new demands in the latter. In contrast,
  the (pure) classical world of cryptography does traditionally not
  assume adversarial quantum effects. However, we emphasize our very
  strong requirement also for all classical protocols and proofs to be
  quantum-computationally secure, which implies both the exclusive use
  of post-quantum crypto-schemes, and the avoidance or carefully
  adaption of classical proof techniques.

  \paragraph{\sc Ideal vs.\ Real World.}
  \index{two-world paradigm} \index{ideal world} \index{ideal-world
  adversary} \index{ideal functionality} \index{real world}
  \index{real-world adversary} 
  For the definition of security, we work
  in two different worlds, which are captured in the \emph{two-world
  paradigm} of simulation-based proofs. The basic idea of the paradigm
  is to first specify the ideal functionality $\F$ that models the
  intended behavior of the protocol, or in other words, the properties
  we would have in an \emph{ideal world}. The ideal functionality can
  be thought of as a trusted third party or simply a black-box that
  gets private inputs from the players, accomplishes a specific task
  without leaking any information, and then outputs the result to the
  respective player. Honest and dishonest players in the ideal world
  are modeled by probabilistic poly-time machines, denoted by $\hP$
  and $\dhP$, respectively. The \emph{real world} captures the actual
  protocol $\Pi$, consisting of message exchange between the parties
  and local computations. Recall that real-world players are indicated
  by honest $\Pl$ and dishonest $\dP$.

  Now, the input-output behavior of $\F$ defines the required
  input-output behavior of $\Pi$. Intuitively, if the executions are
  indistinguishable, security of the protocol in real life follows. In
  other words, a dishonest real-world player $\dP$ that attacks
  protocol $\Pi$ cannot achieve (significantly) more than an
  ideal-world adversary $\dhP$, attacking the corresponding ideal
  functionality $\F$. We will make this aspect more formal in
  Section~\ref{sec:security.definition}.

%%% PRIMITIVES %%%%%%%%%%%%%%%%%%%%%%%%%%%%%%%%%%%%%%%%%%%%%%%%%%%%%%%%%%

\section{Primitives}
\label{sec:primitives}

In the following, we will describe those two-party cryptographic
\Index{primitives}, along with some known facts about them, that are
relevant in the context of this work. Primitives are fundamental
problems that are later used as basic building blocks in larger outer
protocols. Discussed on their own, primitives might seem to be
somewhat limited but still constitute intriguing thought
experiments. For clarification, an identification scheme, as discussed
in Section~\ref{sec:primitives.id}, may commonly not count as a
primitive per se, although it may well constitute a building block in
a larger outer protocol. Our prime purpose for introducing it in the
context of primitives, however, is the close relation to oblivious
transfer in its construction.

%%%%%%%%%%%%%%%%%%%%%%%%%%%%%%%%%%%%%%%%%%%%%%%%%%%%%%%%%%%%%%%%%%%%%%%%%%

\subsection{Commitments}
\label{sec:primitives.commitment}
\index{commitment}
\index{commitment!binding}
\index{commitment!hiding}

  Commitment schemes constitute a very important building block within
  cryptographic protocols. In fact, all our protocols proposed here
  implementing a wide range of cryptographic tasks, make use of
  various types of commitment schemes, which may indicate the
  significance of the construction. Commitments can be realized with
  classical schemes or through quantum communication. Here, we will
  only discuss and construct commitments from classical crypto
  schemes, but with a strong requirement of quantum-computational
  security.

  Intuitively, a commitment scheme allows a player to commit to a
  value while keeping it hidden ({\em hiding property}), yet
  preserving the possibility to reveal the value fixed at commitment
  time later during the so-called opening phase ({\em binding
  property}). More formally, a basic commitment scheme
  $\commitx{m}{r}$ takes a message $m$ and some random variable $r$ as
  input. Depending on the respective scheme, the message $m$ can be a
  single bit (\emph{bit commitment}) or a bit sequence (\emph{string
  commitment}). The length of the randomness $r$ is polynomial in the
  security parameter. It is also possible to construct a so-called
  \emph{keyed commitment schemes} of the form $\commitk{m}{r}{K}$,
  which takes key $K$ as additional input. The most common way of
  opening commitment $\commitx{m}{r}$ to reveal the committed message
  $m$ when time is ripe, is to send values $m$ and $r$ in plain, so
  that the receiver of the commitment can check its validity. In
  Chapter~\ref{chap:framework}, we will change this way of opening a
  commitment, due to the special requirements of the particular
  construction there.

  The hiding property is formalized by the non-existence of a
  distinguisher able to distinguish with non-negligible advantage
  between two commitments, i.e., we have indistinguishability between
  two commitments with $\commitx{m_1}{r_1} \approx
  \commitx{m_2}{r_2}$. The binding property is fulfilled, if it is
  infeasible for a forger to open a commitment to more than one valid
  value, i.e., we have $\commitx{m_1}{r_1} \neq \commitx{m_2}{r_2}$
  for $m_1 \neq m_2$. Each property, hiding and binding, can be
  satisfied unconditionally or subject to a complexity assumption.
  The ideal case of unconditionally secure commitments, i.e.\
  unconditionally hiding and unconditionally binding at the same time,
  is impossible. Consequently, we have to decide on one of the two
  flavors of commitment schemes, namely unconditionally hiding and
  computationally binding or unconditionally binding and
  computationally hiding.\footnote{Note that certain
  applications---beyond the scope of this work---have computational
  security simultaneously for both properties hiding and binding.} For
  completeness, it is worth noting that the same applies in quantum
  cryptography~\cite{Mayers97,LC97}, where perfect commitments can
  only be achieved when assuming some restrictions on the adversary,
  for instance, the BQSM-model~\cite{DFSS05,DFRSS07}.

  In the context of oblivious transfer (OT; see
  Section~\ref{sec:primitives.ot}), we know that a classical
  commitment does not imply classical OT without any additional
  requirement (such as key agreement). In contrast, a classical
  commitment implies quantum OT, which is all the more interesting as
  OT is complete for secure two-party computation. This implication in
  the quantum case was realized in~\cite{CK88} and proven partially
  in~\cite{Yao95,Mayers96,CDMS04}. We will give the first full proof
  in Section~\ref{sec:hybrid.security.ot}.

  Commitments are equivalent to \Index{one-way function}s, i.e., a
  function $f: \zo^* \ra \zo^*$ for which it is easy to compute
  $f(x)$, given $x$. But given only $y = f(x)$ where $x$ is random, it
  is computationally infeasible in poly-time to compute any element in
  $f^{-1}(y)$. Thus, from an appropriate one-way function, secure
  against quantum adversaries, we can construct quantum-secure
  commitment schemes (e.g.~\cite{Naor91}). Bit commitments, in turn,
  imply a quantum-secure coin-flip, which we will show in
  Chapter~\ref{chap:coin.flip}. Naturally, the hiding, respectively
  binding, property holds with unconditional security in the classical
  and the quantum setting, if the distinguisher, respectively the
  forger, is unrestricted with respect to his (quantum) computational
  power. Recall that in case of a poly-time bounded classical
  distinguisher, respectively forger, the commitment is
  computationally hiding, respectively binding. The computationally
  hiding property translates to the quantum world by simply allowing
  the distinguisher to be quantum. However, the case of a quantum
  forger cannot be handled in such a straightforward manner, since the
  commonly used classical proof technique relies on rewinding the
  possibly dishonest committer, which is in general prohibited by the
  laws of quantum mechanics.

  Another restriction on rewinding occurs when committing to a string
  instead of a single bit. Solutions for proving string commitments
  secure are known for the classical case, but they cannot be adapted
  to the quantum world. Thus, solutions for quantum-secure constant
  round coin-flipping are yet to come (see
  Chapter~\ref{chap:framework} and also
  Section~\ref{sec:primitives.coin.flip}).

%%%%%%%%%%%%%%%%%%%%%%%%%%%%%%%%%%%%%%%%%%%%%%%%%%%%%%%%%%%%%%%%%%%%%%%%%%

\subsection{Oblivious Transfer}
\label{sec:primitives.ot}
\index{oblivious transfer}

  As already indicated, another highly relevant primitive in
  cryptography is oblivious transfer, commonly abbreviated by
  OT. Interestingly, the basic idea for OT was first proposed by
  Wiesner in the context of quantum cryptography, where he suggests
  conjugate coding as ''a means for transmitting two messages either
  but not both of which may be received''~\cite[p.~79]{Wiesner83}. OT
  as a cryptographic concept was then introduced by Rabin ($\tt
  Rabin\text{--}OT$ in~\cite{Rabin81}) and Even, Goldreich, and Lempel
  ($\ot{1}{2}$ in~\cite{EGL85}). OT is a \emph{complete} cryptographic
  primitive, i.e., it is sufficient for secure two-party
  computation~\cite{Kilian88}, meaning that secure 1-2 OT allows for
  implementing any cryptographic two-party functionality.
 
  In this work, we are mainly interested in $\ot{1}{2}^\ell$, i.e.\
  one[message]-out-of-two[messages] oblivious transfer, with message
  length $\ell$. In an $\ot{1}{2}^\ell$ protocol, the sender sends two
  $\ell$-bit strings $s_0$ and $s_1$ to the receiver. The receiver can
  choose which string to receive, i.e.~$s_c$ according to his choice
  bit $c$, but does not learn anything about the other message
  $s_{1-c}$. At the same time, the sender does not learn $c$, i.e., he
  does not learn which string the other party has chosen.

  As in the classical case, quantum OT cannot be implemented without
  any additional restrictions, such as bounded quantum memory in the
  BQSM~\cite{DFSS05, DFRSS07}. However, in contrast to classical OT,
  quantum OT reduces to classical commitment, as already discussed
  before (more in Section~\ref{sec:hybrid.security.ot}).

  $\tt Rand\text{--}OT$ is a randomized variation of general
  $\ot{1}{2}$ and essentially coincides, except that the sender does
  not input the two messages himself, rather they are generated
  uniformly at random during the protocol (and then output to the
  sender). For completeness, we note that $\tt{Rabin\text{--}OT}$ is
  another slightly varied but equivalent version of $\ot{1}{2}$, where
  the sender transmits a message $s$ with probability $1/2$. However,
  he remains oblivious about whether or not the receiver actually got
  $s$. Thus, $\tt{Rabin\text{--}OT}$ can be seen as a \emph{secure
  erasure channel}.

  We conclude this introduction by mentioning two natural
  generalizations of $\ot{1}{2}$. First, $\ot{1}{n}$ allows the
  receiver to obtain exactly one element out of a set of $n$
  elements. This application is similar to private information
  retrieval in database settings but constitutes a stronger notion
  than the latter, as it additionally requires that the user is
  oblivious to all other items (as in database privacy). An even
  further generalization is $\ot{m}{n}$, in which the receiver can
  choose a subset of $m$ elements out of the entire set of size $n$.
  Interestingly, $\ot{1}{n}$ underlies the construction of a quantum
  identification scheme in~\cite{DFSS05}, which exemplifies the
  significance of the primitive. More details on this transformation
  are given in Section~\ref{sec:primitives.id}.

%%%%%%%%%%%%%%%%%%%%%%%%%%%%%%%%%%%%%%%%%%%%%%%%%%%%%%%%%%%%%%%%%%%%%%%%%%

\subsection{Password-Based Identification}
\label{sec:primitives.id}
\index{identification}

  A password-based identification scheme (ID, in short) allows a user
  to identify himself to a server by proving his knowledge of a
  previously agreed secret password. In addition, we will put forward
  the following security requirement: Any party that is not in
  possession of the valid password can (essentially) not succeed by
  any other means but trying to guess. This means that a user without
  password---or in other words, a user who pretends to be someone
  else---cannot delude the server with a probability that exceeds the
  probability of guessing the respective password. Similarly, the
  server can only guess a user's password and then learn whether the
  guess is correct or not---but no information beyond that. This in
  particular implies that the same password may be safely reused in
  further runs of the protocol. Furthermore, our aim is to develop a
  scheme that tolerates a possibly non-uniform password, or in short,
  a realistic user-memorizable password (such as a PIN code) without
  jeopardizing security.

  For reasons of their significance in any authenticated set-up, a
  wide range of classical and quantum ID-schemes can be found in the
  literature (see Section~\ref{sec:hybrid.security.id}). Here, we will
  however focus on the quantum identification scheme, proposed
  in~\cite{DFSS05} and proven secure against any dishonest server with
  bounded quantum storage. Interestingly, in the context of
  primitives, it is constructed out of an extension of a
  \emph{randomized} $\ot{1}{2}^\ell$ to a \emph{randomized}
  $\ot{1}{n}^\ell$. We will briefly sketch the intuitive idea here:
  Recall that such a $\ot{1}{n}^\ell$ supplies the user with $n$
  random $\ell$-bit strings but yields only one of the strings on the
  server's side. Such a scheme can then be used for the purpose of
  identification, when the server ``chooses'' the one specific string
  indexed by the password, and the user proves which of the $n$
  strings obtained is the one with indices matching the password. Note
  that this last step of comparison must be secured by another
  cryptographic technique such as a hash-function and the strings must
  have large Hamming distance, which is not covered by the OT
  application itself. However, by the nature of secure OT, a dishonest
  user does not gain any information on the server's choice and thus,
  does not know which string is the one getting accepted. A dishonest
  server can likewise not do better than guessing a choice, and so the
  string he later receives from the user is most probably random to
  him and hence, contains no information on the password. We want to
  stress again that for simplicity, we skip many subtle but important
  details of the final ID-scheme as well as the means regarding better
  efficiency. More details are given in
  Section~\ref{sec:hybrid.security.id}, where we propose an extension
  of the scheme towards higher and more diverse security.

%%%%%%%%%%%%%%%%%%%%%%%%%%%%%%%%%%%%%%%%%%%%%%%%%%%%%%%%%%%%%%%%%%%%%%%%%%

\subsection{Coin-Flipping}
\label{sec:primitives.coin.flip}
\index{coin-flipping}

  True randomness is a crucial ingredient in cryptographic
  applications. Therefore, coin-flipping (or coin-tossing) is yet
  another essential primitive in this work. Secure coin-flipping
  allows two parties to agree on a uniformly random bit in a fair way,
  which means that neither party can influence the value of the coin
  to his advantage. Intuition suggest that this should be easily
  obtainable for an actual coin-toss if the parties met, flipped a
  coin together and simply looked at the outcome. Now, we want to
  achieve a similar fairness even when the parties are communicating
  over a distance. This problem was first formalized in cryptographic
  terms by Blum as \emph{coin-flipping by telephone}~\cite{Blum81}.

  An ideal coin-flip can be modeled as follows: Each player inputs a
  bit of his choice, independently of each other, and the box then
  outputs the exclusive disjunction of the two bits as the coin. When
  implementing the primitive however, we must consider that one party
  must make a first move during communication, and therefore the other
  one may choose his bit accordingly. The most straightforward way to
  achieve fairness also over a distance is by bit commitments as
  follows. The first player chooses a random bit $x_1$ and commits to
  it, the other one then sends his bit $x_2$ in plain, then the
  commitment is opened, and the resulting coin is $x_1 \oplus
  x_2$. Thus, bit commitment implies secure coin-flipping, since the
  first player is bound to his bit, but can still keep it hidden until
  the second player makes his move.

  Secure implementations for coin-flipping have been proposed also by
  means of quantum communication. For instance, solutions for a
  strong\index{coin-flipping!strong} coin-flip with a potential,
  optimal coin bias of approx.\ $0.2$ and for the
  weaker\index{coin-flipping!weak} notation with arbitrary small
  bias. Note that in the quantum literature, ``strong'' or ``weak''
  indicates weather the dishonest party cannot bias the coin more than
  specified or the dishonest party can influence the coin entirely
  towards one outcome but only by the specified bias towards the other
  value, respectively (see e.g.~\cite{Wehner08} for an overview). We
  want to stress that throughout this work, we use the (intuitive)
  literal interpretation of a ``weak'' and ``strong'' coin, indicating
  its degrees of security.

  We are interested in the standard coin-flipping protocol with
  classical messages exchange, but where the adversary is assumed to
  be capable of quantum computing. Even when basing the embedded
  commitment on a computational assumption that withstands quantum
  attacks, the security proof of the entire coin-flipping and its
  integration into other applications could previously not be
  naturally translated from the classical to the quantum world. We
  will propose a solution based on Watrous' quantum rewinding in
  Chapter~\ref{chap:coin.flip}. Certainly, the desirable protocol
  would be constant round\index{constant round complexity}, meaning
  that a string of coins can be flipped in a constant number of
  rounds, instead of having the number of rounds depending on the
  number of coins. Towards this aim, we present a framework that
  transforms weaker demands on the coins into very strong properties,
  with the final result of a fully simulatable coin-flipping protocol,
  secure against poly-sized quantum adversaries, which can be
  implemented in the plain model from scratch (see
  Chapter~\ref{chap:framework}). On a side note, implementing constant
  round coin-flipping is an open problem in the quantum
  setting. Interestingly, the first quantum application, namely
  quantum key distribution (QKD), enables two parties to produce a
  secret random bit-string (which is then used as a key in symmetric
  crypto-systems). However, by assumption on its purpose, the
  QKD-setting does not have to hold against an internal dishonest
  party. The requirements for secure coin-flipping are much stronger
  in this sense, and it turns out that in a typically QKD-protocol,
  the key could theoretically always be biased by one of the parties.

  We conclude here by stressing the importance of truly random, fair
  coins for cryptographic purposes. Namely, by producing a string of
  coins, the communicating parties can interactively generate a common
  random string from scratch. The generation can then be integrated
  into other (classical or quantum) cryptographic protocols that work
  In the common-reference-string-model. This way, various interesting
  applications can be implemented entirely in a simple manner without
  any set-up assumptions. We will discuss some examples thereof in
  Chapter~\ref{chap:coin.flip.applications}.

%%%%%%%%%%%%%%%%%%%%%%%%%%%%%%%%%%%%%%%%%%%%%%%%%%%%%%%%%%%%%%%%%%%%%%%%%%

\subsection{Zero-Knowledge}
\label{sec:primitives.zk}
\index{zero-knowledge}

  Informally, a zero-knowledge (ZK) proof system is ``both convincing
  and yet yield nothing beyond the validity of the
  assertion''~\cite{Goldreich10}[p.~1]. Thus, only this one bit of
  knowledge is communicated from prover to verifier. Such building
  blocks are typically used in outer cryptographic protocols for
  enforcing that potentially dishonest players behave according to the
  protocol specification, namely, they are required to prove in
  zero-knowledge the correctness of a secret-based action without
  leaking the secret. As examples, we want to mention zero-knowledge
  proofs for Graph Isomorphism and Graph 3-Coloring, proven secure in
  the classical and quantum setting
  by~\cite{GMW91} and~\cite{Watrous09}, respectively. For a survey
  about zero-knowledge, we refer e.g.\
  to~\cite{Goldreich01,Goldreich02,Goldreich10}.

  On a very intuitive level, such proof systems typically proceed in
  several rounds of a protocol. In each round, the prover must answer
  a \emph{challenge} from the verifier which he does not know
  beforehand. In order to be able to answer all challenges in all
  rounds, the prover must know whatever he claims. We differentiate
  between \emph{proofs}\index{zero-knowledge!proofs} and \emph{proofs
  of knowledge}. The respective definitions are given by two
  properties, which vary and are informally stated below. Loosely
  speaking, the distinction between proofs and proofs of knowledge is
  drawn in the content of the assertion: In a proof the prover claims
  the existence of an object. In contrast, in a proof of knowledge, he
  claims knowledge of an object. We stress that a proof of existence
  cannot be modeled via an ideal functionality in the natural way,
  whereas a proof of knowledge can. The third property of
  zero-knowledge does not differ in both systems.
  
  \paragraph{\sc Zero-Knowledge Proofs.}
  \index{zero-knowledge!proofs!completeness}
  \index{zero-knowledge!proofs!soundness}

  Informally, a zero-knowledge proof for set $\cal L$ on common input
  $x$ yields no other knowledge than the validity of membership $x \in
  \cal{L}$, which holds if the following three requirements are
  satisfied. First, if the statement is true, i.e.~$x \in \cal{L}$, an
  honest verifier will be convinced of this fact by an honest prover,
  and thus accept the proof (\emph{completeness}). This holds with
  overwhelming probability. Second, if the statement is false, i.e.~$x
  \notin \cal{L}$, a dishonest prover cannot convince an honest
  verifier of the contrary, except with low probability
  (\emph{soundness}). And last, if the statement is true, a dishonest
  verifier learns nothing beyond this fact
  (\emph{zero-knowledge}). The latter is shown by formally arguing
  that, given only the statement, a simulator can (by itself) produce
  a transcript that is indistinguishable from a real interaction
  between honest prover and dishonest verifier. The degree of
  indistinguishability then specifies the flavor of
  zero-knowledge. Note also that the first two properties are general
  aspects of interactive proof systems. However, in this context, they
  are defined in probabilistic terms, and we require the completeness
  and the soundness error to be negligible, at least after sufficient
  (sequential) repetitions.

  The notion of (interactive) zero-knowledge first appeared
  in~\cite{GMR85} by Goldwasser \emph{et al.} Then in~\cite{GMW86}, it
  was shown that ZK proofs exist for any \NP-language under the
  assumption that commitments exist, which in turn is implied in the
  existence of one-way functions~\cite{Naor91,HILL99}.\footnote{As in
  standard literature, $\NP$ (\emph{non-deterministic polynomial
  time})\index{complexity class!NP ($\NP$)} refers to the set of all
  decision problems, where the "yes"-instances can be recognized in
  polynomial time by a non-deterministic Turing machine. The class
  $\Poly$ (\emph{deterministic polynomial time})\index{complexity
  class!P ($\Poly$)} contains all decision problems which can be solved
  by a deterministic Turing machine in polynomial time. Note that
  every set in $\Poly$ has a trivial zero-knowledge proof in which the
  verifier proves membership by himself.} Blum \emph{et al.}  showed
  that the interaction between prover and verifier in any ZK proof can
  be replaced by sharing a short common reference string available to
  all parties from the start of the protocol~\cite{BFM88}. Note that a
  reference string is a weaker requirement than interaction. The
  requirement for non-interactive zero-knowledge is simpler than for
  general zero-knowledge, since all information is communicated
  mono-directional from prover to verifier. The verifier does not
  influence the distribution in the real world. Thus, in the ideal
  world, we require a simulator that only produces output that is
  indistinguishable from the real distribution of the output. We will
  use such a generic construction in Section~\ref{sec:coin.iqzk},
  where we show a simple transformation from non-interactive
  zero-knowledge to interactive zero-knowledge in the quantum world.

  \paragraph{\sc Zero-Knowledge Proofs of Knowledge.}
  \index{zero-knowledge!proofs of knowledge}  
  \index{zero-knowledge!proofs of knowledge!completeness}
  \index{zero-knowledge!proofs of knowledge!special soundness}

  Intuitively, a zero-knowledge proof of knowledge for relation $\Rel$
  with common instance $x$ and prover's private witness $w$ yields no
  other knowledge to the verifier than the validity of $(x,w) \in
  \Rel$. Especially, it holds that witness $w$ is not leaked. This is
  formulated by the following three requirements. First, if the prover
  follows the protocol and knows $w$, such that $(x,w) \in \Rel$, he
  will always convince the verifier. Note that this holds with
  probability 1, or in other words, \emph{completeness} is defined
  deterministically rather than probabilistically. Second, if the
  (possibly dishonest) prover can with whatever strategy convince the
  verifier to accept, then he knows $w$. This holds, except with
  probability determined by the knowledge error, which again must be
  negligible in the length of the challenge (\emph{special
  soundness}). Note here that in the context of machines, we interpret
  knowledge via behavior. In more detail, to define knowledge, we
  specify a knowledge extractor for which it holds that if the
  extractor can extract $w$ from the prover, for instance, by
  simulating two accepting conversations via rewinding, we say that
  the prover knows $w$. This idea prevents the prover to output the
  knowledge itself, and therewith, the last requirement, i.e.\ the
  property of \emph{zero-knowledge}, capturing that a dishonest
  verifier learns (essentially) nothing, remains unchanged from the
  description above.

  The concept of proofs of knowledge was first introduced also
  in~\cite{GMR85} and formulated in greater detail in~\cite{BG93}. We
  will propose a quantum-secure zero-knowledge proof of knowledge
  based on simulatable witness encoding in
  Section~\ref{sec:coin.zkpk}.

  \paragraph{\sc $\Sigma$-protocols.} 
  A \Index{$\Sigma$-protocol} is a special case of the above, in that
  it is an honest-verifier zero-knowledge proof of knowledge. Such a
  protocol is of three-move-form, starting with the prover's message
  $\tt a^\Sigma$, followed by the verifier's challenge $\tt c^\Sigma$,
  and concluded with the prover's response $\tt z^\Sigma$. Its name
  originates from this form, as the ``$\Sigma$'' visualizes first the
  common input $x$, and then the flow of communication (from top to
  bottom). The flavor of honest-verifier
  zero-knowledge\index{zero-knowledge!honest-verifier} (HVZK),
  although weaker than general zero-knowledge, still allows for useful
  building blocks, which would be impossible to implement with a
  stronger notion in certain settings. As the name suggests, it
  captures a scenario in which, instead of covering any feasible
  verifier strategy, the verifier behaves honest (or rather
  honest-but-curious), and maintains and outputs a transcript of the
  entire interaction.

  By its nature of being a proof of knowledge, special soundness holds
  for a $\Sigma$-protocol, and therewith, that from two accepting
  conversations with different challenges a $w$ can be extracted such
  that $(x,w) \in \Rel$. We will use an honest-verifier simulator as a
  black-box in Sections~\ref{sec:extended.commit.compiler}
  and~\ref{sec:extended.commit.coin} to receive, on input $x$, a valid
  conversation $\big( {\tt a^{\Sigma}, c^\Sigma, z^{\Sigma}}
  \big)$. Intuitively, the purpose of using $\Sigma$-protocols then
  lies in the fact that only one valid conversation could have been
  produced unequivocally without knowing the witness.

%%%%%%%%%%%%%%%%%%%%%%%%%%%%%%%%%%%%%%%%%%%%%%%%%%%%%%%%%%%%%%%%%%%%%%%%
  
\subsection{Secure Secret Sharing}
\label{sec:primitives.sss}
\index{secret sharing}

  Secure secret sharing refers---as the name suggests---to a method
  for distributing one secret in several shares amongst the
  players. The secret can only be reconstructed by combining a
  sufficient number of shares (threshold), but any individual share or
  any number of shares below the threshold does not contain any useful
  information on its own.

  Classical secret sharing schemes were introduced independently
  in~\cite{Shamir79} and~\cite{Blakely79}, and quantum secret sharing
  was first proposed in~\cite{HBB99,CGL99}. Classical secret sharing
  is an extremely powerful primitive and is widely used in multi-party
  computation. We will use secret sharing as a building block for
  equipping our mixed commitments with trapdoor openings
  (Section~\ref{sec:mixed.commit.trapdoor.opening}). This extended
  construction will then constitute one essential step in
  bootstrapping fully simulatable coin-flipping from weak
  coin-flipping (Chapter~\ref{chap:framework}).

%%%%%%%%%%%%%%%%%%%%%%%%%%%%%%%%%%%%%%%%%%%%%%%%%%%%%%%%%%%%%%%%%%%%%%%%%%
%%% QUANTUM TOOLS %%%%%%%%%%%%%%%%%%%%%%%%%%%%%%%%%%%%%%%%%%%%%%%%%%%%%%%%
%%%%%%%%%%%%%%%%%%%%%%%%%%%%%%%%%%%%%%%%%%%%%%%%%%%%%%%%%%%%%%%%%%%%%%%%%%

\chapter{Quantum Tools}
\label{chap:quantum.tools}

\emph{Quantum} refers to a discrete unit of a physical quantity at the
smallest scale, for which quantum mechanics constitutes the underlying
mathematical framework. For the main part of this thesis, we will work
with abstract mathematical objects, as our focus lies on theory, as
opposed to realizing, for instance, a qubit as an actual physical
system such as a ``light quantum'', encoded by polarization of a
photon.

In this chapter, we give an overview of the aspects of quantum
mechanics, essential for this work. The connection between the
mathematical description and physical reality is best reflected in the
postulates of quantum mechanics, which are covered in
Section~\ref{sec:postulates}. This section is also intended to fix the
terminology we will use later on. Next, we will describe distance
measures (Section~\ref{sec:distinguishability}) and uncertainty
measures (Section~\ref{sec:entropies}). Then we will discuss the
concept of information reconciliation and privacy amplification
(Section~\ref{sec:reconciliation.amplification}) as well as the
problems of rewinding in general quantum systems and the technique of
quantum rewinding (Section~\ref{sec:rewinding}). Finally in
Section~\ref{sec:security.definition}, we will introduce the
definitions of security, which underlie all our following main
results.

%%% POSTULATES %%%%%%%%%%%%%%%%%%%%%%%%%%%%%%%%%%%%%%%%%%%%%%%%%%%%%%%%%%%

\section{Postulates and Terminology}
\label{sec:postulates}

We now briefly introduce the field of quantum mechanics on the basis
of its postulates, capturing quantum-physical events and processes
in mathematical formalisms. We will closely follow the descriptions
given in~\cite{NC00} and refer thereto for more details.

  \paragraph{\sc Description of an isolated system.}

  \index{quantum state} A general $d$-dimensional quantum state, where
  $d \in \N$, is described mathematically by a positive semi-definite
  \emph{density matrix} $\rho$ defined in the complex Hilbert space of
  dimension $d$, i.e., a complete inner product space denoted by
  $\Hil_d$. The standard notation to write a \emph{pure quantum
  state}\index{quantum state!pure} is represented in Dirac's
  \emph{bra-ket notation} by a vector as $\ket{\Psi} \in \Hil_d$, and
  is given, for complex coefficients $\alpha_i \in \C$, as
  \begin{eqnarray}
    \ket{\Psi} = \sum_{i = 0}^{d-1} \alpha_i \ket{i} \, . \label{eq.state}
  \end{eqnarray}
  The orthonormal \emph{basis}\index{basis} is denoted by the set
  $\{\ket{0},\ldots,\ket{d-1}\}$, i.e.\ the linearly independent
  spanning set of mutually orthogonal unit vectors. The form of a pure
  state as given in Eq.~\eqref{eq.state} as linear combinations nicely
  reflects an interference phenomenon unique to the quantum world,
  namely the \emph{superposition}\index{superposition} of basis
  states. Informally speaking, it highlights the fact that a quantum
  particle is in all possible basis states at once. And thus, a
  complete description of such a particle must include the description
  of every possible state as well as the probability of the particle
  being in that state, given by $|\alpha_i|^2$ for each respective
  $\ket{i}$. By the normalization condition, the total sum of
  probabilities, i.e.\ $\sum_i |\alpha_i|^2$, equals 1.
  
  A \emph{mixed quantum state}\index{quantum state!mixed} is a
  statistical ensemble of pure states $\{ \lambda_i, \ket{i} \}$,
  where again $\{ \ket{i} \}_i$ forms a basis, and can be represented as
  density matrix by
  \begin{eqnarray}
    \rho = \sum_{i} \lambda_i \proj{i} \, ,
    \label{eq.mixed.state}
  \end{eqnarray}
  with eigenvalues $\lambda_i$ and eigenstates $\ket{i}$. Again, it
  holds that the system is in state $\ket{i}$ with probability
  $\lambda_i$, where $\lambda_i \geq 0$ and, by the normalization
  condition, we have $\sum_i \lambda_i = 1$.

  More specifically, a \emph{qubit}\index{qubit} is a two-dimensional
  pure quantum state living in $\Hil_2$. The \emph{computational
  basis}\index{basis!computational} (also called $\+\,$-basis,
  standard basis, canonical basis, or rectilinear basis) is defined by
  the pair $\{ \ket{0}, \ket{1} \}$, where
  \begin{equation}
    \ket{0} = 
    \left(\begin{array}{c}
      1\\
      0
    \end{array}\right)
    \text{ and } 
    \ket{1} = 
    \left(\begin{array}{c}
      0\\
      1
    \end{array}\right) \, .
  \end{equation}
  The pair $\{ \ket{+}, \ket{-} \}$ denotes the \emph{diagonal
  basis}\index{basis!diagonal} (also named the $\x$-basis or Hadamard
  basis), where
  \begin{align}
    & \ket{+} = (\ket{0}+\ket{1})/\sqrt{2} \ \text{ and}\\
    & \ket{-} =(\ket{0}-\ket{1})/\sqrt{2} \, .
  \end{align}
  Another common denotation is $\{ \ket{0}_+, \ket{1}_+ \}$ for the
  computational basis and $\{ \ket{0}_\x, \ket{1}_\x \}$ for the
  diagonal basis. We use $\{ +,\x \}$ as shorthand to refer to the set
  of these two most commonly used \emph{conjugate bases}.

  \paragraph{\sc Evolution in a closed system.}
  
  The dynamics that apply in a closed systems as described above are
  captured in the description of a \emph{unitary
  transform}\index{transformation!unitary} $\op{U}$. $\op{U}$ is
  unitary, if it holds that $\op{U^\dag U} = \id$. Unitary operations
  preserve inner products between vectors, which yields their more
  intuitive expression in outer product representation as
  follows. Define $\ket{out_i} = \op{U} \ket{in_i}$ to be the
  transformation from ``input'' basis $\{ \ket{in_i} \}_i$ into
  ``output'' basis $\{ \ket{out_i} \}_i$. Then,
  \begin{eqnarray}
    \op{U} = \sum_{i} \ket{out_i}\bra{in_i} \, .
  \end{eqnarray}
  From the requirement of unitarity, it is evident that such a
  transformation must be \emph{reversible}\index{reversible}. That
  means that undoing operation $\op{U}$ on $\ket{in}$ corresponds to
  applying its inverse $\op{U^\dag}$ on $\ket{out}$ and recreates
  $\ket{in}$.
  
  For completeness we note that, although part of this postulate, we
  will not consider the refined version of time evolution, defined by
  the Schr\"odinger equation.

  In the more specific case of single qubits, the transformation from
  the computational basis to the diagonal basis, and vice versa, is
  obtained by applying the \emph{Hadamard
  operation}\index{transformation!Hadamard} $\op{H}$, where
  \begin{equation}
    \op{H} = \frac{1}{\sqrt{2}}
    \left(\begin{array}{lr}
      1 & 1\\
      1 &-1
    \end{array}\right) \, ,
  \end{equation}
  and note that $\op{H} = \op{H}^\dag$.  The two-dimensional
  \emph{Identity operator}\index{transformation!Identity} $\id$ is represented
  by matrix
  \begin{equation}
    \id = 
    \left(\begin{array}{lr}
      1 & 0\\
      0 & 1
    \end{array}\right)\, ,
  \end{equation} 
  other important operations are described by the Pauli matrices
  \begin{equation}
    \op{\sigma_X} = 
    \left(\begin{array}{lr}
      0 & 1\\
      1 & 0    
    \end{array}\right) \ \ \text{and} \ \ 
    \op{\sigma_Z} = 
    \left(\begin{array}{lr}
      1 & 0\\
      0 &-1     
    \end{array}\right) \, .
  \end{equation}
  Operator $\op{\sigma_X}$ describes a
  \emph{bit-flip}\index{transformation!bit-flip}. Matrix
  $\op{\sigma_Z}$ defines a \emph{phase-flip}
  operation\index{transformation!phase-flip}, adding a phase factor of
  \ -1 for non-zero entries, and otherwise leaving the bit
  invariant. For completeness, we also explicitly state
  \begin{equation}
    \op{\sigma_Y} = \left(\begin{array}{lr} 0 &-i\\ i & 0
    \end{array}\right) \, ,
  \end{equation}
  but note that $\op{\sigma_Y} = i\op{\sigma_X\sigma_Z}$.
  
  The \emph{controlled-NOT
  operation}\index{transformation!controlled-NOT} $\op{CNOT}$ is a
  combination of $\id$ and $\op{\sigma_X}$ and is defined for two
  input qubits as
  \begin{equation}
    \op{CNOT} = \left(\begin{array}{lccr} 
      1 & 0 & 0 & 0\\ 
      0 & 1 & 0 & 0\\ 
      0 & 0 & 0 & 1\\
      0 & 0 & 1 & 0
    \end{array}\right) \, .
  \end{equation} 
Thus, if the control qubit is 1, $\op{CNOT}$ flips the target
qubit. Otherwise, $\id$ is applied to the target qubit. Or in other
words, the value of the second output qubit corresponds to the
classical exclusive disjunction (XOR).
  
  \paragraph{\sc Quantum measurements.} 
  \index{quantum measurement} To extract information of a quantum
  system, it must be measured. The following descriptions of
  measurements illustrate the \emph{irreversible}\index{irreversible}
  nature of quantum measurements in general, and therewith, the
  disturbance caused by observation. In other words, some information
  about a state before measurement is lost after measurement. This
  fact stands in sharp contrast to the reversible transformations
  within a closed system as described previously.

  Quantum measurements are described by a collection of measurement
  operators\index{quantum measurement!operator} $\M = \{ \op{M}_m
  \}_m$, where $m$ denotes the measurement outcome. The
  \emph{probability} $\prob{m}$ to obtain outcome $m$ when measuring
  state $\ket{\psi}$ with $\M$ is given by
  \begin{eqnarray}
    \prob{m} = \bra{\psi} \op{M^\dag}_m \op{M}_m \ket{\psi} \, ,
    \label{eq.general.measurement.prob}
  \end{eqnarray}
  with completeness equation $\sum_m \op{M^\dag}_m \op{M}_m = \id$, or
  equivalent, $\sum_m \bra{\psi} \op{M^\dag}_m \op{M}_m \ket{\psi} =
  1$. Conditioned on having obtained $m$, the \emph{post-measurement
  state}\index{quantum measurement!post-measurement state} must be
  renormalized to
  \begin{eqnarray}
    \rho_m = \frac{\op{M}_m \ket{\psi}}{\sqrt{\bra{\psi} \op{M}^\dag_m
    \op{M}_m \ket{\psi}}} \, \ .
    \label{eq.general.measurement.state}
  \end{eqnarray}
  We also want to stress that quantum measurements do not necessarily
  commute, that means that different measurement orders may yield
  different measurement outcomes.

  If all operators $\op{M}_m$ are orthogonal projectors, denoted by
  $\op{P}_m = \op{M}_m^\dag \op{M}_m$, we call the measurement
  \emph{projective}\index{quantum measurement!projective} and $\op{M}
  = \sum_m m \op{P}_m$ its\index{quantum measurement!observable}
  \emph{observable}. The respective probability and post-measurement
  state are then given by
    \begin{eqnarray}
      \prob{m} = \bra{\psi}\op{P}_m\ket{\psi}
    \end{eqnarray}
  and
    \begin{eqnarray}
      \frac{\op{P}_m\ket{\psi}}{\sqrt{\prob{m}}} \, ,
    \end{eqnarray}
  Measuring in basis $\{ \ket{m} \}_m$ means to apply a projective
  measurement defined by projectors $\op{P}_m = \proj{m}$. 

  When only specifying mappings $\op{E}_m = \op{M}_m^\dag \op{M}_m$,
  we obtain an expression in the \emph{positive operator-valued
  measure formalism}\index{quantum measurement!positive
  operator-valued measure} (POVM), similar to
  Eq.~\eqref{eq.general.measurement.prob}, namely,
    \begin{eqnarray}
      \prob{m} = \tr(E_m \rho) \, ,
    \end{eqnarray}
  where $\mathcal{E} = \{ \op{E}_m \}_m$ is the POVM, denoting the set
  of Hermitian operators such that $\sum_m \op{E}_m = \id$ and
  $\op{E}_m \geq 0$. This formalism is simpler than the general
  expressions in Eqs.~\eqref{eq.general.measurement.prob}
  and~\eqref{eq.general.measurement.state}, but sufficient for many
  purposes, as it yields simple measurement statistics. It also
  becomes evident here that for a complete description of measuring
  the observable of a quantum system, the formulation of a quantum
  system must include uncertainty in that the probability for all
  possible outcomes must be encoded in it.

  Again more specifically, measuring a single qubit in the
  computational or diagonal basis\index{basis!computational}
  \index{basis!diagonal} means applying the measurement described by
  projectors $\proj{0}$ and $\proj{1}$ or projectors $\proj{+}$ and
  $\proj{-}$, respectively. We want to point out a very important
  consequence of using such conjugate bases (also called mutually
  unbiased bases). Measuring a qubit, prepared in one of two conjugate
  bases, is equivalent to distinguishing between two non-orthogonal
  quantum states. Non-orthogonal states however cannot be
  distinguished (with arbitrary precision), which can be derived from
  the above formalisms. Thus, any measurement must destroy information
  and therewith disturb the system---except, of course, a measurement
  of a basis state in its own basis. In other words, a state with
  fixed measurement outcome in one basis implies maximal uncertainty
  about the measurement outcome in the other basis.

  \paragraph{\sc Composite systems.}

  The joint state of a \emph{multipartite system} in
  $\Hil_{2^n}^{\otimes n}$ is given by the tensor product
  $\ket{\Psi}_1 \otimes \cdots \otimes \ket{\Psi}_n$. For simplicity,
  we consider a bipartite joint state $\rho_{AB} \in \Hil^A \otimes
  \Hil^B$ shared between Alice and Bob, i.e.,
  \begin{eqnarray}
    \rho_{AB} = \ket{\Psi}_A \ket{\Psi}_B = \sum_i \alpha_i \ket{i}_A
    \sum_j \beta_j \ket{j}_B \, , \label{eq.product.state}
  \end{eqnarray}
  with orthonormal bases $\{ \ket{i}_A \}_i$ for $\Hil^A$ and $\{
  \ket{j}_B \}_j$ for $\Hil^B$. The form of the state in
  Eq.~\eqref{eq.product.state} indicates a \emph{product
  state}\index{product state}, which is \emph{separable}, since it can
  be decomposed into two definite pure states.

  For string $x = (x_1,\ldots,x_n) \in \{0,1\}^n$, encoded in bases
  $\theta = (\theta_1,\ldots,\theta_n) \in \{\+,\x\}^n$, we write
  $\ket{x}_\theta = \ket{x_1}_{\theta_1} \otimes \cdots \otimes
  \ket{x_n}_{\theta_n}$. For $S \subseteq \{ 1, \ldots, n \}$ of size
  $s$, we define $x|_S \in \{ 0,1 \}^s$ and $\theta|_S \in \{ +,\x
  \}^s$ to be the restrictions $(x_i)_{i \in S}$ and $(\theta_i)_{i
  \in S}$, respectively. If all qubits are encoded in the same basis
  $\theta \in [ +,\x ]$, then $\ket{x}_\theta = \ket{x_1 \ldots
  x_n}_\theta$.
  
  In contrast to the product states of Eq.~\eqref{eq.product.state},
  we can also have pure composite systems in some \emph{entangled
  states}\index{entanglement} of the form
  \begin{eqnarray}
    \rho_{AB} &=& \sum_{i,j} \gamma_{ij} \, \ket{i}_A \ket{j}_B
    \label{eq.entangled}
  \end{eqnarray}
  with $\gamma_{ij} \neq \alpha_i \beta_j$. Entangled components mean
  that they can only be described with reference to each
  other. Special cases thereof are the maximally entangled
  \emph{EPR-pairs} (or Bell states):
  \begin{equation}
    \begin{array}{ll} 
    &\ket \Phi_{00} = (\ket{00} + \ket{11}) / \sqrt{2} \, , \\ 
    &\ket \Phi_{11} = (\ket{00} - \ket{11}) / \sqrt{2} \, , \\ 
    &\ket \Phi_{01} = (\ket{01} + \ket{10}) / \sqrt{2} \, , \text{ and} \\ 
    &\ket \Phi_{10} = (\ket{01} - \ket{10}) / \sqrt{2} \, .
    \end{array}
  \end{equation} 
  Important for cryptographic purposes are the following
  observations. First, as Eq.~\eqref{eq.entangled} indicates, upon
  observing one of the two particles, entangled in one single state,
  the system will collapse, and thus, the other particle will at least
  partially be determined---even though the particles may be spatially
  separated. On a side note, the outcome of the first measurement is
  random, and therewith the state, to which the composite system
  collapses into, is so as well. Hence, information (i.e.\ a
  non-random message) cannot be transmitted faster than the speed of
  light by shared entanglement. Second, entanglement is
  basis-independent, e.g.\ $\ket \Phi_{00} = (\ket{00} + \ket{11}) /
  \sqrt{2} = (\ket{++} + \ket{--}) / \sqrt{2} \ $. And last, if an
  entangled state $\rho_{AB}$ is pure, then it cannot be entangled
  with any other state, for instance, one in Eve's hands, so it holds
  that $\rho_{ABE} = \rho_{AB} \otimes \rho_E$. Thus, under the
  assumption of it being pure, entanglement is monogamic.

  Subsystems of a composite system can be described by the
  \emph{reduced density operator} computed by the \emph{partial
  trace}. Let $\rho_{AB} = \big(\ket{a_1}\bra{a_2} \, \otimes \,
  \ket{b_1}\bra{b_2} \big)$ and assume that only subsystem $A$ is
  accessible. Then, we have
  \begin{eqnarray}
    \label{eq.partial.trace}
    \tr_B(\rho_{AB}) 
    = \langle b_2 | b_1 \rangle \, |a_1\rangle\langle{a_2}| \, .
  \end{eqnarray}
  Trivially, when tracing system $B$ out of a product state, we
  have $\rho_{AB} = tr_B(\rho_A \otimes \rho_B) = \rho_A$. However,
  the reduced density operator in an entangled EPR-pair is a complete
  mixture with trace distance $1/2$ (see next
  Section~\ref{sec:distinguishability}). Thus interestingly, the joint
  state of two entangled qubits is pure and can be completely
  determined, yet its subsystems alone are completely mixed.

%%% DDD %%%%%%%%%%%%%%%%%%%%%%%%%%%%%%%%%%%%%%%%%%%%%%%%%%%%%%%%%%%%%%%%%%

\section{Distance, Distinguishability, and Dependence}
\label{sec:distinguishability}

We will need various measures to determine the distance between
classical and quantum states. Distance measures possess an important
operational meaning in the context of distinguishability between two
systems.

  \paragraph{\sc Distance.}
  For classical information, the distance between two binary strings
  of equal length can be measured by means of the \emph{Hamming
  distance}\index{Hamming distance} $d_H$, which is the number of
  positions at which the strings differ, or more formally, for strings
  $x,y \in \set{0,1}^n$, we have
    \begin{eqnarray}\label{eq.hamming}
      d_H(x,y) \assign \left|\Set{i}{x_i \neq y_i}\right| \, .
    \end{eqnarray}
  We will also need the\index{Hamming distance!relative}
  \emph{relative Hamming distance}
    \begin{eqnarray}
      \label{eq.relative.hamming}
      r_H(x,y) \assign \frac{d_H(x,y)}{n} \, \ .
    \end{eqnarray}
  For completeness, we note that the \emph{Hamming
  weight}\index{Hamming weight} $w_H$ is the Hamming distance to $x$
  from the all-zero string (of same length), i.e.~$w_H(x) \assign
  \left|\Set{i}{x_i = 1}\right|$.

  In the classical world, the \emph{statistical or variational
  distance}\index{statistical distance} between two classical
  probability distributions $P$ and $Q$ over the same finite set $\X$
  with events $E \subseteq \X$ is determined by
    \begin{eqnarray}
      \delta\big( P,Q \big) \assign \frac{1}{2} \sum_{x \in \X} |P(x)
      - Q(x)| = \max_{E} | P(E) - Q(E)| \, .
      \label{eq.classical.distance}
    \end{eqnarray}
  A measure of proximity is given by the
  \emph{fidelity}\index{fidelity}
    \begin{eqnarray}
      F\big( P,Q \big) \assign \sum_{x \in \X} \sqrt{P(x) Q(x)} \, .
      \label{eq.classical.fidelity}
    \end{eqnarray}

  The classical notions of distance and fidelity can be generalized to
  the distance and proximity of two quantum states $\rho$ and
  $\sigma$. The quantum analogue to the classical distance in
  Eq.~\eqref{eq.classical.distance} is the \emph{trace
  distance}\index{trace distance}, given as
    \begin{eqnarray}
      \delta \big( \rho,\sigma \big) \assign \frac{1}{2} \tr \big(
      |\rho - \sigma| \big)\label{eq.trace.distance} \, ,
    \end{eqnarray}
  where $|A| = \sqrt{A^\dag A}$ is the trace norm of any operator
  $A$. The notion of fidelity translates to \emph{quantum
  fidelity}\index{fidelity!quantum} by
    \begin{eqnarray}
      F \big( \rho,\sigma \big) \assign \tr\sqrt{\sqrt{\rho} \ \sigma
      \sqrt{\rho}} \, .
      \label{eq.quantum.fidelity}
    \end{eqnarray}
  The relation between classical variational distance and quantum
  trace distance can be made more explicit by
    \begin{eqnarray}
      \delta \big( \rho,\sigma \big) = \max_\mathcal{E} \ \delta \big(
      \mathcal{E}(\rho),\mathcal{E}(\sigma)\big) \, ,
      \label{eq.trace.distance.povm}
    \end{eqnarray}
  where the maximum is taken over all POVMs $\mathcal{E}$, and
  $\rho,\sigma$ indicate the probability distributions obtained when
  measuring $\rho$ or $\sigma$ using $\mathcal{E}$. Moreover, it is
  worth pinpointing that, for mixtures of pure quantum states $\rho =
  \sum_i \lambda_i \proj{i}$ and $\sigma = \sum_i \gamma_i \proj{i}$
  with same orthonormal basis $\set{\ket{i}}_i$ but potentially
  different eigenstates $\lambda_i$ and $\gamma_i$, the quantum
  measure naturally reduces to the classical one between the
  eigenvalue distributions $\lambda = \{ \lambda_i \}_i$ and $\gamma
  = \{ \gamma_i \}_i$ by
    \begin{eqnarray}
      \delta(\rho,\sigma) = \frac{1}{2}\tr|\rho - \sigma| 
      = \frac{1}{2} \Big| \sum_i (\lambda_i - \gamma_i) \proj{i} \Big|
      = \frac{1}{2}  \sum_i |\lambda_i - \gamma_i | 
      = \delta(\lambda,\gamma) \, .
    \end{eqnarray}
  A similar reduction can be obtained for the fidelity.
  
  Trace distance and quantum fidelity are, in general, equivalent
  concepts---but with partly different characteristics and
  properties, so we will use one or the other, depending on the
  respective context (see~\cite{FG99} or~\cite{NC00} for a more detailed
  discussion). However, they are closely related in that we have
    \begin{eqnarray}
      1 - F \big( \rho,\sigma \big) \leq \delta \big( \rho,\sigma
      \big) \leq \sqrt{1-F \big( \rho,\sigma \big)^2} \, .
    \end{eqnarray}
  
  For pure states $\rho = \proj{\psi}$ and $\sigma = \proj{\phi}$,
  expressions \eqref{eq.trace.distance} and
  \eqref{eq.quantum.fidelity} simplify to
    \begin{eqnarray}
      \label{eq.trace.distance.pure}
      \delta \big( \rho,\sigma \big) = \sqrt{1- | \langle \psi | \phi
      \rangle |^2} \; \text{ and } \ F \big( \delta, \sigma \big) = |
      \langle \psi | \phi \rangle | \, ,
    \end{eqnarray}
  where the latter can be seen as transition probability. Furthermore,
  the fidelity measure for a pure state $\rho = \proj{\psi}$ and an
  arbitrary quantum state $\sigma$ is given by
    \begin{eqnarray}
      F \big( \rho,\sigma \big) = \sqrt{\bra{\psi}\sigma\ket{\psi}}
      \label{eq.quantum.fidelity.simple} \, ,
    \end{eqnarray}
  and shows that the square root of the overlap between the states
  determines the fidelity. 

  \paragraph{\sc Distinguishability.}
  \index{indistinguishability} The importance of both quantum measures
  is due to their operational meaning of distinguishability. The
  fidelity can be seen as an ``upside down'' trace distance in that
  the limits 0 and 1 in $0 \leq F \big( \rho,\sigma \big) \leq 1$
  meaning perfectly distinguishable and perfectly indistinguishable,
  respectively. In contrast, the trace distance $0 \leq \delta \big(
  \rho,\sigma \big) \leq 1$ increases for decreasing
  indistinguishability, such that we get $\delta \big( \rho,\sigma
  \big) = 0$ for $\rho = \sigma$ and $\delta \big( \rho,\sigma \big) =
  1$ for $\rho$ orthogonal to $\sigma$.

  Coming back to Eq.~\eqref{eq.trace.distance.povm} in this context,
  it is worth noting that the POVM $\mathcal{E}$ that achieves the
  maximum is the optimal POVM for distinguishing $\rho$ and
  $\sigma$. Furthermore, we want to single out two important
  properties by means of the trace distance. First, we have unitary
  invariance with $\delta \big( \rho, \sigma \big) = \delta \big(
  \op{U} \rho \op{U^\dag}, \op{U} \sigma \op{U^\dag} \big)$, meaning
  that the distance between the states does not change when a unitary
  operation $\op{U}$ is applied to both of them. And second, any
  trace-preserving quantum operation $\op{T}$ is contractive
  (monotonicity under quantum operations) with $\delta \big(
  \op{T}(\rho), \op{T}(\sigma) \big) \leq \delta \big( \rho, \sigma
  \big)$. Informally, no physical process can achieve an increased
  distance, or in other words, no modification on the states can help
  to better distinguish two states. An important special case relating
  the partial trace shows that $\delta \big(
  \tr_B(\rho_{AB}),\tr_B(\sigma_{AB}) \big) \leq \delta \big(
  \rho_{AB},\sigma_{AB} \big)$, which again informally states that two
  systems are at least as hard to distinguish when only a part of them
  is accessible.
  
  Two families of probability distributions $\set{P_n}_{n \in
  \naturals}$ and $\set{Q_n}_{n \in \naturals}$ are called
  \emph{perfectly}
  indistinguishable\index{indistinguishability!perfect}, denoted by $P
  \approxp Q$, if their output distributions on each input are
  identical, namely $P_n = Q_n$ for all $n \in \naturals$. In other
  words, an unbounded adversary cannot distinguish the outcomes, which
  holds with probability 1. Relaxing this condition defines
  \emph{statistical}\index{indistinguishability!statistical}
  indistinguishability ($P \approxs Q$), which holds if the
  statistical distance $\delta\big( P_n,Q_n \big)$ is negligible (in
  the length of the input). This covers the setting, in which an
  unbounded adversary cannot distinguish the outcomes, except with
  negligible probability. For $\delta\big( P_n,Q_n \big) \leq \eps$,
  and therewith, indistinguishability except with probability $\eps$,
  we also call the distributions \emph{$\eps$-close}. Thus, perfect
  and statistical indistinguishability are defined in the
  information-theoretic sense and we call the resulting security
  flavor \emph{unconditional}.

  In the computational setting, we require that the two distributions
  cannot be distinguished by any computationally efficient
  procedure. More formally, let \\\noindent$\prob{{\cal A}(P(x)) = 1
  \, | \, x \la P}$ denote the probability that an algorithm $\cal A$
  is successful in that it outputs ``P'', if the input $x$ comes from
  $P$, and analogue for $Q$. To claim
  \emph{computational}\index{indistinguishability!computational}
  indistinguishability between $P$ and $Q$, denoted by $P \approxc Q$,
  for any probabilistic poly-time algorithm $\cal A$, it must hold
  that the (distinguishing) advantage $adv$, i.e.,
  $$ 
  adv ({\cal A}) = \vert \, \prob{{\cal A}(P_n) = 1} - \prob{{\cal
  A}(Q_n) = 1} \, \vert \, ,
  $$ is negligible in the length of the
  input. \emph{Quantum-computational}
  \index{indistinguishability!quantum-computational}
  indistinguishability ($P \approxq Q$) is defined similarly for the
  case of a \emph{quantum} algorithm $\cal A$. In other words,
  (quantum) computational security holds with overwhelming probability
  against a poly-time (quantum) adversary.

  Consider a quantum algorithm consisting of a uniform family $\{
  C_n\}_{n \in \naturals}$ of quantum circuits, which is said to run
  in polynomial time, if the number of gates of $C_n$ is polynomial in
  $n$. Then, two families of quantum states $\set{\rho_n}_{n \in
  \naturals}$ and $\set{\sigma_n}_{n \in \naturals}$ are called
  \emph{perfectly}
  indistinguishable\index{indistinguishability!perfect} with $\rho
  \approxp \sigma$, if $\delta \big( \rho_n,\sigma_n \big) = 0$ in the
  case of unrestricted running time. We have \emph{statistical}
  indistinguishability\index{indistinguishability!statistical} with
  $\rho \approxs \sigma$, if $\delta \big( \rho_n,\sigma_n \big) \leq
  \varepsilon$, for $\varepsilon$ negligible in~$n$, and without any
  restriction on the running time. Again, for $\delta(\rho,\sigma)
  \leq \varepsilon$, we call the quantum states
  $\varepsilon$-close---or indistinguishable, except with probability
  $\varepsilon$. Then, to prove sufficient closeness between an ideal
  system and the real state, we require $\varepsilon$ to be negligible
  (in the security parameter). Last, we have
  \emph{quantum-computationally}
  indistinguishable\index{indistinguishability!quantum-computational},
  denoted by $\rho \approxq \sigma$, if any polynomial-time quantum
  algorithm has negligible advantage $\varepsilon$ of distinguishing
  $\rho_n$ from $\sigma_n$.

  \paragraph{\sc Dependence.}
  \index{independence} We will often use upper case letters for random
  variables (for proofs) that describe respective values (in the
  actual protocol). Let $P_X$ denote the probability distribution of a
  classical random variable $X \in \X$ over finite set $\X$.

  Let 
    \begin{eqnarray}
      \rho_X = \sum_{x \in \X} P_X(x) \proj{x}
      \label{eq:quantum.representation.X}
    \end{eqnarray}
  denote the quantum representation of the classical random variable
  $X$. Let $\rho_E^x$ denote a state in register $E$, depending on
  value $x \in \X$ of random variable $X$ over $\X$ with distribution
  $P_X$. Then, from the view of an observer, who holds register $E$
  but does not know $X$, the system is in state
    \begin{eqnarray}
      \rho_E = \sum_{x \in \X} P_X(x) \rho_E^x  \, ,
      \label{eq.dependence.classical.quantum}
    \end{eqnarray}
  where $\rho_E$ depends on $X$ in the sense that $E$ is in state
  $\rho_E^x$ exactly iff $X = x$.

  \emph{Independence} in a bipartite joint state with classical and
  quantum parts can be expressed as
    \begin{eqnarray}
      \rho_{XE} = \sum_{x \in \X} P_X(x) \proj{x} \otimes \rho_E^x \, .
    \end{eqnarray}
  Such a state is formally called a \emph{cq-state}. Note that
  naturally, $\rho_E = tr_X(\rho_{XE}) = \sum_x P_X(x) \rho_E^x$, and
  that the notation can be extended to states depending on more
  classical variables, i.e.\ \emph{ccq-states}, \emph{cccq-states}
  etc. Full independence of classical and quantum parts within one
  state is given iff $\rho_{E}^x = \rho_E$ for any $x$ and therewith
  $\rho_{XE} = \rho_X \otimes \rho_E$. This means in particular that
  no information on $X$ is gained by observing only $\rho_E$. However,
  full independence is often too strong a requirement. For our
  purposes, it suffices that the real state is close to the ideal
  situation.

  Last in this context, we want to express that a random variable $X$
  is independent of a quantum state $\rho_E$ when given a random
  variable $Y$. Independence in this case means that, when given $Y$,
  the state $E$ gives no additional information on $X$. Yet another
  way to understand \emph{conditional
  independence}\index{independence!conditional} is that $E$ is
  obtained from $X$ and $Y$ by solely processing $Y$. Formally,
  adopting the notion introduced in~\cite{DFSS07}, we require that
  $\rho_{X Y E}$ equals $\rho_{X\leftrightarrow Y \leftrightarrow E}$,
  where the latter is defined as
  \begin{eqnarray}
    \label{eq:conditional.independence}
    \rho_{X\leftrightarrow Y \leftrightarrow E} :=
    \sum_{x,y}P_{XY}(x,y)\proj{x} \otimes \proj{y} \otimes \rho_{E}^y \, ,
  \end{eqnarray}
  where $\rho_{E}^y = \sum_x P_{X|Y}(x|y) \rho_E^{x,y}$. In other
  words, $\rho_{X Y E} = \rho_{X\leftrightarrow Y \leftrightarrow E}$
  precisely if $\rho_E^{x,y} = \rho_E^{y}$ for all $x$ and $y$.

%%% ENTROPIES %%%%%%%%%%%%%%%%%%%%%%%%%%%%%%%%%%%%%%%%%%%%%%%%%%%%%%%%%%% 

\section{Entropies}
\label{sec:entropies}
\index{entropy}

Entropies are useful measures of ``information, choice and
uncertainty''. We will give a brief recap here, only covering the
concepts most important in the context of this work. For a general
introduction we refer to e.g.~\cite{NC00,Renner05,Schaffner07} for
more details and proofs.

The \emph{Shannon entropy}\index{entropy!Shannon}~\cite{Shannon48}
  \begin{eqnarray}
    H(X) \assign - \log \left( \sum_{x}
    P_X(x) \right) = - \sum_{x} p_x \log p_x
  \end{eqnarray}
applies to a classical probability distribution $P_X$ over $\X$ with
probabilities $p_x$, and as such quantifies the information gain on
average after learning $X$, or complementary, the average uncertainty
before learning $X$.\footnote{Note that the logarithmic base is 2 for
a result in bits.} The binary version thereof, namely the \emph{binary
entropy function}\index{entropy!binary}, is defined for the case of
two possibilities as
  \begin{eqnarray}
    h(\mu) \assign -\big(\mu\log({\mu}) + (1-\mu)\log{(1-\mu)}\big)
  \end{eqnarray}
with $0 \leq \mu \leq \frac{1}{2}$. We will use that, given the ball
of all $n$-bit strings at Hamming distance at most $\mu n$ from $x$,
denoted as $\mathrm{B}^{\mu n}(x)$, we have that $|\mathrm{B}^{\mu
n}(x)| \leq 2^{h(\mu) n}$.

For a cryptographic scenario with not necessarily independent
repetitions, its generalization is given by the \emph{R\'enyi
entropy}\index{entropy!Renyi@R\'enyi}~\cite{Renyi61} of order $\alpha$
as
  \begin{eqnarray}
    H_\alpha(X) = \frac{1}{1-\alpha} \log \left( \sum_{x \in \X}
    P_X(x)^\alpha \right)
  \end{eqnarray}
for $\alpha \geq 0$. Note that the Shannon entropy is the special case
for limit $\alpha \rightarrow 1$. 

The \emph{joint entropy}\index{entropy!joint} of a pair of random
variables $(X_0, X_1)$ measures the total uncertainty about the pair
and is naturally defined as
  \begin{eqnarray}
    H(X_0 X_1) = - \log \left( \sum_{{x_0},{x_1}}
    P_{{X_0}{X_1}}(x_0,x_1) \right) \, .
  \end{eqnarray}
Assume now that $X_1$ is learned, and therewith, $H(X_1)$ bits of
information about $(X_0,X_1)$. Then, the remaining uncertainty of
$X_0$, conditioned on knowing $X_1$, is given by the \emph{conditional
entropy}\index{entropy!conditional}
  \begin{eqnarray}
    H(X_0|X_1) \assign H(X_0 X_1) -  H(X_1) \, .
  \end{eqnarray}

R\'enyi entropies can also be defined for the quantum world, i.e.,
where a density matrix $\rho$ replaces the probability distribution,
and we have
  \begin{eqnarray}
    H_\alpha(\rho) \assign \frac{1}{1-\alpha} \log \big( \tr(\rho^\alpha)
    \big) \, ,
  \end{eqnarray}
for $\alpha \in [0,\infty]$. The \emph{von Neumann}
entropy\index{entropy!von Neumann} is then given by
  \begin{eqnarray}
    H(\rho) \assign - \tr(\rho \log \rho) \, ,
  \end{eqnarray}
which corresponds to the Shannon entropy when measuring quantum state
$\rho_X$ in basis $\{ \ket{x}\bra{x} \}$, or in other words $H(\rho) =
- \sum_x \lambda_x \log \lambda_x$, where $\lambda_x$ are the
eigenvalues of $\rho_X$. Thus, it naturally holds that
$H_\alpha(\rho_X) = H_\alpha(X)$, whenever classical variable $X$ is
encoded in quantum state $\rho_X$.

A special entropy measure is obtained when taking the limit $\alpha
\rightarrow \infty$, namely the
\emph{min-entropy}\index{entropy!min-entropy}. The notion of
min-entropy is used in the context of randomness extraction and
privacy amplification in the presence of a dishonest receiver or an
eavesdropper on the transmission (see
Section~\ref{sec:reconciliation.amplification}). Intuitively, the
(classical) min-entropy is determined by the highest peak in a
distribution, and therewith, describes the maximum amount of
potentially leaked information, which in turn formalizes security for
cryptographic applications in the worst case. In other words, the
min-entropy measures the probability of an adversary's best guess
about an unknown value.

  \begin{definition}[Min-Entropy]
    \label{def:min.entropy}
    Let $X$ be a random variable over alphabet $\X$ with probability
    distribution $P_X$. The min-entropy of $X$ is defined as
    $$\hmin{X} = -\log\bigl( \max_x P_X(x) \bigr) \, .$$
  \end{definition}

Another important special case is the
\emph{max-entropy}\index{entropy!max-entropy} with values for $\alpha$
approaching zero. Its definition captures a R\'enyi entropy, in which
all possible events are considered equally, regardless of their
probabilities. Its operational meaning lies in information
reconciliation (see also
Section~\ref{sec:reconciliation.amplification}).
  \begin{definition}[Max-Entropy]
    \label{def:max.entropy}
    The max-entropy of a density matrix $\rho$ is defined as
    $$\hzero{\rho} = \log\bigl( \rank{\rho}\bigr)\, .$$
  \end{definition}

For completeness, we note that another notion of R\'enyi entropies
with a (non-negative) smoothing parameter $\epsilon$ was introduced in
\cite{Renner05,RW05}\index{entropy!smooth}. Intuitively, it holds that
for two random variables $X_0$ and $X_1$ with almost the same
probability distribution (e.g.\ $X_0 = X_1$ with high probability),
the difference between $H^\epsilon_\alpha(X_0)$ and
$H^\epsilon_\alpha(X_1)$ is small. However, in this work we will only
use the ``un-smoothed'' R\'enyi entropies as discussed above.

Last, we conclude with the following lemma, which we will need in the
context of oblivious transfer. Informally, it states that if the joint
entropy of two random variables $X_0$ and $X_1$ is large, then at
least one of them has half of the original entropy---in a randomized
sense.
  \begin{lemma}[Min-Entropy-Splitting Lemma~\cite{Wullschleger07,DFRSS07}]
    \label{lemma:splitting}
    Let $X_0, X_1$ be random variables with $\hmin{X_0 X_1} \geq
    \alpha$. Then, there exists a binary random variable $K \in \{ 0,1
    \}$ such that $\hmin{X_{1-K}K} \geq \alpha/2$.
  \end{lemma}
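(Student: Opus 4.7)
The plan is to exhibit an explicit, deterministic binary function $K$ of the pair $(X_0,X_1)$, obtained by thresholding one of the marginal probabilities, and then verify the min-entropy bound by a case analysis. Specifically, I would set $K(x_0,x_1) = 1$ whenever $P_{X_0}(x_0) \leq 2^{-\alpha/2}$, and $K(x_0,x_1) = 0$ otherwise. The intuition is that whenever $X_0$ already ``carries'' at least $\alpha/2$ bits of min-entropy on its own, we let the residual variable in $X_{1-K}K$ be $X_0$; otherwise $X_0$ is too concentrated, and since the joint min-entropy is $\alpha$, the leftover uncertainty must sit in $X_1$.

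Next I would verify the min-entropy bound $\hmin{X_{1-K} K} \ge \alpha/2$ by checking, for each fixed $(x,k)\in\X\times\{0,1\}$, that $\Pr[X_{1-k}=x,\,K=k] \leq 2^{-\alpha/2}$. For $k=1$ the joint probability is zero unless $P_{X_0}(x) \leq 2^{-\alpha/2}$, in which case the event $\{X_0=x,K=1\}$ is contained in $\{X_0=x\}$ and hence has probability at most $P_{X_0}(x) \leq 2^{-\alpha/2}$. For $k=0$ I would write
\begin{equation*}
\Pr[X_1=x,\,K=0] \;=\; \sum_{x_0 \,:\, P_{X_0}(x_0) > 2^{-\alpha/2}} P_{X_0X_1}(x_0,x),
\end{equation*}
use the joint min-entropy hypothesis $P_{X_0X_1}(x_0,x)\leq 2^{-\alpha}$ term-by-term, and bound the cardinality of the summation set: since each element contributes more than $2^{-\alpha/2}$ to the marginal distribution $P_{X_0}$, there can be strictly fewer than $2^{\alpha/2}$ such $x_0$. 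Multiplying gives at most $2^{\alpha/2}\cdot 2^{-\alpha} = 2^{-\alpha/2}$, completing the case.

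The main (and essentially only) obstacle is getting the two bounds to line up with the same exponent $\alpha/2$. This is precisely what forces the threshold choice to be $2^{-\alpha/2}$: a larger threshold would tighten the $k=0$ case but loosen the $k=1$ case, and vice versa. Once the threshold is correctly balanced, both arguments reduce to one-line applications of either the marginal definition of $P_{X_0}$ or the joint min-entropy hypothesis, so no further work is required beyond the case split.
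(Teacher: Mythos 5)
Your proof is correct. The paper itself states this lemma without proof, citing Wullschleger and Damg{\aa}rd--Fehr--Renner--Salvail--Schaffner; your argument is exactly the standard thresholding proof from those references: define $K$ as the deterministic indicator that the marginal $P_{X_0}(X_0)$ falls below $2^{-\alpha/2}$, then split into cases. The $k=1$ case is immediate from the threshold, and the $k=0$ case combines the joint min-entropy bound $P_{X_0 X_1}(x_0,x) \leq 2^{-\alpha}$ with the counting observation that fewer than $2^{\alpha/2}$ values of $x_0$ can have marginal probability exceeding $2^{-\alpha/2}$; the two $2^{-\alpha/2}$'s line up by design. Nothing is missing.
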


%%% RECONCILATION&AMPLIFICATION %%%%%%%%%%%%%%%%%%%%%%%%%%%%%%%%%%%%%%%%%%

\section{Information Reconciliation and Privacy Amplification}
\label{sec:reconciliation.amplification}
\index{information reconciliation}
\index{privacy amplification}

Errors and eavesdropping affect the communication in our quantum
protocols such that the honest parties might end up with bit-strings
of measurement outcomes that differ or have leaked in some
positions. Countermeasures were proposed already in the first
practical implementation of QKD~\cite{BBBSS92}. The honest parties
first \emph{reconcile} their shared data by public discussion to
obtain consistent strings. Note that this process has to be
accomplished without revealing more information than absolutely
necessary to an adversary eavesdropping on the public (classical)
channel. The simplest procedure involves a test on a subset of all
shared (qu)bits to compute the error rate, i.e., the relative number
of all positions with different outcomes. In that case, these publicly
announced bits must later be discarded, which in turn means that more
qubits have to be sent at the beginning of the protocol. According to
the error rate in the testset, error correction must be applied to the
untested remaining set. Since the transmission of qubits is very
efficient in practice and good error correction techniques are known,
we will use this simple technique in our quantum protocols.

After successful reconciliation, the honest parties are in possession
of identical bit-strings. To turn these strings into completely secure
ones, \emph{privacy amplification}~\cite{BBR88} can be applied, which
intuitively distills a shorter but (essentially) private shared
string. More concretely, privacy amplification employs
\emph{two-universal hashing}\index{two-universal hashing} (see
Definition~\ref{def:hashing}) to transform a partially secret string
into a highly secure ``hashed down'' string, about which any adversary
only has negligible information and which looks essentially random to
him. Note that two-universal hashing also works against quantum
adversaries, i.e., in the case when the attacker holds quantum
information about the initial string~\cite{KMR05,RK05,Renner05}. In
fact, it is essentially the only efficient way to perform privacy
amplification against quantum adversaries.
 
  \begin{definition}[Two-Universal Hashing]
  \label{def:hashing}
     A class $\mathcal{F}: \{ 0,1 \}^n \rightarrow \{ 0,1 \}^\ell$ of
     hashing functions is called {\em two-universal}, if for any pair
     $x, y \in \{ 0,1 \}^n$ with $x \neq y$, and $F$ uniformly chosen
     from $\mathcal{F}$, it holds that 
     $$ 
     \prob{F(x)=F(y)} \leq \frac{1}{2^{\ell}} \, .
     $$
  \end{definition}
In the slightly stronger notion of {\em strongly two-universal}
hash-functions\index{two-universal hashing!strong}, we require the
random variables $F(x)$ and $F(y)$ to be independent and uniformly
distributed over $\{ 0,1 \}^\ell$.

Let classical $X$ be correlated with classical part $U$ and quantum
part $E$, i.e., $\rho_{XUE} = \sum_{x \in \{ 0,1 \}^n} P_X(x)
\rho^x_{UE}$. Let $F$ be a hash-function chosen uniformly from
$\mathcal{F}$. After applying $F$ to $X$, we obtain the
\emph{cccq-state} $\rho_{F(X)FUE}$ of form
  \begin{eqnarray}
    \rho_{F(X)FUE} = \sum_{f \in \mathcal{F}} \sum_{z \in \{ 0,1
    \}^\ell} \proj{z} \otimes \proj{f} \otimes \sum_{x \in f^{-1}(z)}
    P_X(x) \rho^x_{UE} \, .
  \end{eqnarray}

The basic theorem for privacy amplification in the quantum world was
introduced in~\cite{RK05} and~\cite{Renner05}, and confined
in~\cite{Schaffner07}. Here, we give the version from~\cite[Corollary
2.25]{Schaffner07} but in its un-smoothed form and tailored to our
context.
  \begin{theorem}[Privacy Amplification]
  \label{theo:privacy-amplification}
    Let $\rho_{XUE}$ be a \emph{ccq-state} with classical $X$
    distributed over $\{ 0,1 \}^n$, classical $U$ in the finite domain
    $\mathcal{U}$, and quantum state $\rho_E$. $U$ and $\rho_E$ may
    depend on $X$. Let $F$ be the random and independent choice of a
    member of a universal-2 class of hash-functions $\mathcal{F}: \{
    0,1 \}^n \rightarrow \{ 0,1 \}^\ell$. Then,
    $$\delta\bigl(\rho_{F(X)FUE},{\textstyle\frac{1}{2^\ell}} \mathbbm{1}
    \otimes \rho_{FUE}\bigr) \leq \frac{1}{2}
    2^{-\frac{1}{2}\big(\hmin{X|U}-\hzero{\rho_E}-\ell\big)} \, .
    $$
  \end{theorem}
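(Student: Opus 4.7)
The plan is to pass from trace distance to Hilbert--Schmidt ($L_2$) distance, where the two-universality of $\mathcal{F}$ can be exploited directly via a collision-probability argument, and then translate the resulting bound back into min/max-entropy language.

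First, since $F$ and $U$ are classical, the target distance decomposes as an expectation over $f$ and $u$:
\begin{equation*}
\delta\bigl(\rho_{F(X)FUE},\tfrac{1}{2^\ell}\mathbbm{1}\otimes\rho_{FUE}\bigr) \;=\; \mathbb{E}_{F,U}\,\delta\bigl(\rho^{f,u}_{F(X)E},\tfrac{1}{2^\ell}\mathbbm{1}\otimes\rho^{u}_E\bigr),
\end{equation*}
where $\rho^{f,u}_{F(X)E}=\sum_{z}\proj{z}\otimes\sum_{x\in f^{-1}(z)} P_{X|U}(x|u)\,\rho^{x,u}_E$ and $\rho^{u}_E$ is the corresponding reduced state. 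So it suffices to bound this conditional distance on average.

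Next, for a fixed $(f,u)$ I would invoke the standard inequality $\delta(\sigma,\tau)\le\tfrac{1}{2}\sqrt{d\,}\,\|\sigma-\tau\|_2$, with $d=2^\ell\cdot\rank(\rho^{u}_E)\le 2^{\ell}\cdot 2^{\Hmax(\rho_E)}$. Pulling the expectation over $F,U$ into the square root by Jensen's inequality gives
\begin{equation*}
\delta\bigl(\rho_{F(X)FUE},\tfrac{1}{2^\ell}\mathbbm{1}\otimes\rho_{FUE}\bigr)^2 \;\le\; \tfrac{1}{4}\,2^{\ell+\Hmax(\rho_E)}\cdot \mathbb{E}_{F,U}\,\bigl\|\rho^{F,U}_{F(X)E}-\tfrac{1}{2^\ell}\mathbbm{1}\otimes\rho^{U}_E\bigr\|_2^2.
\end{equation*}
The main computation is then to expand the Hilbert--Schmidt norm on the right. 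Writing out the square, the cross term cancels exactly because $\sum_{x\in f^{-1}(z)}$ summed over $z$ gives $\sum_x$, and one is left (up to the $1/2^\ell$ correction) with $\mathbb{E}_F\sum_{x,x'}\mathbb{1}[F(x)=F(x')]\,P_{X|U}(x|u)P_{X|U}(x'|u)\,\tr(\rho^{x,u}_E\rho^{x',u}_E)$ for each $u$. The two-universality assumption of Definition~\ref{def:hashing} bounds $\Pr_F[F(x)=F(x')]\le 2^{-\ell}$ for $x\neq x'$, so the off-diagonal part absorbs into $-\tfrac{1}{2^\ell}\tr((\rho^{u}_E)^2)$ and only the diagonal $x=x'$ contribution survives, leaving $\sum_x P_{X|U}(x|u)^2\,\tr((\rho^{x,u}_E)^2)\le \max_x P_{X|U}(x|u)\le 2^{-\Hmin(X|U=u)}$ after bounding $\tr((\rho^{x,u}_E)^2)\le 1$ and one factor of the probabilities.

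The hard (and easy-to-botch) step is precisely this Hilbert--Schmidt expansion: one must be careful to keep track of the interaction between the classical register $F(X)$ and the quantum state $E$, and to verify that the negative $-\tfrac{1}{2^\ell}\tr((\rho^{u}_E)^2)$ contribution really cancels the diagonal $x=x'$ piece of the two-universality bound rather than being swallowed by an inequality. Once this is done, averaging over $U$ gives $\mathbb{E}_U\,2^{-\Hmin(X|U=u)}\le 2^{-\Hmin(X|U)}$ by the standard definition of conditional min-entropy, so that
\begin{equation*}
\delta(\cdots)^2 \;\le\; \tfrac{1}{4}\,2^{\ell+\Hmax(\rho_E)}\cdot 2^{-\ell}\cdot 2^{-\Hmin(X|U)} \;=\; \tfrac{1}{4}\,2^{-(\Hmin(X|U)-\Hmax(\rho_E))}.
\end{equation*}
Taking square roots yields exactly the stated bound $\tfrac{1}{2}\,2^{-\frac{1}{2}(\Hmin(X|U)-\Hmax(\rho_E)-\ell)}$, which completes the plan.
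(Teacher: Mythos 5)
Your argument is correct, and it is essentially the standard collision-probability proof. Note, though, that the paper does not prove this theorem at all: it states it as a recap of Corollary~2.25 from Schaffner's thesis (itself going back to Renner and Renner--K\"onig), so there is no in-paper proof to compare against.

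Your route matches the argument in those cited sources: decompose over the classical registers $f,u$; bound trace distance by Hilbert--Schmidt distance with a dimension (rank) factor $\sqrt{2^{\ell}\,\rank(\rho_E^u)}\le\sqrt{2^{\ell+\Hmax(\rho_E)}}$; pull the average over $F,U$ inside the square root by Jensen; expand the squared Hilbert--Schmidt norm and use two-universality to bound the off-diagonal collision terms. Your bookkeeping of the expansion is right: the $\tfrac{1}{2^\ell}\tr((\rho^u_E)^2)$ subtraction cancels exactly against the off-diagonal contribution bounded via $\Pr_F[F(x)=F(x')]\le 2^{-\ell}$, and what survives is $(1-2^{-\ell})\sum_x P_{X|U}(x|u)^2\tr((\rho^{x,u}_E)^2)\le 2^{-\Hmin(X|U=u)}$ after dropping $\tr((\rho^{x,u}_E)^2)\le 1$ and factoring out one $\max_x P_{X|U}(x|u)$. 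Averaging over $u$ and combining with the rank factor gives precisely the stated bound. One small point worth making explicit in a write-up: the rank bound needs $\rank(\rho^u_E)\le\rank(\rho_E)$, which holds because $\mathrm{supp}(\rho^u_E)\subseteq\mathrm{supp}(\rho_E)$ whenever $P_U(u)>0$; you use this implicitly when replacing $\rank(\rho^u_E)$ by $2^{\Hmax(\rho_E)}$ uniformly in $u$ before applying Jensen.
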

Note that if the rightmost term of the theorem is negligible, then we
are in a situation where $F(X)$ is essentially uniform and independent
of $F$ and $E$.

%%% REWINDING %%%%%%%%%%%%%%%%%%%%%%%%%%%%%%%%%%%%%%%%%%%%%%%%%%%%%%%%%%%%

\section{Rewinding}
\label{sec:rewinding}
\index{rewinding}

We require for classical schemes in the quantum world that quantum
computation does not jeopardize the underlying mathematical assumption
that guarantees the security. But we encounter even more setbacks in
the context of actually proving a cryptographic protocol secure in a
quantum environment, which in the realm of this work are mostly due to
the strong restrictions on general rewinding---a common proof
technique for showing the security of different protocols in the
computational setting.

%%%%%%%%%%%%%%%%%%%%%%%%%%%%%%%%%%%%%%%%%%%%%%%%%%%%%%%%%%%%%%%%%%%%%%%%%%%

  \subsection{Problems with General Rewinding}
  \label{sec:general.rewinding}
  \index{rewinding!problems}

  Recall that in the context of simulation-based security, we prove
  security against a cheating player by showing that a run of a
  protocol between him and the honest player can be efficiently
  simulated without interacting with the honest player but with a
  simulator instead. Basically, such a simulator prepares a valid
  conversation and tries it on the dishonest party. In case this party
  does not send the expected replies, a classical simulator rewinds
  the machine of the corrupted player to an earlier status and repeats
  the simulation. Note that if the dishonest party sends an invalid
  reply, the simulation is aborted. To conclude the proof, we then
  show that the running time of the simulation as well as the
  distribution of the conversation are according to expectations.

  Such a technique, however, is impossible to justify in the quantum
  world. Generally speaking, the simulator had to partially measure
  the quantum system without copying it beforehand to obtain the
  protocol transcript. But then it would become impossible to
  reconstruct all information necessary for correct rewinding. The
  problem of rewinding in general quantum systems was originally
  observed in~\cite{vdGraaf97}, detailed discussions can also be found
  e.g.\ in~\cite{DFS04,Watrous09}. In the context of this work, there
  are two relevant rewinding settings.

  The first setting applies to simulations intended to collect several
  transcripts of conversations. An example thereof is the classical
  simulation for protocols with embedded \emph{computationally
  binding} commitments. Recall that computational binding means that
  if a dishonest party can open a commitment to two different values,
  then the computational assumption does not hold. In a classical
  simulation, the simulator simulates a run of the outer protocol with
  the committer, such that the latter outputs a valid commitment and
  later provides a correct opening. Now, the simulator has the
  possibility to rewind the player to a point after the commitment was
  sent and repeat the simulation, which can be adapted to the
  simulator's knowledge of the committed value. The event of obtaining
  a different opening for the same commitment in this second run
  implies the inversion of the underlying one-way function, which is
  assumed to be infeasible. In such a simulation, the simulator must
  store the previous transcript before rewinding. Another example of
  this setting occurs when proving \emph{special soundness} in a proof
  of knowledge. There, a classical simulator simulates a run of a
  protocol against a dishonest prover. It then keeps a transcript of
  the simulation and rewinds him. From two accepting conversations,
  the simulator can extract the prover's witness. Again, the simulator
  must store transcripts of the communication before rewinding.

  The second setting requires the simulator to rewind the dishonest
  player to the beginning of a protocol, if the reply from the
  dishonest party does not match the prepared outcome of the protocol
  such that both sides conclude on the ideal values as their
  result. This setting applies, for instance, when proving an outer
  protocol with an embedded \emph{computationally hiding} commitment
  secure. Fortunately, if such a simulation complies with a restricted
  setting, the newly introduced \emph{quantum rewinding lemmas}
  of~\cite{Watrous09} can be applied. Therewith, rewinding is possible
  in a restricted quantum setting. We will discuss this technique in
  more detail in the following section, but in short, it requires a
  one bit reply from the dishonest party (e.g.\ a bit reply to a
  previous bit commitment), the simulation circuit must be unitary,
  and in case of rewinding, we do not intend to keep intermediate
  transcripts nor collect all possible results (see
  Section~\ref{sec:quantum.rewinding}). Unfortunately, we do not know
  how to translate this technique to a multi-bit reply, while keeping
  the running time of the simulator polynomially bounded. In that
  case, the classical simulation would again reduce to the first
  setting above, in which the simulator must store previous
  transcripts, namely a previous message from the dishonest party that
  commits him to his multi-bit reply beforehand.

%%%%%%%%%%%%%%%%%%%%%%%%%%%%%%%%%%%%%%%%%%%%%%%%%%%%%%%%%%%%%%%%%%%%%%%%%%%%

  \subsection{Quantum Rewinding}
  \label{sec:quantum.rewinding}
  \index{rewinding!quantum}

  Recall that we consider the second setting of the previous
  section. In a classical simulation against dishonest Bob, a
  poly-time simulator guesses, for instance, a valid reply $b'$ of
  dishonest Bob and prepares the protocol transcript according to
  it. When the simulator finally receives Bob's actual reply $b$, it
  checks if the values coincide ($b = b'$), i.e., if its guess was
  correct and therewith, if the simulation was successful. If that is
  not the case, the simulator rewinds Bob and repeats the simulation
  until $b = b'$. No previous information has to be stored nor
  collected.

  Recently, Watrous proposed a quantum analogue of such a simulator
  with the potential of rewinding, and proved therefore, that quantum
  zero-knowledge is possible in an unrestricted model. We will sketch
  the most important aspects of his construction here but refer
  to~\cite{Watrous09} for further details and proofs. More
  specifically, Watrous proved how to construct a quantum
  zero-knowledge proof system for Graph Isomorphism and introduced two
  so-called \emph{quantum rewinding lemmas}; one for an exact setting
  and one that holds for slightly weaker assumptions and therewith
  covers a scenario with perturbations. The investigated protocol
  proceeds as a $\Sigma$-protocol, i.e., a protocol in three-move
  form, where the verifier flips a single coin in the second step and
  sends this challenge to the prover. Thus, the setting applies to the
  case where the reply $b$ from above is a single bit. This will also
  be the case for our simulation in Chapter~\ref{chap:coin.flip}, and
  therefore, we can use Watrous' result in a black-box
  manner. Unfortunately, we do not know how to translate his technique
  to a multi-bit reply, while keeping the running time of the
  simulator polynomially bounded.

  The quantum rewinding procedure is implemented by a general quantum
  circuit $R$, which receives Bob's input registers $\big( W,X \big)$,
  where $W$ contains any $n$-qubit auxiliary input $\ket{\psi}$ and
  $X$ is a working register, initialized to the all-zero state of size
  $k$. As a first step, $R$ applies a unitary quantum circuit $Q$ to
  all registers to simulate the conversations, obtaining as output a
  multi-qubit register $Y$ and a single-qubit register $G$. Register
  $G$ contains the outcome of the $\op{CNOT}$-operation on the
  dishonest party's bit $b$ (as control) and the simulator's guess
  $b'$. Thus, by measuring this register in the computational basis,
  the simulator can determine whether the simulation was successful.

  In more detail, the transformation from $\big( W,X \big)$ to $\big(
  G,Y \big)$ by applying $Q$ can be written as
  \begin{eqnarray*}
    Q \ket{\psi}_{W} \ket{0^{k}}_{X} = \sqrt{p} \ket{0}_{G}
    \ket{\phi_{good}(\psi)}_{Y} + \sqrt{1-p} \ket{1}_{G}
    \ket{\phi_{bad}(\psi)}_{Y}\, ,
  \end{eqnarray*}
  where $0 < p < 1$ and $\ket{\phi_{good}(\psi)}$ denotes the state,
  we want the system to be in for a successful simulation. The qubit
  in register $G$ is then measured with respect to the standard basis,
  which indicates success or failure of the simulation. A successful
  execution (where $b = b'$) results in outcome 0 with probability
  $p$. In that case, $R$ outputs $Y$. A measurement outcome 1
  indicates $b \neq b'$, and hence, implies an unsuccessful
  simulation. In that case, $R$ quantumly rewinds the system by
  applying the reverse circuit $Q^\dag$, and then a phase-flip
  transformation (on register $X$) before another iteration of $Q$ is
  applied, i.e.,
  \begin{eqnarray*}
    && Q \bigg( 2 \Big( \mathbb{I} \otimes
    \ket{0^{k}}\bra{0^{k}} \Big) - \mathbb{I} \bigg)
    Q^\dag \ket{1}_G\ket{\phi_{bad}(\psi)}_Y \\ 
    &=& 2\sqrt{p(1-p)}\ket{0}_G\ket{\phi_{good}(\psi)}_Y + (1-2p)
    \ket{1}_G\ket{\phi_{bad}(\psi)}_Y \, .
  \end{eqnarray*}
  Thus, after this rewinding, the amplitudes of the ``good'' and the
  ``bad'' states are increased and decreased, respectively. Thus, a
  measurement of register $G$ in the computational basis will result
  in outcome $0$ with higher probability $4p(1-p)$. Note that for the
  special case where $p$ equals $1/2$ and is independent of
  $\ket{\psi}$, the simulation terminates after at most one rewinding.

  Watrous' ideal quantum rewinding lemma (without perturbations) then
  states the following: Under the condition that the probability $p$
  of a successful simulation is non-negligible and independent of any
  auxiliary input, $R$ is poly-time and its output $\rho(\psi)$ has
  square-fidelity close to 1 with state $\ket{\phi_{good}(\psi)}$ of a
  successful simulation, i.e.,
  $$
  \bra{\phi_{good}(\psi)}\rho(\psi)\ket{\phi_{good}(\psi)} \geq 1 -
  \eps \, ,
  $$ 
  with error bound $\eps > 0$.

  However, we cannot apply the exact version of Watrous' rewinding
  lemma in our simulation in Chapter~\ref{chap:coin.flip}, since we
  simulate against a dishonest party with an underlying commitment
  that only provides quantum-computational hiding against this
  party. Therefore, we can only claim that the party's input is
  \emph{close to independent} from the probability $p$. In other
  words, we must allow for small perturbations in the quantum
  rewinding procedure and the slightly weaker notion of Watrous'
  quantum rewinding lemma, as stated below, applies.
  \begin{lemma}
    [Quantum Rewinding Lemma with small
    perturbations~\cite{Watrous09}]
    \label{lemma:qrewind}
    Let $Q$ be the unitary $(n,k)$-quantum circuit and let $R$ be the
    general quantum circuit describing the quantum rewinding
    procedure. Let $p_0, q \in (0,1)$ and $\varepsilon \in
    (0,\frac{1}{2})$ be real numbers such that
    \begin{enumerate}
    \item $|p - q| < \eps$
    \item $p_0(1-p_0) \leq q(1-q)$, and
    \item $p_0 \leq p$ 
    \end{enumerate}
    for all n-qubit states $\ket{\psi}$. Then there exists $R$ of size
    $$O \left( \frac{log(1/\varepsilon) size(Q)}{p_0(1-p_0)} \right)$$
    such that, for every $n$-qubit state $\ket{\psi}$, the output
    $\rho(\psi)$ of $R$ satisfies
    \begin{eqnarray*}
      \bra{\phi_{good}(\psi)}\rho(\psi)\ket{\phi_{good}(\psi)} 
      \geq 1 - \eps'
    \end{eqnarray*}
    where $\eps' = 16 \eps \frac{log^2( 1 / \eps)}{p_0^2(1-p_0)^2}$.
  \end{lemma}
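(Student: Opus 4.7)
The plan is to analyze the general quantum circuit $R$ as an amplitude amplification procedure, adapting the standard Grover-type argument to accommodate the fact that the success probability $p = p(\psi)$ depends on the auxiliary input $\ket{\psi}$. First I would fix any $n$-qubit state $\ket{\psi}$ and observe that after applying $Q$, the registers $(G,Y)$ lie entirely within the two-dimensional subspace $\mathcal{H}_{\psi} = \mathrm{span}\{\ket{0}_G\ket{\phi_{good}(\psi)}_Y,\, \ket{1}_G\ket{\phi_{bad}(\psi)}_Y\}$. The composite step of $R$ that is actually being iterated --- namely applying $Q^\dag$, then the reflection $2(\mathbb{I}\otimes\proj{0^k})-\mathbb{I}$, then $Q$, together with a conditional phase flip on the $G$-register to implement rejection of the "bad" outcome --- is exactly one Grover iteration in $\mathcal{H}_{\psi}$. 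Writing $\sin(\theta_\psi) = \sqrt{p(\psi)}$, each iteration rotates the state by angle $2\theta_\psi$ inside this two-dimensional invariant subspace.

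Second, I would choose the number of amplification rounds $t$ as a function of $q$ (not of the unknown $p(\psi)$), taking $t = \Theta(1/\sqrt{q(1-q)})$, which by hypothesis~2 equals $\Theta(1/\sqrt{p_0(1-p_0)})$, so that the nominal rotation $(2t+1)\theta_q$ lands close to $\pi/2$. For the idealized case $p(\psi)=q$ this would place the final state essentially on $\ket{0}_G\ket{\phi_{good}(\psi)}_Y$, as in Watrous' exact rewinding lemma. Since $|p(\psi)-q|<\eps$ by hypothesis~1 and $p(\psi) \geq p_0$ by hypothesis~3, the induced angular discrepancy satisfies $|\theta_\psi - \theta_q| = O(\eps/\sqrt{p_0(1-p_0)})$, and the accumulated deviation after $t$ iterations is at most $2t\,|\theta_\psi - \theta_q| = O(\eps/(p_0(1-p_0)))$.

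Third, I would convert the accumulated angular deviation into a fidelity bound. Because a residual angle $\alpha$ between the actual output and $\ket{\phi_{good}(\psi)}$ inside $\mathcal{H}_\psi$ contributes overlap $\cos(\alpha)$ on the good component, we obtain
\[
 \bra{\phi_{good}(\psi)}\rho(\psi)\ket{\phi_{good}(\psi)} \;\geq\; \cos^2\!\bigl(O(\eps/(p_0(1-p_0)))\bigr).
\]
To sharpen this to the claimed bound, I would run the amplification procedure $O(\log(1/\eps))$ times in sequence, measuring $G$ only at the end of each pass and retaining the outcome-$0$ branch; this both accounts for the size bound $O(\log(1/\eps)\,\mathrm{size}(Q)/(p_0(1-p_0)))$ in the statement and drives the final error down to $\eps' = 16\eps\log^2(1/\eps)/(p_0^2(1-p_0)^2)$ via a standard error-doubling calculation on the overlap.

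The main obstacle I expect is controlling how the perturbation between $p(\psi)$ and $q$ compounds over the $\Theta(1/\sqrt{p_0(1-p_0)})$ iterations: each iteration introduces only an $O(\eps/\sqrt{p_0(1-p_0)})$ angular slip, but this slip is multiplied by the iteration count, and one must verify that no amplitude escapes the two-dimensional invariant subspace $\mathcal{H}_\psi$ along the way. The clean way to handle this is to note that both reflections used in the Grover step preserve $\mathcal{H}_\psi$ exactly for every fixed $\psi$ --- the reflection about $\ket{\psi}\ket{0^k}$ is intrinsic to the initial state, and the $G$-reflection acts within the subspace by definition --- so the analysis genuinely reduces to a planar rotation problem, at which point the angular perturbation bound can be applied directly and the quadratic $1/(p_0(1-p_0))^2$ dependence in $\eps'$ emerges naturally from squaring the amplitude error.
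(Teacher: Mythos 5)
The paper does not actually prove this lemma --- it quotes it from~\cite{Watrous09} and only sketches the technique --- so yours is a reconstruction from scratch, and it departs from the procedure the lemma is about in a way that makes the argument break. You analyze the rewinding circuit as \emph{coherent} Grover amplitude amplification with no intermediate measurements, choosing $t = \Theta(1/\sqrt{q(1-q)})$ iterations so that the rotation ``lands close to $\pi/2$.'' This fails in the very regime the paper needs. Take $q = 1/2$, the value relevant to the coin-flipping simulation: then $\theta_q = \pi/4$, $(2t+1)\theta_q$ is an odd multiple of $\pi/4$ for every integer $t$, and the amplitude on the good component is $|\sin((2t+1)\pi/4)| = 1/\sqrt{2}$ no matter how many coherent iterations you perform. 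Watrous' procedure, by contrast, \emph{measures} the register $G$ after each application of $Q$; on a bad outcome the state collapses to $\ket{1}\ket{\phi_{bad}}$, and applying $Q(2\proj{0^k}-\mathbb{I})Q^\dag$ to that collapsed state yields $2\sqrt{p(1-p)}\ket{0}\ket{\phi_{good}}+(1-2p)\ket{1}\ket{\phi_{bad}}$, which at $p = 1/2$ is already $\ket{0}\ket{\phi_{good}}$ with certainty. The collapse is the whole point: it resets the phase so that each rewinding attempt has a fresh success probability $4p(1-p)$, the failure probability decays geometrically with rate $1 - 4p(1-p)$, and this is exactly why the circuit size scales as $\log(1/\eps)/(p_0(1-p_0))$ rather than as Grover's $1/\sqrt{p_0(1-p_0)}$.

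Your fallback step --- running $O(\log(1/\eps))$ passes, ``measuring $G$ only at the end of each pass and retaining the outcome-$0$ branch'' --- is post-selection, not a quantum operation, and after a failed pass the working registers have collapsed, so the next pass does not start from $\ket{\psi}\ket{0^k}$. You also never introduce the device that actually carries the perturbation argument in Watrous' proof and that the paper's own sketch highlights: the correction unitary $V$ with $U = VQ$, chosen so that $U$ produces success probability \emph{exactly} $q$ for every $\psi$. The error bound $\eps'$ arises from running the exact (measure-and-rewind) analysis with $U$ and accumulating $\|U - Q\|$ over the $T = O(\log(1/\eps)/(p_0(1-p_0)))$ iterations, which is also why $\eps'$ is linear, not quadratic, in $\eps$. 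Your direct angular-deviation estimate, even in regimes where the rotation picture makes sense, does not reflect that structure and does not establish the quoted statement.
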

 
  Intuitively, Requirement~(1.)~allows for small perturbation between
  the actual probability $p$ and the ideal probability $q$. Thus,
  $\eps$ can be understood as the advantage of the dishonest party. It
  follows that if $\eps$ is negligible, we can argue that $p$ is
  \emph{close} to $q$ and therefore, close to independent of the
  auxiliary input. Probability $p_0$ in Requirement~(3.)~denotes the
  lower bound on the actual probability, for which the procedure
  guarantees correctness and terminates in poly-time. Instead of using
  $p$ in circuit $R$, we use $p_0$. Furthermore, $Q$ is replaced by
  $U$ with $U = VQ$. Lemma~\ref{lemma:qrewind} reflects these
  replacements. On a very intuitive level, the general input state
  $\ket{\psi}$ is analyzed in more detail, i.e.~$\ket{\psi} =
  \sum_{i=1}^{2^n} \alpha_i \ket{\psi_i}$ leading to
  $$ 
  \ket{\phi_{good}(\psi)} = \sum_{i = 1}^{2^n} \alpha_i
  \sqrt{\frac{p(\psi_i)}{p(\psi)}} \ket{\phi_{good}(\psi_i)} \, ,
  $$ and similar for $\ket{\phi_{bad}(\psi)}$. This more detailed
  description allows that in any position, the probability is only
  near-independent of the input. The slight variations must then be
  addressed by an operator $V$, such that $U = VQ$ is close to $Q$ but
  satisfies the exact case of rewinding. In other words, applying $U$
  on the perturbed input state gives the ideal outcome
  \begin{eqnarray*}
    U \ket{\psi}_{W} \ket{0^{k}}_{X} = \sqrt{q} \ket{0}_{G}
    \ket{\delta_{good}(\psi)}_{Y} + \sqrt{1-q} \ket{1}_{G}
    \ket{\delta_{bad}(\psi)}_{Y}\, .
  \end{eqnarray*}
  Transformation $V$ can therewith be understood as a correction. The
  bound in Requirement~(2.)~follows from proof details which will not
  be addressed here. Finally, note that the bounds are not necessarily
  tight. Important for our proof is, however, that all operations can
  be performed by polynomial-size circuits, and thus, the simulator
  has polynomial size (in the worst case). Furthermore, for negligible
  $\eps$, it follows that the ``closeness'' of output $\rho(\psi)$
  with good state $\ket{\phi_{good}(\psi)}$ is slightly reduced, but
  quantum rewinding remains possible and the output $\rho(\psi)$ of
  $R$ has still square-fidelity close to 1 with state
  $\ket{\phi_{good}(\psi)}$ of a successful simulation.

%%% DEFINITION OF SECURITY %%%%%%%%%%%%%%%%%%%%%%%%%%%%%%%%%%%%%%%%%%%%%%

\section{Definition of Security} 
\label{sec:security.definition}
\index{security!definitions} 
\index{composition}
\index{composition!sequential}

We will now define security for our two-party protocols, along the
lines informally described in Section~\ref{sec:flavors}. To this end,
we will work in the framework put forward by Fehr and Schaffner
in~\cite{FS09}. There, they propose general definitions for
correctness and security for any quantum protocol that implements a
\emph{classical non-reactive two-party functionality}, meaning that
in- and output must be classical. We stress that the framework also
allows functionalities which behave differently in case of a dishonest
player. They then show that such a quantum protocol, complying with
the framework, composes \emph{sequentially} in a classical
environment, or in other words, within an outer classical
protocol. Their security definitions are phrased in simple
information-theoretic conditions, depending on the functionality,
which implies strong simulation-based security. For the sake of
simplicity, the framework does not assume additional entities such as
e.g.\ an environment, without of course compromising correctness in
the given setting.

Throughout this work, we are interested in quantum and classical
protocols that implement classical functionalities. As already
mentioned, such primitives are often used as building blocks in more
complicated classical (multi-party) protocols which implement more
advanced tasks. Therefore, it can be justified in
Part~\ref{part:quantum.cryptography} to restrict the focus to such
quantum protocols that run in a classical environment and have
classical in- and outputs. Furthermore, although the framework was
originally proposed for quantum protocols that compose in a classical
environment, we adapt it here for classical protocols against quantum
attacks, composing equally well when imposing the suggesting
restriction regarding the in- and outputs. Thus, we will use it also
in Part~\ref{part:cryptography.in.quantum.world} for defining security
of our classical protocols.

Although various other security and composition frameworks have been
proposed (such as~\cite{BM04,Unruh04,Unruh10,WW08}), we consider the
security level achieved in this framework as a reasonable balance
between weak demands and yet meaningful security. Furthermore, its
structure is as simple and clear as possible and compliance with the
definitions gives us sequential composition. Towards a general
composition, we must, of course, extend the basic protocols as shown
in Sections~\ref{sec:composability.compiler}
and~\ref{sec:composability.coin}.

We will now introduce the framework more formally for a general
functionality. We will use information-theoretic definitions in our
notions of unconditional security as investigated in~\cite{FS09}. In
addition, we will also show that computational security can be defined
similarly, although with some modifications.

%%%%%%%%%%%%%%%%%%%%%%%%%%%%%%%%%%%%%%%%%%%%%%%%%%%%%%%%%%%%%%%%%%%%%%%%%%

\subsection{Correctness}
\label{sec:security.definition.correctness}

  A protocol $\Pi$ consists of an infinite family of interactive
  quantum circuits for players Alice and Bob, indexed by the security
  parameter. For instance, in our quantum protocols this security
  parameter $m$ corresponds to the number of qubits transmitted in the
  first phase. However, to ease notation, we often leave the
  dependence on the security parameter implicit.

  Since we assume the common input state $\rho_{UV}$ to be classical,
  i.e.,
  $$\rho_{UV} = \sum_{u,v} P_{UV}(u,v) \proj{u} \otimes \proj{v} \, ,
  $$ for some probability distribution $P_{UV}$, we understand $U,V$
  as random input variables. The same holds for the classical output
  state $\rho_{XY}$ with output $X,Y$. The input-output behavior of
  the protocol is uniquely determined by $P_{XY|UV}$, and we write
  $\Pi(U,V) = (X,Y)$. Then, a classical non-reactive two-party ideal
  functionality $\F$ is given by a conditional probability
  distribution $P_{\F(U,V)|UV}$ with $\F(U,V)$ denoting the
  ideal-world execution, where the players forward their inputs $U,V$
  to $\F$ and output whatever they obtain from $\F$. The definition of
  correctness of a protocol is now straightforward.
    \begin{definition}[Correctness] 
      \label{def:correctness}
      A protocol $\Pi(U,V) = (X,Y)$ correctly implements an ideal
      classical functionality $\F$, if for every distribution of the
      input values $U$ and $V$, the resulting common output satisfies
      \[ (U,V,X,Y) \approxs (U,V, \F(U,V)) \, .
      \]
    \end{definition}

%%%%%%%%%%%%%%%%%%%%%%%%%%%%%%%%%%%%%%%%%%%%%%%%%%%%%%%%%%%%%%%%%%%%%%%%%%

\subsection{Information-Theoretic Security}
\label{sec:security.definition.unconditional}

  We define information-theoretic security based on~\cite[Proposition
  4.3]{FS09}. Note that in the following, we simplify the joint output
  representation (compared to~\cite{FS09}) in that we denote the
  output in the real world by $out_{\A,\B}^\Pi$ (which is equivalent
  to $\Pi_{\A,\B}\rho_{UV}$), and the output in the ideal world by
  $out_{\hA,\hB}^\F$ (equivalent to $(\F_{\hA,\hB}) \rho_{UV}$).

  Recall that $U$ denotes honest Alice's classical input, and let $Z$
  and $V'$ denote dishonest Bob's classical and quantum
  information. Then, any input state $\rho_{U Z V'}$ is restricted to
  be of form $$
  \rho_{\MC{U}{Z}{V'}} = \sum_{u,z} P_{UZ}(u,z) \proj{u} \otimes
  \proj{z} \otimes \rho_{V'}^z \, ,
  $$ where it holds here that $\rho_{V'}^z = \rho_{V'}^{u,z}$. This
  implies that Bob's quantum part~$V'$ is correlated with Alice's part
  only via his classical~$Z$.

    \begin{definition}
      [Unconditional security against dishonest Bob]
      \label{def:qmemoryBob} 
      A protocol $\ \Pi$ implements an ideal classical functionality
      $\F$ unconditionally securely against dishonest Bob, if for any
      real-world adversary $\dB$, there exists an ideal-world
      adversary $\dhB$ such that, for any input state with
      $\rho_{UZV'} = \rho_{\MC{U}{Z}{V'}}$, it holds that the outputs
      in the real and the ideal world are statistically
      indistinguishable, i.e.,
      $$
      out_{\A,\dB}^\Pi \approxs out_{\hA,\dhB}^\F \, .
      $$
    \end{definition}
  For completeness, we state these output states explicitly, i.e.,
  $$
  out_{\A,\dB}^\Pi = \rho_{UXZY'} \ \text{ and } \ 
  out_{\hA,\dhB}^\F = \rho_{\MC{UX}{Z}{Y'}} \, ,
  $$ 
  which shows that Bob's possibilities in the ideal world are
  limited. He can produce some classical input $V$ for $\F$ from his
  quantum input state $V'$, and then he can obtain a quantum state
  $Y'$ by locally processing $V$ and possibly $\F$'s classical reply
  $Y$. This description is also depicted in
  Figure~\ref{fig:protocol.functionality}.
  
%%%%%%%%%%%%%%%%%%%%%%%%%%%%%%%%%%%%%%%%%%%%%%%%%%%%%%%%%%%%%%%%%%%%%%%%%%
  \begin{figure}
    \begin{framed}
      \noindent\hspace{-1.5ex} {\sc Protocol $\Pi_{\A,\dB}$
	\hspace{29ex}Functionality $\F_{\hA,\dhB}$} \\    
      \begin{center}
	\vspace{-3ex}
	\input{real.ideal.world.pic.new}
	\vspace{-2ex}
      \end{center}
    \end{framed}
    \vspace{-1.5ex}
    \caption{Real World vs.\ Ideal World~\cite{Chris10}.}
    \label{fig:protocol.functionality}
    %\small
  \end{figure}
%%%%%%%%%%%%%%%%%%%%%%%%%%%%%%%%%%%%%%%%%%%%%%%%%%%%%%%%%%%%%%%%%%%%%%%%%%

  Analogously, we can define unconditional security for honest Bob
  against dishonest Alice. In this case, we assume a classical $Z$ and
  a quantum state $U'$ as dishonest Alice's input and a classical
  input $V$ of honest Bob.

    \begin{definition}
      [Unconditional security against dishonest Alice]
      \label{def:unboundedAliceNiceOrder}
      A protocol $\ \Pi$ implements an ideal classical functionality
      $\F$ unconditionally securely against dishonest Alice, if for
      any real-world adversary $\dA$, there exists an ideal-world
      adversary $\dhA$ such that, for any input state with
      $\rho_{U'ZV} = \rho_{\MC{U'}{Z}{V}}$, it holds that the outputs
      in the real and the ideal world are statistically
      indistinguishable, i.e.,
      $$
      out_{\dA,\B}^\Pi \approxs out_{\dhA,\hB}^\F \, .
      $$
    \end{definition}

  Note that in the definitions above, we do not require the running
  time of ideal-world adversaries to be polynomial whenever the
  real-life adversaries run in polynomial time. This way of defining
  unconditional security can lead to the (unwanted) effect that
  unconditional security does not necessarily imply computational
  security. However, as mentioned before, by extending our basic
  constructions we can achieve efficient ideal-life adversaries.

  Intuitively, the composition theorem below states that if quantum
  protocols $\pi_1\cdots\pi_\ell$ securely implement ideal
  functionalities $\F_1\cdots\F_\ell$, then a protocol
  $\Sigma^{\pi_1\cdots\pi_\ell}$ is \emph{essentially} as secure as a
  classical hybrid protocol $\Sigma^{\F_1\cdots\F_\ell}$ with
  sequential calls to $\F_1\cdots\F_\ell$. Note that for the hybrid
  protocol to be {\em classical}, we mean that it has classical in-
  and output (for the honest players), but also that all communication
  between the parties is classical.\footnote{We want to stress that a
  \emph{hybrid protocol} is a protocol that makes sequential calls to
  ideal functionalities. This term should not be confused with the
  notion of \emph{hybrid security} in
  Chapter~\ref{chap:hybrid.security}, which refers to quantum
  protocols providing twofold security in case of an adversary who is
  either bounded in quantum storage or bounded in
  quantum-computational power.} The above facts imply that such
  protocols compose sequentially. Below, we state (a simplified
  variant of) the theorem in~\cite{FS09}. We omit its proof here but
  note that it proceeds along similar lines as the proof of
  Theorem~\ref{thm:composition.computational}, translating sequential
  composition to the case of computational security.

    \begin{theorem}[Composition Theorem I~\cite{FS09}]
      \label{thm:composition.unconditional}
      Let $\Sigma^{\F_1\cdots\F_\ell}$ be a classical hybrid protocol
      which makes at most $k$ calls to $\F_1\cdots\F_\ell$, and for
      every $i \in \set{1,\ldots,\ell}$, let protocol $\pi_i$ be an
      $\eps$-secure implementation of $\F_i$ against $\dAlice$ and
      $\dBob$. Then the output of $\Sigma^{\pi_1\cdots\pi_\ell}$ is at
      distance at most $O(k\eps)$ to the output produced by
      $\Sigma^{\F_1\cdots\F_\ell}$.
    \end{theorem}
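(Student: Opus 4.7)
The plan is to proceed by a standard hybrid argument, which is the natural tool given that the real-to-ideal gap is bounded call-by-call and we want to aggregate these gaps across at most $k$ invocations. First I would define a sequence of intermediate protocols $\Sigma^{(0)}, \Sigma^{(1)}, \ldots, \Sigma^{(k)}$, where $\Sigma^{(j)}$ behaves exactly like $\Sigma$ but the first $j$ sub-protocol calls are executed using the real protocols $\pi_i$, while the remaining calls are executed via the ideal functionalities $\F_i$. Thus $\Sigma^{(0)} = \Sigma^{\F_1 \cdots \F_\ell}$ and $\Sigma^{(k)} = \Sigma^{\pi_1 \cdots \pi_\ell}$. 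By the triangle inequality for the trace distance (applied to the joint output states), it suffices to bound $\delta\bigl(\mathrm{out}(\Sigma^{(j-1)}), \mathrm{out}(\Sigma^{(j)})\bigr) \leq O(\eps)$ for each $j \in \{1,\ldots,k\}$.

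For the single-step bound I would isolate the $j$-th call. Fix an adversary $\dP$ attacking $\Sigma^{\pi_1 \cdots \pi_\ell}$; at the moment the $j$-th call begins in $\Sigma^{(j)}$, the state of the system (honest party's classical register, adversary's classical and quantum registers, environment kept in the outer protocol) is identical in both hybrids, since everything before the $j$-th call coincides. I would then read off from $\dP$ and from the fixed transcript of the outer protocol an induced adversary $\dP^{(j)}$ that attacks only $\pi_i$ (respectively $\F_i$) in isolation, using the current joint state as auxiliary quantum input in the form $\rho_{\MC{U}{Z}{V'}}$ required by Definition~\ref{def:qmemoryBob} (or Definition~\ref{def:unboundedAliceNiceOrder}). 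By the $\eps$-security of $\pi_i$, there exists an ideal-world adversary $\dhP^{(j)}$ whose joint output with the honest party is $\eps$-close in trace distance to that of $\dP^{(j)}$. Plugging this simulator back into the outer protocol yields exactly $\Sigma^{(j-1)}$ with some overall adversary $\dP'$, and the monotonicity of trace distance under the subsequent (trace-preserving) quantum operations performed by the remainder of $\Sigma$ then guarantees that the distance of the final outputs is still at most $\eps$.

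The main obstacle, and the reason the restriction to classical hybrid protocols in the statement matters, is justifying that the ``auxiliary input'' $\rho_{\MC{U}{Z}{V'}}$ at the $j$-th call is indeed of the Markov-type form required by the security definition. Because the outer protocol $\Sigma$ has classical communication and the ideal functionalities $\F_1,\ldots,\F_{j-1}$ deliver classical outputs, the honest player's input to the $j$-th sub-protocol is a deterministic classical function of previously exchanged classical messages; any quantum side-information held by the adversary can depend on the honest player's input only through those classical messages and through its own classical register $Z$. This is precisely the Markov condition $U \leftrightarrow Z \leftrightarrow V'$ needed to apply Definitions~\ref{def:qmemoryBob} and~\ref{def:unboundedAliceNiceOrder}. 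If communication inside $\Sigma$ were quantum, entanglement between the honest party's registers and the adversary could survive across calls and this Markov structure would break, so the classicality assumption is essential here.

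Finally, summing the $k$ triangle inequality steps yields $\delta\bigl(\mathrm{out}(\Sigma^{\pi_1 \cdots \pi_\ell}), \mathrm{out}(\Sigma^{\F_1 \cdots \F_\ell})\bigr) \leq k \cdot \eps = O(k\eps)$, as claimed. A small additional bookkeeping argument handles the fact that $\Sigma$ makes \emph{at most} $k$ calls (as opposed to exactly $k$): pad unused call-slots with trivial ideal invocations, which contribute $0$ to the distance. The same argument works symmetrically for a dishonest Alice, using Definition~\ref{def:unboundedAliceNiceOrder} instead, so the bound holds against both corruption patterns covered by the hypothesis.
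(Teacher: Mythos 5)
The high-level plan---a hybrid argument with a per-step data-processing bound, summed via the triangle inequality---is the right approach and is essentially what the paper does (the paper phrases it as an induction on $k$ in the proof of Theorem~\ref{thm:composition.computational}, and says Theorem~\ref{thm:composition.unconditional} ``proceeds along similar lines''). Your identification of the Markov condition $\rho_{\MC{U}{Z}{V'}}$ as the crux, and of the classicality hypothesis as the reason it can be established, is correct.

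However, the direction of your hybrid chain is wrong, and this creates a genuine gap. You define $\Sigma^{(j)}$ so that the \emph{first} $j$ calls are executed with the \emph{real} protocols $\pi_i$ and the remaining ones via the $\F_i$. When you compare $\Sigma^{(j-1)}$ and $\Sigma^{(j)}$, the joint state entering the $j$-th call has therefore passed through $j-1$ \emph{real} quantum sub-protocols. This state need not be of the form $\rho_{\MC{U}{Z}{V'}}$: a real $\pi_i$ involves quantum communication, and after running it the adversary can hold a quantum register correlated with the honest party's classical output in a way that is not mediated by the adversary's classical register. (Security of $\pi_i$ only guarantees that the output is \emph{close} to a state of that form, not that it is one.) So Definitions~\ref{def:qmemoryBob}/\ref{def:unboundedAliceNiceOrder} do not directly apply at the $j$-th call, and your per-step $\eps$ bound is unjustified. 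Notice that your own justification paragraph silently assumes the opposite convention---you write ``the ideal functionalities $\F_1,\ldots,\F_{j-1}$ deliver classical outputs''---but in your $\Sigma^{(j-1)}$ those $j-1$ calls are to the \emph{real} $\pi_i$, not to the $\F_i$, so the Markov argument as you have stated it applies to a different hybrid than the one you defined.

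The fix is to reverse the chain: define $\Sigma^{(j)}$ to run the first $j$ calls \emph{ideally} (with the simulators plugged in) and the remaining ones with the real protocols, so that $\Sigma^{(0)} = \Sigma^{\pi_1\cdots\pi_\ell}$ and $\Sigma^{(k)} = \Sigma^{\F_1\cdots\F_\ell}$. Then the state entering the $j$-th call has passed only through ideal functionalities and classical communication, so the classicality hypothesis does give the Markov form, the security definition applies, and each step costs exactly $\eps$ (with the subsequent real calls absorbed by monotonicity of trace distance). This is precisely what the paper's induction achieves: there the state $\sigma_{\bar U \bar Z \bar V'}$ to which security is applied lives in the hybrid execution where all preceding calls are ideal, while the corresponding real state $\rho_{\bar U \bar Z \bar V'}$ enters only via the induction hypothesis and data processing, never via a direct invocation of the security definition.
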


We want to explicitly state here that if the hybrid protocol is
secure, then so is the real-life protocol, and as such it could itself
be use as a sub-protocol in yet another classical outer protocol.

    \begin{corollary}
      If $\Sigma^{\F_1\cdots\F_\ell}$ is a $\delta$-secure
      implementation of $\cal G$ against $\dAlice$ and $\dBob$, and if
      $\pi_i$ is an $\eps$-secure implementation of $\F_i$ against
      $\dAlice$ and $\dBob$ for every $i \in \set{1,\ldots,\ell}$,
      then $\Sigma^{\pi_1\cdots\pi_\ell}$ is a $O(\delta +
      \eps)$-secure implementation of $\cal G$.
  \end{corollary}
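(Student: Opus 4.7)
The plan is to derive the corollary as a direct consequence of the preceding Composition Theorem~\ref{thm:composition.unconditional} combined with the triangle inequality for the statistical distance between output states. The key observation is that we are given two layers of closeness: first, the hybrid protocol $\Sigma^{\F_1\cdots\F_\ell}$ is $\delta$-close to the ideal functionality $\cal G$; second, replacing each call to $\F_i$ by the real subprotocol $\pi_i$ introduces an additional error, which Theorem~\ref{thm:composition.unconditional} bounds.

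First, I would fix an arbitrary real-world adversary for $\Sigma^{\pi_1\cdots\pi_\ell}$ (against either $\dAlice$ or $\dBob$), together with any admissible input state of the form required by Definitions~\ref{def:qmemoryBob} and~\ref{def:unboundedAliceNiceOrder}. Applying Theorem~\ref{thm:composition.unconditional} to this adversary and the family $\pi_1,\dots,\pi_\ell$, I obtain an adversary attacking the hybrid protocol $\Sigma^{\F_1\cdots\F_\ell}$ such that the two output distributions are statistically $O(k\eps)$-close. Since $\Sigma$ makes at most $k$ calls to the functionalities and $k$ is a constant of the outer protocol independent of the security parameter, this is $O(\eps)$.

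Next, by the assumption that $\Sigma^{\F_1\cdots\F_\ell}$ is a $\delta$-secure implementation of $\cal G$, I can invoke the security definition once more: for the hybrid-world adversary just obtained, there exists an ideal-world adversary attacking $\cal G$ such that the hybrid output is $\delta$-close to the ideal output for $\cal G$. Concatenating the two simulators (first the one from Theorem~\ref{thm:composition.unconditional}, then the one from the $\delta$-security of $\Sigma^{\F_1\cdots\F_\ell}$) yields a valid ideal-world adversary against $\cal G$ for the original real-world attack on $\Sigma^{\pi_1\cdots\pi_\ell}$.

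Finally, I would conclude by the triangle inequality for the trace distance (which is the metric underlying $\approxs$): the real output of $\Sigma^{\pi_1\cdots\pi_\ell}$ is $O(\eps)$-close to that of $\Sigma^{\F_1\cdots\F_\ell}$, which in turn is $\delta$-close to the ideal output of $\cal G$, giving total distance $O(\delta + \eps)$. I do not anticipate a genuine obstacle here, since the corollary is essentially a bookkeeping argument; the only mild subtlety is to verify that the admissibility condition $\rho = \rho_{\MC{U}{Z}{V'}}$ on input states is preserved across the two reductions, but this is immediate because neither the classical hybrid setting nor the invocation of the sub-simulator alters the Markov structure of the inputs held at the start of $\Sigma$.
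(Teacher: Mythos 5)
Your proof is correct and takes the only natural route; the paper states this corollary without proof, treating it as an immediate consequence of the composition theorem plus the triangle inequality, which is exactly your argument. The only cosmetic remark is that your intermediate bound is $O(k\eps)$, and you quietly absorb $k$ into the big-$O$ by noting it is a property of the fixed protocol $\Sigma$ — this matches how the paper itself phrases the corollary as $O(\delta+\eps)$ rather than $O(\delta+k\eps)$.
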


%%%%%%%%%%%%%%%%%%%%%%%%%%%%%%%%%%%%%%%%%%%%%%%%%%%%%%%%%%%%%%%%%%%%%%%%%%

\subsection{Computational Security}
\label{sec:security.definition.computational}

  One can define security against a computationally bounded dishonest
  Bob in the CRS-model analogously to information-theoretic security,
  with the two differences that the input given to the parties has to
  be sampled by an efficient quantum algorithm and that the output
  states of Definition~\ref{def:qmemoryBob} should be computationally
  indistinguishable. Recall that in the CRS-model, all participants in
  the real world have access to a classical public common reference
  string~$\crs$, which is chosen before any interaction starts,
  according to a distribution only depending on the security
  parameter. On the other hand, the participants in the ideal-world
  execution $\F_{\hA,\hB}$, interacting only with the ideal
  functionality, do not make use of string~$\crs$. Hence, an
  ideal-world adversary $\dhB$ that operates by simulating a
  real-world adversary $\dB$ is free to choose $\crs$ in any way he
  wishes.

  In order to define computational security against a dishonest Bob in
  the CRS-model, we consider a polynomial-size quantum circuit, called
  \emph{input sampler}, which takes as input the security parameter
  and the common reference string $\crs$ (chosen according to its
  distribution) and which produces the input state $\rho_{U Z
  V'}$. Again, $U$, $Z$, and $V'$ denote Alice's classical, Bob's
  classical, and Bob's quantum information, respectively, and we
  require from the input sampler that $\rho_{U ZV'} =
  \rho_{\MC{U}{Z}{V'}}$. In the following, we let $\dBobPoly$ be the
  family of all {\em polynomial-time strategies} for dishonest Bob.

    \begin{definition}
      [Computational security against dishonest Bob]
      \label{def:polyboundedBobCRS} 
      A protocol $\ \Pi$ implements an ideal classical functionality
      $\F$ computationally securely against dishonest Bob, if for any
      real-world adversary $\dB \in \dBobPoly$, who has access to the
      common reference string $\crs$, there exists an ideal-world
      adversary $\dhB \in \dBobPoly$, not using $\crs$, such that, for
      any efficient input sampler as described above, it holds that
      the outputs in the real and the ideal world are
      quantum-computationally indistinguishable, i.e.,
      $$
      out_{\A,\dB}^\Pi \approxq out_{\hA,\dhB}^\F \, .
      $$
    \end{definition}

  Protocols fulfilling the definition above provide sequential
  composition in a naturally weaker but otherwise similar notion as
  unconditionally secure protocols. We can therefore adapt the
  original composition theorem to the case of computational
  security. For completeness, we will include its proof as given
  in~\cite{DFLSS09}.

  Consider a dishonest $\dB$ and the common state \smash{$\rho_{U_j
  V'_j}$} at any point during the execution of the hybrid protocol
  when a call to functionality $\F_i$ is made. The requirement for the
  oracle protocol to be \emph{classical} is now expressed in that
  there exists a classical $Z_j$---to be understood as consisting of
  $\dhB$'s classical communication with $\hA$ and with the $\F_{i'}$'s
  up to this point---such that given $Z_j$, Bob's quantum state $V'_j$
  is not entangled with Alice' classical input and auxiliary
  information: \smash{$\rho_{U_j Z_j V'_j} =
  \rho_{\MC{U_j}{Z_j}{V'_j}}$}. Furthermore, we require that we may
  assume $Z_j$ to be part of $V'_j$ in the sense that for any $\dhB$
  there exists $\dhB'$ such that $Z_j$ is part of $V'_j$. This
  definition is motivated by the observation that if Bob can
  communicate only classically with Alice, then he can entangle his
  quantum state with information on Alice's side only by means of the
  classical communication.

  We also consider the protocol we obtain by replacing the ideal
  functionalities by quantum two-party sub-protocols
  $\pi_1\cdots\pi_\ell$ with classical in- and outputs for the honest
  parties, i.e., whenever $\Sigma^{\F_1\cdots\F_\ell}$ instructs $\hA$
  and $\hB$ to execute $\F_i$, they instead execute
  $\pi_i$ and take the resulting outputs. We then write
  $\Sigma^{\pi_1\cdots\pi_\ell}$ for the real quantum protocol we
  obtain this way.

  Recall that we require from the input sampler that $\rho_{U ZV'} =
  \rho_{\MC{U}{Z}{V'}}$, i.e., that $V'$ is correlated with Alice's
  part only via the classical~$Z$. When considering classical hybrid
  protocols $\Sigma^{\pi_1\cdots\pi_\ell}$ in the real world, where
  the calls are replaced with quantum protocols using a common
  reference string, it is important that every real protocol $\pi_i$
  uses a separate instance (or part) of the common reference string
  which we denote by $\crs_i$.

    \begin{theorem}[Composition Theorem II]
      \label{thm:composition.computational}
      Let $\Sigma^{\F_1\cdots\F_\ell}$ be a classical two-party hybrid
      protocol which makes at most $k=\poly(n)$ calls to the
      functionalities, and for every $i \in \set{1,\ldots,\ell}$, let
      protocol $\pi_i$ be a computationally secure implementation of
      $\F_i$ against $\dBobPoly$.

      Then, for every real-world adversary $\dB \in \dBobPoly$ who
      accesses the common reference string $\crs=\crs_1,\ldots,
      \crs_k$ there exists an ideal-world adversary $\dhB \in
      \dBobPoly$ who does not use $\crs$ such that for every efficient
      input sampler, it holds that the outputs in the real and the
      ideal world are quantum-computationally indistinguishable, i.e.,
      $$ 
      out_{\A,\dB}^{\Sigma^{\pi_1\cdots\pi_\ell}} \approxq
      out_{\hA,\dhB}^{\Sigma^{\F_1\cdots\F_\ell} } \, .
      $$
    \end{theorem}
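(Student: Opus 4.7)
The plan is to prove this by a standard hybrid argument over the $k = \poly(n)$ functionality calls, building on the security of each $\pi_i$. For $j \in \{0, 1, \ldots, k\}$, define hybrid execution $H_j$ in which the first $j$ sub-protocol invocations are replaced by calls to the respective ideal functionalities $\F_i$ together with their ideal-world simulators $\dhB_i$ (guaranteed to exist by Definition~\ref{def:polyboundedBobCRS} applied to each $\pi_i$), while the last $k-j$ invocations are still the real protocols $\pi_i$ interacting with $\dB$. Then $H_0$ is the output distribution of $\Sigma^{\pi_1\cdots\pi_\ell}$ with adversary $\dB$, and $H_k$ yields the desired $\dhB$ for $\Sigma^{\F_1\cdots\F_\ell}$: namely, $\dhB$ internally simulates $\dB$, runs $\dhB_1,\ldots,\dhB_k$ for the respective calls, and forwards the classical inputs each $\dhB_i$ produces to the corresponding $\F_i$. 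Note that $\dhB$ does not need the CRS, because each $\dhB_i$ samples its own $\crs_i$ internally, and the overall CRS $\crs = \crs_1,\ldots,\crs_k$ decomposes into independent parts.

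Next, I would show $H_j \approxq H_{j+1}$ for every $j$. The reduction builds, from $\dB$, an adversary $\dB_{j+1}$ and an efficient input sampler against $\pi_{j+1}$ as follows: the input sampler runs the hybrid protocol up to the $(j{+}1)$-th call, using the simulators $\dhB_1,\ldots,\dhB_j$ for the first $j$ calls and honest $\A$ on the other side, and outputs the common state $(U_{j+1}, Z_{j+1}, V'_{j+1})$ at this point. This sampler is polynomial-time because $k$ and each $\dhB_i$ and $\dB$ are polynomial-time. Adversary $\dB_{j+1}$ plays the $(j{+}1)$-th sub-protocol honestly-on-behalf-of-$\dB$, and afterwards continues the hybrid execution with $\dB$ and real $\pi_{j+2},\ldots,\pi_k$, producing the full transcript and output. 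Replacing $\pi_{j+1}$ by $\F_{j+1}$ and $\dB_{j+1}$ by the corresponding $\dhB_{j+1}$ precisely transforms $H_j$ into $H_{j+1}$. By the computational security of $\pi_{j+1}$, the outputs are quantum-computationally indistinguishable; any efficient distinguisher separating $H_j$ from $H_{j+1}$ would yield an efficient distinguisher violating Definition~\ref{def:polyboundedBobCRS} for $\pi_{j+1}$.

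The key technical subtlety is verifying that the state produced by the input sampler at the $(j{+}1)$-th call satisfies the Markov condition $\rho_{U_{j+1} Z_{j+1} V'_{j+1}} = \rho_{\MC{U_{j+1}}{Z_{j+1}}{V'_{j+1}}}$ required by the security definition of $\pi_{j+1}$. This is exactly what the hypothesis that $\Sigma^{\F_1\cdots\F_\ell}$ is a classical hybrid protocol gives us: all communication between the parties up to this point is classical, so any entanglement between Bob's register $V'_{j+1}$ and Alice's inputs must be mediated through the classical transcript $Z_{j+1}$. This is the main obstacle, and it is handled precisely by the assumption stated just before the theorem that for every $\dhB$ in the hybrid execution there exists an equivalent $\dhB'$ such that $Z_j$ is part of $V'_j$ and the Markov structure holds.

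Finally, I would conclude by a triangle-inequality / hybrid argument: writing the total distinguishing advantage as $\sum_{j=0}^{k-1} \mathrm{adv}(H_j, H_{j+1})$, each summand is negligible in $n$ by the security of $\pi_{j+1}$, and since $k = \poly(n)$, the sum remains negligible. Therefore $H_0 = out_{\A,\dB}^{\Sigma^{\pi_1\cdots\pi_\ell}}$ and $H_k = out_{\hA,\dhB}^{\Sigma^{\F_1\cdots\F_\ell}}$ are quantum-computationally indistinguishable, establishing the theorem. The structure closely mirrors the classical composition theorem, the adaptations being the use of the quantum-computational indistinguishability relation $\approxq$ in place of $\approxc$ and the need to maintain the classical-communication Markov chain at every reduction step.
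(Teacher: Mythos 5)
Your proof is correct and takes essentially the same approach as the paper's: both are an induction/hybrid argument over the $k=\poly(n)$ calls, reducing each step to the per-call security of $\pi_{i}$ and relying on the classicality of the hybrid protocol to maintain the Markov condition $\rho_{\MC{U_j}{Z_j}{V'_j}}$ required by Definition~\ref{def:polyboundedBobCRS}. The paper packages this as an induction on $k$ that peels off the \emph{last} call via the induction hypothesis applied to the first $k-1$ calls, whereas you lay out an explicit forward hybrid chain $H_0, \ldots, H_k$; unrolled, the two arguments coincide (including the final $\poly(n)\cdot\negl{n}=\negl{n}$ accounting, which the paper leaves implicit in the induction).
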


  Note that we do not specify what it means for the hybrid protocol to
  be secure. In fact, Theorem~\ref{thm:composition.computational}
  guarantees that {\em whatever} the hybrid protocol achieves, an
  indistinguishable output is produced by the real-life protocol with
  the functionality calls replaced by protocols. Of course, if the
  hybrid protocol {\em is} secure in the sense of
  Definition~\ref{def:polyboundedBobCRS}, then so is the real-life
  protocol.

  \begin{corollary}
    If $\Sigma^{\F_1\cdots\F_\ell}$ is a computationally secure
    implementation of $\cal G$ against $\dBobPoly$, and if $\pi_i$ is
    a computationally secure implementation of $\F_i$ against
    $\dBobPoly$ for every $i \in \set{1,\ldots,\ell}$, then
    $\Sigma^{\pi_1\cdots\pi_\ell}$ with at most $k=\poly(n)$ oracle
    calls is a computationally secure implementation of~$\cal G$
    against $\dBobPoly$.
  \end{corollary}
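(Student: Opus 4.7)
The plan is to combine Theorem~\ref{thm:composition.computational} with the assumed security of the hybrid protocol $\Sigma^{\F_1\cdots\F_\ell}$ via transitivity of quantum-computational indistinguishability. This is a two-step reduction: first replace the real sub-protocols by the ideal functionalities, then replace the whole hybrid protocol by $\cal G$.

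More concretely, fix an arbitrary real-world adversary $\dB \in \dBobPoly$ attacking $\Sigma^{\pi_1\cdots\pi_\ell}$ and having access to the common reference string $\crs = \crs_1,\ldots,\crs_k$, together with an efficient input sampler producing an input state of the form $\rho_{UZV'} = \rho_{\MC{U}{Z}{V'}}$ as required by Definition~\ref{def:polyboundedBobCRS}. Applying Theorem~\ref{thm:composition.computational} directly yields an ideal-world adversary $\dhB \in \dBobPoly$ attacking the classical hybrid protocol $\Sigma^{\F_1\cdots\F_\ell}$ (and not using $\crs$) such that
$$out_{\A,\dB}^{\Sigma^{\pi_1\cdots\pi_\ell}} \approxq out_{\hA,\dhB}^{\Sigma^{\F_1\cdots\F_\ell}}.$$
Since $\dhB$ itself runs in polynomial time and the input sampler (composed with the identity on Bob's side) is again an efficient input sampler whose output respects the Markov condition, the assumption that $\Sigma^{\F_1\cdots\F_\ell}$ is a computationally secure implementation of $\cal G$ against $\dBobPoly$ produces, via Definition~\ref{def:polyboundedBobCRS}, a further adversary $\dhB^{*} \in \dBobPoly$ attacking $\cal G$ directly such that
$$out_{\hA,\dhB}^{\Sigma^{\F_1\cdots\F_\ell}} \approxq out_{\hA,\dhB^{*}}^{\cal G}.$$
Chaining the two relations by transitivity of $\approxq$ (any polynomial-time distinguisher's advantage against the endpoints is bounded by the sum of two negligible advantages, hence negligible) yields
$$out_{\A,\dB}^{\Sigma^{\pi_1\cdots\pi_\ell}} \approxq out_{\hA,\dhB^{*}}^{\cal G},$$
which is exactly the statement that $\Sigma^{\pi_1\cdots\pi_\ell}$ is a computationally secure implementation of $\cal G$ against $\dBobPoly$.

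The only nontrivial bookkeeping is verifying that the input-sampler and Markov-chain hypotheses required to invoke the security of $\Sigma^{\F_1\cdots\F_\ell}$ are actually satisfied after the first reduction step. This is essentially automatic: Theorem~\ref{thm:composition.computational} preserves the classical/quantum separation between Alice's side and Bob's side (because the hybrid protocol is classical, any new entanglement created during the execution has to pass through the classical transcript $Z_j$), and the efficient input sampler for the composed protocol doubles as an efficient input sampler for the hybrid protocol. Hence no new idea beyond transitivity is needed, and the corollary follows directly.
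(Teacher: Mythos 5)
Your proposal is correct and matches the paper's (implicit) argument exactly: apply Theorem~\ref{thm:composition.computational} to pass from $\Sigma^{\pi_1\cdots\pi_\ell}$ to $\Sigma^{\F_1\cdots\F_\ell}$, then invoke the assumed computational security of the hybrid protocol against $\dBobPoly$ to pass from $\Sigma^{\F_1\cdots\F_\ell}$ to $\cal G$, and chain the two $\approxq$ relations by transitivity. The paper treats this corollary as an immediate consequence of the theorem (the remark preceding it says exactly this), so your two-step reduction with the bookkeeping check on the input sampler and the Markov condition is the intended proof.
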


  \begin{proof} 
\def\mark#1{\bar{#1}}

    We prove the claim in Theorem~\ref{thm:composition.computational}
    by induction on $k$. If no calls are made, we can set $\dhB
    \assign \dB$ and the claim holds trivially. Consider now a
    protocol $\Sigma^{\F_1\cdots\F_\ell}$ with at most $k > 0$ oracle
    calls. For simplicity, we assume that the number of oracle calls
    equals $k$, otherwise we instruct the players to make some ``dummy
    calls''. Let $\rho_{U_k Z_k V'_k}$ be the common state right
    before the $k$-th, and thus, last call to one of the sub-protocols
    $\pi_1,\ldots,\pi_\ell$ in the execution of the real protocol
    $\Sigma^{\pi_1,\ldots,\pi_\ell}$. To simplify notation in the rest
    of the proof, we omit the index $k$ and write
    \smash{$\rho_{\mark{U} \mark{Z} \mark{V}'}$} instead (see
    Figure~\ref{fig:composition.proof}). We know from the induction
    hypothesis for $k-1$ that there exists an ideal-world adversary
    $\dhB \in \dBobPoly$ not using the common reference string such
    that $\rho_{\mark{U} \mark{Z} \mark{V}'} \approxq \sigma_{\mark{U}
    \mark{Z} \mark{V}'}$ where $\sigma_{\mark{U} \mark{Z} \mark{V}'}$
    is the common state right before the $k$-th call to a
    functionality in the execution of the hybrid protocol
    $\Sigma^{\F_1\cdots\F_\ell}$ with input $\rho_{U Z V'}$. As
    described, $\mark{U}$ and $\mark{Z},\mark{V}'$ are to be
    understood as follows. $\mark{U}$ denotes $\A$'s (respectively
    $\hA$'s) input to the sub-protocol (respectively functionality)
    that is to be called next. $\mark{Z}$ collects the classical
    communication dictated by $\Sigma^{\F_1\ldots,\F_\ell}$ as well as
    $\dhB$'s classical inputs to and outputs from the previous calls
    and $\mark{V}'$ denotes the dishonest player's current quantum
    state. Note that the existence of $\mark{Z}$ is guaranteed by our
    formalization of {\em classical} hybrid protocols and
    $\sigma_{\mark{U} \mark{Z} \mark{V}'} =
    \sigma_{\MC{\mark{U}}{\mark{Z}}{\mark{V}'}}$.

    Let $\crs_i$ be the common reference string used in protocol
    $\pi_i$. For simplicity, we assume that the index $i$, which
    determines the sub-protocol $\pi_i$ (or functionality~$\F_i$) to
    be called next, is {\em fixed} and we just write $\pi$ and $\F$
    for $\pi_i$ and $\F_i$, respectively.

%%%%%%%%%%%%%%%%%%%%%%%%%%%%%%%%%%%%%%%%%%%%%%%%%%%%%%%%%%%%%%%%%%%%%%%%%%
\begin{figure}
  \begin{framed}
    \centering 
    \input{CompProofFig.new}
  \end{framed}
  \caption{Steps of the Composability Proof} 
  \label{fig:composition.proof} 	
\end{figure}
%%%%%%%%%%%%%%%%%%%%%%%%%%%%%%%%%%%%%%%%%%%%%%%%%%%%%%%%%%%%%%%%%%%%%%%%%%

    It follows from Definition~\ref{def:polyboundedBobCRS} of
    computational security that there exists $\dhB \in \dBobPoly$
    (independent of the input state) not using $\crs_i$ such that the
    corresponding output states $\sigma_{\mark{X}\mark{Z}\mark{Y}'}$
    and $\tau_{\mark{X}\mark{Z}\mark{Y}'}$ produced by $\F_{\hA,\dhB}$
    (as prescribed by the oracle protocol) and $\pi_{\A,\dB}$ run on
    the state $\sigma_{\mark{U} \mark{Z} \mark{V}'} =
    \sigma_{\MC{\mark{U}}{\mark{Z}}{\mark{V}'}}$ are
    quantum-computationally indistinguishable.

    The induction step is then completed with
    $$ out_{\A,\dB}^{\Sigma^{\pi}} = \rho_{\mark{X}\mark{Z} \mark{Y}'}
    = (\pi_{\A,\dB})\, \rho_{\mark{U}\mark{Z}\mark{V}'} \approxq
    (\pi_{\A,\dB})\,\sigma_{\mark{U}\mark{Z}\mark{V}'} =
    \sigma_{\mark{X}\mark{Z}\mark{Y}'} \approxq
    \tau_{\mark{X}\mark{Z}\mark{Y}'} = out_{\hA,\dhB}^{\Sigma^{\F}} \,
    , $$ 

    where $(\pi_{\A,\dB})\, \rho_X$ should be understood as running
    protocol $\pi_{\A,\dB}$ with input $\rho_X$.

    Note that the strategy of $\dhB$ does not depend on the state
    $\sigma_{\mark{U}\mark{Z}\mark{V}'}$, and hence, the overall
    ideal-world adversary $\dhB$ does not depend on the input state
    either. Furthermore, the concatenation of two polynomially bounded
    players is polynomially bounded, i.e.~$\dhB \in \dBobPoly$.
  \end{proof}

%%%%%%%%%%%%%%%%%%%%%%%%%%%%%%%%%%%%%%%%%%%%%%%%%%%%%%%%%%%%%%%%%%%%%%%%%
%%%%%%%%%%%%%%%%%%%%%%%%%%%%%%%%%%%%%%%%%%%%%%%%%%%%%%%%%%%%%%%%%%%%%%%%% 
%%% IMPROVED SECURITY %%%%%%%%%%%%%%%%%%%%%%%%%%%%%%%%%%%%%%%%%%%%%%%%%%%
%%%%%%%%%%%%%%%%%%%%%%%%%%%%%%%%%%%%%%%%%%%%%%%%%%%%%%%%%%%%%%%%%%%%%%%%% 
%%%%%%%%%%%%%%%%%%%%%%%%%%%%%%%%%%%%%%%%%%%%%%%%%%%%%%%%%%%%%%%%%%%%%%%%%

\clearemptydoublepage
\part{Quantum Cryptography}
\label{part:quantum.cryptography}

%%%%%%%%%%%%%%%%%%%%%%%%%%%%%%%%%%%%%%%%%%%%%%%%%%%%%%%%%%%%%%%%%%%%%%%%%%
%%% INTRO QC %%%%%%%%%%%%%%%%%%%%%%%%%%%%%%%%%%%%%%%%%%%%%%%%%%%%%%%%%%%%%
%%%%%%%%%%%%%%%%%%%%%%%%%%%%%%%%%%%%%%%%%%%%%%%%%%%%%%%%%%%%%%%%%%%%%%%%%%

\chapter{Introduction}
\label{chap:intro.quantum.cryptography}

In this part of the thesis, we present our research in quantum
cryptography, which offers a secure alternative to some conventional
cryptographic schemes that are rendered insecure by the potential emerge
of large-scale quantum-computing. We also want to mention an actual
implementation of quantum protocols within the research project
MOBISEQ (``Mobile Quantum Security''), which is a joint project of the
cryptology group from the computer science department and the iNano
center at the physics department, both at Aarhus University. The main
goal of MOBISEQ is the development of technology for secure quantum
communication that can compete with conventional methods on
practicality, velocity and security and that can be integrated into
existing infrastructures. However, at the time of writing, the
implementation is still ``under construction''.

In the next sections, we will introduce the concept of mixed
(classical) commitment schemes, since they are an important underlying
construction in our quantum protocols.

In Chapter~\ref{chap:hybrid.security}, we discuss our main result on
improving the security of quantum protocols via a commit\&open
step, based on these mixed commitments. We first introduce the setting
and then propose a general compiler therein. We further show that the
construction remains secure in the case of noisy communication. We
then proceed with combining the compilation technique with the
bounded-quantum-storage model. Last, we show sequential composability
and further use the extended commitment construction, discussed in
Section~\ref{sec:extended.commit.compiler}, towards a more general
composition. 

In Chapter~\ref{chap:hybrid.security.applications}, we discuss
that the compiler can be applied to known protocols and show two
example applications, with the result of achieving hybrid-secure
protocols.

%%% MIXED COMMIT %%%%%%%%%%%%%%%%%%%%%%%%%%%%%%%%%%%%%%%%%%%%%%%%%%%%%%%%%

\section{Mixed Commitments}
\label{sec:mixed.commit}
\index{commitment!mixed}

Commitments were introduced on an intuitive level in
Section~\ref{sec:primitives.commitment} and capture the process of a
party being committed to his message by the binding characteristic
without immediately revealing it to the other party due to the hiding
aspect.

%%%%%%%%%%%%%%%%%%%%%%%%%%%%%%%%%%%%%%%%%%%%%%%%%%%%%%%%%%%%%%%%%%%%%%%%%

\subsection{Motivation}
\label{sec:mixed.commit.motivation}

  Our compiler construction in the following chapters requires a
  classical yet quantum-secure commitment from $\B$ to $\A$. Since we
  aim at preserving the unconditional security against $\A$ in the
  outer quantum protocols, the commitment can only be
  quantum-computationally binding. As described in
  Section~\ref{sec:rewinding}, the standard reduction from the
  computational security of the protocol to the computational binding
  property of the commitment would require rewinding $\dB$, which is
  not possible in the assumed protocol scenario.

  Therefore, we construct keyed commitment schemes, which are special
  in that they are \emph{mixed commitments} or \emph{dual-mode
  commitments}.\footnote{The notions are interchangeable. The term of
  mixed commitments was introduced in~\cite{DN02}. In~\cite{DFLSS09},
  the name dual-mode commitments was used to relate to the notion of a
  dual-mode crypto-system~\cite{PVW08}, which is similar in spirit,
  but slightly more involved. Last we want to mention that our schemes
  are similar to the commitment schemes used in~\cite{DFS04} but with
  extensions.} Generally speaking, the notion of mixed commitments
  requires some trapdoor information, related to the common reference
  string and given to the simulator in the ideal world. This trapdoor
  provides the possibility for \emph{extracting} information out of
  the commitments\index{commitment!extractability}, which finally
  allows us to circumvent the necessity of rewinding $\dB$. We will
  discuss this in detail in
  Section~\ref{sec:mixed.commit.idea}. Additionally, we require that
  the basic mathematical assumption, which guarantees the hiding and
  binding properties of the commitments, withstands quantum
  attacks. We will propose an actual instantiation in
  Section~\ref{sec:mixed.commit.instantiation}.

%%%%%%%%%%%%%%%%%%%%%%%%%%%%%%%%%%%%%%%%%%%%%%%%%%%%%%%%%%%%%%%%%%%%%%%%%

\subsection{Idea}
\label{sec:mixed.commit.idea}
  
  Recall that a keyed bit or string commitment $C =
  \commitk{m}{r}{pk}$ takes as input a message $m$ and some randomness
  $r$ of size polynomial in the security parameter, as well as a
  public key $pk$. The message $m$ can be a single bit $b$ for the
  implementation of bit commitments or, in order to achieve string
  commitments, a bit-string $m = b_0,\ldots,b_s$. In order to open the
  commitment, message $m$ and random variable $r$ are sent in plain
  and the receiver therewith checks the correctness of $C$. Hiding is
  typically formalized by the requirement $\big(
  pk,\commitk{m_1}{r_1}{pk} \big) \approx \big(
  pk,\commitk{m_2}{r_2}{pk} \big)$ with different flavors of
  indistinguishability, while binding prohibits that there exist
  $C,m_1,r_1,m_2,r_2$, such that $m_1 \neq m_2$, but
  $\commitk{m_1}{r_1}{pk} = C = \commitk{m_2}{r_2}{pk}$.

  We construct our commitments in the CRS-model such that they provide
  dual modes depending on the public key. In more detail, let $\tt
  commitK = (\commit, \GH, \GB, \xtrx)$ denote a (keyed) mixed
  commitment scheme. The commitment key $pk$ is generated by one of
  the two possible key-generation algorithms, $\GH$ or
  $\GB$. Generator $\GB$ takes as input the security parameter
  $\kappa$ and generates a key pair $(pk,sk) \la \GB$, where $pk \in
  \zo^\kappa$ is a public key and $sk$ is the corresponding secret
  key. $\xtrx$ is a poly-time extraction algorithm that takes $sk$ and
  $C$ as input and produces $m$ as output, i.e., $\xtr{C}{sk}=
  \xtr{\commitk{m}{r}{pk}}{sk} = m$, which must hold for all pairs
  $(pk,sk)$ generated by $\GB$ and for all values $m,r$. In other
  words, the secret key $sk$ allows to efficiently extract $m$ from
  $C$, and as such the commitment is unconditionally binding. We often
  denote this type of key therefore by $\pkB$. For a key $pk \la \GH$,
  the commitment scheme is unconditionally hiding (and we often refer
  to this type as $\pkH$). Furthermore, we need the unconditionally
  binding key $\pkB$ and the unconditionally hiding key $\pkH$ to be
  computationally indistinguishable even against quantum attacks,
  i.e., $\pkB \approxq \pkH$.

  We want to stress that we can even weaken the assumption on the
  hiding key in that we merely require that there exists a public-key
  encryption scheme where a random public key looks pseudo-random to
  poly-time quantum circuits. Thus, $\commit$ does not require actual
  unconditionally hiding keys, but we can use uniformly random strings
  from $\zo^\kappa$ as such. This is feasible in our proposed
  construction, sketched below, and still provides unconditional
  hiding, except with negligible probability. This fact also ensures
  that most keys of a specific domain are in that sense
  unconditionally hiding keys.

  Finally, to avoid rewinding we use the following proof method: In
  the real-world protocol, $\B$ uses the unconditionally hiding key
  $\pkH$ to maintain unconditional security against any unbounded
  $\A$. To argue security against a computationally bounded $\dB$, an
  information-theoretic argument involving the simulator $\dhB$ is
  given to prove that $\dB$ cannot cheat with the unconditionally
  binding key $\pkB$. Security in real life then follows from the
  quantum-computational indistinguishability of $\pkH$ and $\pkB$.

%%%%%%%%%%%%%%%%%%%%%%%%%%%%%%%%%%%%%%%%%%%%%%%%%%%%%%%%%%%%%%%%%%%%%%%%%

\subsection{Instantiations}
\label{sec:mixed.commit.instantiation}
  
  As a candidate for instantiating our commitment construction, we
  propose the lattice-based public-key encryption scheme of
  Regev~\cite{Regev05}. The crypto-system is based on the
  (conjectured) hardness of the learning with error (LWE) problem,
  which can be reduced from worst-case hardness of the approximation
  of the shortest vector problem (in its decision version). Thus,
  breaking Regev's crypto-system implies an efficient algorithm for
  approximating the lattice problem in the worst-case, which is
  assumed to be hard even with quantum computing power.

  In more detail, the crypto-system uses dimension $n$ as security
  parameter and is para\-metrized by two integers $m$ and $p$, where
  $p$ is a prime bounded by $n^2 \leq p \leq 2n^2$, and a probability
  distribution on $\mathbb{Z}_p$. A regular public key (in
  $\mathbb{Z}_p^{m \times n}$) for Regev's scheme is proven to be
  quantum-computationally indistinguishable from the case where a
  public key is chosen from the uniform distribution, and therewith,
  independently from a secret key. In this case, the ciphertext
  carries essentially no information about the message~\cite[Lemma
  5.4]{Regev05}. This proof of semantic security for Regev's
  crypto-system is in fact the property we require for our commitment,
  as the public key of a regular key pair can be used as the
  unconditionally binding commitment key $\pkB$ in the ideal-world
  simulation. Then, for the real protocol, an unconditionally hiding
  commitment key $\pkH$ can simply be constructed by uniformly
  choosing numbers in $\mathbb{Z}_p^{n \times m}$. Both public keys
  will be of size $\tilde{O}(n^2)$, and the encryption process
  involves only modular additions, which makes its use simple and
  efficient.\footnote{The notation $\tilde{O}(\cdot)$ is similar to
  the asymptotic Landau notation $O(\cdot)$ but ignores logarithmic
  factors.}
 
  For simplicity and efficiency, we use a common reference string,
  which allows us to use Regev's scheme in a simple way and, since it
  is relatively efficient, we get a protocol that is potentially
  practical. More specifically, in the CRS-model we assume the key
  $\pkB$ for the commitment scheme, generated by $\GB$, to be
  contained in the common reference string. We want to stress however
  that we show in Part~\ref{part:cryptography.in.quantum.world},
  Section~\ref{sec:key.generation.coin}, how to avoid the CRS-model at
  the cost of a non-constant round construction, where we let the
  parties generate a common reference string jointly by coin-flipping.
  
  For the compiler construction here, we will use Regev's original
  version, as we require bit commitments. However, a multi-bit variant
  of Regev's scheme is given in the full version of~\cite{PVW08}. All
  requirements as described above are maintained in this more
  efficient variant, which improves the performance of Regev's scheme
  by essentially a factor of $n$, e.g., the scheme can encrypt $n$
  bits using $\tilde{O}(n)$ bits. We use later in
  Part~\ref{part:cryptography.in.quantum.world},
  Chapter~\ref{chap:framework}, that this implies that we can flip a
  $\lambda$-bit string using $O(\lambda)$ bits of communication when
  $\lambda$ is large enough. We also rely on this multi-bit version
  for our extended commitment construction, which we will describe in
  the next Section~\ref{sec:extended.commit.compiler} and then use in
  Section~\ref{sec:general.composition.compiler}, where we show how to
  achieve efficient simulation also against a dishonest $\dA$.

%%%%%%%%%%%%%%%%%%%%%%%%%%%%%%%%%%%%%%%%%%%%%%%%%%%%%%%%%%%%%%%%%%%%%%%%%

  \subsection{Extended Construction}
  \label{sec:extended.commit.compiler}
  \index{commitment!extended} 

  To achieve efficient simulation against both players, i.e.\
  additional efficient simulation also against $\dA$ (in
  Section~\ref{sec:general.composition.compiler}), we need to extend
  our commitments by yet another trapdoor, which provides the
  commitment with
  \emph{equivocability}\index{commitment!equivocability}. Intuitively,
  this means that we now enable the simulator in the ideal world that
  it can construct commitments equivocally such that it can open them
  later to different bits. As we still need in addition the properties
  of the mixed commitment scheme of
  Section~\ref{sec:mixed.commit.idea} in its multi-bit variant, we
  will build the new scheme around it, such that its trapdoor can
  still be used for extraction.

  The new extension is based on the idea of UC-commitments~\cite{CF01}
  and requires a $\Sigma$-protocol for a (quantumly) hard relation
  $\Rel = \{(x,w)\}$, i.e.\ an honest-verifier perfect zero-knowledge
  proof of knowledge with instance $x$ and witness $w$ (see also
  Section~\ref{sec:primitives.zk}). Conversations are of form $\tt
  \big( a^{\Sigma}, c^{\Sigma}, z^{\Sigma} \big)$, where the prover
  sends $\tt a^{\Sigma}$, the verifier challenges him with bit $\tt
  c^{\Sigma}$, and the prover replies with $\tt z^{\Sigma}$. For
  practical candidates of $\Rel$, see e.g.~\cite{DFS04}. By special
  soundness, it holds that from two accepting conversations with
  different challenges, i.e.\ $\big( {\tt a^\Sigma},0,{\tt z^\Sigma_0}
  \big)$ and $\big( {\tt a^\Sigma},1,{\tt z^\Sigma_1} \big)$, the
  simulator can extract $w$ such that $(x,w) \in \Rel$.

  In real life, the common reference string consists of commitment key
  $\pkH$ and instance $x$. To commit to a bit $b$, the committer $\B$
  first runs the honest-verifier simulator to get, on input $x$, a
  conversation $\big( {\tt a^{\Sigma}}, b, {\tt z^{\Sigma}}
  \big)$. Then, he commits by sending $\big( {\tt a_\Sigma}, C_0, C_1
  \big)$, where $C_b= \commitk{{\tt z^\Sigma}_b}{r_b}{\pkH}$ and
  $C_{1-b} = \commitk{0^{z'}}{r_{1-b}}{\pkH}$ with randomness
  $r_b,r_{1-b}$ and $z' = |{\tt z^\Sigma}|$. To open a commitment,
  $\B$ reveals $b$ and opens $C_b$ by sending ${\tt z^\Sigma}_b$,
  $r$. The receiver checks that $\big({\tt a^{\Sigma}}, b, {\tt
  z_{\Sigma}} \big)$ is a valid conversation and that $C_b$ was
  correctly opened. Assuming that the $\Sigma$-protocol is
  honest-verifier perfect zero-knowledge and $\pkH$ provides
  unconditional hiding, the new commitment construction is again
  unconditionally hiding.

  In the ideal world, we assume that the simulator (simulating against
  $\dA$) knows $w$ such that $(x,w) \in \Rel$ (and public key
  $\pkH$). Therewith, it can compute two valid conversations $\big(
  {\tt a^\Sigma},0,{\tt z^\Sigma}_0 \big)$ and $\big( {\tt
  a^\Sigma},1,{\tt z^\Sigma}_1 \big)$ and set $C_0= \commitk{{\tt
  z^\Sigma}_0}{r_0}{\pkH}$ and $C_1= \commitk{{\tt
  z^\Sigma}_1}{r_1}{\pkH}$. This enables to open both ways, assuming
  the knowledge of the trapdoor $w$.

  We maintain extraction, since in the respective simulation against
  $\dB$, the public key is chosen in a different but indistinguishable
  way, namely as $(x,\pkB)$, where $\pkB$ is the binding commitment
  key, generated together with $sk$. Now, given a commitment
  $(a,C_0,C_1)$, the simulator can decrypt $C_0,C_1$ to determine
  which of them contains a valid reply ${\tt z^\Sigma}_b$ of the
  $\Sigma$-protocol. The only way this could fail is in the case where
  both $C_0$ and $C_1$ contain valid replies, which would imply that
  the committer $\dB$ could compute a valid $w$. For a polynomial-time
  bounded committer and a (quantumly) hard relation $\Rel$, however,
  this can occur only with negligible probability.

%%%%%%%%%%%%%%%%%%%%%%%%%%%%%%%%%%%%%%%%%%%%%%%%%%%%%%%%%%%%%%%%%%%%%%%%%
%%% HYBRID SECURITY %%%%%%%%%%%%%%%%%%%%%%%%%%%%%%%%%%%%%%%%%%%%%%%%%%%%%
%%%%%%%%%%%%%%%%%%%%%%%%%%%%%%%%%%%%%%%%%%%%%%%%%%%%%%%%%%%%%%%%%%%%%%%%%

\clearemptydoublepage
\chapter{Improved Security for Quantum Protocols}
\label{chap:hybrid.security}

Here, we propose a general compiler\index{compiler} for improving the
security of two-party quantum protocols, implementing different
cryptographic tasks and running between mutually distrusting players
Alice and Bob. The compiler extends security against an ``almost
honest'' adversary by security against an arbitrary computationally
bounded (quantum) adversary. Furthermore, we can achieve hybrid
security such that certain protocols can only be broken by an
adversary who has large quantum memory {\em and} large computing
power. The results in this chapter are joint work with Damg{\aa}rd,
Fehr, Salvail and Schaffner, and appeared in~\cite{DFLSS09}.

%%%%%%%%%%%%%%%%%%%%%%%%%%%%%%%%%%%%%%%%%%%%%%%%%%%%%%%%%%%%%%%%%%%%%%%%%

\section{Motivation}
\label{sec:motivation.hybrid.security}

Our proposed compiler applies to a large class of quantum protocols,
namely to so-called \emph{BB84-type} protocols that follow a
particular but very typical construction design for quantum
communication. Our main result states that if the original protocol is
secure against a so-called\index{player!benign} \emph{benign} Bob who
is only required to treat the qubits ``almost honestly'' but can
deviate arbitrarily afterwards, then the compiled protocol is secure
against a \emph{computationally bounded} quantum Bob. The
unconditional security against Alice that BB84-type protocols usually
achieve is preserved during compilation and it requires only a
constant increase of transmitted qubits and classical messages.

In other words, with our compiler, one can build a protocol for any
two-party functionality by designing a protocol that only has to be
secure if Bob is benign, which is a relatively weak assumption. On the
other hand, many protocols following the BB84-type pattern (at least
after some minor changes) have been proposed, e.g.\ for Oblivious
Transfer, Commitment, and Password-Based Identification
\cite{CK88,DFSS08,DFRSS07,DFSS07}. Typically, their proofs go
through under our assumption. For instance, our compiler can easily be
applied to existing quantum protocols implementing ID and OT, which we
will show as example applications in
Chapter~\ref{chap:hybrid.security.applications}.

In more detail, the compiler incorporates the mixed commitment scheme,
discussed in Section~\ref{sec:mixed.commit}, into the basic protocols
with Bob as committer. Recall that we need such a mixed commitment to
preserve the unconditional security against Alice that BB84-type
protocols typically achieve but cannot apply the typical reduction
from the computational security of the protocol to the computational
binding property of the commitment, due to the restrictions on
rewinding in the quantum world (see Section~\ref{sec:rewinding}). The
idea of introducing a (plain) commitment in quantum protocols has
already been sketched in other works, for instance,
in~\cite{CK88,BBCS91}. Furthermore, there are partial results,
investigating this scenario, e.g.~\cite{Yao95,CDMS04,Mayers96}. We
will go into more details of preceding work in
Section~\ref{sec:hybrid.security.ot}.

Previously, it was very unclear what exactly such a $\CO$-step would
achieve in the quantum world. The intuition is clearly that if Bob
passes the test, he must have measured most of his qubits, also in the
remaining untested subset. But---to our best knowledge---it was never
formally proven that the classical intuition also holds for a quantum
Bob. We now give a full characterization of $\CO$ in our quantum
setting, namely that it forces Bob to be benign, for which we propose
a formal definition and which might be of independent interest. These
aspects are covered in Section~\ref{sec:compiler}. In this context, we
want to mention the follow-up work in~\cite{BF10}. They phrase the
$\CO$-approach more clearly as the quantum version of classical
sampling, and additionally, investigate sampling in quantum settings
more generally.

In Section~\ref{sec:compiler.noise}, we generalize our result to noisy
quantum communication. Furthermore, security in the
bounded-quantum-storage model that assumes the adversary's quantum
storage to be of limited size, implies benign security. Therefore by
compilation of such protocols, we can achieve hybrid security, which
means that the adversary now needs \emph{both} large quantum memory
\emph{and} large quantum computing power to break these new
protocols. The preservation of BQSM-security allows us to get security
properties that classical protocols cannot achieve, if the assumption
on the limited quantum memory holds---which definitely is the case
with current state-of-the-art
(Section~\ref{sec:compiler.hybrid.security}). However, if the
assumption should fail and the adversary could perfectly store all
qubits sent, the known protocols can be easily broken. Thus, by
applying our compiler, we obtain another security layer that equips
such protocols with additional quantum-computational security. Last,
we sketch that the compiled protocols in their basic form remain
sequentially composable. Moreover, by using the extended commitment
construction of Section~\ref{sec:extended.commit.compiler}, we achieve
efficient simulations on both sides, and therewith, a more general
composition. This result is discussed in
Section~\ref{sec:composability.compiler}.
 
%%%%%%%%%%%%%%%%%%%%%%%%%%%%%%%%%%%%%%%%%%%%%%%%%%%%%%%%%%%%%%%%%%%%%%%%%

\section{Introducing $\CO$}
\label{sec:compiler}

We now discuss our compiler construction in detail, starting from
describing the form of BB84-type protocols and formalizing our notion of
benignity. Then, we show the transformation from benign security
towards computational security and conclude with its proof.

%%%%%%%%%%%%%%%%%%%%%%%%%%%%%%%%%%%%%%%%%%%%%%%%%%%%%%%%%%%%%%%%%%%%%%%%%%

\subsection{Initial situation}
\label{sec:initial.compiler}

  We consider quantum two-party protocols that follow a particular but
  very typical construction design. These protocols consist of two
  phases, called \emph{preparation} and \emph{post-processing}
  phase. We call such a protocol a \emph{BB84-type}
  protocol\index{BB84-type}, as they have the same structure and the
  same encoding scheme as the first (complete) quantum protocol by
  Bennett and Brassard in 1984 for quantum key
  distribution~\cite{BB84}. However, we want to stress again that we
  are interested in protocols for cryptographic tasks other than key
  distribution, and therewith, we also consider the case of dishonest
  players. A generic BB84-type protocol $\Pi$ is specified in
  Figure~\ref{fig:BB84-type}.
  
%%%%%%%%%%%%%%%%%%%%%%%%%%%%%%%%%%%%%%%%%%%%%%%%%%%%%%%%%%%%%%%%%%%%%%%%%%
  \begin{figure}
    %\normalsize
    \begin{framed}
      \noindent\hspace{-1.5ex} {\sc Protocol $\Pi$}  \\[-4ex]
      \begin{description}\setlength{\parskip}{0.5ex}
      \item[{\it Preparation:}]$ $ \\ 
	$\A$ chooses $x \in_R \set{0,1}^n$ and $\theta \in_R
	\set{\+,\x}^n$ and sends $\ket{x}_{\theta}$ to~$\B$, and $\B$
	chooses $\hat{\theta} \in_R \set{0,1}^n$ and obtains $\hat{x}
	\in \set{0,1}^n$ by measuring $\ket{x}_{\theta}$ in bases
	$\hat{\theta}$.
      \item[{\it Post-processing:}]$ $\\ 
	Arbitrary classical communication and classical local
	computations.
      \end{description}
      \vspace{-1.5ex}
    \end{framed}
    \vspace{-1.5ex}
    %\small
    \caption{The Generic BB84-type Quantum Protocol $\Pi$. } 
    \label{fig:BB84-type} 
  \end{figure}	
%%%%%%%%%%%%%%%%%%%%%%%%%%%%%%%%%%%%%%%%%%%%%%%%%%%%%%%%%%%%%%%%%%%%%%%%%

  In the preparation phase, Alice transmits $n$ random BB84-qubits to
  Bob. More specifically, Alice chooses a random bit string $x=
  x_1,...,x_n$ and a random basis-string $\theta
  =\theta_1,...,\theta_n$ from a set of two conjugate bases, encodes
  her qubits accordingly, i.e., $x_i$ is encoded in the state of the
  $i$th particle using basis $\theta_i$, and sends them to Bob. Bob
  chooses a basis-string $\hat{\theta} =
  \hat{\theta}_1,..,\hat{\theta} _n$ and measures the $i$th particle
  in basis $\hat{\theta}_i$. If Bob plays honestly, he learns $x_i$
  whenever the bases match, i.e.\ $\hat{\theta}_i=
  \theta_i$. Otherwise, he gets a random independent result. The
  second phase of the protocol, the post-processing, consist of
  arbitrary classical messages and local computations, depending on
  the task at hand.

  However, the fact that all BB84-type protocols have in common is
  that the classical post-processing typically relies on Bob's subsets
  of correct and random outcomes, or in other words, on the fact that
  a dishonest Bob has high uncertainty about a crucial piece of
  information. Thus, BB84-type protocols---in their basic form---may
  be broken by a dishonest Bob, who does not measure the qubits
  immediately. This is due to the fact that Alice typically reveals
  $\theta$ at a later stage so that Bob knows the correct
  subset. However, a dishonest Bob could measure all stored qubits in
  matching bases $\hat{\theta} = \theta$, and thus, learn more
  information than he was supposed to.

  This aspect is captured in our definition of security against a
  \emph{benign} Bob, or more precisely a ``benignly dishonest'' Bob,
  who treats the qubits ``almost honestly'' in the preparation phase
  but can deviate arbitrarily otherwise. Note that, in contrast to
  Bob's situation, BB84-type protocols typically achieve unconditional
  security against cheating by Alice in their default form. On a very
  intuitive level, it should now be evident that we want to enforce
  Bob's measurement upon qubit reception before any further
  announcement by Alice. In the next section, we will make this
  definition more formal.

%%%%%%%%%%%%%%%%%%%%%%%%%%%%%%%%%%%%%%%%%%%%%%%%%%%%%%%%%%%%%%%%%%%%%%%%%%

\subsection{Security against Benign Bob} 
\label{sec:defbenign.compiler}
\index{player!benign}

  The following security definition captures information-theoretic
  security against a benign Bob. Recall that such a dishonest Bob is
  benign in that, in the preparation phase, he does not deviate (too
  much) from what he is supposed to do. In the post-processing phase,
  though, he may be arbitrarily dishonest.

  To make this description formal, we fix an arbitrary choice of
  $\theta$ and an arbitrary value for the classical information, $z$,
  which Bob may obtain as a result of the preparation phase (i.e.~$z =
  (\hat{\theta},\hat{x})$ in case Bob is actually honest). Let $X$
  denote the random variable describing the bit-string $x$, where we
  understand the distribution $P_X$ of $X$ to be conditioned on the
  fixed choices for $\theta$ and~$z$. Furthermore, let $\rho_E$ be the
  state of Bob's quantum register $E$ after the preparation
  phase. Note that, still with fixed $\theta$ and~$z$, $\rho_E$ is of
  the form $\rho_E = \sum_x P_X(x) \rho^x_E$, where $\rho^x_E$ is the
  state of Bob's quantum register in case $X$ takes on the value
  $x$. In general, the $\rho^x_E$ may be mixed, but we can think of
  them as being reduced pure states with $\rho^x_E = \tr_R \big(
  \proj{\psi^x_{ER}} \big)$ for a suitable register $R$ and pure
  states $\ket{\psi^x_{ER}}$. We then call the state $\rho_{ER} =
  \sum_x P_X(x) \proj{\psi^x_{ER}}$ a point-wise purification (with
  respect to $X$) of $\rho_E$. Obviously, in case Bob is honest, $X_i$
  is fully random whenever $\theta_i \neq \hat{\theta}_i$, and we have
  $$\Hmin\bigl(X|_I \,\big|\,X|_{\bar{I}} = x|_{\bar{I}}\bigr) =
  d_H\bigl(\theta|_I,\hat{\theta}|_I\bigr) \, ,
  $$ 
  for every $I \subseteq \set{1,\ldots,n}$ and every $x|_I$, where
  $\bar{I}$ denotes the complementary set. In that case, Bob does not
  store any non-trivial quantum state so that $R$ is ``empty'' and
  $$
  \Hmax(\rho_{ER}) = \Hmax(\rho_E) = 0 \, .
  $$

  A benign Bob $\dB$ is now specified to behave close-to-honestly in
  the preparation phase in that, after the preparation, he produces an
  auxiliary output $\hat{\theta}$. Given this output, we are in a
  certain sense close to the ideal situation where Bob really measured
  in basis $\hat{\theta}$ as far as the values of $\Hmin\bigl(X|_I
  \,\big|\,X|_{\bar{I}} = x|_{\bar{I}}\bigr)$ and $\Hmax(\rho_{ER})$
  are concerned.\footnote{The reason why we consider the point-wise
  purification of $\rho_E$ is to prevent Bob from artificially blowing
  up $\Hmax(\rho_{ER})$ by locally generating a large mixture or
  storing an unrelated mixed input state.} Informally speaking, the
  following definition states (under Point~(1.)) that there exists a
  string $\hat{\theta}$ of $\dB$'s measurement bases, such that the
  uncertainty about $\A$'s bit $x_i$ is essentially 1 whenever
  $\theta_i \neq \hat{\theta_i}$. Furthermore, $\dB$'s quantum storage
  is small.\index{entropy!min-entropy}\index{entropy!max-entropy}
    \begin{definition}
      [Unconditional security for Alice against {\em benign} Bob]
      \label{def:BenignBob}
      A BB84-type quantum protocol $\Pi$ securely implements $\F$
      against a {\em $\beta$-benign} $\dB$ for some parameter $\beta
      \geq 0$, if it securely implements $\F$ according to
      Definition~\ref{def:qmemoryBob}, with the following two
      modifications:
      \begin{enumerate}
        \item The quantification is over all $\dB$ with the following
	property: After the preparation phase $\dB$ either aborts, or
	else produces an auxiliary output $\hat{\theta} \in
	\set{\+,\x}^n$. Moreover, the joint state of $\A$ and $\dB$
	after $\hat{\theta}$ has been output is statistically
	indistinguishable from a state for which it holds that, for
	any fixed values for $\theta$, $\hat{\theta}$ and $z$, for any
	subset $I \subseteq \set{1,\ldots,n}$, and for any
	$x|_{\bar{I}}$,
	\begin{equation}\label{eq:staterequirements}
	  \Hmin\bigl(X|_I \,\big|\,X|_{\bar{I}} = x|_{\bar{I}}\bigr)
	  \geq d_H\bigl(\theta|_I,\hat{\theta}|_I\bigr) - \beta n
	  \qquad\text{and}\qquad \Hmax\bigl(\rho_{ER}\bigr) \leq \beta n
	  \, ,
	\end{equation}
	where $\rho_{ER}$ is a point-wise purification of $\rho_E$
	with respect to $X$.
        \item $\dhB$'s running time is polynomial in the running time
        of $\dB$.
      \end{enumerate} 
    \end{definition}

%%%%%%%%%%%%%%%%%%%%%%%%%%%%%%%%%%%%%%%%%%%%%%%%%%%%%%%%%%%%%%%%%%%%%%%%%%

\subsection{From Benign to Computational Security}
\label{sec:begnign.computational.compiler}

  We now show a generic compiler which transforms any BB84-type
  protocol into a new quantum protocol for the same task. The compiler
  achieves that, if the original protocol is unconditionally secure
  against dishonest Alice and unconditionally secure against
  \emph{benign} Bob, then the compiled protocol remains to be
  unconditionally secure against dishonest Alice but is now
  \emph{computationally secure} against an \emph{arbitrary} dishonest
  Bob.

  The idea behind the construction of the compiler is to incorporate a
  commitment scheme and force Bob to behave benignly by means of the
  $\CO$-procedure. More precisely, we let Bob classically and
  position-wise commit to all his measurement bases and outcomes. Then
  Alice chooses a random test-subset of size $\alpha m$ and checks by
  Bob's openings that the bits coincide whenever the bases match. If
  the test is passed, the post-processing is conducted on the
  remaining unopened positions. Otherwise, Alice
  aborts. Figure~\ref{fig:compiled} shows the compilation of an
  arbitrary BB84-type protocol $\Pi$. The quantum communication is
  increased from $n$ to $m = n/(1-\alpha)$ qubits, where $0 < \alpha <
  1$ is an additional parameter that can be arbitrarily chosen, and
  the compiled protocol requires three more rounds of interaction.

%%%%%%%%%%%%%%%%%%%%%%%%%%%%%%%%%%%%%%%%%%%%%%%%%%%%%%%%%%%%%%%%%%%%%%%%%%
  \begin{figure}
    %\normalsize
    \begin{framed}
      \noindent\hspace{-1.5ex} {\sc Protocol $\compile(\Pi)$ :}
      \\[-4ex]
      \begin{description}\setlength{\parskip}{0.5ex}
      \item[{\it Preparation:}] $ $\\
    	$\A$ chooses $x \in_R \set{0,1}^m$ and $\theta \in_R
	\set{\+,\x}^m$ and sends $\ket{x}_{\theta}$ to~$\B$. Then,
	$\B$ chooses $\hat{\theta} \in_R \set{0,1}^m$ and obtains
	$\hat{x} \in \set{0,1}^m$ by measuring $\ket{x}_{\theta}$ in
	bases $\hat{\theta}$. 
      \item[{\it Verification:}] 
	\begin{enumerate}
	\item[]
	\item
	  $\B$ commits to $\hat{\theta}$ and $\hat{x}$ position-wise
	  by $c_i := \commitx{(\hat{\theta}_i,\hat{x}_i)}{r_i}$
	  with randomness $r_i$ for $i = 1,\ldots,m$. He sends the
	  commitments to $\A$.
	\item\label{step:check} 
	  $\A$ sends a random test subset $T \subset \{1,\ldots,m \}$
	  of size $\alpha m$. $\B$ opens $c_i$ for all $i \in T$. $\A$
	  checks that the openings are correct and that $x_i =
	  \hat{x}_i$ whenever $\theta_i = \hat{\theta}_i$. If all
	  tests are passed, $\A$ accepts. Otherwise, she rejects and
	  aborts.
	\item 
	  The tested positions are discarded by both parties: $\A$ and
	  $\B$ restrict $x$ and $\theta$, respectively $\hat{\theta}$
	  and $\hat{x}$, to $i \in \bar{T}$.
	\end{enumerate}
      \item[{\it Post-processing:}] $ $\\
	As in $\Pi$ (with $x, \theta,\hat{x}$
	and $\hat{\theta}$ restricted to positions $i \in \bar{T}$).
      \end{description}
      \vspace{-1.5ex}
    \end{framed}
    \vspace{-1.5ex}
    %\small
    \caption{The Compiled Protocol $\compile(\Pi)$.} 
    \label{fig:compiled} 
  \end{figure}
%%%%%%%%%%%%%%%%%%%%%%%%%%%%%%%%%%%%%%%%%%%%%%%%%%%%%%%%%%%%%%%%%%%%%%%%%%

  Although apparently simple---intuition clearly suggests that if Bob
  passes the measurement test, he must have measured most of his
  qubits, also in the remaining untested subset---this $\CO$ approach
  is not trivial to rigorously prove for a quantum Bob. Moreover, in
  order to preserve unconditional security against dishonest Alice,
  the commitment scheme needs to be unconditionally hiding, and so can
  be at best quantum-computationally binding. For a plain commitment
  scheme however, the common reduction from computational security of
  the protocol $\compile(\Pi)$ to the computational binding property
  of a commitment scheme would require rewinding, but we do not know
  of any technique for our protocol structure (see also
  Section~\ref{sec:rewinding} for an elaborated discussion).

  Therefore, we use our mixed dual-mode commitment construction
  $\commit$ from Section~\ref{sec:mixed.commit} that allows use to
  circumvent the necessity of rewinding. Recall that $\commit$ is a
  keyed dual-mode commitment scheme with unconditionally hiding key
  $\pkH$, generated by $\GH$, and unconditionally binding key $\pkB$,
  generated by $\GB$ along with a secret key $\sk$ that allows to
  efficiently extract $m$ from $\commitk{m}{r}{\pkB}$. Furthermore, we
  have that $\pkH \approxq \pkB$. For simplicity and efficiency, we
  consider the CRS-model, and we assume the key $\pkB$ for the
  commitment scheme, generated according to $\GB$, to be contained in
  the common reference string. We discuss in
  Section~\ref{sec:generation.generation} how to avoid the CRS-model,
  at the cost of a non-constant round construction where the parties
  generate a common reference string jointly by coin-flipping. Such an
  approach allows us to implement the entire application without any
  set-up assumptions. With our dual-mode commitment scheme, we arrive
  at the following theorem, capturing the compilation of any protocol
  from benign security towards computational security.
  \begin{theorem}[Compiler]
    \label{thm:compiler}
    Let $\Pi$ be a BB84-type protocol, unconditionally secure against
    dishonest Alice and against $\beta$-benign Bob for some constant
    $\beta \geq 0$. Consider the compiled protocol $\compile(\Pi)$ for
    arbitrary $\alpha > 0$, where the commitment scheme is
    instantiated by a dual-mode commitment scheme. Then,
    $\compile(\Pi)$ is unconditionally secure against dishonest Alice
    and quantum-computationally secure against dishonest Bob in the
    CRS-model.
  \end{theorem}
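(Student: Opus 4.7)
The plan is to prove the two claims separately, treating the security against Alice as a short reduction and spending most of the work on Bob.

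\textbf{Unconditional security against $\dA$.} In the compiled protocol, the only additional messages from $\B$ compared to $\Pi$ are the commitments $c_1,\ldots,c_m$ and the openings on the test set $T$. Since in the real protocol the CRS contains a hiding key $\pkH$, the commitments on positions in $\bar T$ are statistically independent of $(\hat\theta|_{\bar T},\hat x|_{\bar T})$. Thus, given any ideal-world simulator for the underlying $\Pi$ (obtained from its assumed security against dishonest Alice), I construct an ideal-world simulator $\dhA$ for $\compile(\Pi)$ by first running the internal $\Pi$-simulator on an extended $m$-qubit preparation, then faking the verification phase using $\B$'s honest commit-and-open procedure on the claimed $\hat\theta,\hat x$ (which $\dhA$ may sample itself since honest $\B$ would). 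Statistical hiding of $\commit$ under $\pkH$ makes the two views $\approxs$-indistinguishable.

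\textbf{Reduction for $\dB$: switching the CRS.} The main part is security against any $\dB\in\dBobPoly$. First I pass to a hybrid in which the CRS contains $\pkB\la\GB$ with trapdoor $\sk$, rather than $\pkH$. By quantum-computational indistinguishability $\pkH\approxq\pkB$, the real output of $\compile(\Pi)$ and the hybrid output are $\approxq$-close for any poly-time $\dB$. In the hybrid, $\pkB$ is unconditionally binding, so for each commitment $c_i$ sent by $\dB$ there is a unique pair $(\hat\theta_i,\hat x_i)=\xtr{c_i}{\sk}$ which the simulator can read off before $T$ is chosen. This reduces the problem to an information-theoretic analysis of $\dB$ in a world where his claimed bases and outcomes are well-defined strings $\hat\theta,\hat x\in\{+,\x\}^m\x\{0,1\}^m$.

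\textbf{Forcing benign behaviour via the test.} The technical heart is to show that, conditioned on $\A$ accepting in the verification step, the reduced state on the untested positions $\bar T$ together with $\dB$'s residual quantum register is close to a state satisfying the two requirements of Definition~\ref{def:BenignBob} for some constant $\beta=\beta(\alpha)$. I would argue this by purification: replace the ideal BB84 preparation by an EPR-based thought experiment, so that $\A$'s choice of $x$ on $\bar T$ is obtained by a delayed measurement. Since $T$ is chosen uniformly after $\dB$ has fixed $(\hat\theta,\hat x)$, a standard quantum sampling estimate (of the kind later formalized in~\cite{BF10}) yields that, except with negligible probability, the joint state on $\bar T$ is $\approxs$-close to a superposition supported on strings whose relative Hamming distance to $\hat x|_{\bar T}$ (measured in bases $\hat\theta|_{\bar T}$) is at most the test estimate plus a slack of order $\beta$. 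From such a state-structure lemma, the entropy bounds of Definition~\ref{def:BenignBob} follow: the bound on $\Hmax(\rho_{ER})$ comes from the small dimension of the support of the point-wise purification, and the bound on $\Hmin(X|_I \mid X|_{\bar I}=x|_{\bar I})$ in positions where $\theta_i\neq\hat\theta_i$ comes from the fact that in those positions the encoding basis is complementary to the basis relative to which the superposition is constrained, so $X|_I$ looks essentially uniform. The main obstacle is precisely this step: the classical ``cut-and-choose implies the untested set is well-sampled'' intuition does not directly transfer, and one must carefully work with the pure purification and avoid demanding that $\dB$ actually measured, since he may instead hold a superposition that only behaves like a measurement on the test positions.

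\textbf{Invoking benign security of $\Pi$.} Once the above structural lemma is in place, the hybrid world implements exactly the preparation phase of $\Pi$ with a $\beta$-benign attacker (the extracted $\hat\theta$ plays the role of the auxiliary output in Definition~\ref{def:BenignBob}). Applying the assumed security of $\Pi$ against $\beta$-benign Bob gives an ideal-world adversary $\dhB'$ against $\F$ such that the post-processing output in the hybrid is $\approxs$-close to $out^\F_{\hA,\dhB'}$, with $\dhB'$ running in time polynomial in that of $\dB$. Setting $\dhB\assign\dhB'$ composed with the extraction procedure and chaining the three approximations ($\approxq$ for the CRS switch, $\approxs$ for the sampling bound, and $\approxs$ for benign security of $\Pi$) gives $out^{\compile(\Pi)}_{\A,\dB}\approxq out^\F_{\hA,\dhB}$, as required by Definition~\ref{def:polyboundedBobCRS}.
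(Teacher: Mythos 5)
Your argument for computational security against $\dB$ follows the same route as the paper: switch the CRS from $\pkH$ to $\pkB$ via quantum-computational indistinguishability of the keys, use the extraction key $\sk$ to obtain well-defined $\hat{\theta},\hat{x}$ from the commitments, pass to the EPR-version with delayed measurement of $\A$'s registers, invoke a quantum sampling bound (the paper's Lemma~\ref{lemma:ideal.state}) to show the post-verification state is close to a superposition of $\hat{\theta}$-basis strings at small relative Hamming distance from $\hat{x}$, derive the two bounds of Definition~\ref{def:BenignBob} from a ``small superposition'' lemma (Lemma~\ref{lemma:small.superposition}, Corollary~\ref{corSerge}), and then apply benign security of $\Pi$. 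This matches the paper's chain Eq.~\eqref{eq:KeySwitch}--Eq.~\eqref{eq:ByAssumption} closely, even if you do not spell out the $\A_\circ / \tilde\A$ split used to formally package the reduction.

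Your argument for unconditional security against $\dA$, however, has a real gap. You propose that $\dhA$ fakes the verification step by ``$\B$'s honest commit-and-open procedure on the claimed $\hat\theta,\hat x$ (which $\dhA$ may sample itself since honest $\B$ would).'' But honest $\B$ does \emph{not} sample $\hat x$ freely --- he obtains it by measuring the qubits $\dA$ sent. If $\dhA$ samples $\hat{x}$ independently, then on each test position with $\theta_i = \hat\theta_i$ the check $x_i = \hat x_i$ fails with probability $1/2$, so $\dA$ rejects with overwhelming probability and the simulation is nothing like the real view. If, to avoid this, $\dhA$ measures all $m$ qubits before committing (as honest $\B$ does), it destroys the $n$ untested qubits and then cannot feed them to the $\Pi$-simulator for the post-processing phase, so the reduction to the assumed security of $\Pi$ against $\dA$ breaks down. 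The paper's resolution, which your proposal does not mention, is that by Definition~\ref{def:unboundedAliceNiceOrder} $\dhA$ is allowed unbounded running time, so it can commit to dummy values, wait for $T$, measure \emph{only} the positions in $T$, and then break the binding of $\commitk{\cdot}{\cdot}{\pkH}$ to open those commitments to the observed values; the hiding of $\pkH$ makes this indistinguishable to $\dA$, and the qubits in $\bar T$ remain intact for the $\Pi$-simulator. Without this equivocation step, your simulator cannot simultaneously pass the test and preserve the residual quantum state, so the Alice direction is not established.
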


  \begin{proof}
    We sometimes write $\compile_\pkH(\Pi)$ for the compiled protocol
    $\compile(\Pi)$ to stress that key $\pkH$, produced by $\GH$, is
    used for the dual-mode commitment scheme. Analogously, we write
    $\compile_\pkB(\Pi)$ when key $\pkB$, produced by $\GB$, is used
    instead.

    Correctness is trivially checked. In order to show unconditional
    security against $\dA$, first note that the unconditionally hiding
    property of the commitment ensures that $\dA$ does not learn any
    additional information. Furthermore, as the ideal-world adversary
    $\dhA$ is not required to be poly-time bounded, according to
    Definition~\ref{def:unboundedAliceNiceOrder}, $\dhA$ can break the
    binding property of the commitment scheme, and thereby, perfectly
    simulate the behavior of honest $\B$ towards $\dA$ attacking
    $\compile(\Pi)$. The issue of efficiency of the ideal-life
    adversaries will be addressed in
    Section~\ref{sec:composability.compiler}. 

    As for computational security against dishonest Bob, according to
    Definition~\ref{def:polyboundedBobCRS}, we need to prove that for
    every real-world adversary $\dB \in \dBobPoly$ attacking
    $\compile(\Pi)$, there exists a suitable ideal-world adversary
    $\dhB \in \dBobPoly$ attacking $\F$ such that
    \begin{equation}
      out_{\A,\dB}^{\compile(\Pi)} \approxq out_{\hA,\dhB}^\F \, .
    \end{equation}
    \newline
    First, note that by the computational indistinguishability of
    $\pkH$ and $\pkB$,
    \begin{equation}\label{eq:KeySwitch}
      out_{\A,\dB}^{\compile(\Pi)} = out_{\A,\dB}^{\compile_\pkH(\Pi)}
      \approxq out_{\A,\dB}^{\compile_{\pkB}(\Pi)} \, .
    \end{equation} 
    Then, we construct an adversary $\dB_\circ \in \dBobPoly$ who
    attacks the unconditional security against benign Bob of protocol
    $\Pi$, and which satisfies
    \begin{equation}\label{eq:BenignBob}
      out_{\A,\dB}^{\compile_\pkB(\Pi)} =
      out_{\A_\circ,\dB_\circ}^{\Pi} \, ,
    \end{equation} 
    where $\A_\circ$ honestly executes $\Pi$. We define $\dB_\circ$ in
    the following way. Consider the execution of $\compile(\Pi)$
    between $\A$ and $\dB$. We split entity $\A$ into two players
    $\A_\circ$ and $\tilde{\A}$, where we think of $\tilde{\A}$ as
    being placed in between $\A_\circ$ and $\dB$. The splitted
    entities of this proof are also depicted in
    Figure~\ref{fig:splitted.entities}. $\A_\circ$ plays honest $\A$'s
    part of $\Pi$. $\tilde{\A}$ can be understood as completing
    $\CO$. More specifically, $\tilde{\A}$ acts as follows. It
    receives $n$ qubits from $\A_\circ$ and produces $\alpha
    n/(1-\alpha)$ random BB84-qubits of its own. Then, it interleaves
    the produced qubits randomly with the received qubits and sends
    the resulting $m= n/(1-\alpha)$ qubits to $\dB$. $\tilde{\A}$ then
    completes the verification step of $\compile(\Pi)$ with $\dB$,
    asking him to have the commitments opened which correspond to
    $\tilde{\A}$'s produced qubits. If this results in accept,
    $\tilde{\A}$ lets $\A_\circ$ finish the protocol with $\dB$. Note
    that pair $(\A_\circ,\tilde{\A})$ does exactly the same as $\A$.

    However, we can also move the actions of $\tilde{\A}$ to $\dB$'s
    side, and define $\dB_\circ$ as follows. $\dB_\circ$ samples
    $(\pkB,\sk)$ according to $\GB$ and executes $\Pi$ with $\A$ by
    locally running $\tilde{\A}$ and $\dB$, using $\pkB$ as commitment
    key. If $\tilde{\A}$ accepts the verification, then $\dB_\circ$
    outputs $\hat{\theta} \in \zo^n$ (as required from a benign
    Bob), obtained by decrypting the unopened commitments with the
    help of $\sk$. Otherwise, $\dB_\circ$ aborts at this point. It is
    now clear that Eq.~\eqref{eq:BenignBob} holds. Exactly the same
    computation takes place in both ``experiments'', the only
    difference being that they are executed partly by different
    entities.
    \newline
    \newline
    The last step is to show that, for some $\dhB$,
    \begin{equation}\label{eq:ByAssumption}
      out_{\A_\circ,\dB_\circ}^{\Pi} \approxs out_{\hA,\dhB}^\F \, .
    \end{equation} 
    Eq.~\eqref{eq:ByAssumption} actually claims that $\dhA$ and $\dhB$
    successfully simulate $\A_\circ$ and $\dB_\circ$ executing
    $\Pi$. This follows by assumption of benign security of $\Pi$, if
    we can show that $\dB_\circ$ is $\beta$-benign, according to
    Definition~\ref{def:BenignBob}, for any $\beta \geq 0$. We show
    this in the following subsection, or more precisely, we prove that
    the joint state of $\A_\circ, \dB_\circ$ after the preparation
    phase is statistically indistinguishable from a state
    $\rho_{Ideal}$ which satisfies the bounds in
    Eq.~\eqref{eq:staterequirements} of
    Definition~\ref{def:BenignBob}. We conclude the current proof by
    claiming that Theorem~\ref{thm:compiler} follows from
    Eqs.~\eqref{eq:KeySwitch}~--~\eqref{eq:ByAssumption} together.
  \end{proof}

%%%%%%%%%%%%%%%%%%%%%%%%%%%%%%%%%%%%%%%%%%%%%%%%%%%%%%%%%%%%%%%%%%%%%%%%%
  \begin{figure}
    %\normalsize
    \begin{framed}
    \begin{center}
      \begin{tikzpicture}
	% \draw (-0.75,-1.3) rectangle (10.75,1.3);%smaller frame
	\tikzstyle{every node}=[minimum size=6mm] 
	\draw (0,0) node[draw] (zeroA) {$\A_\circ$}; 
	\draw (5,0) node[draw] (tildeA) {$\tilde{\A} $}; 
	\draw (10,0) node[draw] (dB) {$\dB$}; 
	\draw[rounded corners] (zeroA.east) -- node[below] {$\Pi$}
	(tildeA); 
	\draw[rounded corners] (tildeA.east) -- node[above]
	{$\compile(\Pi)$} (dB); 
	\draw[rounded corners] (zeroA.north) ++(-4mm,1mm) -- +(0mm,
	2mm) -| node[pos=0.25,above] {$\A$} ([shift={(4mm,1mm)}]
	tildeA.north); 
	\draw[rounded corners] (tildeA.south) ++(-4mm,-1mm) -- +(0mm,
	-2mm) -| node[pos=0.25,below] {$\dB_\circ$}
	([shift={(4mm,-1mm)}] dB.south);
      \end{tikzpicture}
    \end{center}
    \end{framed}
    \vspace{-3.5ex}
    %\small
    \caption{Constructing an attacker $\dB_\circ$ against $\Pi$ from
      an attacker $\dB$ against $\compile(\Pi)$.}
    \label{fig:splitted.entities} 
  \end{figure}
%%%%%%%%%%%%%%%%%%%%%%%%%%%%%%%%%%%%%%%%%%%%%%%%%%%%%%%%%%%%%%%%%%%%%%%%%

%%%%%%%%%%%%%%%%%%%%%%%%%%%%%%%%%%%%%%%%%%%%%%%%%%%%%%%%%%%%%%%%%%%%%%%%%%

\subsection{Proof of Bounds on Entropy and Memory Size}
\label{sec:bounds.compiler}

  Recall that $\A_\circ$ executing $\Pi$ with $\dB_\circ$ can
  equivalently be thought of as $\A$ executing $\compile_\pkB(\Pi)$
  with $\dB$ (Figure~\ref{fig:splitted.entities}). Furthermore, a joint
  state of $\A,\dB$ is clearly also a joint state of $\A_\circ,
  \dB_\circ$. To show the existence of $\rho_{Ideal}$ for $\A_\circ,
  \dB_\circ$ as promised above, it therefore suffices to show such a
  state for $\A,\dB$. In other words, we need to show that the
  execution of $\compile_\pkB(\Pi)$ with honest $\A$ and arbitrarily
  dishonest $\dB$---after verification---will be close to a state where
  Eq.~\eqref{eq:staterequirements} holds. 

  To show this closeness, we consider an equivalent EPR-version, where
  Alice creates $m$ EPR-pairs $(\ket{00}+\ket{11})/\sqrt{2}$, sends
  one qubit in each pair to Bob, and keeps the others in register
  $A$. Then, Alice can measures her qubits only when needed, namely,
  she measures the qubits within $T$ in Step~(\ref{step:check}.)~of
  the verification phase, and the remaining qubits at the end of the
  verification phase. With respect to the information Alice and Bob
  obtain, this EPR-version is {\em identical} to the original protocol
  $\compile_\pkB(\Pi)$ based on single qubits, since the only
  difference is the point in time when Alice obtains certain
  information.

  We can further modify the procedure without affecting
  Eq.~\eqref{eq:staterequirements} as follows. Instead of measuring
  her qubits in $T$ in {\em her} basis $\theta|_T$, Alice measures
  them in {\em Bob's} basis $\hat{\theta}|_T$. However, she still
  verifies only whether $x_i = \hat{x}_i$ for those $i \in T$ with
  $\theta_i = \hat{\theta}_i$. Because the positions $i \in T$ with
  $\theta_i \neq \hat{\theta}_i$ are not used in the protocol at all,
  this change has no effect. As the commitment scheme is
  unconditionally binding, if key $\pkB$ is used, Bob's basis
  $\hat{\theta}$ is well defined by his commitments (although hard to
  compute), even if Bob is dishonest. The resulting scheme is given in
  Figure~\ref{fig:epr.version}.

%%%%%%%%%%%%%%%%%%%%%%%%%%%%%%%%%%%%%%%%%%%%%%%%%%%%%%%%%%%%%%%%%%%%%%%%%
  \begin{figure}
    %\normalsize
    \begin{framed}
      \noindent\hspace{-1.5ex}{\sc Protocol }EPR-$\compile_\pkB(\Pi)$
      : \\[-4ex]
      \begin{description}\setlength{\parskip}{0.5ex}
      \item[{\it Preparation:}] $ $\\ 
	$\A$ prepares $m$ EPR-pairs
	$\frac{1}{\sqrt{2}}(\ket{00}+\ket{11})$ and sends the second
	qubit in each pair to $\B$, while keeping the other qubits in
	register $A = A_1\cdots A_m$. $\B$ chooses $\hat{\theta}
	\in_R \set{0,1}^m$ and obtains $\hat{x} \in \set{0,1}^m$ by
	measuring the received qubits in bases $\hat{\theta}$.
      \item[{\it Verification:}]
	\begin{enumerate}
	\item[]
	\item 
	  $\B$ commits to $\hat{\theta}$ and $\hat{x}$ position-wise
	  by $c_i := \commitx{(\hat{\theta}_i,\hat{x}_i)}{r_i}$ with
	  randomness $r_i$ for $i = 1,\ldots,m$. He sends the
	  commitments to $\A$.
	\item\label{step:EPRcheck} 
	  $\A$ sends a random test subset $T \subset \{1,\ldots,m \}$
	  of size $\alpha m$. $\B$ opens $c_i$ for all $i \in T$.
	  $\A$ chooses $\theta \in_R \set{\+,\x}^m$, measures
	  registers $A_i$ with $i \in T$ in basis $\hat{\theta}_i$ to
	  obtain $x_i$, and checks that the openings are correct and
	  that $x_i = \hat{x}_i$ whenever $\theta_i = \hat{\theta}_i$
	  for $i \in T$. If all tests are passed, $\A$
	  accepts. Otherwise,, she rejects and aborts.
	\item\label{step:EPRrest} 
	  $\A$ measures the remaining registers in basis
	  $\theta|_{\bar{T}}$ to obtain $x|_{\bar{T}}$. The tested
	  positions are discarded by both parties: $\A$ and $\B$
	  restrict $x$ and $\theta$, respectively $\hat{\theta}$ and
	  $\hat{x}$, to $i \in \bar{T}$.
	\end{enumerate}
      \item[{\it Post-processing:}] $ $\\
	As in $\Pi$ (with $x, \theta,\hat{x}$ and $\hat{\theta}$
	restricted to positions $i \in \bar{T}$).
      \end{description}
      \vspace{-1.5ex}
    \end{framed}
    \vspace{-1.5ex}
    %\small
    \caption{The EPR-version of $\compile_\pkB(\Pi)$.}
    \label{fig:epr.version} 
  \end{figure}
%%%%%%%%%%%%%%%%%%%%%%%%%%%%%%%%%%%%%%%%%%%%%%%%%%%%%%%%%%%%%%%%%%%%%%%%%

  We consider an execution of EPR-$\compile_\pkB(\Pi)$ in
  Figure~\ref{fig:epr.version} with an honest $\A$ and a dishonest
  $\dB$, and we fix $\hat{\theta}$ and $\hat{x}$, determined by
  $\dB$'s commitments. Let $\ket{\varphi_{AE}} \in {\cal
  H}_A\otimes{\cal H}_E$ be the state of the joint system right before
  Step~(\ref{step:check}.)~of the verification phase. Since we are
  anyways interested in the point-wise purification of $\dB$'s state,
  we may indeed assume this state to be pure. If it is not pure, we
  purify it and carry the purifying register $R$ along with $E$.

  Clearly, if $\dB$ had honestly done his measurements then, for some
  $\ket{\varphi_E}\in {\cal H}_E$,
  \begin{equation*}
    \ket{\varphi_{AE}} = \ket{\hat{x}}_{\hat{\theta}} \otimes
    \ket{\varphi_E} \, .
  \end{equation*}
  In this case, the quantum memory $E$ would be empty, i.e.,
  \begin{equation*}
    \Hmax(\proj{\varphi_E}) = 0 \, ,
  \end{equation*}
  and the uncertainty about $X$, obtained when measuring
  $A|_{\bar{T}}$ in basis $\theta|_{\bar{T}}$ would be maximal in the
  sense that it would be exactly one bit in each position with
  non-matching bases, i.e.,
  \begin{equation*}
  \Hmin(X) = d_H(\theta|_{\bar{T}},\hat{\theta}|_{\bar{T}}) \, .
  \end{equation*}
  
  We now show that the verification phase enforces these properties
  for an arbitrary dishonest $\dB$, at least approximately in the
  sense of Eq.~\eqref{eq:staterequirements}. Recall that $T \subset
  \set{1,\ldots,m}$ is random subject to $|T| = \alpha
  m$. Furthermore, for fixed $\hat{\theta}$ but randomly chosen
  $\theta$, the subset $T' = \Set{i \in T}{ \theta_i =
  \hat{\theta}_i}$ is a random subset (of arbitrary size) of $T$. Let
  the random variable $\Test$ describe the choice of $\test = (T,T')$
  as specified above, and consider the state $\rho_{\Test AE}$,
  consisting of the classical $\Test$ and the quantum state
  $\ket{\varphi_{AE}}$ with
  \begin{equation*}
    \rho_{\Test AE} = \rho_{\Test} \otimes \proj{\varphi_{AE}} =
    \sum_{\test} P_{\Test}(\test) \proj{\test} \otimes
    \proj{\varphi_{AE}} \, .
  \end{equation*}

  Recall that $r_H(\cdot,\cdot)$ denotes the relative Hamming distance
  between two strings (see Eq.~\eqref{eq.relative.hamming}). The
  following lemma shows that, we are in state $\rho_{\Test AE}$ close
  to an ``ideal state'' $\tilde{\rho}_{\Test AE}$, capturing a
  situation , where for {\em any} choice of $T$ and $T'$ and for {\em
  any} outcome $x|_T$ when measuring $A|_T$ in basis
  $\hat{\theta}|_T$, the relative error $r_H(x|_{T'},\hat{x}|_{T'})$
  (the test estimate) gives an upper bound (which holds with
  probability 1) on the relative error
  $r_H(x|_{\bar{T}},\hat{x}|_{\bar{T}})$ one would obtain by measuring
  the remaining subsystems $A_i$ with $i \in \bar{T}$ in basis
  $\hat{\theta}_i$.
  \begin{lemma}
    \label{lemma:ideal.state}
    For any $\eps > 0$, $\hat{x}\in\{0,1\}^m$ and $\hat{\theta}\in
    \{\+,\x\}^m$, the state $\rho_{\Test AE}$ is negligibly close (in
    $m$) to a state
    $$ 
    \tilde{\rho}_{\Test AE} = \sum_{\test} P_{\Test}(\test)
    \proj{\test} \otimes \proj{\tilde{\varphi}^{\test}_{AE}} \, ,
    $$ 
    where for any $\test = (T,T')$, we have
    $$ 
    \ket{\tilde{\varphi}^{\test}_{AE}} = \sum_{x \in B_{\test}}
    \alpha^{\test}_x \ket{x}_{\hat{\theta}} \ket{\psi_E^x}\, ,
    $$ 
    for $B_{\test} = \{x \in \{0,1\}^m \,|\,
    r_H(x|_{\bar{T}},\hat{x}|_{\bar{T}}) \leq
    r_H(x|_{T'},\hat{x}|_{T'}) + \eps \}$ and arbitrary coefficients
    $\alpha^x_{test} \in \C$.
  \end{lemma}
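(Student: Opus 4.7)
The plan is to reduce the quantum statement to a classical sampling statement by expanding $\ket{\varphi_{AE}}$ in the $\hat{\theta}$-basis on the $A$ register. Write
\[
\ket{\varphi_{AE}} \;=\; \sum_{x\in\{0,1\}^m} \alpha_x\, \ket{x}_{\hat{\theta}}\,\ket{\psi_E^x}
\]
for suitable (normalized) $\ket{\psi_E^x}\in\mathcal{H}_E$ and $\alpha_x\in\mathbb{C}$. This is always possible since $\{\ket{x}_{\hat{\theta}}\}_x$ is an orthonormal basis of $\mathcal{H}_A$. Note that $|\alpha_x|^2$ is precisely the probability that measuring $A$ in basis $\hat{\theta}$ yields outcome $x$.

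Next, for each fixed $\test = (T,T')$, I would define the projector $P_{\test} := \sum_{x\in B_{\test}} \proj{x}_{\hat{\theta}}\otimes I_E$ and set
\[
\ket{\tilde{\varphi}^{\test}_{AE}} \;:=\; \frac{P_{\test}\ket{\varphi_{AE}}}{\|P_{\test}\ket{\varphi_{AE}}\|} \;=\; \frac{1}{\sqrt{1-\eps_{\test}}}\sum_{x\in B_{\test}}\alpha_x\,\ket{x}_{\hat{\theta}}\ket{\psi_E^x}\,,
\]
where $\eps_{\test} := \sum_{x\notin B_{\test}}|\alpha_x|^2$ is the weight on the bad subspace. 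By Eq.~\eqref{eq.trace.distance.pure} this gives, for each fixed $\test$, trace distance $\delta\bigl(\proj{\varphi_{AE}},\proj{\tilde\varphi^{\test}_{AE}}\bigr) \le \sqrt{\eps_{\test}}$. Averaging over $\test$ via the triangle inequality and Jensen yields
\[
\delta\bigl(\rho_{\Test AE},\tilde\rho_{\Test AE}\bigr) \;\le\; \mathbb{E}_{\Test}\sqrt{\eps_{\Test}} \;\le\; \sqrt{\mathbb{E}_{\Test}\,\eps_{\Test}} \;=\; \sqrt{\sum_x |\alpha_x|^2\,\Pr_{\Test}[x\notin B_{\Test}]}\,.
\]

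The heart of the proof is then the purely classical sampling claim: for every fixed $x\in\{0,1\}^m$, $\Pr_{\Test}[x\notin B_{\Test}]$ is negligible in $m$. Here $T$ is a uniformly random $\alpha m$-subset of $\{1,\ldots,m\}$ and, conditional on $T$, each $i\in T$ lies in $T'$ independently with probability $1/2$ (since $\theta_i$ is uniform). Viewing the error indicator vector $e_i := [x_i\neq \hat{x}_i]$ as fixed, $r_H(x|_{T'},\hat{x}|_{T'})$ is an empirical average of the $e_i$ on a random sub-sample of the universe, while $r_H(x|_{\bar T},\hat{x}|_{\bar T})$ is the analogous average on the complement. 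A Hoeffding/Serfling-type concentration bound for sampling without replacement shows both averages are within $\eps/2$ of the global average $r_H(x,\hat{x})$ except with probability at most $e^{-\Omega(\eps^2 m)}$, and the event $x\notin B_{\test}$ implies a deviation of at least $\eps$ between the two, yielding the required exponential (hence negligible) bound. Combined with the displayed inequality, this gives $\delta(\rho_{\Test AE},\tilde\rho_{\Test AE}) = e^{-\Omega(\eps^2 m)}$, as claimed.

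The main obstacle I anticipate is not the quantum part, which is essentially bookkeeping, but making the classical sampling argument clean under the two-stage distribution of $\test$: first a fixed-size subset $T$, then an independent thinning to $T'$. A convenient way around this is to observe that marginally $T'$ is a random subset of $\{1,\ldots,m\}$ whose size concentrates tightly around $\alpha m/2$, so one can first condition on $|T'|$ lying in a small window (losing only a negligible additive term) and then invoke a standard without-replacement concentration bound; the asymmetry between $\bar T$ (size $(1-\alpha)m$) and $T'$ (size $\approx \alpha m/2$) only affects constants in the exponent.
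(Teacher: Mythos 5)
Your proposal is correct and takes essentially the same approach as the paper's own proof: expand $\ket{\varphi_{AE}}$ in the $\hat{\theta}$-basis on $A$, take the renormalized projection onto the $B_{\test}$-subspace for each $\test$, use the pure-state trace-distance formula together with Jensen/Cauchy--Schwarz to reduce the average trace distance to $\sqrt{\sum_x |\alpha_x|^2 \Pr_{\Test}[x \notin B_{\Test}]}$, and then invoke a Hoeffding-type sampling bound for the classical tail estimate. The only difference is that you flesh out the classical sampling step (handling the two-stage choice of $T$ and $T'$ by conditioning on $|T'|$), whereas the paper simply cites Hoeffding at that point; your elaboration is sound.
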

We want to point out that the ``ideal state''
$\ket{\tilde{\varphi}^{\test}_{AE}}$ in the remaining subsystem after
the test is a superposition of states with relative Hamming distance
upper bounded by the test estimate (plus a small error $\eps$). This
is the case, since we sum over all $x$ restricted to the set
specifying exactly that, and also note that $\dB$'s subsystem
$\ket{\psi_E^x}$ depends on $x$, which means, informally speaking,
only such states survive. Yet in other words, we are indeed left with
a superposition over all strings that have relative Hamming
distance $\eps$-close to the estimate of the test.\\

  \begin{proof}
    For any $\test$, we let $\ket{\tilde{\varphi}^{\test}_{AE}}$ be
    the renormalized projection of $\ket{\varphi_{AE}}$ into the
    subspace $\mathrm{span}\{\ket{x}_{\hat{\theta}} \,|\, x \in
    B_{\test} \} \otimes {\cal H}_E$ and we let
    $\ket{\tilde{\varphi}^{\test \perp}_{AE}}$ be the renormalized
    projection of $\ket{\varphi_{AE}}$ into the orthogonal complement,
    such that $\ket{\varphi_{AE}} = \eps_{\test}
    \ket{\tilde{\varphi}^{\test}_{AE}} + \eps_{\test}^\perp
    \ket{\tilde{\varphi}^{\test \perp}_{AE}}$ with $\eps_{\test} =
    \braket{\tilde{\varphi}^{\test}_{AE}}{\varphi_{AE}}$ and
    $\eps_{\test}^\perp = \braket{\tilde{\varphi}^{\test
    \perp}_{AE}}{\varphi_{AE}}$. By construction,
    $\ket{\tilde{\varphi}^{\test}_{AE}}$ is of the form as required in
    the statement of the lemma. A basic property of the trace norm of
    pure states leads to
    \begin{equation*}
      \delta\bigl( \proj{\varphi_{AE}},
      \proj{\tilde{\varphi}^{\test}_{AE}} \bigr) 
      = \sqrt{1
	- |\braket{\tilde{\varphi}^{\test}_{AE}}{\varphi_{AE}}|^2} 
      = \sqrt{1 - |\eps_{\test}|^2} 
      = |\eps_{\test}^\perp| \, .
    \end{equation*}
    This last term corresponds to the square root of the probability,
    when given $\test$, to observe a string $x \not\in B_{\test}$ when
    measuring subsystem $A$ of $\ket{\varphi_{AE}}$ in basis
    $\hat{\theta}$. Furthermore, using elementary properties of the
    trace norm with Jensen's inequality\footnote{In this context, we
    use Jensen's inequality with $f\big( \sum_i p_i x_i \big) \leq
    \sum_i p_i f(x_i)$, for positive $p_i$ and real convex function
    $f$.} gives
    \begin{align*}
      \delta\bigl( \rho_{\Test AE},\tilde{\rho}_{\Test AE} \bigr)^2 &=
      \bigg( \sum_{\test} P_{\Test}(test) \,\delta\bigl(
      \proj{\varphi_{AE}}, \proj{\tilde{\varphi}^{\test}_{AE}} \bigr)
      \bigg)^2 \\ &= \bigg( \sum_{\test} P_{\Test}(test) \,
      |\eps_{\test}^\perp| \bigg)^2 \leq \sum_{\test} P_{\Test}(test)
      \, |\eps_{\test}^\perp|^2 \, ,
    \end{align*}
    where the last term is the probability to observe a string $x
    \not\in B_{\test}$ when choosing $\test$ according to $P_{\Test}$
    and measuring subsystem $A$ of $\ket{\varphi_{AE}}$ in basis
    $\hat{\theta}$. This situation, though, is a classical sampling
    problem, for which it is well known that for any measurement
    outcome $x$, the probability (over the choice of $\test$) that $x
    \not\in B_{\test}$ is negligible in $m$
    (see~e.g.~\cite{Hoeffding63}). Thus, it follows that state
    $\rho_{\Test AE}$ is negligibly close (in $m$) to state
    $\tilde{\rho}_{\Test AE}$.
  \end{proof}

  Next, we need a preliminary lemma, stating that a pure state can be
  written as a ``small superposition'' of basis vectors.
  \begin{lemma}
    \label{lemma:small.superposition}
    Let $\ket{\varphi_{AE}} \in {\cal H}_A \otimes {\cal H}_E$ be a
    state of the form $\ket{\varphi_{AE}} = \sum_{i \in J} \alpha_i
    \ket{i}\ket{\varphi_E^i}$, where $\set{\ket{i}}_{i \in I}$ is a
    basis of ${\cal H}_A$ and $J \subseteq I$. Then, the following
    holds.
    \begin{enumerate}
    \item\label{it:boundHoo} 
      Let $\tilde{\rho}_{AE} = \sum_{i \in J} |\alpha_i|^2
      \proj{i}\otimes\proj{\varphi_E^i}$, and let $W$, and $\tilde{W}$,
      be the outcome of measuring $A$ of $\ket{\varphi_{AE}}$,
      respectively of $\tilde{\rho}_{AE}$, in some basis
      $\set{\ket{w}}_{w \in \cal W}$. Then,
      $$ 
      \Hmin(W) \geq \Hmin(\tilde{W}) - \log|J| \, .
      $$
    \item\label{it:boundHo} 
      The reduced density matrix $\rho_E = \tr_A(\proj{\varphi_{AE}})$
      has max-entropy
      $$ 
      \Hmax(\rho_E) \leq \log |J| \, .
      $$
    \end{enumerate}
  \end{lemma}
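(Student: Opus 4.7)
My plan is to handle the two parts separately, treating the max-entropy bound first since it is the simpler of the two and the computation is essentially forced by the partial trace.

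For Part~(\ref{it:boundHo}), I will simply compute $\rho_E = \tr_A(\proj{\varphi_{AE}})$ directly. Expanding $\proj{\varphi_{AE}} = \sum_{i,j \in J} \alpha_i \alpha_j^* \ket{i}\bra{j} \otimes \ket{\varphi_E^i}\bra{\varphi_E^j}$ and using $\tr(\ket{i}\bra{j}) = \delta_{ij}$ (since $\{\ket{i}\}$ is an orthonormal basis) collapses the double sum to the single sum $\rho_E = \sum_{i \in J} |\alpha_i|^2 \proj{\varphi_E^i}$. This is a convex combination of at most $|J|$ rank-one operators, hence $\rank(\rho_E) \leq |J|$, which by Definition~\ref{def:max.entropy} yields $\Hmax(\rho_E) \leq \log|J|$.

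For Part~(\ref{it:boundHoo}), my strategy is to show the pointwise bound $P_W(w) \leq |J| \cdot P_{\tilde{W}}(w)$ on the measurement outcome distributions; taking the maximum over $w$ and then the negative logarithm then gives the claimed min-entropy inequality. To get the pointwise bound, I will write the post-measurement amplitude on $E$ as $\bra{w}_A \ket{\varphi_{AE}} = \sum_{i \in J} \alpha_i \braket{w}{i} \ket{\varphi_E^i}$, so that $P_W(w) = \|\sum_{i \in J} \alpha_i \braket{w}{i} \ket{\varphi_E^i}\|^2$. For the classical mixture $\tilde{\rho}_{AE}$, the measurement probability is the diagonal expression $P_{\tilde{W}}(w) = \sum_{i \in J} |\alpha_i|^2 |\braket{w}{i}|^2$ (the $\ket{\varphi_E^i}$ are unit vectors).

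The key technical step is the elementary Cauchy--Schwarz-type inequality $\|\sum_{i=1}^{n} v_i\|^2 \leq n \sum_{i=1}^{n} \|v_i\|^2$, applied with $n = |J|$ and $v_i = \alpha_i \braket{w}{i} \ket{\varphi_E^i}$. This immediately gives $P_W(w) \leq |J| \sum_{i \in J} |\alpha_i|^2 |\braket{w}{i}|^2 = |J| \cdot P_{\tilde{W}}(w)$, as desired. I do not expect any real obstacle here; the only subtle point is verifying that the $\ket{\varphi_E^i}$ need not be orthogonal for this argument (Cauchy--Schwarz handles arbitrary unit vectors), and that for Part~(\ref{it:boundHo}) their non-orthogonality only lowers the rank further, so the bound $\log|J|$ still holds in the worst case.
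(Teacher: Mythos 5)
Your proof is correct and follows essentially the same line of argument as the paper: Part~(2) uses the partial trace to write $\rho_E$ as a convex combination of $|J|$ rank-one projectors and bounds the rank, and Part~(1) derives the pointwise bound $P_W(w) \leq |J|\,P_{\tilde W}(w)$ via Cauchy--Schwarz and then takes the negative log of the maximum. One small remark: your version of Part~(1) is actually slightly more careful than the paper's. The paper's chain of (in)equalities asserts $P_W(w) = |\bra{w}\sum_{i\in J}\alpha_i\ket{i}|^2$, but since $P_W(w) = \|\sum_{i\in J}\alpha_i\braket{w}{i}\ket{\varphi_E^i}\|^2$ and the $\ket{\varphi_E^i}$ need not coincide (nor be orthogonal), this equality does not hold in general---it is only the inequality $P_W(w) \leq \bigl(\sum_{i\in J}|\alpha_i||\braket{w}{i}|\bigr)^2$ that is needed, and the chain still closes. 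Your use of $\|\sum_{i=1}^n v_i\|^2 \leq n\sum_{i=1}^n\|v_i\|^2$ applied directly to the vectors $v_i = \alpha_i\braket{w}{i}\ket{\varphi_E^i}$ in $\mathcal{H}_E$ handles the non-orthogonal $\ket{\varphi_E^i}$ cleanly in a single step and avoids this imprecision.
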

  Note that when using Renner's definition for conditional min-entropy
  of~\cite{Renner05} under Point~(\ref{it:boundHoo}.), one can
  actually show that $H_{\infty}(W|E) \geq H_{\infty}(\tilde{W}|E) -
  \log|J|$.\\

  \begin{proof}
    For Point~(\ref{it:boundHoo}.), we may understand
    $\tilde{\rho}_{AE}$ as being in state $\ket{i}\ket{\varphi_E^i}$
    with probability $|\alpha_i|^2$, so that we easily see that
    \begin{align*}
      P_{\tilde{W}}(w) 
      &= \sum_{i \in J} |\alpha_i|^2 |\braket{w}{i}|^2 = \sum_{i \in
      J} |\alpha_i|^2 |\braket{w}{i}|^2 \cdot \sum_{i \in J} 1^2 \cdot
      \frac{1}{|J|} \\ 
      &\geq \bigg|\sum_{i \in J} \alpha_i \braket{w}{i}\bigg|^2 \cdot
      \frac{1}{|J|} = \bigg|\bra{w}\sum_{i \in J} \alpha_i
      \ket{i}\bigg|^2 \cdot \frac{1}{|J|}
      = P_{W}(w) \cdot \frac{1}{|J|} \, ,
    \end{align*}
    where the inequality is Cauchy-Schwartz\footnote{In this context,
    we use the inequality phrased as $\sum_i | x_i|^2 |y_i|^2 \geq
    |\sum_i x_i y_i|^2$}. The claim follows (with
    Definition~\ref{def:min.entropy}).

    For Point~(\ref{it:boundHo}.), note that $\rho_E = \tr_A(
    \proj{\varphi_{AE}} ) = \sum_{i \in J} |\alpha_i|^2
    \proj{\varphi_E^i}$. The claim follows immediately from the
    sub-additivity of the rank, i.e.,
    $$ 
    \rank{\rho_E} \leq \sum_{i \in J} \rank{|\alpha_i|^2
    \proj{\varphi_E^i}} \leq \sum_{i \in J} 1 = |J| \, ,
    $$ 
    where we use that all $\proj{\varphi_E^i}$ have rank (at most) 1.
  \end{proof}

  Now, combining the fact that it holds for the binary entropy $h$
  that $\big|\{ y \in \{0,1\}^n \,|\, d_H(y,\hat{y}) \leq \mu n
  \}\big| \leq 2^{h(\mu) n}$ for any $\hat{y} \in \set{0,1}^n$ and $0
  \leq \mu \leq \frac12$ with Lemma~\ref{lemma:small.superposition} on
  ``small superpositions of product states'', we can conclude the
  following corollary.

  \begin{corollary}
    \label{corSerge}
    Let $\tilde{\rho}_{\Test AE}$ be of the form as in
    Lemma~\ref{lemma:ideal.state} (for given $\eps$, $\hat{x}$ and
    $\hat{\theta}$). For any fixed $\test = (T,T')$ and for any fixed
    $x|_T \in \{0,1\}^{\alpha m}$ with \smash{$\err :=
    r_H(x|_{T'},\hat{x}|_{T'}) \leq \frac12$}, let $\ket{\psi_{AE}}$
    be the state to which \smash{$\ket{\tilde{\varphi}^{\test}_{AE}}$}
    collapses when, for every $i \in T$, subsystem $A_i$ is measured
    in basis $\hat{\theta}_i$ and $x_i$ is observed, where we
    understand $A$ in $\ket{\psi_{AE}}$ to be restricted to the
    registers $A_i$ with $i \in \bar T$. Finally, let $\sigma_E =
    \tr_A(\proj{\psi_{AE}})$ and let the random variable $X$ describe
    the outcome when measuring the remaining $n = (1-\alpha) m$
    subsystems of $A$ in basis $\theta|_{\bar{T}} \in \set{\+,\x}^n$.
    Then, for any subset $I \subseteq \set{1,\ldots,n}$ and any
    $x|_I$,\footnote{Below, $\theta|_I$ (and similarly
    $\hat{\theta}|_I$) should be understood as first restricting the
    $m$-bit vector $\theta$ to $\bar{T}$, and then restricting the
    resulting $n$-bit vector $\theta|_{\bar{T}}$ to $I$: $\theta|_I:=
    {(\theta|_{\bar{T}})|}_I$.}
    $$ 
    \Hmin\bigl(X|_I \,\big|\,X|_{\bar{I}} = x|_{\bar{I}}\bigr) \geq
    d_H\bigl(\theta|_I,\hat{\theta}|_I\bigr) - h(\err + \eps) n
    $$ 
    and 
    $$
    \Hmax\bigl(\sigma_E\bigr) \leq h(\err + \eps) n \, .
    $$
  \end{corollary}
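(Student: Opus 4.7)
The plan is to unfold $\ket{\psi_{AE}}$ and exhibit it as a ``small superposition'' of basis vectors in the $\hat{\theta}|_{\bar T}$ basis supported on a Hamming ball of radius $(\err+\eps)n$ around $\hat{x}|_{\bar T}$, then apply the two parts of Lemma~\ref{lemma:small.superposition} to obtain the max-entropy and min-entropy bounds respectively. First I would compute the post-measurement state explicitly: since the measurement of $A|_T$ is performed in the $\hat{\theta}|_T$ basis, starting from $\ket{\tilde{\varphi}^{\test}_{AE}} = \sum_{x \in B_{\test}} \alpha^{\test}_x \ket{x}_{\hat{\theta}} \ket{\psi_E^x}$ the state collapses (up to normalization) to $\ket{\psi_{AE}} \propto \sum_{y \in J} \alpha^{\test}_{(x|_T,\,y)} \ket{y}_{\hat{\theta}|_{\bar T}} \ket{\psi_E^{(x|_T,\,y)}}$, where $J = \set{y \in \zo^n : (x|_T, y) \in B_{\test}}$. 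By the definition of $B_{\test}$ and $\err = r_H(x|_{T'}, \hat{x}|_{T'})$, every $y \in J$ satisfies $r_H(y, \hat{x}|_{\bar T}) \leq \err + \eps$, so $J$ is contained in the Hamming ball of radius $(\err+\eps)n$ around $\hat{x}|_{\bar T}$ and hence $|J| \leq 2^{h(\err+\eps)\,n}$.

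The max-entropy claim then follows at once by applying Part~(\ref{it:boundHo}.)~of Lemma~\ref{lemma:small.superposition} to $\ket{\psi_{AE}}$: $\Hmax(\sigma_E) \leq \log|J| \leq h(\err+\eps)\,n$, which is exactly the second bound.

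For the min-entropy bound I would split the measurement of $A|_{\bar T}$ in basis $\theta|_{\bar T}$ into two sequential stages: first measure the positions in $\bar{I}$ (yielding outcome $x|_{\bar I}$, which is precisely what we condition on), then measure the positions in $I$. Using the product structure $\ket{y}_{\hat{\theta}|_{\bar T}} = \ket{y|_I}_{\hat{\theta}|_I} \otimes \ket{y|_{\bar I}}_{\hat{\theta}|_{\bar I}}$ and grouping the terms of $\ket{\psi_{AE}}$ by the value of $y|_I$, the collapsed state $\ket{\psi'_{AE}}$ after the first stage is still a superposition over at most $|J|$ basis vectors in the $\hat{\theta}|_I$ basis, namely those indexed by $J|_I := \set{y|_I : y \in J}$. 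I would then invoke Part~(\ref{it:boundHoo}.)~of Lemma~\ref{lemma:small.superposition} on $\ket{\psi'_{AE}}$ measured in basis $\theta|_I$: the associated ``diagonalized'' mixture of $\hat\theta|_I$-eigenstates yields, when measured in $\theta|_I$, deterministic outcomes on the positions where the two bases agree and independent uniform outcomes where they disagree, so $\Hmin(\tilde{W}) \geq d_H(\theta|_I, \hat\theta|_I)$. Subtracting $\log|J|_I| \leq \log|J| \leq h(\err+\eps)\,n$ gives the claimed bound on $\Hmin(X|_I \,|\, X|_{\bar I} = x|_{\bar I})$.

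The main subtlety will be passing from the conditional quantity to the unconditional setting in which Lemma~\ref{lemma:small.superposition} is phrased; the sequential-measurement reduction above handles this, using that projecting the $\bar I$-registers onto a $\theta|_{\bar I}$-eigenvector cannot enlarge the $\hat\theta|_I$-support of what remains on register $I$. The rest is routine: counting Hamming-ball size via $h$ and two black-box applications of Lemma~\ref{lemma:small.superposition}.
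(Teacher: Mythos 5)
Your proof is correct and follows essentially the same route as the paper's: identify the collapsed state as a superposition supported on a Hamming ball of radius $(\err+\eps)n$, bound its size by $2^{h(\err+\eps)n}$, and apply the two parts of Lemma~\ref{lemma:small.superposition}. The only cosmetic difference is that the paper first proves the case $I=\{1,\ldots,n\}$ and then remarks that measuring the $\bar I$-registers leaves a small superposition, whereas you run the sequential-measurement reduction up front; the mathematical content and the use of both parts of the small-superposition lemma are identical.
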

  Thus, the number of errors between the measured $x|_{T'}$ and the
  given $\hat{x}|_{T'}$ gives us a bound on the min-entropy of the
  outcome, when measuring the remaining subsystems of $A$, and on the
  max-entropy of the state of subsystem $E$.\\

  \begin{proof}
    To simplify notation, we write $\vartheta = \theta|_{\bar{T}}$ and
    $\hat{\vartheta} = \hat{\theta}|_{\bar{T}}$. By definition of
    $\tilde{\rho}_{\Test AE}$, for any fixed values of $\eps,\hat{x}$
    and $\hat{\theta}$, the state $\ket{\psi_{AE}}$ is of the form
    $\ket{\psi_{AE}} = \sum_{y \in {\cal Y}} \alpha_y
    \ket{y}_{\hat{\vartheta}} \otimes \ket{\psi^y_E}$, where ${\cal Y}
    = \Set{y \in \set{0,1}^n}{d_H(y,\hat{x}|_{\bar{T}}) \leq (\err +
    \eps)n}$. Recall here that $r_H(y,\hat{x}|_{\bar{T}}) =
    d_H(y,\hat{x}|_{\bar{T}})/n$. Consider the corresponding mixture
    $\tilde{\sigma}_{AE} = \sum_{y \in {\cal Y}} |\alpha_y|^2
    \ket{y}_{\hat{\vartheta}} \bra{y}_{\hat{\vartheta}}
    \otimes\proj{\psi_E^y}$ and define $\tilde{X}$ as the random
    variable for the outcome when measuring register $A$ of
    $\tilde{\sigma}_{AE}$ in basis $\vartheta$. Note that
    $H_{\infty}(\tilde{X})\geq d_H(\vartheta,\hat{\vartheta})$, since
    any state $\ket{y}_{\hat{\vartheta}}$, when measured in basis
    $\vartheta$, produces a random bit for every position $i$ with
    $\vartheta \neq \hat{\vartheta}$ (see also the definition of the
    min-entropy (Definition~\ref{def:min.entropy}) and note that there
    exist $2^{d_H(\vartheta,\hat{\vartheta})}$ possible
    outcomes). Lemma~\ref{lemma:small.superposition} allows us to
    conclude that
    $$
    H_0(\sigma_E) \leq \log |{\cal Y}| \leq \log 2^{h(err+\eps)n} =
    h(err+\eps)n \, ,
    $$ and similarly, 
    $$ 
    H_{\infty}(X) \geq H_{\infty}(\tilde{X})-\log{|{\cal Y}|} \geq
    d_H(\vartheta,\hat{\vartheta})-h(err+\eps)n \, .
    $$  
    This proves the claim for $I = \set{1,\ldots,n}$. For arbitrary $I
    \subset \set{1,\ldots,n}$ and $x|_I$, we can consider the pure
    state, obtained by measuring the registers $A_i$ with $i \not\in
    I$ in basis $\vartheta_i$, when $x|_{\bar{I}}$ is observed. This
    state is still a superposition of at most $|{\cal Y}|$ vectors and
    thus we can apply the exact same reasoning to obtain
    Eq.~\eqref{eq:staterequirements}.
  \end{proof}

  The initial claim to be shown now follows by combining
  Lemma~\ref{lemma:ideal.state} and Corollary~\ref{corSerge}. Indeed,
  the ideal state $\rho_{Ideal}$ we promised, for which
  \eqref{eq:staterequirements} holds, is produced by putting $\A$ and
  $\dB$ in state $\tilde{\rho}_{TestAE}$, defined in
  Lemma~\ref{lemma:ideal.state}, and then running
  Steps~(\ref{step:EPRcheck}.)~and~(\ref{step:EPRrest}.) of the
  verification phase. This state is negligibly close to the real
  state, since by Lemma~\ref{lemma:ideal.state}, we were negligibly
  close to the real state before these
  operations. Corollary~\ref{corSerge} ensures that the bounds for
  benign Bob as stated in the definition of benignity in
  Eq.~\eqref{eq:staterequirements} are satisfied.

%%% NOISE %%%%%%%%%%%%%%%%%%%%%%%%%%%%%%%%%%%%%%%%%%%%%%%%%%%%%%%%%%%%%%%%

\section{In the Presence of Noise} 
\label{sec:compiler.noise}

In the description of the compiler and in its analysis in the previous
section, we assume the quantum communication to be noise-free. Indeed,
in the case of transmission errors, honest Alice is likely to reject
an execution with honest Bob. However, it is straightforward to
generalize the result to noisy quantum communication as follows.

In Step~(\ref{step:check}.)~in the verification phase of
$\compile(\Pi)$, Alice rejects and aborts if the relative number of
errors between $x_i$ and $\hat{x}_i$ for $i \in T$ with $\theta_i =
\hat{\theta}_i$ exceeds the error probability $\phi$, induced by the
noise in the quantum communication, by some small $\eps' > 0$. By
Hoeffding's inequality~\cite{Hoeffding63}, this guarantees that honest
Alice does not reject honest Bob, except with exponentially small
probability. Furthermore, proving the security of this
``noise-resistant'' compiler goes along the exact same lines as for
the original compiler. The only difference is that when applying
Corollary~\ref{corSerge}, the parameter $\err$ has to be chosen as
$\err = \phi + \eps'$, such that the bounds in
Eq.~\eqref{eq:staterequirements} hold for
$$
\beta = h(\err+\eps) = h(\phi+\eps'+\eps) \, .
$$ 
Thus, the claim of our compiler-theorem (Theorem~\ref{thm:compiler})
holds for any $\beta$-benign Bob with $\beta > h(\phi)$ (by choosing
$\eps,\eps' > 0$ small enough).

%%% BQSM %%%%%%%%%%%%%%%%%%%%%%%%%%%%%%%%%%%%%%%%%%%%%%%%%%%%%%%%%%%%%%%%%

\section{Bounded-Quantum-Storage Security}
\label{sec:compiler.hybrid.security}
\index{bounded-quantum-storage model}

In this section we show that our compiler preserves security in the
bounded-quantum-storage model. Recall that in the BQSM, one of the
players, in our case it is Bob, is assumed be able to store only a
limited number of qubits beyond a certain point in the
protocol. BQSM-secure OT- and ID-protocols are known~\cite{DFSS07},
but can be efficiently broken, if the memory bound does not
hold. Therefore, we show here that applying the compiler produces
protocols with better security, namely the adversary needs large
quantum storage {\em and} large computing power to succeed. In
Chapter~\ref{chap:hybrid.security.applications}, we will then discuss
the compiled protocols with hybrid security in more detail.

Consider a BB84-type protocol $\Pi$. For a constant $0 < \gamma < 1$,
let $\dBobBQSM^\gamma(\Pi)$ be the set of dishonest players $\dB$ that
store only $\gamma n$ qubits after a certain point in $\Pi$, where $n$
is the number of qubits sent initially. Protocol $\Pi$ is said to be
unconditionally secure against such a $\gamma$-BQSM Bob, if it
satisfies Definition~\ref{def:qmemoryBob} with the restriction that
the quantification is over all dishonest $\dB \in
\dBobBQSM^\gamma(\Pi)$. 

\begin{theorem}
\label{thm:BQSM}
  If protocol $\Pi$ is unconditionally secure against $\gamma$-BQSM
  Bob, then the compiled protocol $\compile(\Pi)$ is unconditionally
  secure against $\gamma(1\!-\!\alpha)$-BQSM Bob, where $0 < \alpha <
  1$.
\end{theorem}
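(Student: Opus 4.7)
The plan is to reduce security of $\compile(\Pi)$ against a $\gamma(1\!-\!\alpha)$-BQSM adversary to the assumed security of $\Pi$ against a $\gamma$-BQSM adversary, by recycling the splitting construction $(\A_\circ,\tilde{\A})$ from the proof of Theorem~\ref{thm:compiler}. Recall that the compiled protocol transmits $m=n/(1\!-\!\alpha)$ qubits while $\Pi$ transmits $n$. The crucial arithmetic observation is that the bound on Bob's quantum memory translates exactly: $\gamma(1\!-\!\alpha)\cdot m = \gamma n$, so what looks like a weaker bound in $\compile(\Pi)$ corresponds to the original bound in $\Pi$.

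First, I would construct, from an arbitrary $\dB\in\dBobBQSM^{\gamma(1-\alpha)}(\compile(\Pi))$, a dishonest player $\dB_\circ$ on $\Pi$ by exactly the same splitting trick as before: $\dB_\circ$ receives the $n$ BB84-qubits from $\A_\circ$, generates $\alpha m$ auxiliary random BB84-qubits of its own (whose bases and bit values it knows in the clear), interleaves them randomly with $\A_\circ$'s qubits, feeds all $m$ qubits to the internally simulated $\dB$, and then plays $\tilde{\A}$'s role in the verification phase (sampling $\pkH$ itself, receiving the position-wise commitments, sending the test subset $T$ corresponding to the auxiliary positions, and aborting if $\dB$'s openings are inconsistent). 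As already observed, this yields $out_{\A_\circ,\dB_\circ}^{\Pi}=out_{\A,\dB}^{\compile(\Pi)}$.

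Second, I would verify the memory bound transfer. The auxiliary qubits produced inside $\tilde{\A}$ are classical from its point of view, since $\tilde{\A}$ knows their bases and bit values and need not store them quantumly; all verification checks and discarding of the tested positions are purely classical post-processing. Consequently, at the point where $\dB$'s quantum storage drops to at most $\gamma(1\!-\!\alpha)m$ qubits in $\compile(\Pi)$, the quantum storage of $\dB_\circ$ in the corresponding point of $\Pi$ is identical and hence bounded by $\gamma n = \gamma(1\!-\!\alpha)m$, so $\dB_\circ\in\dBobBQSM^{\gamma}(\Pi)$.

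Finally, applying the assumed unconditional BQSM-security of $\Pi$ gives an ideal-world adversary $\dhB$ with $out_{\A_\circ,\dB_\circ}^{\Pi}\approxs out_{\hA,\dhB}^{\F}$, and composing with the equivalence of the previous paragraph concludes the proof. The main obstacle I would need to handle carefully is the timing of the memory bound: the ``certain point'' in $\Pi$ after which $\gamma n$ qubits suffice must be aligned with the point in $\compile(\Pi)$ after which the $\gamma(1\!-\!\alpha)m$-bound kicks in. Since the extra verification phase of the compiler consists only of classical commitments, the test-subset message, classical openings, and classical checks, it can be inserted before the bound-point of $\Pi$ without increasing the quantum memory requirement, so the alignment works essentially for free. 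A minor additional point is that unconditional security is preserved regardless of which commitment key is used, since the reduction is purely information-theoretic and $\dB_\circ$ is free to generate the commitment key itself.
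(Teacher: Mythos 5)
Your proposal is correct and follows essentially the same route as the paper: construct $\dB_\circ$ by the same $(\A_\circ,\tilde{\A})$ splitting as in the compiler theorem, let $\dB_\circ$ sample the hiding key $\pkH$ itself, observe that equal quantum storage against $\alpha m$ fewer qubits converts a $\gamma(1\!-\!\alpha)$-fraction bound into a $\gamma$-fraction bound via $\gamma(1\!-\!\alpha)m=\gamma n$, and then invoke the assumed BQSM-security of $\Pi$. Your explicit remark about aligning the memory-bound point — that the inserted verification phase is entirely classical — is a detail the paper leaves implicit but is worth spelling out.
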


\begin{proof}
  The proof proceeds as the proof for our compiler-theorem
  (Theorem~\ref{thm:compiler}). We have a dishonest $\dB$ that attacks
  $\compile(\Pi)$, and we construct a $\dB_\circ$ that attacks the
  original protocol $\Pi$. The only difference here is that we let
  $\dB_\circ$ generate the common reference string ``correctly'' as
  $\pkH$ sampled according to $\GH$.

  It follows by construction of $\dB_\circ$ that
  $out^{\compile(\Pi)}_{\A,\dB} = out^{\Pi}_{\A_\circ, \dB_\circ} \,
  .$ Furthermore, since $\dB_\circ$ requires the same amount of
  quantum storage as $\dB$ but communicates an $\alpha$-fraction fewer
  qubits, it follows that $\dB_\circ \in \dBobBQSM^\gamma(\Pi)$, if
  $\dB \in \dBobBQSM^{\gamma(1-\alpha)}(\compile(\Pi))$. Thus, it
  follows that there exists $\dhB$ such that $out^{\Pi}_{\A_\circ,
  \dB_\circ} \stackrel{s}{\approx} out^{\F}_{\hA,\dhB} \, .$ This
  proves the claim.
\end{proof}

%%% COMPOSITION %%%%%%%%%%%%%%%%%%%%%%%%%%%%%%%%%%%%%%%%%%%%%%%%%%%%%%%%%%

\section{Composability}
\label{sec:composability.compiler}

Several composition frameworks for the quantum setting have been
proposed, for instance, sequential composability in a classical
environment~\cite{FS09}, sequential composability in a quantum
environment but restricted to the BQSM~\cite{WW08}, or attempts of
generalizing the universal classical composability
framework (UC in~\cite{Canetti01}) to universal quantum
composability~\cite{BM04,Unruh04,Unruh10}. Here, we will briefly
investigate our protocols in the particular composition frameworks, we
consider most appropriate for our setting.

%%%%%%%%%%%%%%%%%%%%%%%%%%%%%%%%%%%%%%%%%%%%%%%%%%%%%%%%%%%%%%%%%%%%%%%%%%

\subsection{Sequential Composition}
\label{sec:sequential.composition.compiler}
\index{composition!sequential}

  All our definition for correctness and security of our two-party
  quantum protocols comply with the composition framework
  of~\cite{FS09} as described in detail in
  Section~\ref{sec:security.definition}. In particular, we will show
  in the next chapter that all of our quantum protocols $\pi$ securely
  implements their corresponding ideal functionality $\F$. Thus,
  according to the Composition
  Theorems~\ref{thm:composition.unconditional}
  and~\ref{thm:composition.computational}, we arrive at a situation
  where an outer protocol $\Sigma^{\pi_1\cdots\pi_\ell}$, composed of
  possibly different inner sub-protocols $\pi_i$, is essentially as
  secure as any hybrid protocol $\Sigma^{\F_1\cdots\F_\ell}$ with
  sequential calls to the corresponding ideal functionalities
  $\F_i$. Sequential composition in a classical environment follows
  immediately.

%%%%%%%%%%%%%%%%%%%%%%%%%%%%%%%%%%%%%%%%%%%%%%%%%%%%%%%%%%%%%%%%%%%%%%%%%%

\subsection{General Composition}
\label{sec:general.composition.compiler}
\index{composition!general}

  Our strong simulation-based security approach is clearly closely
  related to the concept of universal composability, but our
  definitions do not imply UC-security. The definitions of
  unconditional security leading to sequential composability do not
  require the running time of ideal-world adversaries to be polynomial
  whenever the real-life adversaries run in polynomial
  time. Fortunately, by extending our basic commitment construction,
  we can achieve efficient ideal-life adversaries. Therewith, we get
  efficient simulation on both sides without rewinding any dishonest
  player.

  Note that it might still be the case that our compilation preserves
  efficiency of the simulator, namely if protocol $\Pi$ is secure
  against dishonest $\dA$ with efficient simulator $\dhA$, then so is
  $\compile(\Pi)$. Although this would be desirable, it does not seem
  to be the case for our basic construction for the following
  reason. In order to show such a result, we would need to simulate
  the pre-processing phase against dishonest $\dA$ efficiently and
  without measuring the qubits that are not opened during
  pre-processing. Then after preparation and verification, we could
  give the remaining qubits to $\dhA$ to simulate the rest of the
  protocol as specified previously. However, the whole point of the
  pre-processing is to ensure that a real Bob measures all qubits,
  unless he can break the binding property of the commitments. Thus,
  the only way to resolve this situation is to give the simulator some
  trapdoor with which it can make commitments and open them any way it
  wants, or in other words, to equip the simulator with the
  possibility of equivocate\index{commitment!equivocability} its
  commitments.

  With such a equivocability trapdoor, the simulation of the
  verification phase is straightforward. $\dhA$ just waits until $\dA$
  reveals the test subset, measures the qubits in the test subset, and
  opens the commitments according to the measurement results. Then,
  $\dhA$ simulates the protocol with the remaining unopened
  qubits. Our basic commitment construction, introduced in
  Section~\ref{sec:mixed.commit}, does not provide such an
  equivocability trapdoor. However, we can extend the scheme as
  discussed in Section~\ref{sec:extended.commit.compiler} by first
  extending our mixed commitment to the multi-bit crypto-system
  of~\cite{PVW08} and then combining it with an HVZK-$\Sigma$-protocol
  construction for some quantumly hard $\NP$-relation $\Rel$. As
  previously shown, equivocability emerges in this construction with
  the simulator's knowledge of a valid witness $w$ such that $(x,w)\in
  \Rel$. In that case, the simulator can compute two accepting
  conversations for the $\Sigma$-protocol, and therewith, answer both
  challenges. The extension preserves the different but
  indistinguishable dual-modes of the underlying commitment scheme
  such that the committed bit can still be extracted by a simulator
  $\dhB$, decrypting both commitments $C_0,C_1$ to determine, which
  contains a valid reply in the $\Sigma$-protocol.

  In~\cite{Unruh10} a special case of our generic construction, namely
  the quantum oblivious transfer protocol of
  Section~\ref{sec:hybrid.security.ot} is related to the quantum-UC
  context\index{composition!quantum-UC}. It is shown that the protocol
  \emph{statistically quantum-UC-emulates} its ideal functionality in
  the case of corrupted Alice and corrupted Bob, if it is instantiated
  with an \emph{ideal commitment functionality}. Furthermore, it is
  established that security as specified in~\cite{FS09} implies
  quantum-UC-security in the case of our OT-protocol.\footnote{The
  security we achieve here is called \emph{quantum stand-alone
  security} in~\cite{Unruh10}, but we prefer to describe the
  statements in the terms used throughout this work.} Last, the
  OT-protocol in its randomized version and when instantiated by an
  unconditionally binding commitment scheme implements its
  corresponding ideal functionality with \emph{statistical security}
  in the case of corrupted Bob. Even though for the last result the
  protocol is based on an actual commitment, the case considers only a
  dishonest Bob, and by using an unconditionally binding scheme in the
  real world, we would loose unconditional security against dishonest
  Alice.

  However, by combining our extended construction as described above
  with the results of Section~\ref{sec:extended.commit.compiler} and
  with~\cite[Theorem 20]{Unruh10}, we get the following stronger
  result that applies to our generic compiler construction: Let $\Pi$
  be a BB84-type protocol as specified in Theorem~\ref{thm:compiler}
  and let $\compile(\Pi)$ be its compilation, instantiated with an
  extended mixed commitment construction in the CRS-model as described
  above. Then, $\compile(\Pi)$ \emph{computationally
  quantum-UC-emulates} its corresponding ideal functionality $\F$ for
  \emph{both dishonest players}.

%%%%%%%%%%%%%%%%%%%%%%%%%%%%%%%%%%%%%%%%%%%%%%%%%%%%%%%%%%%%%%%%%%%%%%%%%
%%% APPLICATIONS %%%%%%%%%%%%%%%%%%%%%%%%%%%%%%%%%%%%%%%%%%%%%%%%%%%%%%%%
%%%%%%%%%%%%%%%%%%%%%%%%%%%%%%%%%%%%%%%%%%%%%%%%%%%%%%%%%%%%%%%%%%%%%%%%%

\chapter{Applications}
\label{chap:hybrid.security.applications}

Our compiler, discussed in the previous chapter, can be easily applied
to known protocols. Here, we show two example applications, namely
oblivious transfer and password-based identification. Since the
original protocols are BQSM-secure, we also obtain hybrid security by
compilation. These results appeared in~\cite{DFLSS09}. We then show
that the compiled identification protocol is secure against
man-in-the-middle attacks, which was sketch in~\cite{DFLSS09} but
formal proofs were omitted.

%%% OBLIVIOUS TRANSFER %%%%%%%%%%%%%%%%%%%%%%%%%%%%%%%%%%%%%%%%%%%%%%%%%%

\section{Oblivious Transfer}
\label{sec:hybrid.security.ot}
\index{oblivious transfer}

Oblivious transfer, as introduced in Section~\ref{sec:primitives.ot},
constitutes a highly relevant cryptographic primitive, which is
complete for general two-party computation and reduces to classical
commitment in its quantum variant. As a building block it can be
securely used in outer quantum or classical protocols and extends, for
instance, to quantum identification.

%%%%%%%%%%%%%%%%%%%%%%%%%%%%%%%%%%%%%%%%%%%%%%%%%%%%%%%%%%%%%%%%%%%%%%%%%

\subsection{Motivation and Related Work}
\label{sec:hybrid.security.ot.history}

  As mentioned already, the idea of introducing a $\CO$-step to
  improve the security of quantum protocols was suggested in the first
  quantum OT protocol of Cr\'epeau and Kilian~\cite{CK88}, which---in
  its original form---proposes a protocol for $\tt{Rabin\text{--}OT}$,
  and in the practical follow-up work of Bennett, Brassard, Cr\'epeau
  and Skubiszewska~\cite{BBCS91}, implementing $\ot{1}{2}^1$. The
  $\CO$ approach is sketched as a ``conceptually easy
  fix''~\cite[p.~14]{BBCS91} in a situation where a dishonest Bob has
  large quantum storage.

  Various partial results for OT in that context followed. For
  instance, in~\cite{Yao95} such a construction is proven secure
  against any receiver in the case of noiseless communication. To make
  the proof work, however, an ideal black-box commitment scheme is
  assumed. This approach was then generalized for noisy channels and
  perfect string commitments in~\cite{Mayers96}. Another approach in
  the computational setting was taken in~\cite{CDMS04}. There it was
  shown that a computationally binding quantum string commitment would
  enforce an apparent collapse of Bob's quantum information, which in
  turn would imply secure OT. The paper concludes with the open
  question of how to construct an actual commitment scheme as required
  to get an applicable protocol.

  Based on our analysis of Section~\ref{sec:compiler.hybrid.security},
  we can now rather simply apply our compiler to (a variant of) the
  protocol in~\cite{BBCS91}, and therewith, give a complete proof for
  a concrete unconditionally hiding commitment scheme.

%%%%%%%%%%%%%%%%%%%%%%%%%%%%%%%%%%%%%%%%%%%%%%%%%%%%%%%%%%%%%%%%%%%%%%%%%

\subsection{The Protocol} 
\label{sec:hybrid.security.ot.protocol}

  The variant we consider here achieves $\ot{1}{2}^\ell$. Recall that
  in such a protocol, the sender $\A$ sends two $\ell$-bit strings
  $s_0$ and $s_1$ to the receiver $\B$. $\B$ can select a string to
  receive, $s_k$, but he does not learn anything about
  $s_{1-k}$. Finally, $\A$ does not learn $\B$'s choice bit $k$. The
  ideal oblivious transfer functionality $\otF$ is shown in
  Figure~\ref{fig:ideal.functionality.ot}.

%%%%%%%%%%%%%%%%%%%%%%%%%%%%%%%%%%%%%%%%%%%%%%%%%%%%%%%%%%%%%%%%%%%%%%%%%
\begin{figure}
  %\normalsize
  \begin{framed}
    \noindent\hspace{-1.5ex}{\sc Functionality $\otF$ :}\\ Upon
    receiving $\tt s_0$ and $\tt s_1$ from Alice and $\tt k$ from Bob,
    $\otF$ outputs $\tt s_k$ to Bob.
    \vspace{-1ex}
  \end{framed}
  \vspace{-1.5ex}
  %\small
  \caption{The Ideal Functionality for String OT.}\label{fig:otF}
  \label{fig:ideal.functionality.ot}
\end{figure}
%%%%%%%%%%%%%%%%%%%%%%%%%%%%%%%%%%%%%%%%%%%%%%%%%%%%%%%%%%%%%%%%%%%%%%%%%

  Our protocol is almost identical to $\ot{1}{2}^1$ introduced
  in~\cite{BBCS91}, but instead of using parity values to mask the
  bits in the last protocol message, we follow the approach
  of~\cite{DFRSS07}. Their BQSM-secure protocol $\tt RAND \ \QOT$ for
  the randomized version uses hash-functions that allow for
  transferring an $\ell$-bit string instead of a bit as final message.

  Let $\F$ denote a suitable family of two-universal hash-functions
  with range $\set{0,1}^n \rightarrow \set{0,1}^\ell$ as specified in
  Definition~\ref{def:hashing}. Note that if the input to the function
  is smaller than $n$, we can pad it with zeros without decreasing its
  entropy. We further assume that $\ell = \lfloor \lambda n \rfloor$
  for some constant $\lambda > 0$. Then, after the modifications
  described above, we obtain the basic $\QOT$ protocol, depicted in
  Figure~\ref{fig:basic.ot}.

%%%%%%%%%%%%%%%%%%%%%%%%%%%%%%%%%%%%%%%%%%%%%%%%%%%%%%%%%%%%%%%%%%%%%%%%%
\begin{figure}
  %\normalsize
  \begin{framed}
    \noindent\hspace{-1.5ex} {\sc Protocol $\QOT:$ } \\[-4ex]
    \begin{description}\setlength{\parskip}{0.5ex}
    \item[{\it Preparation:}]$ $ \\ 
      $\A$ chooses $x \in_R \set{0,1}^n$
      and $\theta \in_R \set{+,\x}^n$ and sends $\ket{x}_{\theta}$
      to~$\B$. $\B$ chooses $\hat{\theta} \in_R \set{0,1}^n$ and
      obtains $\hat{x} \in \set{0,1}^n$ by measuring
      $\ket{x}_{\theta}$ in bases $\hat{\theta}$.
    \item[{\it Post-processing:}]
      \begin{enumerate}
      \item[]
      \item
	$\A$ sends $\theta$ to $\B$.
      \item
	$\B$ partitions all positions $1\leq i \leq n$ in two subsets
	according to his choice bit $k \in \{ 0,1 \}$: the ``good''
	subset $I_k := \{ i: \theta_i = \hat{\theta}_i \}$ and the
	``bad'' subset $I_{1-k} := \{ i: \theta_i \neq \hat{\theta}_i
	\}$. $\B$ sends $(I_0,I_1)$ in this order.
      \item
	$\A$ sends descriptions of $f_0,f_1\in_R {\cal F}$ together
	with $m_0 := s_0 \oplus f_0(x|_{I_0})$ and $m_1 := s_1 \oplus
	f_1(x|_{I_1})$.
      \item 
	$\B$ computes $s_k = m_k \oplus f_k(\hat{x}|_{I_k})$.
      \end{enumerate}
    \end{description}
    \vspace{-1.5ex}
  \end{framed}
  \vspace{-1.5ex}
  %\small
  \caption{Basic Protocol for String OT.}
  \label{fig:basic.ot} 
  %old: \label{fig:ot} 
\end{figure}
%%%%%%%%%%%%%%%%%%%%%%%%%%%%%%%%%%%%%%%%%%%%%%%%%%%%%%%%%%%%%%%%%%%%%%%%%

  \begin{proposition}
    Protocol $\QOT$ satisfies correctness and achieves unconditional
    security against dishonest Alice, according to
    Definitions~\ref{def:correctness}
    and~\ref{def:unboundedAliceNiceOrder}, respectively.
  \end{proposition}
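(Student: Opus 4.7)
The plan has two independent parts. Correctness is a direct computation: when both parties are honest, on every $i \in I_k$ Bob's basis matches Alice's, so $\hat x_i = x_i$ deterministically; hence $\hat x|_{I_k} = x|_{I_k}$ and the receiver recovers $m_k \oplus f_k(\hat x|_{I_k}) = s_k \oplus f_k(x|_{I_k}) \oplus f_k(x|_{I_k}) = s_k$, as required by Definition~\ref{def:correctness}.

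For unconditional security against dishonest Alice I would construct an (unbounded) ideal-world adversary $\dhA$ as follows. The simulator internally runs $\dA$, stores the incoming qubits without measuring them, and upon receiving $\theta$ from $\dA$ it samples $\hat\theta \in_R \zo^n$ exactly as honest $\B$ would and sends $(I_0, I_1)$ using the hard-coded pseudo-choice $k'=0$, i.e., $I_0 := \{i : \theta_i = \hat\theta_i\}$ and $I_1 := \{i : \theta_i \neq \hat\theta_i\}$. After $\dA$ delivers $f_0, f_1, m_0, m_1$, the simulator leverages its unbounded power to measure every stored qubit in basis $\theta$, obtaining a string $x$, and extracts $s_b := m_b \oplus f_b(x|_{I_b})$ for \emph{both} $b \in \{0,1\}$. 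It then forwards $(s_0, s_1)$ to $\otF$ and outputs whatever $\dA$ outputs.

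The indistinguishability argument then relies on two observations. First, Bob's only outgoing message $(I_0, I_1)$ is distributed independently of his private choice bit $k$: conditioned on $\theta$ the unordered partition $\{I_0, I_1\}$ is a function of the uniform $\hat\theta$, and relabeling by $k \mapsto 1-k$ merely swaps the two sets, which leaves the joint law invariant since $\hat\theta$ is uniform. Hence $\dA$'s view in the simulation is perfectly distributed. Second, Bob's ideal-world output $m_k \oplus f_k(x|_{I_k})$ delivered via $\otF$ agrees in distribution with the real-world output $m_k \oplus f_k(\hat x|_{I_k})$, because on the positions $i \in I_k$ we have $\theta_i = \hat\theta_i$ by the very definition of $I_k$. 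The main obstacle I anticipate is making this last point airtight when $\dA$ sends an arbitrary, possibly highly entangled, multi-qubit state: one has to argue that the simulator's additional measurements on $\bar I_k$ (performed in basis $\theta$ rather than $\hat\theta$) do not perturb the marginal distribution of $x|_{I_k}$. This reduces to the fact that POVMs on disjoint qubits commute, and it is cleanest to carry it out by purifying $\dA$'s state and tracking Alice's auxiliary registers identically in both experiments, so that the joint distribution of $(\text{Alice's view}, I_0, I_1, x|_{I_k})$ in the ideal world matches that of $(\text{Alice's view}, I_0, I_1, \hat x|_{I_k})$ in the real one.
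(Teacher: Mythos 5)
Your proof is correct and takes essentially the same route as the paper's: the unbounded ideal-world adversary delays measurement, measures everything in Alice's announced basis $\theta$ once Alice has committed, forwards both extracted strings to $\otF$, and the security argument rests on $(I_0,I_1)$ being distributed independently of $k$ plus the locality/commutation observation you flag at the end. The paper is terser---it invokes the $\mathtt{RAND\ QOT}$ analysis of~\cite{DFRSS07} and argues $\QOT$ inherits that receiver-security by equivalence---whereas you construct the simulator explicitly, which is a more self-contained write-up of the same idea. One phrasing to tighten: with $k'=0$ hard-coded, the simulator's $I_k$ is only the matching set for its own $\hat\theta$ when $k=0$; the accurate justification for your second observation is that, conditioned on the partition Alice sees, the positions in $I_k$ are measured in basis $\theta|_{I_k}$ in \emph{both} worlds---in the real world because $I_k$ is Bob's matching set so $\hat\theta|_{I_k}=\theta|_{I_k}$, and in the simulation simply because $\theta$ is what the simulator measures in everywhere---so the two outputs coincide in distribution even when $I_k$ is not the simulator's matching set.
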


  \begin{proof}
    Correctness for honest players is obvious: $\B$ selects one string
    to receive, which is masked by the hashed bit-string of outcomes,
    measured in the matching basis. In the positions with non-matching
    bases, he does not know the outcomes, and therewith he does not
    learn anything about $s_{1-k}$. Finally, $\A$ does not learn
    which is the ``good'' subset, and hence, which is $\B$'s choice
    $k$.

    Security against dishonest Alice is derived in a straightforward
    way from $\tt RAND \ \QOT$ of~\cite{DFRSS07} as follows. Note that
    in $\tt RAND \ \QOT$, the receiver measures all his qubits in one
    basis, depending on his choice bit $k$, i.e.\ $\theta \in
    [0,1]_{k}$. As described previously in
    Chapter~\ref{chap:hybrid.security}, our compiler requires
    measurement in random bases $\theta \in_R \{ 0,1 \}^n$. Otherwise,
    the opened and tested positions during $\CO$ would obviously leak
    $k$.

    Due to the non-interactivity in $\tt RAND \ \QOT$, $\dA$ cannot
    learn $\B$'s choice bit $k$, so the protocol is perfectly
    receiver-secure. More formally, the proof compares the real output
    to an ideal output, which is obtained by letting $\dA$ run the
    protocol with an unbounded receiver who measures his qubits in
    $\dA$'s bases $\theta$, samples \emph{independent} $K$ from the
    correct distribution, and sets $S_K$ correspondingly. The only
    difference between the two executions is the point in time and the
    choice of bases, in which positions $i \in I_{1-k}$ is
    measured. However, these parameters do not influence the output
    states, once $K$ is fixed.

    Now, the preparation phase combined with Step~(2.)~of the
    post-processing in $\QOT$ is equivalent to $\B$ measuring all
    qubits in the basis, dictated by $K$. Thus, the same analysis can
    be applied to $\QOT$, achieving unconditional security against
    $\dA$.
  \end{proof}

  \begin{theorem}\label{thm:ot}
    Protocol $\QOT$ is unconditionally secure against $\beta$-benign Bob
    for any $\beta < \frac18 - \frac{\lambda}{2}$.
  \end{theorem}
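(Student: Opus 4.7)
The plan is to construct an ideal-world simulator $\dhB$ interacting with $\otF$ whose view is statistically indistinguishable from the view of any $\beta$-benign real-world $\dB$ attacking $\QOT$. The simulator $\dhB$ internally simulates honest Alice for the preparation phase (picking $x$ and $\theta$ only when needed, or in the equivalent EPR-version, delaying measurements), receives from $\dB$ the classical output $\hat\theta$ guaranteed by benignity together with the subsequent message $(I_0,I_1)$. At that point $\dhB$ computes a bit $K$ from $(\theta,\hat\theta,I_0,I_1)$ (see below), queries $\otF$ on input $K$ to receive $s_K$, samples $f_0,f_1\in_R\F$, and sends $m_K := s_K\oplus f_K(x|_{I_K})$ together with a uniformly random $m_{1-K}$. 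It then outputs whatever $\dB$ outputs.

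The selection of $K$ is driven by a two-step entropy argument on the \emph{actual} bad set $B := \{i:\theta_i\neq\hat\theta_i\}$ and its intersections $B_b := I_b\cap B$. Benignity with $I=B$ gives $\Hmin(X|_B\mid X|_{\bar B}=x|_{\bar B}) \geq d_H(\theta|_B,\hat\theta|_B) - \beta n = |B|-\beta n$. Since $\theta$ is uniform and independent of $\hat\theta$ (which is fixed by $\dB$'s benign output), Hoeffding's inequality yields $|B|\geq n/2 - o(n)$ except with negligible probability. Writing $\Hmin(X|_{B_0}X|_{B_1}\mid X|_{\bar B}) = \Hmin(X|_B\mid X|_{\bar B})$ and invoking the Min-Entropy-Splitting Lemma~\ref{lemma:splitting}, a binary random variable $K$ exists with $\Hmin(X|_{B_{1-K}}\,K\mid X|_{\bar B})\geq |B|/2-\beta n/2$; conditioning on the one-bit $K$ and using $X|_{B_{1-K}}$ as a function of $X|_{I_{1-K}}$ propagates the bound to $\Hmin(X|_{I_{1-K}}\mid K) \geq n/4 - \beta n/2 - o(n)$. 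The simulator $\dhB$ needs not be poly-time, so it may sample $K$ exactly as the splitting lemma dictates.

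Having fixed $K$, the core step is to apply the Privacy Amplification Theorem~\ref{theo:privacy-amplification} to $X:=X|_{I_{1-K}}$, classical side information $U:=K$, and quantum side information $\rho_E=\rho_{ER}$, the point-wise purification of $\dB$'s state whose max-entropy is bounded by $\Hmax(\rho_{ER})\leq \beta n$ from the second benignity condition. With $\ell = \lfloor\lambda n\rfloor$, one obtains
\[
\delta\!\left(\rho_{f_{1-K}(X)f_{1-K}KE},\tfrac{1}{2^\ell}\mathbbm{1}\otimes \rho_{f_{1-K}KE}\right) \leq \tfrac12\, 2^{-\frac12(n/4-\beta n/2 - \beta n - \lambda n) + O(1)},
\]
which is negligible whenever the exponent is positive; this condition is implied by $\beta < 1/8 - \lambda/2$ with room to spare. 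Therefore $f_{1-K}(X|_{I_{1-K}})$ is statistically indistinguishable from a uniform independent string in Bob's full view, so replacing the real $m_{1-K} = s_{1-K}\oplus f_{1-K}(x|_{I_{1-K}})$ by the uniformly random $m_{1-K}$ of $\dhB$ changes the joint state only negligibly. Since all other messages are distributed identically in the real and the ideal world, this establishes $\mathit{out}^{\QOT}_{\A,\dB} \approxs \mathit{out}^{\otF}_{\hA,\dhB}$.

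The main obstacle is guaranteeing that, regardless of how $\dB$ maliciously chooses the partition $(I_0,I_1)$ after learning $\theta$, the entropy promised by benignity actually survives in the set $I_{1-K}$ whose contents is hashed: the splitting must be performed on the \emph{actual} bad positions $B_0,B_1$, not on $I_0,I_1$ directly, and the bound then lifted back via the data-processing inequality for min-entropy. A secondary subtlety is correct conditioning: benignity supplies classical min-entropy of $X|_I$ conditioned on $X|_{\bar I}$, while privacy amplification requires classical min-entropy alongside an independent bound on $\Hmax(\rho_{ER})$; these must be combined without double-counting Bob's quantum side information, which is why the two benignity bounds on $\Hmin(X|_I\,|\,X|_{\bar I})$ and $\Hmax(\rho_{ER})$ appear as separate ingredients rather than through a single conditional quantity.
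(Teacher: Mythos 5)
Your proof is correct and establishes the theorem, but it routes through the Min-Entropy-Splitting Lemma (Lemma~\ref{lemma:splitting}) to select the bit $K$, whereas the paper's proof uses a simpler, \emph{deterministic} choice: since a $\beta$-benign $\dB$ is guaranteed to output a concrete basis string $\hat\theta$, the paper's simulator sets $k' := \arg\min_{k\in\{0,1\}}\,d_H(\hat\theta|_{I_k},\theta|_{I_k})$, which is computable from $\theta$, $\hat\theta$, and the announced partition $(I_0,I_1)$ alone. Since $d_H(\theta,\hat\theta) > (1-\eps)n/2$ except with negligible probability, the complementary set satisfies $d_H(\theta|_{I_{1-k'}},\hat\theta|_{I_{1-k'}}) \geq (1-\eps)n/4$, and benignity applied directly with $I = I_{1-k'}$ gives $\Hmin(X|_{I_{1-k'}}\mid X|_{I_{k'}}) \geq (1-\eps)n/4 - \beta n$ together with $\Hmax(\rho_{ER})\leq\beta n$; privacy amplification then yields the negligibility condition $\ell/n < 1/4 - 2\beta - \eps'$ and hence $\beta < 1/8 - \lambda/2$. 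Your splitting-lemma route, applied to the actual bad set $B$ rather than to $I_{1-k'}$, halves the $\beta n$ penalty (yielding $\beta < 1/6 - 2\lambda/3$, which is indeed weaker than the claimed $\beta < 1/8 - \lambda/2$ for $\lambda < 1/4$), so it is sound and slightly sharper numerically. What it buys is tightness; what it costs is that the $K$ produced by Lemma~\ref{lemma:splitting} depends on $X$ and on the conditioning event $x|_{\bar B}$, so $\dhB$ must first determine $x$ before it can even decide which index to feed to $\otF$, and the side information $U$ for Theorem~\ref{theo:privacy-amplification} must be taken as $(K, X|_{\bar B})$, not merely $K$ as written. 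The paper's argmin choice avoids both of these bookkeeping burdens, which is precisely the dividend that the explicit $\hat\theta$ from benignity pays out; the splitting lemma is the tool one reaches for in the BQSM analyses of~\cite{DFRSS07} where no such $\hat\theta$ is handed to the simulator. You correctly flag the two benignity bounds as separate ingredients rather than a single conditional quantity, which is the right structural observation.
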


  \begin{proof}
    For any given benign Bob $\dB$, we construct $\dhB$ the following
    way: $\dhB$ runs locally a copy of $\dB$ and simulates an
    execution by running $\hA$ up to but not including
    Step~(3.). Since $\dB$ is benign, $\dhB$ obtains $\hat{\theta}$
    after the preparation phase. When the simulation of $\hA$ reaches
    the point just after the announcement of $f_0$ and $f_1$ in
    Step~(3.), $\dhB$ finds $k'$ such that
    $d_H(\hat{\theta}|_{I_{k'}},\theta|_{I_{k'}})$ is minimum for
    $k'\in\{0,1\}$. $\dhB$ then calls $\otF $ with input $k'$ and
    obtains output $s_{k'}$. $\dhB$ sets $m'_{k'} = s_{k'} \oplus
    f_{k'}(x|_{I_{k'}})$ and $m'_{1-k'} \in_R\{0,1\}^\ell$ before
    sending $(m_0,m_1)$ to $\dB$. Finally, $\dhB$ outputs whatever
    $\dB$ outputs.
 
    We now argue that the state output by $\dhB$ is statistically
    close to the state output by $\dB$ when executing $\QOT$ with the
    real $\A$. The only difference is that, while $\dhB$ outputs
    $m'_{1-k'}\in_R\{0,1\}^\ell$, $\dB$ outputs $m_{1-k'} = s_{1-k'}
    \oplus f_{1-k'}(x|_{I_{1-k'}})$. Thus, we simply have to show that
    $m_{1-k'}$ is statistically indistinguishable from uniform in the
    view of $\dB$.

    Note that, since $\theta$ and $\hat{\theta}$ are independent and
    $\theta$ is a uniform $n$-bit string, we have that for any
    $\epsilon>0$,
    $$ d_H(\theta,\hat{\theta})> \frac{(1-\epsilon)n}{2} \, ,$$ 
    except with negligible probability. We can now claim that with
    overwhelming probability
    $$ 
    d_H(\theta|_{I_{1-k'}},\hat{\theta}|_{I_{1-k'}})\geq
    \frac{(1-\epsilon)n}{4} \, .
    $$ 
    Now, since $\dB$ is $\beta$-benign, we get with
    Definition~\ref{def:BenignBob} that
    $$ 
    \Hmin \bigl( X|_{I_{1-k'}} \,\big|\, X|_{I_{k'}} = x|_{I_{k'}}
    \bigr) \geq \frac{(1-\epsilon)n}{4}-\beta n \ \text{ and } \
    H_0(\rho_E)\leq \beta n \, .
    $$ 
    It follows from privacy amplification
    (Theorem~\ref{theo:privacy-amplification}) that
    $f_{1-k'}(x|_{I_{1-k'}})$ is statistically indistinguishable from
    uniform for $\dB$, provided that
    $$ \frac{\ell}{n} < \frac{1}{4}-2\beta-\epsilon'$$ 
    for any $\epsilon'>0$. Finally, by the properties of exclusive-OR,
    we can now also conclude that $m_{1-k'}$ is statistically close to
    uniform. Solving the last inequality for $\beta$, we obtain
    $$ \beta < \frac18 - \frac{\lambda}{2} - \frac{\epsilon'}{2} \, ,$$
    and Theorem~\ref{thm:ot} follows.
  \end{proof}

  Informally, the next Corollary~\ref{cor:ot} states that, when
  compiling the basic protocol $\QOT$, we obtain an improved protocol
  $\compile(\QOT)$ with \emph{hybrid security}\index{hybrid security},
  such that a dishonest Bob is required to have large quantum
  computing power \emph{and} large quantum storage to succeed. For
  completeness, $\compile(\QOT)$ is given explicitly in
  Figure~\ref{fig:compiled.ot}.

%%%%%%%%%%%%%%%%%%%%%%%%%%%%%%%%%%%%%%%%%%%%%%%%%%%%%%%%%%%%%%%%%%%%%%%%%
\begin{figure}
  %\normalsize
  \begin{framed}
    \noindent\hspace{-1.5ex} {\sc Protocol $\compile(\QOT):$ }\\[-4ex]
    \begin{description}\setlength{\parskip}{0.5ex}
    \item[{\it Preparation:}]$ $ \\ 
      $\A$ chooses $x \in_R \set{0,1}^m$
      and $\theta \in_R \set{+,\x}^m$ and sends $\ket{x}_{\theta}$
      to~$\B$. $\B$ chooses $\hat{\theta} \in_R \set{0,1}^m$ and
      obtains $\hat{x} \in \set{0,1}^m$ by measuring
      $\ket{x}_{\theta}$ in bases $\hat{\theta}$.
    \item[{\it Verification:}]
      \begin{enumerate}
      \item[]
      \item 
      $\B$ sends $c_i :=
      \mathtt{commit_{pkH}}((\hat{\theta}_i,\hat{x}_i),r_i)$ with
      randomness $r_i$ for all $i = 1,\ldots,m$.
      \item 
      $\A$ sends random $T \subset \{1,\ldots,m \}$ with $|T| = \alpha
      m$. $\B$ opens $c_i \ \forall \ i \in T$, and $\A$ checks that
      the openings were correct and that $x_i = \hat{x}_i$ whenever
      $\theta_i = \hat{\theta}_i$. If all tests are passed, $\A$
      accepts. Otherwise, she rejects and aborts.
      \item 
      $\A$ and $\B$ restrict $x, \theta$ and $\hat{x}, \hat{\theta}$,
      respectively, to the remaining $n$ positions $i \in \bar{T}$.
      \end{enumerate}
    \item[{\it Post-processing:}]
      \begin{enumerate}
      \item[]
      \item
	$\A$ sends $\theta$ to $\B$.
      \item
	$\B$ partitions all positions $1\leq i \leq n$ in two subsets
	according his choice bit $k \in \{ 0,1 \}$: the ``good''
	subset $I_k := \{ i: \theta_i = \hat{\theta}_i \}$ and the
	``bad'' subset $I_{1-k} := \{ i: \theta_i \neq \hat{\theta}_i
	\}$. $\B$ sends $(I_0,I_1)$ in this order.
      \item
	$\A$ sends descriptions of $f_0,f_1\in_R {\F}$ together
	with $m_0 := s_0 \oplus f_0(x|_{I_0})$ and $m_1 := s_1 \oplus
	f_1(x|_{I_1})$.
      \item 
	$\B$ computes $s_k = m_k \oplus f_k(\hat{x}|_{I_k})$.
      \end{enumerate}
    \end{description}
    \vspace{-1.5ex}
  \end{framed}
  \vspace{-1.5ex}
  %\small
  \caption{Improved Protocol for String OT.}
  \label{fig:compiled.ot} 
\end{figure}
%%%%%%%%%%%%%%%%%%%%%%%%%%%%%%%%%%%%%%%%%%%%%%%%%%%%%%%%%%%%%%%%%%%%%%%%%
  
  \begin{corollary}\label{cor:ot}
    If $\lambda < \frac14$, then protocol $\compile(\QOT)$ is
    computationally secure against dishonest Bob and unconditionally
    secure against $\gamma (1\!-\!\alpha)$-BQSM Bob with $\gamma <
    \frac14 - 2 \lambda$. Correctness and unconditional security
    against dishonest Alice is maintained during compilation.
  \end{corollary}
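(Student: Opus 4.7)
The proof has three components, each following directly from a previously established tool.

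\emph{Correctness and security against Alice.} Both are inherited: correctness and unconditional security against dishonest Alice for $\QOT$ were just established in the preceding proposition, and Theorem~\ref{thm:compiler} explicitly preserves correctness and unconditional security against $\dA$ under compilation (the dual-mode commitment is unconditionally hiding under $\pkH$, and an unbounded $\dhA$ can break the binding property to simulate $\B$'s role in the verification phase). So nothing new is required here.

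\emph{Computational security against Bob.} This is where the assumption $\lambda < \tfrac14$ enters. By Theorem~\ref{thm:ot}, $\QOT$ securely implements $\otF$ against any $\beta$-benign $\dB$ provided $\beta < \tfrac18 - \tfrac{\lambda}{2}$. For this to give a nontrivial benign-security range (i.e.\ some $\beta \geq 0$), we need $\tfrac18 - \tfrac{\lambda}{2} > 0$, equivalently $\lambda < \tfrac14$. Under that condition, $\QOT$ is a BB84-type protocol satisfying the hypotheses of Theorem~\ref{thm:compiler}, so $\compile(\QOT)$ is quantum-computationally secure against an arbitrary dishonest Bob in the CRS-model.

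\emph{Hybrid (BQSM) security against Bob.} The plan is to first argue that $\QOT$ itself is unconditionally secure against a $\gamma$-BQSM $\dB$ for every $\gamma < \tfrac14 - 2\lambda$, and then to invoke Theorem~\ref{thm:BQSM} which yields $\gamma(1-\alpha)$-BQSM security for $\compile(\QOT)$ with the same range of $\gamma$. The BQSM security of $\QOT$ parallels the security analysis of $\tt RAND\ \QOT$ from~\cite{DFRSS07} essentially unchanged: since our only modification to $\tt RAND\ \QOT$ is that $\B$ measures each qubit in a random basis $\hat\theta_i \in_R \set{+,\x}$ rather than a single basis determined by $k$, and the BQSM analysis of receiver-security only uses the distribution of $\theta$ and the memory bound on $\dB$ after the reveal of $\theta$, the argument carries over verbatim. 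Concretely, after $\A$ announces $\theta$ and $\dB$ selects $(I_0,I_1)$, the same min-entropy-splitting plus chain-rule argument as in~\cite{DFRSS07} shows that for the ``wrong'' index $1-k'$, $\Hmin(X|_{I_{1-k'}}\,|\,\text{view of }\dB) \geq \tfrac{n}{2} - \gamma n$ up to negligible terms, so by privacy amplification (Theorem~\ref{theo:privacy-amplification}) the mask $f_{1-k'}(x|_{I_{1-k'}})$ is statistically close to uniform whenever $\tfrac{\ell}{n} < \tfrac18 - \tfrac{\gamma}{2}$, i.e., whenever $\gamma < \tfrac14 - 2\lambda$. Applying Theorem~\ref{thm:BQSM} then gives the claimed $\gamma(1-\alpha)$-BQSM security of $\compile(\QOT)$.

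The main obstacle I expect is not the compilation step, which is black-box once Theorems~\ref{thm:compiler} and~\ref{thm:BQSM} are in hand, but the bookkeeping of constants in the BQSM analysis: making sure that the change from $\tt RAND\ \QOT$'s single-basis measurement to $\QOT$'s random-basis measurement does not worsen the min-entropy bound, and that the privacy-amplification threshold yields exactly the stated $\gamma < \tfrac14 - 2\lambda$. Given the structural parallel to~\cite{DFRSS07}, though, this is expected to be a direct but careful adaptation rather than a new argument.
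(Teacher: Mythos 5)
Your decomposition matches the paper's proof exactly: inherit the BQSM security of $\QOT$ from $\tt RAND\ \QOT$ of~\cite{DFRSS07}, apply Theorem~\ref{thm:BQSM} to get $\gamma(1\!-\!\alpha)$-BQSM security of $\compile(\QOT)$, combine Theorem~\ref{thm:ot} with Theorem~\ref{thm:compiler} for computational security, and note that correctness and unconditional security against $\dA$ are preserved throughout. Your observation that $\lambda < \frac14$ is precisely the condition under which Theorem~\ref{thm:ot} yields a nonempty benign-security range (some $\beta > 0$) is a useful clarification that the paper leaves implicit.

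One nit in your BQSM sketch: the intermediate bound $\Hmin\bigl(X|_{I_{1-k'}}\,\big|\,\text{view of }\dB\bigr) \geq \frac{n}{2} - \gamma n$ does not lead to $\frac{\ell}{n} < \frac18 - \frac{\gamma}{2}$ via Theorem~\ref{theo:privacy-amplification} under any reading: if the quantum view is already conditioned on you get $\lambda < \frac12 - \gamma$, and if you separate out $\hzero{\rho_E}\leq\gamma n$ you get $\lambda < \frac12 - 2\gamma$, neither of which is your stated $\frac18 - \frac{\gamma}{2}$. The paper avoids this issue by simply citing the $\tt RAND\ \QOT$ threshold from~\cite{DFRSS07} rather than rederiving it; your partial rederivation introduced an arithmetic inconsistency even though you wrote down the correct final constant $\gamma < \frac14 - 2\lambda$. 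This is a slip in the sketch, not a wrong approach---the reduction structure is sound.
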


  \begin{proof}
    The corollary is obtained by the following steps: First, we sketch
    that the analysis of protocol $\tt RAND \ \QOT$ in~\cite{DFRSS07}
    can be almost analogously applied to $\QOT$. Then, we combine this
    result with our BQSM-theorem (Theorem~\ref{thm:BQSM}). And finally,
    we apply Theorem~\ref{thm:ot} with our compiler-theorem
    (Theorem~\ref{thm:compiler}). Note that, by definition, all these
    transformations do not touch correctness nor unconditional
    security against $\dA$.

    In more detail, the main difference from $\tt RAND \ \QOT$ to
    $\QOT$ is that $\B$ measures all his qubits in the basis
    corresponding to his choice bit $k$, i.e.\ $\theta \in
    [0,1]_{k}$. Since we require these measurements to be in
    random bases $\theta \in_R \{ 0,1 \}^n$, we loose the
    non-interactivity and must include the additional message
    $(I_0,I_1)$ from $\B$ to $\A$ in Step~(2.), so that $\A$ obtains
    the same partitions. However, the partitions are send in fixed
    order and do not allow to conclude on the ``good'' subset
    $I_k$. No other message is sent by $\B$. Furthermore, recall that
    in randomized OT, $\A$ does not input the two messages $s_0,
    s_1$ herself by masking them with the hashed output of the
    measurement outcomes. Instead, only these hash-values, generated
    uniformly at random during the protocol, are output. However, due
    to the characteristic of exclusive-OR, the security properties in
    this aspect do not change. 

    Thus, $\QOT$ inherits the BQSM-security of $\tt RAND \ \QOT$, and
    we can claim that $\QOT$ is unconditionally secure against
    $\gamma$-BQSM Bob for all $\gamma$ strictly smaller than
    $\frac{1}{4} - 2\lambda$. Then, by Theorem~\ref{thm:BQSM}, we
    obtain unconditional security for $\compile(\QOT)$ against
    $\gamma(1\!-\!\alpha)$-BQSM Bob. 

    Last, we know from Theorem~\ref{thm:ot} that $\QOT$ is
    unconditionally secure against a $\beta$-benign Bob for $\beta <
    \frac18 - \frac{\lambda}{2}$. It follows with
    Theorem~\ref{thm:compiler} that $\CO$, instantiated by our
    dual-mode commitment scheme, leads to quantum-computational
    security for $\compile(\QOT)$ against any $\dB$.
  \end{proof}

%%% IDENTIFICATION %%%%%%%%%%%%%%%%%%%%%%%%%%%%%%%%%%%%%%%%%%%%%%%%%%%%%%

\section{Password-Based Identification}
\label{sec:hybrid.security.id}
\index{identification}

Password-based identification is introduced in
Section~\ref{sec:primitives.id}, where we also describe a construction
from randomized $\ot{1}{n}^\ell$, and the therewith inherited
OT-security aspects. Secure identification is highly significant in
any authenticated set-up of outer protocols, and may provide
re-usability of the initial user-memorizable passwords, if cleverly
implemented.

%%%%%%%%%%%%%%%%%%%%%%%%%%%%%%%%%%%%%%%%%%%%%%%%%%%%%%%%%%%%%%%%%%%%%%%%%

\subsection{Motivation and Related Work}
\label{sec:hybrid.security.id.motivation}

  There exist various approaches for classical and quantum
  identification, based on different techniques, e.g.\ on
  zero-knowledge~\cite{FS86,FFS87}, on password-based
  key-agreement~\cite{KOY01}, and on quantum memory
  restrictions~\cite{DFSS07}. Here, we will subject the quantum
  identification scheme of~\cite{DFSS07}, denoted in the following by
  $\tt BQSM\text{--}QID$, to our compiler technique, yielding more
  diverse security assumptions. $\tt BQSM\text{--}QID$ was proven to
  be unconditionally secure against arbitrary dishonest Alice and
  against quantum-memory-bounded dishonest Bob by using
  quantum-information-theoretic security definitions. In~\cite{FS09}
  it was then shown that these security definitions imply
  simulation-based security as considered here, with respect to the
  functionality $\idF$ given in
  Figure~\ref{fig:ideal.functionality.id}. Actually, the definition
  and proof from~\cite{DFSS07} guarantee security only for a slightly
  weaker functionality, which gives some unfair advantage to dishonest
  Alice in case she guesses the password correctly. However, as
  discussed in~\cite{FS09}, the protocol from~\cite{DFSS07} does
  implement functionality $\idF$.

%%%%%%%%%%%%%%%%%%%%%%%%%%%%%%%%%%%%%%%%%%%%%%%%%%%%%%%%%%%%%%%%%%%%%%%%%%

\subsection{The Protocol}
\label{sec:hybrid.security.id.protocol}

  Recall that we require from an identification scheme that a user
  $\A$ succeeds in identifying herself to a server $\B$, if she knows
  an initial, secret password $w$. Additionally, a dishonest user
  $\dA$ should not succeed with higher probability than at a guess,
  and similarly, a dishonest server $\dB$ should be only able to guess
  $\A$'s password without learning anything beyond the (in)correctness
  of his guess. These last requirements provide re-usability of the
  password. To achieve security under realistic assumptions, we
  further want to allow memorizable passwords with low entropy.

  Let $\cal W$ be the set of possible keys, not necessarily large in
  size, with $w \in \cal{W}$ denoting the initially shared
  password. For clarity, we will often use $w_A$ and $w_B$ to
  indicate $\A$'s and $\B$'s input to the protocol, and only accept if
  $w_A = w_B$, which implies equality to $w$. Let
  $\mathfrak{c}:{\cal W} \rightarrow \set{+,\x}^n$ be the encoding
  function of a binary code of length $n$ with $|{\cal W}|$ codewords
  and minimal distance $d$. Families of codes as required for our
  subsequent results, correcting a constant fraction of errors
  efficiently and with constant information rate are indeed known
 ~\cite{SS96}. And finally, let $\F$ and $\G$ denote suitable families
  of (strongly) two-universal hash-functions, as specified in
  Definition~\ref{def:hashing}, with range $\F: \set{0,1}^n
  \rightarrow \set{0,1}^\ell$ and $\G: \mathcal{W} \rightarrow
  \set{0,1}^\ell$, respectively. Again we stress that we can pad the
  input to the functions with zero, if it is smaller than expected.

%%%%%%%%%%%%%%%%%%%%%%%%%%%%%%%%%%%%%%%%%%%%%%%%%%%%%%%%%%%%%%%%%%%%%%%%%
\begin{figure}
  %\normalsize
  \begin{framed}
    \noindent\hspace{-1.5ex}{\sc Functionality $\idF$ :}\\ Upon
    receiving $\mathtt{w_A, w_B} \in \cal W$ from Alice and Bob,
    respectively, $\idF$ outputs the bit \smash{$\mathtt{y}:=
    (\mathtt{w_A} \eqq \mathtt{w_B})$} to Bob. In case Alice is
    dishonest, she may choose $\mathtt{w_A} = \,\perp$ (where $\perp
    \, \not\in \cal W$), and (for any choice of $\tt w_A$) the bit
    $\tt y$
    is also output to Alice.
    \vspace{-1ex}
  \end{framed}
  \vspace{-1.5ex}
  %\small
  \caption{The Ideal Functionality for Password-Based Identification.}
  \label{fig:ideal.functionality.id}
\end{figure}
%%%%%%%%%%%%%%%%%%%%%%%%%%%%%%%%%%%%%%%%%%%%%%%%%%%%%%%%%%%%%%%%%%%%%%%%%

  We cannot directly apply our compiler to the original $\tt
  BQSM\text{--}QID$, since it is {\em not} a BB84-type
  protocol. Similar to $\tt RAND \ \QOT$ described in the previous
  Section~\ref{sec:hybrid.security.ot}, $\B$ does not measure the
  qubits in a random basis but in a basis-string $c$ determined by
  his password $w_B \in {\cal W}$ by $c =
  \mathfrak{c}(w_B)$. After $\A$'s basis announcement, both players
  compute set $I_w = \set{i:\theta_i = \mathfrak{c}(w)_i}$
  with the positions on which they base the last steps of the
  post-processing.

  We briefly sketch now the transformation from $\tt BQSM\text{--}QID$
  into a BB84-type protocol, without affecting security and without
  loosing efficiency. The first step is naturally to let $\B$ measure
  in random basis $\hat{\theta} \in_R \set{\+,\x}^n$. The most
  straightforward next step would be to include a new message from
  $\B$ to $\A$ during post-processing, in which $\B$ announces $I_B =
  \set{i:\hat{\theta}_i = \mathfrak{c}(w)_i}$. Then, $\A$ sends
  $\theta$ and the remaining post-processing could be conducted on
  $I_w = \set{i \in I_B: \theta_i = \hat{\theta}_i}$. Note, however,
  that this solution here is less efficient than in the original
  protocol, since only approx.\ $1/4$ of all measurement outcomes
  could be used. So instead, we let Bob apply a {\em random shift}
  $\kappa$ to the code, which $\B$ announces to $\A$ in the
  post-processing phase, namely $\hat{\theta}= \mathfrak{c}(w) \oplus
  \kappa$ with $\kappa\in \{ +, \times \}^n$ and $\+ \equiv 0$ and $\x
  \equiv 1$ for computing the $\oplus$-operation. Then, we define
  $\mathfrak{c'}(w):= \mathfrak{c}(w) \oplus \kappa$. Finally, after
  $\A$'s announcements of $\theta$ the protocol is completed with the
  shifted code, i.e., based on positions in $I_w :=
  \set{i:\theta=\mathfrak{c'}(w)_i}$. This has the effect that the
  post-processing is actually based on positions $i$ with $\theta_i =
  \hat{\theta}_i$, and thus, on approx.\ $1/2$ of all qubits as in
  protocol $\tt BQSM\text{--}QID$. Our resulting protocol $\QID$ is
  described in Figure~\ref{fig:basic.id}. We show in the following
  proofs that the modification does not affect security as given
  in~\cite{DFSS07} (and~\cite{FS09}).

%%%%%%%%%%%%%%%%%%%%%%%%%%%%%%%%%%%%%%%%%%%%%%%%%%%%%%%%%%%%%%%%%%%%%%%%%
\begin{figure}
  %\normalsize
  \begin{framed}
    \noindent\hspace{-1.5ex} {\sc Protocol $\QID:$ } \\[-4ex]
    \begin{description}%\setlength{\parskip}{0.5ex}
    \item[{\it Preparation:}] $ $ \\ 
      $\A$ chooses $x \in_R \set{0,1}^n$ and $\theta \in_R
      \set{+,\x}^n$ and sends $\ket{x}_{\theta}$ to~$\B$. $\B$ chooses
      $\hat{\theta} \in_R \set{0,1}^n$ and obtains $\hat{x} \in
      \set{0,1}^n$ by measuring $\ket{x}_{\theta}$ in bases
      $\hat{\theta}$.
    \item[{\it Post-processing:}] 
      \begin{enumerate}
	\item[]
	\item
	  $\B$ computes a string $\kappa\in \{ +, \times \}^n$ such
	  that $\hat{\theta}= \mathfrak{c}(w) \oplus \kappa$, where we
	  think of $\+$ as 0 and $\x$ as 1 so that $\oplus$ makes
	  sense. He sends $\kappa$ to $\A$ and we define
	  $\mathfrak{c'}(w):= \mathfrak{c}(w) \oplus \kappa$.
	\item 
	  $\A$ sends $\theta$ and $f \in_R \mathcal{F}$ to $\B$. Both
	  compute $I_w := \set{i:
	  \theta_i=\mathfrak{c'}(w)_i}$.
	\item 
	  $\B$ sends $g \in_R \mathcal{G}$.
	\item 
	  $\A$ sends $z:= f(x|_{I_{w}}) \oplus g(w)$
	  to $\B$.
	\item $\B$ accepts if and only if
	$z=f(\hat{x}|_{I_w}) \oplus g(w)$.
      \end{enumerate}
    \end{description}
    \vspace{-1.5ex}
  \end{framed}
  \vspace{-1.5ex}
  %\small
  \caption{Basic Protocol for Password-Based Identification}
  \label{fig:basic.id} 
\end{figure}
%%%%%%%%%%%%%%%%%%%%%%%%%%%%%%%%%%%%%%%%%%%%%%%%%%%%%%%%%%%%%%%%%%%%%%%%%

  \begin{proposition}
    Protocol $\QID$ satisfies correctness and achieves unconditional
    security against dishonest Alice, according to
    Definitions~\ref{def:correctness}
    and~\ref{def:unboundedAliceNiceOrder}, respectively.
  \end{proposition}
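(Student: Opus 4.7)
The plan is to handle correctness by direct verification and to reduce unconditional security against dishonest $\dA$ to the corresponding property already established for $\tt BQSM\text{--}QID$ in~\cite{DFSS07,FS09}.

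For correctness, I would observe that when both players are honest and $w_A = w_B = w$, the shift $\kappa$ announced by $\B$ satisfies $\mathfrak{c'}(w) = \mathfrak{c}(w) \oplus \kappa = \hat{\theta}$, so $I_w = \set{i : \theta_i = \hat{\theta}_i}$ is precisely the matching-bases subset. On these positions $\B$'s measurement in basis $\hat{\theta}_i$ on $\ket{x_i}_{\theta_i}$ yields $\hat{x}_i = x_i$ with probability one, whence $f(x|_{I_w}) = f(\hat{x}|_{I_w})$ and the test $z = f(\hat{x}|_{I_w}) \oplus g(w)$ succeeds deterministically. The resulting real output therefore coincides with that of $\idF$ on matching inputs.

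For unconditional security against dishonest Alice, the key observation is that $\QID$ is equivalent---from the joint view of $\dA$, honest $\B$, and the environment---to a reordered variant in which $\B$ first samples $\kappa \in_R \set{\+,\x}^n$ uniformly, sends $\kappa$ to $\dA$, and only afterwards measures the received qubits in basis $\hat{\theta} := \mathfrak{c}(w) \oplus \kappa$. This reordering is view-preserving, since $\dA$'s qubit transmission is independent of $\B$'s basis choice and $\B$ performs no other interaction in between. With the reordering in place, the rest of the interaction is exactly $\tt BQSM\text{--}QID$ run with the shifted code $\mathfrak{c'} = \mathfrak{c} \oplus \kappa$, preceded only by the public announcement of $\kappa$. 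Since $\mathfrak{c'}$ has the same number of codewords and the same minimum distance as $\mathfrak{c}$, and since the security proof of $\tt BQSM\text{--}QID$ against dishonest Alice in~\cite{DFSS07,FS09} is unconditional (imposes no bound on $\dA$) and depends only on these combinatorial parameters, it applies for every fixed $\kappa$. The ideal-world adversary $\dhA$ is therefore built by sampling $\kappa$ uniformly, running the known $\tt BQSM\text{--}QID$ simulator against $\dA$ on the shifted code, extracting $\dA$'s effective guess $w_A \in \cal W \cup \set{\perp}$, and forwarding it to $\idF$.

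The hard part will be justifying that announcing $\kappa$ leaks no information about $w$ beyond what $\idF$ reveals. This reduces to the one-time-pad observation that $\hat{\theta}$ is sampled uniformly and independently of $w$, so $\kappa = \hat{\theta} \oplus \mathfrak{c}(w)$ is uniformly distributed over $\set{\+,\x}^n$ regardless of $w$; combined with the reordering argument above, this ensures that the marginal distribution of the simulated transcript is statistically indistinguishable from the real one conditioned on the bit returned by $\idF$, which is exactly the statement required by Definition~\ref{def:unboundedAliceNiceOrder}.
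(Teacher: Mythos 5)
Your proof is correct and takes essentially the same approach as the paper: correctness by direct inspection, and security against $\dA$ by reducing to the known unconditional security of $\tt BQSM\text{--}QID$, using the one-time-pad observation that $\kappa=\hat{\theta}\oplus\mathfrak{c}(w)$ is uniform and independent of $w$, so the preparation plus the $\kappa$-announcement is, from $\dA$'s view, an equivalent situation to the original protocol run on the shifted code $\mathfrak{c}'$. The paper phrases this as ``$\kappa$ looks completely random'' and then sketches the internal $\tt BQSM\text{--}QID$ argument via the pairwise-distinct $s_j$ values, whereas you cite that proof as a black box after noting that $\mathfrak{c}'$ has the same code parameters; both routes establish the same reduction.
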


  \begin{proof}
    Correctness for honest players is obvious: If both $\A$ and $\B$
    know $w$, i.e.~$w_A = w_B$, they can compute $\mathfrak{c}(w)$ and
    $\mathfrak{c'}(w)$. Following the last steps as supposed to, they
    conclude with $f(x|_{I_w}) \oplus g(w_A) = f(\hat{x}|_{I_w})
    \oplus g(w_B)$.

    Security against dishonest $\dA$ is derived in a straightforward
    way from $\tt BQSM\text{--}QID$ as follows. Recall that in $\tt
    BQSM\text{--}QID$, $\B$ measures all his qubits in one basis,
    depending on $c = \mathfrak{c}(w)$. In $\QID$, the preparation
    phase combined with Step~(1.)~of the post-processing, where $\B$
    sends $\kappa$, can be seen as an equivalent situation from the
    view of $\dA$. The important positions are now defined by
    $\mathfrak{c'}(w)$, which is however only deducible if
    $\mathfrak{c}(w)$ is known in addition, since otherwise, $\kappa$
    looks completely random. All subsequent steps are exactly as in
    $\tt BQSM\text{--}QID$, and thus, the same analysis can be applied
    to $\QID$. In the following, we will sketch the intuitive idea
    thereof. $\dA$ runs the protocol with a memory-unbounded server
    who measures his qubits in $\dA$'s bases $\theta$ and therefore
    obtains $x$. He then computes $s_j = f(x|_{I_j}) \oplus g(j)$ for
    all codewords $j = 1,\ldots,|\cal W|$, where $s_{w}$ would be
    expected from $\dA$ for an accepting run of the protocol. By the
    strongly universal-two property of $g$, all $s_j$ are pairwise
    independent, and thus, it follows that all $s_j$'s are distinct,
    except with some negligible probability. Assume that the accepting
    message is one of the $s_j$'s for a \emph{random} variable $w'$,
    i.e.~$z = s_{w'}$. $\dA$ will only succeed, if $w' = w$, and $\dA$
    does not learn anything beyond that. A further analysis of $\dA$'s
    state before the final accept/reject-message shows its
    independence from $w$, given $w'$ and conditioned on $w' \neq w$
    and on the pairwise distinction of all $s_j$'s. And finally, for
    $\dA$'s state after the final message it is shown that the event
    of all distinct $s_j$'s is independent of $w$ and
    $w'$. Statistical security against $\dA$ follows.
  \end{proof}

  \begin{theorem}\label{thm:id}
    If $\mathfrak{c}:{\cal W} \rightarrow \set{+,\x}^n$ has minimal
    distance $d \geq \delta n$ and is polynomial-time decodeable, then
    protocol $\QID$ is unconditionally secure against $\beta$-benign
    Bob for any $\beta < \frac{\delta}{4}$.
  \end{theorem}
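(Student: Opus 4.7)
The plan is to build an ideal-world adversary $\dhB$ that simulates $\dB$'s view while extracting a single password guess $w'$ to submit to $\idF$, in the spirit of the simulator constructed for $\QOT$ in Theorem~\ref{thm:ot}. Concretely, $\dhB$ runs honest Alice's code internally against $\dB$ up to and including the first post-processing step. Because $\dB$ is $\beta$-benign, after the preparation phase $\dhB$ obtains the auxiliary basis string $\hat\theta$; once $\dB$ then sends $\kappa$, $\dhB$ applies the efficient decoder for $\mathfrak{c}$ to the string $\hat\theta\oplus\kappa$ and lets $w'$ be the preimage of the returned codeword, or $\perp$ if decoding fails. It queries $\idF$ on input $w'$, learning $y=(w' \eqq w)$. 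If $y=1$, $\dhB$ completes the remaining messages honestly using $w'=w$; if $y=0$, it replies to $\dB$ with a uniformly random $z'\in\set{0,1}^\ell$ in place of Alice's last message. In both cases $\dhB$ outputs whatever $\dB$ outputs.

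The case $y=1$ yields a perfect simulation, so the main task is to bound the statistical distance between the real and simulated views conditioned on $w'\neq w$. A standard triangle-inequality argument using the minimum distance $d\geq \delta n$ of $\mathfrak{c}$ shows that whenever the decoder returns $w'\neq w$, the true codeword must satisfy $d_H(\mathfrak{c}(w),\hat\theta\oplus\kappa)\geq \delta n/2$. Define $I':=\set{i\in I_w:\theta_i\neq\hat\theta_i}$; a direct per-position count shows that the expectation of $|I'|$ over $\theta$ equals $d_H(\mathfrak{c}(w),\hat\theta\oplus\kappa)/2\geq \delta n/4$, and since $\theta$ is uniform and independent of $\hat\theta$ at this stage, Hoeffding's inequality gives $|I'|\geq (\delta/4 - \eps)n$ except with negligible probability.

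Applying the benignity bounds to the subset $I'$ yields $\Hmin\bigl(X|_{I'}\bigm| X|_{\bar{I'}}=x|_{\bar{I'}}\bigr)\geq |I'|-\beta n$ together with $\Hmax(\rho_{ER})\leq \beta n$, where $X|_{\bar{I'}}$ contains in particular the bits $X|_{I_w\setminus I'}$ that a benign Bob learns from the matched measurements. Invoking the privacy-amplification theorem (Theorem~\ref{theo:privacy-amplification}) with classical side information $U:=X|_{I_w\setminus I'}$ and quantum register $E$, we conclude that the hashed string $f(X|_{I_w})$ is $\eps'$-close to an $\ell$-bit string that is uniform and independent of $\dB$'s complete view, provided $|I'|-\Hmax(\rho_{ER})-\ell$ grows without bound; with the bounds above this is guaranteed by $\beta<\delta/4$ for the admissible range of $\ell$. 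Masking by $g(w)$ preserves the statistical distance, so a uniform $z'$ is indistinguishable from the honest $z=f(x|_{I_w})\oplus g(w)$, and $out_{\A,\dB}^{\QID}\approxs out_{\hA,\dhB}^{\idF}$ follows.

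The main obstacle is the careful bookkeeping of $\dB$'s combined classical and quantum view: one must verify that the benignity bound on $\Hmin(X|_{I'}\mid\cdot)$ still applies after all classical side information accumulated by $\dB$---including $\kappa$, $\hat\theta$, $\theta$, $f$, and the correct bits $X|_{I_w\setminus I'}$---is absorbed into $U$, and that the concentration of $|I'|$ holds relative to the joint distribution over $\theta$ and $\dB$'s internal randomness. These are handled by observing that $\hat\theta$ is produced before $\theta$ is revealed, and prior to this the reduced density operator on $\dB$'s side is independent of $\theta$, so $\theta$ remains uniform in $\dB$'s view after the preparation phase.
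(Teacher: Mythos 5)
Your proof follows the paper's approach essentially verbatim: the same simulator structure (decode $\hat{\theta}\oplus\kappa$ to extract a password guess $w'$, query $\idF$, complete honestly if the guess is confirmed and substitute a random $z'$ otherwise), the same code-distance and Hoeffding argument bounding the number of mismatched bases inside $I_w$, and the same appeal to benignity followed by privacy amplification. Your choice to invoke benignity on $I' = \set{i \in I_w : \theta_i \neq \hat{\theta}_i}$ rather than on $I_w$ directly is an equivalent reparametrization since $|I'| = d_H(\theta|_{I_w},\hat{\theta}|_{I_w})$; the only small repair needed is that on decoding failure $\dhB$ should submit an arbitrary $w' \in \cal W$ to $\idF$ (as the paper does in its man-in-the-middle simulator) rather than $\perp$, which is not a legal server input.
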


  \begin{proof}
    For any given benign Bob $\dB$, we construct $\dhB$ as
    follows. $\dhB$ runs locally a copy of $\dB$ and simulates Alice's
    actions by running $\A$ faithfully except for the following
    modifications.

    After the preparation phase, $\dhB$ gets $\hat{\theta}$ and
    $\kappa$ from $\dB$. It then computes $w' \in \cal W$ such that
    $\mathfrak{c}(w')$ has minimal Hamming distance to $\hat{\theta}
    \oplus \kappa$. Note that this can be done in polynomial-time by
    assumption on the code. Then, $\dhB$ submits $w'$ as input $w_B$
    to $\idF$ and receives output $y \in \set{0,1}$. If $y = 1$, then
    $\dhB$ faithfully completes $\A$'s simulation using $w'$ as
    $w$. Otherwise, $\dhB$ completes the simulation by using a random
    $z'$ instead of $z$. In the end, $\dhB$ outputs whatever $\dB$
    outputs.

    We need to show that the state output by $\dhB$ (respectively
    $\dB$) above is statistically close to the state output by $\dB$
    when executing $\QID$ with real $\A$. For simpler notation, we use
    $w$ for honest Alice's input $w_A$. Note that if $w' = w$, then
    the simulation of $\A$ is perfect and thus the two states are
    equal. If $w' \neq w$ then the simulation is not perfect, as
    the real $\A$ would use $z= f(x|_{I_{w}}) \oplus g({w})$ instead
    of random $z'$. It thus suffices to argue that $f(x|_{I_{w}})$ is
    statistically close to random and independent of the view of $\dB$
    for any fixed ${w} \neq w'$. Note that this is also what had to be
    proven in~\cite{DFSS07}, but under a different assumption, namely
    that $\dB$ has bounded quantum memory, rather than that he is
    benign. Nevertheless, we can recycle part of the proof.

    Recall from the definition of a benign Bob that the common state
    after the preparation phase is statistically close to a state for
    which it is guaranteed that $\Hmin(X|_I) \geq
    d_H(\theta|_I,\hat{\theta}|_I) - \beta n$ for any $I \subseteq
    \set{1,\ldots,n}$, and $\Hmax(\rho_E) \leq \beta n$. By the
    closeness of these two states, switching from the real state of
    the protocol to the ideal state satisfying these bounds, has only
    a negligible effect on the state output by $\dhB$. Thus, we may
    assume these bounds to hold.

    Recall that $\hat{\theta} \oplus \kappa$ is at Hamming distance at
    most $d/2 $ from $\mathfrak{c}(w')$. Since the distance from here
    to the (distinct) codeword $\mathfrak{c}({w})$ is greater than
    $d$, we see that $\hat{\theta}\oplus\kappa$ is at least $d/2$ away
    from $\mathfrak{c}({w})$. It follows that $ \mathfrak{c}'(w) =
    \mathfrak{c}(w) \oplus \kappa$ has Hamming distance at least $d/2$
    from $\hat{\theta}$. Furthermore, for arbitrary $\eps > 0$ and
    except with negligible probability, the Hamming distance between
    $\theta|_{I_{w}} = \mathfrak{c}'({w})|_{I_w}$ and
    $\hat{\theta}|_{I_{w}}$ is at least $d/4 - \eps n$. Therefore, we
    can conclude that
    $$
    \Hmin(X|_{I_{w}}) \geq d/4 - \eps n - \beta n \
    \text { and } \
    \Hmax(\rho_E) \leq \beta n \, .
    $$ 
    We require $\Hmin(X|_{I_{w}}) - \Hmax(\rho_E) - \ell$ to be
    positive and linear in $n$, which is the case here for parameters
    $$
    \beta n \leq d/8 - (\eps/2) \ n - \ell/2 \, .
    $$ 
    We conclude by privacy amplification that $f(x|_{I_w})$ and
    therewith $z$ are close to random and independent of $E$,
    conditioned on $w \neq w$. This concludes the proof.
  \end{proof}

  The next corollary informally states that, when applying our
  compiler to the basic protocol $\QID$, we obtain a hybrid-secure
  protocol\index{hybrid security} $\compile(\QOT)$. Thus, any
  dishonest Bob needs large quantum computing power \emph{and} large
  quantum storage to launch a successful attack. For completeness, we
  again give $\compile(\QID)$ explicitly in
  Figure~\ref{fig:compiled.id}.

  \begin{corollary}\label{cor:id}
    If $|{\cal W}| \leq 2^{\nu n}$, and if $\mathfrak{c}:{\cal W}
    \rightarrow \set{+,\x}^n$ has minimal distance $d \geq \delta n$
    for $\delta > 0$ and is polynomial-time decodeable, then protocol
    $\compile(\QID)$ is computationally secure against dishonest Bob
    and unconditionally secure against $\gamma (1\!-\!\alpha)$-BQSM
    Bob with $\gamma < \frac{\delta}{4} - \nu$. Correctness and
    unconditional security against dishonest Alice is maintained
    during compilation.
  \end{corollary}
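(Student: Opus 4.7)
The plan is to follow the same three-step template used in the proof of Corollary~\ref{cor:ot}, which already worked for the analogous OT result. First, I would argue that the unconditional BQSM-security of the original protocol $\tt BQSM\text{--}QID$ from~\cite{DFSS07} transfers to our BB84-type variant $\QID$, with essentially the same parameters. The only structural change between the two protocols is that Bob now measures in a uniformly random basis $\hat{\theta}$ and publishes the shift $\kappa$ such that $\hat{\theta} = \mathfrak{c}(w) \oplus \kappa$; from Bob's side the information content is unchanged (he still ends up with outcomes in the positions indexed by $\mathfrak{c}(w)$ after the basis announcement), so the BQSM analysis carries through with the same constants. Concretely, I expect to inherit unconditional security against $\gamma$-BQSM Bob for any $\gamma < \frac{\delta}{4} - \nu$, where the $\nu$-term comes from the size of the password space via the privacy amplification step.

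Second, I would invoke Theorem~\ref{thm:BQSM} to lift the BQSM-security of $\QID$ to $\compile(\QID)$: since the compiler sends a $1/(1-\alpha)$ factor more qubits but leaves the adversary's storage budget unchanged, any $\gamma(1-\alpha)$-BQSM attacker on $\compile(\QID)$ yields a $\gamma$-BQSM attacker on $\QID$, exactly as in the OT case.

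Third, for the computational security against arbitrary dishonest Bob, I would apply Theorem~\ref{thm:id}, which already establishes that $\QID$ is unconditionally secure against $\beta$-benign Bob for any $\beta < \delta/4$, and combine it with the compiler theorem (Theorem~\ref{thm:compiler}). Since $\beta$ is a strictly positive constant under the hypothesis $\delta > 0$, the compiler hypothesis is satisfied, and the $\CO$-step instantiated with the dual-mode commitment scheme produces a protocol that is quantum-computationally secure against any $\dB \in \dBobPoly$ in the CRS-model. Finally, correctness and unconditional security against $\dA$ carry over trivially: correctness is preserved by construction of the compiler (the test aborts only with negligible probability for honest parties), and unconditional security against Alice follows because the only new information Alice receives during the verification phase consists of openings of an unconditionally hiding commitment, which leak nothing information-theoretically.

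The only subtlety I anticipate is checking that the transformation from $\tt BQSM\text{--}QID$ to $\QID$ (introducing the random measurement basis and the shift $\kappa$) really preserves the original proof, in particular that Alice's view in the protocol is unaffected up to publishing $\kappa$ together with $\theta$. This should be a mild bookkeeping exercise rather than a genuine obstacle, since $\kappa$ is deterministic given $\hat{\theta}$ and $\mathfrak{c}(w)$ and Bob's outcomes after his measurement are distributed exactly as in the original scheme.
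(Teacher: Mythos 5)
Your proposal is correct and follows essentially the same three-step template as the paper's own proof: argue that the BQSM-security of $\tt BQSM\text{--}QID$ from~\cite{DFSS07} carries over to $\QID$ because the random-basis-plus-shift-$\kappa$ modification does not change what either party learns, apply Theorem~\ref{thm:BQSM} to lift this to $\compile(\QID)$ against a $\gamma(1\!-\!\alpha)$-BQSM Bob, and then combine Theorem~\ref{thm:id} with the compiler theorem (Theorem~\ref{thm:compiler}) for computational security against arbitrary poly-time Bob, noting that correctness and unconditional security against Alice are preserved by the compiler by construction.
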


  \begin{proof}
    We can show hybrid security by first adapting and connecting the
    results of~\cite{DFSS07} with our BQSM-Theorem~\ref{thm:BQSM}, and
    then combining Theorem~\ref{thm:id} with our compiler theorem
    (Theorem~\ref{thm:compiler}). All definitions preserve correctness
    and unconditional security against $\dA$.

    In more detail, the main difference from $\tt BQSM\text{--}QID$
    of~\cite{DFSS07} to $\QID$ is that $\B$ measures all his qubits in
    the bases corresponding to $c = \mathfrak{c}(w_B)$. Then after
    $\A$'s basis announcement, both players base the remaining
    post-processing on $I_w = \set{i:\theta_i =
    \mathfrak{c}(w)_i}$. In $\QID$ instead, $\B$ measures in
    random bases, computes $\hat{\theta}= \mathfrak{c}(w)
    \oplus \kappa$, and announces $\kappa$ to $\A$. Then after $\A$'s
    basis announcements, the protocol is completed based on positions
    in $I_w := \set{i:\theta=\mathfrak{c'}(w)_i}$ with
    $\mathfrak{c'}(w):= \mathfrak{c}(w) \oplus
    \kappa$. Note that both situations however are equivalent. First,
    the important positions are those $i$ where $\theta_i =
    \hat{\theta}_i$ in both cases. And second, $\kappa$ looks
    completely random and is of no value without the knowledge of
    $\mathfrak{c}({w})$.

    Thus, $\QID$ inherits the BQSM-security of $\tt BQSM\text{--}QID$,
    and we can claim that $\QID$ is unconditionally secure against
    $\gamma$-BQSM Bob for all $\gamma < \frac{\delta}{4} - \nu$. From
    Theorem~\ref{thm:BQSM} unconditional security of $\compile(\QOT)$
    against $\gamma(1\!-\!\alpha)$-BQSM Bob follows. $\QID$ is
    guaranteed by Theorem~\ref{thm:id} to achieve unconditional
    security against a $\beta$-benign Bob for $\beta <
    \frac{\delta}{4}$ and it follows with Theorem~\ref{thm:compiler}
    that $\CO$, instantiated by our dual-mode commitment scheme,
    yields quantum-computational security for $\compile(\QID)$ against
    any $\dB$.
\end{proof}

%%%%%%%%%%%%%%%%%%%%%%%%%%%%%%%%%%%%%%%%%%%%%%%%%%%%%%%%%%%%%%%%%%%%%%%%%
\begin{figure}
  %\normalsize
  \begin{framed}
    \noindent\hspace{-1.5ex} {\sc Protocol $\compile(\QID):$ } \\[-4ex]
    \begin{description}%\setlength{\parskip}{0.5ex}
    \item[{\it Preparation:}] $ $ \\ 
      $\A$ chooses $x \in_R \set{0,1}^n$ and $\theta \in_R
      \set{+,\x}^n$ and sends $\ket{x}_{\theta}$ to~$\B$. $\B$ chooses
      $\hat{\theta} \in_R \set{0,1}^n$ and obtains $\hat{x} \in
      \set{0,1}^n$ by measuring $\ket{x}_{\theta}$ in bases
      $\hat{\theta}$.
    \item[{\it Verification:}]
      \begin{enumerate}
      \item[]
      \item 
      $\B$ sends $c_i :=
      \mathtt{commit_{\pkH}}((\hat{\theta}_i,\hat{x}_i),r_i)$ with
      randomness $r_i$ for all $i = 1,\ldots,m$.
      \item 
      $\A$ sends random $T \subset \{1,\ldots,m \}$ with $|T| = \alpha
      m$. $\B$ opens $c_i \ \forall \ i \in T$, and $\A$ checks that
      the openings were correct and that $x_i = \hat{x}_i$ whenever
      $\theta_i = \hat{\theta}_i$. If all tests are passed, $\A$
      accepts. Otherwise, she rejects and aborts.
      \item 
      $\A$ and $\B$ restrict $x, \theta$ and $\hat{x}, \hat{\theta}$,
      respectively, to the remaining $n$ positions $i \in \bar{T}$.
      \end{enumerate}
    \item[{\it Post-processing:}] 
      \begin{enumerate}
	\item[]
	\item
	  $\B$ computes a string $\kappa\in \{ +, \times \}^n$ such
	  that $\hat{\theta}= \mathfrak{c}(w) \oplus \kappa$,
	  where we think of $\+$ as 0 and $\x$ as 1 so that $\oplus$
	  makes sense. He sends $\kappa$ to $\A$ and we define
	  $\mathfrak{c'}(w):= \mathfrak{c}(w) \oplus
	  \kappa$.
	\item 
	  $\A$ sends $\theta$ and $f \in_R \mathcal{F}$ to $\B$. Both
	  compute $I_w := \set{i:
	  \theta_i=\mathfrak{c'}(w)_i}$.
	\item 
	  $\B$ sends $g \in_R \mathcal{G}$.
	\item 
	  $\A$ sends $z:= f(x|_{I_w}) \oplus g(w)$
	  to $\B$.
	\item $\B$ accepts if and only if
	$z=f(\hat{x}|_{I_w}) \oplus g(w)$.
      \end{enumerate}
    \end{description}
    \vspace{-1.5ex}
  \end{framed}
  \vspace{-1.5ex}
  %\small
  \caption{Improved Protocol for Password-Based Identification}
  \label{fig:compiled.id} 
\end{figure}
%%%%%%%%%%%%%%%%%%%%%%%%%%%%%%%%%%%%%%%%%%%%%%%%%%%%%%%%%%%%%%%%%%%%%%%%%

%%% MITM %%%%%%%%%%%%%%%%%%%%%%%%%%%%%%%%%%%%%%%%%%%%%%%%%%%%%%%%%%%%%%%%%

\section{Man-in-the-Middle Security for Identification}
\label{sec:hybrid.security.mitm}
\index{player!man-in-the-middle}

In a man-in-the-middle attack, we assume an external adversary who
attacks an execution of the protocol with honest communicating
parties, while having full control over the classical and the quantum
communication.

%%%%%%%%%%%%%%%%%%%%%%%%%%%%%%%%%%%%%%%%%%%%%%%%%%%%%%%%%%%%%%%%%%%%%%%%%%

\subsection{Motivation}
\label{sec:hybrid.security.mitm.motivation}

  The compiled quantum protocols from
  Sections~\ref{sec:hybrid.security.ot} and
  \ref{sec:hybrid.security.id} protect against (arbitrary) dishonest
  Alice and against (computationally or quantum-storage bounded)
  dishonest Bob. However, in particular in the context of
  identification, it is also important to protect against a
  man-in-the-middle attacker Eve ($\E$). Both, $\QID$ and
  $\compile(\QID)$, are insecure in this model. Eve might measure one
  of the transmitted qubits, say, in the $\+$-basis, and this way
  learn information on the basis $\hat{\theta}_i$ used by $\B$, and
  thus on the password $w$, simply by observing if $\B$ accepts or
  rejects in the end.\footnote{Note that this attack does not
  immediately apply to the scheme sketched in the previous section,
  but similar, however more sophisticated, attacks may still apply.}

  In~\cite{DFSS07} it was shown how to enhance $\tt BQSM\text{--}QID$
  in order to obtain security (in the bounded-quantum-storage model)
  against man-in-the-middle attacks. The very same techniques can also
  be used to obtain {\em hybrid security} against man-in-the-middle
  attacks for $\compile(\QID)$. The techniques from~\cite{DFSS07}
  consist of the following two add-on's to the original protocol.
  \begin{enumerate}
    \item 
      A test on a random subset of qubits in order to detect
      disturbance of the quantum communication. 
    \item 
      Authentication of the classical communication. 
  \end{enumerate}
  First note that $\compile(\QID)$ already does such a check as
  required in Point~(1.), namely in the verification phase, so this is
  already taken care of here. Point~(2.)~requires that Alice and Bob,
  in addition to the password, share a high-entropy key $k$ that could
  be stored, e.g.\ on a smart-card. This key will be used for a
  so-called {\em extractor MAC}. Besides being a MAC, i.e.\ a message
  authentication code, such a construction has the additional property
  that it also acts as an extractor. This means that if the message to
  be authenticated has high enough min-entropy, then the key-tag pair
  is close to randomly and independently distributed. As a
  consequence, the tag gives away (nearly) no information on $k$, and
  thus, $k$ can be re-used in the next execution of the
  protocol.\footnote{This is in sharp contrast to the standard way of
  authenticating the classical communication, where the authentication
  key can only be used a bounded number of times.} For further
  details, we refer to~\cite{DFSS07,DKRS06}.

  More specifically, in order to obtain hybrid security against
  man-in-the-middle attacks for $\compile(\QID)$, $\A$ will, in her
  last move of the protocol, use the extractor MAC to compute an
  authentication tag on all the classical messages exchanged plus the
  string $x|_{I_w}$. This, together with the test of a random subset,
  prevents Eve from interfering with the (classical and quantum)
  communication without being detected, and security against Eve
  essentially follows from the security against impersonation
  attacks. Note that including the $x|_{I_w}$ into the authenticated
  message guarantees the necessary min-entropy, and as such the
  re-usability of the key $k$.

  We emphasize that the protocol is still secure against impersonation
  attacks (i.e.\ dishonest Alice or Bob), even if the adversary knows
  $k$, but with slightly weaker parameters due to the ``entropy-loss''
  within $x|_{I_w}$, caused by the additional information for
  authentication and private error correction that is now available.

%%%%%%%%%%%%%%%%%%%%%%%%%%%%%%%%%%%%%%%%%%%%%%%%%%%%%%%%%%%%%%%%%%%%%%%%%%

\subsection{The Set-Up}
\label{sec:hybrid.security.mitm.setup}

  In addition to the previous setting in
  Section~\ref{sec:hybrid.security.id}, we now have the following
  assumptions. Let $MAC^*: \mathcal{K} \times \mathcal{M} \rightarrow
  \{0, 1\}^\ell$ be the extractor MAC with arbitrary key space
  $\mathcal{K}$, message space $\mathcal{M}$ and error probability
  $2^{-\ell}$. Its extractor property guarantees that for any message
  $M$ and quantum state $E$ (which may depend on $M$), the tag $T =
  MAC^*(K,M)$ of $M$ is such that $\rho_{T K E}$ is
  $2^{-(\Hmin(M)-\Hmax(\rho_E)-\ell)/2}$-close to $\frac{1}{2^\ell}
  \mathbb{I} \otimes \rho_K \otimes \rho_E$. Recall that
  $\mathfrak{c}:{\cal W} \rightarrow \set{+,\x}^n$ is the encoding
  function of a binary code with minimal distance $d$, and we have
  strongly universal-2 classes of hash-functions $\mathcal{F}:
  \{0,1\}^n \rightarrow \{ 0,1 \}^{\ell}$ and $\mathcal{G}:
  \mathcal{W} \rightarrow \{ 0,1 \}^{\ell}$.

  In order to do error correction, let $\set{syn_j}_{j \in \cal J}$ be
  the family of syndrome functions\footnote{Note that we have the
  following convention: For a bit string $y$ of arbitrary length,
  $syn_j(y)$ is to be understood as $syn_j(y0\cdots0)$ with enough
  padded zeros if its bit length is smaller than $n'$, and as
  $\big(syn_j(y'),y''\big)$, where $y'$ consist of the first $n'$ and
  $y''$ of the remaining bits of $y$, if its bit length is bigger than
  $n'$.}, corresponding to a family ${\cal C} = \set{C_j}_{j \in \cal
  J}$ of linear error correcting codes of size $n' = n/2$, where $n =
  (1-\alpha)m$. We require the property that any $C_j$ allows to
  efficiently correct a $\phi''$-fraction of errors for some constant
  $\phi'' > 0$. For a random $j \in \cal J$, the syndrome of a string
  with $t$ bits of min-entropy is $2^{-\frac14(t-2q)}$-close to
  uniform given $j$ and any quantum state with max-entropy at most
  $q$. We refer to~\cite{DS05,DFSS07,FS08} for the existence of such
  families and example constructions. Protocol $\QIDplus$ can
  tolerate a noisy quantum communication up to any error rate $\phi <
  \phi''$. We stress that for security against man-in-the-middle
  attacks, error correction with $\phi'' > 0$ needs to be done even if
  we assume perfect quantum communication (with $\phi = 0$), as should
  become clear from the analysis of the protocol given below. Finally,
  we let $\phi'$ be a constant such that $\phi < \phi' < \phi''$.\\

  The ideal functionality $\F_\IDplus$ is given in
  Figure~\ref{fig:ideal.functionality.mitm}. The following definition
  captures unconditional security against a man-in-the-middle
  attacker, where $\E$ gets classical $W'$ and quantum state $E$ as
  input and both honest players $\A$ and $\B$ get classical input $W$
  and $K$. The joint state is then of the form
  $$ 
  \rho_{K W W' E|W' \neq W} = \rho_K \otimes \rho_{W\leftrightarrow
  W'\leftrightarrow E| W' \neq W} \, .
  $$ 
  Note that we require that the adversary's quantum register $E$ is
  correlated with the honest players' parts only via her classical
  input $W'$, conditioned on $W \neq W'$.
  \begin{definition}[Unconditional security against a Man-in-the-middle]
    \label{def:unconditional.Eve} 
    A protocol \\ $\Pi$ implements an ideal classical functionality $\F$
    unconditionally securely against a man-in-the-middle attacker, if
    for any real-world adversary $\E$, there exists an ideal-world
    adversary $\dhE$, such that, for any input state as specified
    above, it holds that the outputs in the real and the ideal world
    are statistically indistinguishable, i.e.,
    $$
    out_{\A,\B,\E}^\Pi \approxs out_{\hA,\hB,\dhE}^\F \, .
    $$
  \end{definition}

%%%%%%%%%%%%%%%%%%%%%%%%%%%%%%%%%%%%%%%%%%%%%%%%%%%%%%%%%%%%%%%%%%%%%%%%%%
\begin{figure}
  %\normalsize
  \begin{framed}
    \noindent\hspace{-1.5ex}{\sc Functionality $\F_{\IDplus}$ :}\\ The
    ideal functionality $\F_{\IDplus}$ receives pairs of strings
    $(w_A,k_A)$ and $(w_B,k_B)$ from honest Alice and Bob, and a
    string $w_E$ from Eve, where $w_A,w_B \in \cal W$ and $k \in
    \mathcal{K}$. If $w_E = w_A$, then $\F_{\IDplus}$ sends $({\sf
    correct},k_A)$ to Eve. Otherwise, $\F_{\IDplus}$ sends ${\sf
    incorrect}$. Last, Eve is asked to input an ``override bit'' $d$,
    and $\F_{\IDplus}$ outputs the bit $(w_A \eqq w_B) \wedge d$ to
    Bob and to Eve.
    \vspace{-1ex}
  \end{framed}
  \vspace{-1.5ex}
  %\small
  \caption{The Ideal Functionality with Man-in-the-Middle Security.}
  \label{fig:ideal.functionality.mitm}
\end{figure}
%%%%%%%%%%%%%%%%%%%%%%%%%%%%%%%%%%%%%%%%%%%%%%%%%%%%%%%%%%%%%%%%%%%%%%%%%%

  Computational security against a man-in-the-middle is defined as
  follows. For a given value of the security parameter $m$, the
  common reference string $\crs$ is chosen at first. The
  polynomial-size input sampler takes as input $m$ and $\crs$ and
  samples an input state of the form
  $$
  \rho_{W_A K_A W_B K_B Z E} = \rho_{\MC{W_A K_A W_B K_B}{Z}{E}}
  \, ,
  $$ 
  where honest Alice gets as input password $W_A$, honest Bob gets
  $W_B$, and Eve's quantum register $E$ is correlated with the honest
  player's part only via her classical input $Z$. In addition to their
  passwords $W_A,W_B$, the honest players are given high-entropy keys
  $K_A,K_B$. We restrict the input sampler to choose $K_A$ uniformly
  at random from $\mathcal{K}$ and guarantee that $K_A=K_B$ whenever
  $W_A = W_B$.
  \begin{definition}[Computational Security against a Man-in-the-middle] 
    \label{def:computational.Eve}
    A protocol $\\ \Pi$ implements an ideal classical functionality $\F$
    computationally securely against a man-in-the-middle attacker, if
    for any poly-time real-world adversary $\E$ who has access to the
    common reference string $\crs$, there exists a poly-time
    ideal-world adversary $\dhE$, not using $\crs$, such that for any
    input sampler as described above, it holds that the outputs in the
    real and the ideal world are quantum-computationally
    indistinguishable, i.e., 
    $$
    out_{\A,\B,\E}^\Pi \approxq out_{\hA,\hB,\dhE}^\F \, .
    $$
  \end{definition}

%%%%%%%%%%%%%%%%%%%%%%%%%%%%%%%%%%%%%%%%%%%%%%%%%%%%%%%%%%%%%%%%%%%%%%%%%%

\subsection{The Protocol}
\label{sec:hybrid.security.mitm.protocol}

  The extended and compiled protocol $\QIDplus$ is depicted in
  Figure~\ref{fig:qid.plus}. Corollary~\ref{cor:Eve} states hybrid
  security against man-in-the-middle attacks, such that a
  computationally or quantum-storage bounded Eve can do no better than
  trying to guess the password. If the guess is incorrect, she learns
  (essentially) nothing.

%%%%%%%%%%%%%%%%%%%%%%%%%%%%%%%%%%%%%%%%%%%%%%%%%%%%%%%%%%%%%%%%%%%%%%%%%%
  \begin{figure}
    %\normalsize
    \begin{framed}
      \noindent\hspace{-1.5ex} {\sc Protocol $\QIDplus$} \\[-4ex]
      \begin{description}\setlength{\parskip}{0.5ex}
      \item[{\it Preparation:}] $ $\\
	$\A$ chooses $x \in_R \zo^m$ and $\theta \in_R \set{+,\x}^m$
	and sends $\ket{x}_{\theta}$ to~$\B$. $\B$ chooses
	$\hat{\theta} \in_R \set{0,1}^m$ and obtains $\hat{x} \in
	\set{0,1}^m$ by measuring $\ket{x}_{\theta}$ in bases
	$\hat{\theta}$.
      \item[{\it Verification:}] 
	\begin{enumerate}
	\item[]
 	\item 
      	  $\B$ sends $c_i :=
	  \commitk{(\hat{\theta}_i,\hat{x}_i)}{r_i}{\pkH}$ with
	  randomness $r_i$ for $i = 1,\ldots,m$.
	\item\label{step:check} 
	  $\A$ sends random $T \subset \set{1,\ldots,m}$ with $|T| =
	  \alpha m$. $\B$ opens $c_i \ \forall \ i \in T$, and $\A$ checks
	  that the openings were correct and that $x_i = \hat{x}_i$
	  whenever $\theta_i = \hat{\theta}_i$. $\A$ accepts, if this
	  is the case for all but a $\phi^\prime$-fraction of the
	  tested bits. Otherwise, she rejects and aborts.
	\item 
	  $\A$ and $\B$ restrict $x,\theta$ and
	  $\hat{\theta},\hat{x}$,respectively, to the remaining $n$
	  positions $i \in \bar{T}$.
	\end{enumerate}
      \item[{\it Post-processing:}]
	\begin{enumerate}
	\item[]
	\item
	  $\B$ computes a string $\kappa\in \{ +, \times \}^n$ such
	  that $\hat{\theta}= \mathfrak{c}(w) \oplus \kappa$. He sends
	  $\kappa$ to $\A$ and we define $\mathfrak{c'}(w):=
	  \mathfrak{c}(w) \oplus \kappa$.
	\item 
      	  $\A$ sends $\theta$, $f \in_R \mathcal{F}$, $j \in_R \cal
	  J$, and $syn=syn_j(x|_{I_w})$, where $I_w := \{ i: \theta_i
	  = \mathfrak{c'}(w)_i \}$.
	\item 
	  $\B$ sends $g \in_R \mathcal{G}$.
	\item 
      	  $\A$ sends $z:= f(x|_{I_w}) \oplus g(w)$ to
	  $\B$. Additionally, she sends the authentication tag of all
	  previously transmitted classical information, i.e.~$tag^* :=
	  MAC^*_k (\theta, j, syn, f, g, z, \kappa, T, test,
	  x|_{I_w})$ with $test = \{(c_i, \hat{x}_i, \hat{\theta}_i,
	  r_i)\}_{i \in T}$.
	\item 
      	  $\B$ uses $syn$ to correct the errors within
	  $\hat{x}|_{I_w}$, and he accepts if and only if $tag^*$
	  verifies correctly and $z = f(\hat{x}|_{I_w}) \oplus g(w)$.
	\end{enumerate}
      \end{description}
      \vspace{-1.5ex}
    \end{framed}
    \vspace{-1.5ex}
    %\small
    \caption{Extended and Compiled Protocol for Password-Based
    Identification.}
    \label{fig:qid.plus} 
  \end{figure}
%%%%%%%%%%%%%%%%%%%%%%%%%%%%%%%%%%%%%%%%%%%%%%%%%%%%%%%%%%%%%%%%%%%%%%%%%%

  \begin{corollary}
    \label{cor:Eve}  
    Assume that $|{\cal W}| \leq 2^{\nu n}$ and that
    $\mathfrak{c}:{\cal W} \rightarrow \set{+,\x}^n$ has minimal
    distance $d \geq \delta n$ for $\delta > 0$ and is polynomial-time
    decodeable. Then, protocol $\QIDplus$ is computationally secure
    against Eve with $\beta < \frac{\delta}{6}$, and unconditionally
    secure against $\gamma (1\!-\!\alpha)$-BQSM Eve with $\gamma <
    \frac{\delta}{2} - \nu - 2\ell$.
  \end{corollary}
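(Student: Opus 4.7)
The plan is to proceed in the same spirit as the proof of Corollary~\ref{cor:id}, but additionally absorbing the man-in-the-middle extensions (test subset and extractor MAC) via a reduction to the two impersonation settings already analyzed. Since $\QIDplus$ already incorporates both the $\CO$-compilation and the add-ons from~\cite{DFSS07}, I would first establish a ``benign Eve'' statement for the underlying BB84-type protocol with MAC-authenticated classical messages and syndrome-based error correction, and then invoke Theorem~\ref{thm:compiler} and Theorem~\ref{thm:BQSM} to lift it to the computational and BQSM settings.

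First I would argue that, thanks to the test step in the verification phase, any Eve-induced disturbance on more than a $\phi'$-fraction of the surviving qubits is detected except with negligible probability, so we may condition on Eve having introduced at most a $\phi'$-fraction of errors on $\bar T$; this reduces her quantum tampering to a channel whose effect on $x|_{I_w}$ is essentially a bounded-error perturbation, which the syndrome $syn_j$ corrects (recall $\phi' < \phi''$). Next, I would use that the extractor MAC with key $k$ authenticates the entire classical transcript together with $x|_{I_w}$: any modification of a classical message by Eve is caught unless she forges a valid $tag^*$, which happens with probability at most $2^{-\ell}$ by the MAC property, and the extractor property ensures that conditioned on $W'\neq W$ the key $k$ remains (statistically) close to uniform and independent of Eve's view, so re-usability of $k$ follows. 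Consequently, a successful Eve effectively behaves either as a dishonest Alice running $\QID$ against $\B$ or as a dishonest Bob running $\QID$ against $\A$, depending on which direction she forwards/injects messages, giving an ideal-world simulator $\dhE$ that on honest inputs either queries $\F_\IDplus$ with the guessed $w_E$ (receiving $k_A$ upon a correct guess) or simulates with a random tag.

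To conclude the parameter bounds, I would redo the privacy amplification / min-entropy accounting of the proof of Theorem~\ref{thm:id}, now deducting also $\Hmax$ of the side-information revealed by the syndrome ($\leq \log|C_j|$-many bits of leakage, corresponding to $\approx h(\phi'')n$) and by the MAC tag (an $\ell$-bit extractor output). With the $\hat\theta$ vs.\ $\mathfrak{c}(w)$ distance argument from Theorem~\ref{thm:id} one still gets a linear lower bound $\Hmin(X|_{I_w})\geq d/4 - O(\eps)n - \beta n$, but the additional deductions for $syn$, $tag^*$ and $\Hmax(\rho_E)\leq \beta n$ tighten the admissible $\beta$ from $\delta/4$ to $\delta/6$; the BQSM version gives $\gamma < \delta/2 - \nu - 2\ell$ after analogously accounting for the $2\ell$-bit overhead (hash output $z$ plus tag $tag^*$) and the $\nu n$ bits of password entropy, exactly as in~\cite[Sect.~5]{DFSS07}. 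Once benign/BQSM security is in place for the uncompiled variant, Theorems~\ref{thm:compiler} and~\ref{thm:BQSM} deliver quantum-computational security against any poly-time Eve in the CRS-model and unconditional security against $\gamma(1-\alpha)$-BQSM Eve, respectively.

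The main obstacle I anticipate is a subtle one: rigorously separating Eve's classical forwarding strategy from her joint quantum tampering, so that the extractor-MAC argument and the $\CO$-benignity argument can be composed cleanly. Concretely, because Eve sees the commitments before the test subset $T$ is announced and also controls the quantum channel, one must verify that the $\tilde\rho_{\Test AE}$ approximation of Lemma~\ref{lemma:ideal.state} still applies when the ``environment'' $E$ is Eve's register correlated with both $\A$'s and $\B$'s local states; this is where the conditional independence condition $\rho_{W\leftrightarrow W'\leftrightarrow E \mid W'\neq W}$ in Definition~\ref{def:unconditional.Eve} is crucial, and where care is needed to ensure that the extractor-MAC's key $k$ can indeed be reused across sessions without degrading the benign-Eve bounds.
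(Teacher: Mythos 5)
Your proposal collects the right ingredients---the test set detecting Eve's quantum tampering, the extractor MAC authenticating the classical transcript, the extractor property giving re-usability of $k$, and the min-entropy accounting with a $2\ell$ deduction for $tag^*$ and $z$---but the top-level plan to ``establish a benign-Eve statement for the underlying BB84-type protocol $\ldots$ and then invoke Theorem~\ref{thm:compiler} and Theorem~\ref{thm:BQSM}'' does not work and is not what the paper does. Theorems~\ref{thm:compiler} and~\ref{thm:BQSM} are both stated for a dishonest \emph{Bob} in a two-party BB84-type protocol: they take a guarantee against a benign (or BQSM-bounded) receiver and upgrade it, and they rely on Bob being the party who commits. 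In the man-in-the-middle setting both $\A$ and $\B$ are honest; Eve is a third party who sits on the channel, sees the commitments in transit, and can tamper with qubits and classical messages. There is no way to cast Eve as the ``Bob'' of the compiler theorem, and the paper never defines a ``benign Eve.'' Attempting this would require reformulating and re-proving the compiler theorem in a three-party setting, which your proposal does not do and which is itself a nontrivial step.

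The paper instead proves Corollary~\ref{cor:Eve} directly, splitting it into Propositions~\ref{pro:computational.Eve} and~\ref{pro:unconditional.Eve}. For the computational part, the proof is a hybrid argument on the \emph{whole} protocol $\QIDplus$: one first lets $\B$ ``magically'' learn whether a classical message was tampered with and whether $w_A = w_B$ (justified by MAC security), then removes $\B$'s verification of $z$ (justified by Lemma~\ref{lemma:ideal.state} applied to the test phase, which shows the syndrome will correct all surviving errors), then replaces $\B$'s password with a dummy, and finally switches to a binding commitment key. The resulting process is simulated by an $\dhE$ that \emph{simultaneously plays both} $\A$'s and the modified $\B^*$'s roles toward Eve: it extracts a candidate password from the observed $\kappa$ (reusing the decoding argument of Theorem~\ref{thm:id}), queries $\F_\IDplus$, and in the failure case sends random $syn'$, $tag^{*\prime}$, $z'$ whose indistinguishability is argued from the code-family property, the extractor-MAC property, and privacy amplification. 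The BQSM proposition is proved separately with its own case distinction (few vs.\ many errors), again \emph{not} by invoking Theorem~\ref{thm:BQSM}. Your observation that Lemma~\ref{lemma:ideal.state} must be checked in the man-in-the-middle setting is correct, and the paper does exactly this---it is a lemma about the quantum state on Alice's registers and the commitment-fixed $(\hat\theta,\hat x)$, and holds here with $E$ interpreted as Eve's register. So the individual steps you list would reassemble into the paper's argument, but only once you replace the black-box appeal to the compiler theorems by the direct hybrid reduction.
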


  We split the proof of Corollary~\ref{cor:Eve} into two parts. First,
  we show computational security in
  Proposition~\ref{pro:computational.Eve}, and second, we show
  unconditional security in the bounded-quantum-storage model in
  Proposition~\ref{pro:unconditional.Eve}.

  \begin{proposition}
    \label{pro:computational.Eve}
    Let $\mathfrak{c}:{\cal W} \rightarrow \set{+,\x}^n$ have minimal
    distance $d \geq \delta n$ and be polynomial-time decodeable.
    Then, $\QIDplus$ is computationally secure against Eve with $\beta
    < \frac{\delta}{6}$, according to
    Definition~\ref{def:computational.Eve}.
  \end{proposition}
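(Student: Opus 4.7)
My plan is to follow the same high-level strategy used for the impersonation proofs (Theorem~\ref{thm:id} and Corollary~\ref{cor:id}), but with an extra layer coming from the extractor-MAC that reduces the man-in-the-middle setting to (two instances of) an impersonation analysis. Concretely, I would construct a poly-time simulator $\dhE$ that internally runs a copy of $\E$ together with simulated honest parties $\tilde{\A}$ and $\tilde{\B}$, extracts from the simulation a single guessed password $w_E$, submits it to $\F_\IDplus$, and uses the answer to patch up the conversation between $\E$ and $\tilde{\B}$ on the right-hand side.

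Using the CRS trick from Theorem~\ref{thm:compiler}, I would first replace the unconditionally hiding key $\pkH$ by a binding key $\pkB$ generated together with a trapdoor $\sk$; by the quantum-computational indistinguishability $\pkH \approxq \pkB$ this changes the output only negligibly, and it equips $\dhE$ with the ability to extract $\hat\theta,\hat x$ from $\E$'s left-side commitments, in particular rendering the ``Bob-side'' interaction $\A \leftrightarrow \E$ formally identical to the benign-Bob setting analyzed in Theorem~\ref{thm:id}. On the Alice-side interaction $\E \leftrightarrow \B$, $\dhE$ simulates $\tilde{\B}$ honestly using a freshly chosen dummy password; this is sound because $\B$'s messages $\kappa$ and $g$ carry (information-theoretically) no information about his password until the very final acceptance bit.

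The heart of the argument is then to exploit the MAC. Conditioned on the extracted left-side basis string $\hat\theta$ (equivalently, on $\E$'s guess $w_E$ derived by the same minimum-distance decoding as in the proof of Theorem~\ref{thm:id}), I would argue: (i) if $\E$ forwards the classical messages on both sides faithfully, then by the analysis of $\compile(\QID)$ her view on the untested positions has the entropy bounds of Corollary~\ref{corSerge}, so privacy amplification via $f$ and $syn$ leaves $x|_{I_w}$ with enough min-entropy that the extractor-MAC tag $tag^*$ is $2^{-\Omega(n)}$-close to uniform and independent of the key $k$; (ii) if $\E$ alters any classical message, then she must forge a valid $tag^*$ under an unused key, and by the MAC's unforgeability this succeeds only with probability $2^{-\ell}$; (iii) if $\E$ disturbs too many qubits, she is caught by the test in Step~(\ref{step:check}.)~of the verification phase with overwhelming probability, because honest $\A$ only tolerates a $\phi'$-fraction of errors and $\phi' < \phi''$. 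In all three branches the real execution is statistically close to $\dhE$'s simulation using $\F_\IDplus$ with guess $w_E$ and override bit determined by whether acceptance was reached; the parameter constraint $\beta < \delta/6$ arises exactly from carrying a min-entropy budget through privacy amplification, syndrome leakage, and the MAC extractor simultaneously, analogously to how $\beta<\delta/4$ arose in the impersonation analysis.

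The main obstacle I expect is branch~(i): carefully bookkeeping the min-entropy of $x|_{I_w}$ from the viewpoint of $\E$'s quantum register $E$ after she has seen the syndrome $syn$, the hash descriptions $f,g$, and the tag $tag^*$, and showing that the MAC's extractor property still produces a near-uniform tag despite the $\log|\mathcal{J}|+\ell$ bits already leaked to $\E$. This is where the tight balance among $\delta$, $\beta$, $\nu$, and $\ell$ comes in, and where I would need to invoke both Theorem~\ref{theo:privacy-amplification} (for $z$) and the extractor-MAC bound (for $tag^*$) with respect to the smoothed state guaranteed by $\beta$-benignity. Composing the two indistinguishabilities (CRS switch plus the statistical argument above) then yields the claimed quantum-computational security against $\E$ in the sense of Definition~\ref{def:computational.Eve}.
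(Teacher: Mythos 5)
Your proposal takes essentially the same route as the paper's proof: a CRS switch to an extractable binding key, a dummy input for the honest server, extraction of Eve's password guess by minimum-distance decoding of $\hat\theta\oplus\kappa$, a reduction of the classical-modification case to extractor-MAC unforgeability, and a min-entropy/privacy-amplification accounting for the residual case where $w_E \neq w_A$ that must cover $syn$, $tag^*$, and $z$ simultaneously. The only cosmetic difference is presentational: the paper structures the argument as a chain of negligible game-hops on $\B$'s acceptance predicate (invoking Lemma~\ref{lemma:ideal.state} to drop the $z$-check) before building the simulator, whereas you organize the same facts as a case analysis over $\E$'s behavior; the two are equivalent and yield the same $\dhE$.
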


  \begin{proof}
    We start with the real-life execution of $\QIDplus$ with honest
    $\A$ and $\B$ with respective inputs $(w_A,k_A)$ and $(w_B,k_B)$,
    and a man-in-the-middle attacker $\E$. We then modify it step by
    step without (essentially) changing the common output state, such
    that in the end we have a simulation of the protocol as required.

    First, we change the action of $\B$ in that we assume that $\B$
    learns in the final step of $\QIDplus$ ``by magic'' whether one of
    the classical messages communicated was modified by $\E$ and
    whether $w_A = w_B$ or not. He accepts the execution if none of
    the messages was modified, if $w_A = w_B$, and if $z$ verifies
    correctly. This changes the outcome of the protocol only by a
    negligible amount. Indeed, if $w_A = w_B$, the restriction on the
    input sampler guarantees that $k_A = k_B$ and the claim follows
    from the security of the MAC. If $w_A \neq w_B$, then $\B$ rejects
    anyway in both versions, except with negligible probability.

    Next, we further change the action of $\B$ in that $\B$ accepts
    the execution in the final step of $\QIDplus$ if none of the
    messages was modified and if $w_A = w_B$ (without verifying
    $z$). We argue that this modification does not change the common
    output state, up to a negligible amount. Note that by
    Lemma~\ref{lemma:ideal.state}, we may replace the real state
    consisting of the qubits obtained by $\A$ and the choice of $T$
    and $T' = \Set{i \in T}{\theta_i = \hat{\theta}_i}$ by a
    negligibly close ideal state (with the same $T$ and $T'$) such
    that the error rate within $T'$, i.e.\ the fraction of $i \in T'$
    with $x_i \neq \hat{x}_i$, gives an exact upper bound on the error
    rate outside of $T$. Thus, it follows that if $\A$ does not reject
    during verification, then $\B$ will recover the correct string
    $x|_{I_w}$ in the final step (except with negligible
    probability) and correctly verify $z$ if and only if $w_A = w_B$.

    The next modification is that $\B$ runs the modified protocol with
    some ``dummy input" instead of his real input $w_B$, but he still
    accepts only if $w_A$ equals his real input $w_B$ and no
    transmitted message was modified by $\E$. Since $\B$ does not
    reveal any information on his input before the last step, this
    modification does not change the common output state at all. We
    write $\B^*$ for this modified $\B$.

    As last modification, we choose an unconditionally binding key
    $\pkB$ as reference string, together with the decryption key
    $\sk$. The new common output state is computationally
    indistinguishable from the previous one by assumption on the
    commitment keys.

    Now, this modified protocol can be simulated by an ideal-life
    adversary $\dhE$ via the following two arguments.\\

    \noindent(1) $\dhE$ can simulate $\A$ as $\dhB$ does in
    the proof of security against dishonest Bob (see
    Theorem~\ref{thm:id}) by sampling unconditionally binding key
    $\pkB$, such that $\dhE$ also knows the decryption key $\sk$,
    extracting $w_B$ from $\B$'s commitments, and inquiring the ideal
    functionality $\F_{\IDplus}$. In more detail, upon receiving
    $\kappa$ from $\B$, $\dhE$ attempts to decode the string
    $\hat{\theta} \oplus \kappa$. If this is successful (a codeword at
    distance at most $d/2$ is returned), it computes the password $w'$
    such that $\mathfrak{c}(w')$ is the decoded codeword. If decoding
    fails, $\dhE$ chooses an arbitrary $w'$. It then sends $w'$ to
    $\F_{\IDplus}$.

    If the functionality replies by $({\sf correct},k_A)$, then $\dhE$
    completes the simulation by following the protocol with inputs $w'
    = w_A$ and $k_A$. In that case, the simulation is perfect and the
    final outputs are equal.
    
    In case the extracted password $w'$ does not match $w_A$, $\dhE$
    follows the protocol but uses random values $syn'$, $tag^{*'}$ and
    $z'$. Note that the real $\A$ would use $z= f(x|_{I_{w_{A}}})
    \oplus g(w_A)$ instead of random $z'$. Thus, we have to argue that
    $f(x|_{I_{w_{A}}})$ is statistically close to random and
    independent of the view of $\E$ (for any fixed $w'$). Recall that
    the common state after the verification phase is statistically
    close to a state for which it is guaranteed that $\Hmin(X|_I) \geq
    d_H(\theta|_I, \hat{\theta}|_I) - \beta n$ for any $I \subseteq
    \set{1,\ldots,n}$, and $\Hmax(\rho_E) \leq \beta n$. Hence,
    switching between these two states has only a negligible effect on
    the final output, and thus we may assume these bounds also to hold
    here. By the way $w'$ was chosen, it is guaranteed that
    $\hat{\theta} \oplus \kappa$ has Hamming distance at most $d/2$
    from $\mathfrak{c}(w')$, which is at distance greater than $d$
    from $\mathfrak{c}(w)$. Thus, the Hamming distance between
    $\hat{\theta} \oplus \kappa$ and $\mathfrak{c}(w)$ is at least
    $d/2$, except with negligible probability. The same holds if
    decoding fails, since $\hat{\theta} \oplus \kappa$ is at least
    $d/2$ away from any codeword and $\mathfrak{c}(w) \oplus \kappa$
    has distance at least $d/2$ from $\hat{\theta}$. It follows that
    the Hamming distance between $\theta|_{I_{w_A}}$ and
    $\hat{\theta}|_{I_{w_B}}$ is at least $(d/4 - \eps)n$. Therefore,
    we can conclude that $\Hmin(X|_{I_{w_A}}) \geq d/4 - \eps n -
    \beta n$.

    Finally, note that by the property of the code family as described
    previously, it follows that if $\Hmin(X|_{I_{w_A}}) > 2
    \Hmax(\rho_E)$ with a linear gap, then $syn$ is close to uniformly
    distributed from $\E$'s point of view. Then, from the extractor
    property of $MAC^*$, it follows that $tag^*$ is essentially random
    and independent of $k, f, test, T, w, w', \theta$ and $E$,
    conditioned on $w \neq w'$. And further, privacy amplification
    guarantees that $f(x|_{I_{w_A}})$ is uniformly distributed and
    thus $z$ is close to random and independent of $E$ (conditioned on
    $w_A \neq w'$). Now, given the two $\ell$-bit strings $tag^*$ and
    $z$, the bound on the min-entropy is slightly reduced by $2\ell$.\\

    \noindent(2) $\dhE$ can also simulate modified $\B^*$ up to before
    the final step, as $\B^*$ uses a ``dummy input". If simulated
    $\A$ rejects in the verification, or $\E$ has modified one of the
    communicated messages, then $\dhE$ sends ``override bit" $d = 0$
    to the ideal functionality. Otherwise, it sends $d = 1$ and
    therewith learns, whether $w_A = w_B$ or not. In both cases,
    $\dhE$ can easily complete the simulation for $\B^*$. The claim
    follows.
  \end{proof}

  \begin{proposition}
  \label{pro:unconditional.Eve}
    If $|{\cal W}| \leq 2^{\nu n}$, then protocol $\QIDplus$ is
    unconditionally secure against $\gamma (1\!-\!\alpha)$-BQSM Eve
    with $\gamma < \frac{\delta}{2} - \nu - 2\ell$, according to
    Definition~\ref{def:unconditional.Eve}.
  \end{proposition}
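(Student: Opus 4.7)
The plan is to follow the same high-level simulation strategy as in Proposition~\ref{pro:computational.Eve}, but replace every step that relied on computational indistinguishability of the commitment keys $\pkH$ and $\pkB$ by an information-theoretic argument that uses the assumption on $\E$'s bounded quantum storage. In particular, since we need to preserve the unconditional flavor of the statement, I will keep the unconditionally hiding key $\pkH$ throughout (the reference string is no longer switched to $\pkB$), and instead I will exploit the BQSM-security inherited by $\QID$ from $\tt BQSM\text{--}QID$, analogously to the derivation of Corollary~\ref{cor:id}.

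First I would apply exactly the same two modifications to $\B$ as in the computational proof: by the security of $MAC^*$ (which is unconditional), I may assume that $\B$ learns ``by magic'' whether any classical message was modified and whether $w_A = w_B$, and accepts on this basis alone; then, since $\B$ reveals no information about $w_B$ before the last flow, I may replace $\B$'s real input by a dummy input, yielding a modified $\B^*$. These steps change the joint state by at most a negligible amount and require no assumption on $\E$. Next, using Lemma~\ref{lemma:ideal.state} on the state consisting of $\A$'s remaining qubits together with $(T,T')$, I replace the real post-verification state by the negligibly close ideal state in which the test estimate gives an exact upper bound on the error rate outside $T$; this justifies that if $\A$ accepts verification, $\B$ recovers $x|_{I_w}$ correctly up to the error-correction threshold $\phi''$.

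The heart of the argument is the construction of the ideal-world adversary $\dhE$. $\dhE$ simulates $\A$ and $\B^*$ towards $\E$ up to the moment $\kappa$ is forwarded (possibly after modification by $\E$). At that point $\dhE$ tries to decode $\hat\theta \oplus \kappa$: if this yields a codeword $\mathfrak{c}(w')$ at distance at most $d/2$, it submits $w'$ to $\F_{\IDplus}$, otherwise it submits an arbitrary $w'$. If $\F_{\IDplus}$ returns $(\mathsf{correct},k_A)$, $\dhE$ completes the simulation honestly with $(w_A,k_A)$; otherwise it substitutes random values for $syn$, $tag^*$ and $z$. Finally, it invokes the ``override bit'' $d$ depending on whether the simulated $\A$ rejected or $\E$ altered any message. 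What must be shown is that when $w' \neq w_A$, the triple $(syn, tag^*, z)$ is statistically close to uniform and independent of $\E$'s view.

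The key step — and the main obstacle — is the entropy bookkeeping that drives the bound $\gamma < \frac{\delta}{2} - \nu - 2\ell$. By the choice of $w'$ together with the minimum distance $d \geq \delta n$, the Hamming distance between $\theta|_{I_{w_A}}$ and $\hat\theta|_{I_{w_A}}$ is at least $d/2 - \eps n$, so applying Corollary~\ref{corSerge} to the ideal state gives $\Hmin(X|_{I_{w_A}}) \geq \delta n/2 - \eps n$ while $\Hmax(\rho_E) \leq \gamma(1-\alpha) n$ under the BQSM-assumption on $\E$. I would then extract $syn$ by the syndrome-extractor property (costing at most $2\Hmax(\rho_E)$ of min-entropy), feed the residual min-entropy into the extractor-MAC to obtain an essentially uniform $tag^*$ (costing another $\ell$ bits plus $\Hmax(\rho_E)$), and apply Theorem~\ref{theo:privacy-amplification} to conclude that $f(x|_{I_{w_A}}) \oplus g(w_A)$ is $2^{-\Omega(n)}$-close to uniform provided
\[
\tfrac{\delta}{2}n - \gamma(1-\alpha) n - \nu n - 2\ell - o(n) > 0 \, ,
\]
where the $\nu n$ term accounts for conditioning on $w_A$ (of entropy up to $\nu n$). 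Solving yields the claimed bound. The delicate part that deserves care is that the extraction of $w'$ is well-defined even when $\E$ modifies $\kappa$, and that the BQSM-bound on $\Hmax(\rho_E)$ continues to hold for $\E$'s state \emph{after} the verification phase — here I would use Theorem~\ref{thm:BQSM}, which asserts precisely that compilation turns a $\gamma$-BQSM attacker against the inner protocol into a $\gamma(1-\alpha)$-BQSM attacker against the compiled one.
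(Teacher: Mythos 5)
Your overall strategy is sound and mirrors the paper's: keep the unconditionally hiding key $\pkH$ (no computational key switch), argue that $\B$'s accept/reject decision leaks nothing about $w$ via the test-set sampling bound together with error correction, and finish with the entropy bookkeeping $\Hmin(X|_{I_{w_A}}) - \Hmax(\rho_E)$ minus the $\nu n$ and $2\ell$ penalties. The paper's own proof is actually more terse: it defers the simulator construction to~\cite{DFSS07} and makes the point about $\B$'s accept/reject bit via an explicit two-case distinction (few errors $\Rightarrow$ error correction fixes them, $\B$ accepts regardless of $w$; many errors $\Rightarrow$ $\A$ rejects already during verification regardless of $w$), which you capture implicitly through Lemma~\ref{lemma:ideal.state}. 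Both routes to that conclusion are fine.

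The genuine gap is your appeal to Theorem~\ref{thm:BQSM} at the end. That theorem is a statement about a \emph{dishonest Bob}: from an attacker $\dB$ against $\compile(\Pi)$ one builds an attacker $\dB_\circ$ against $\Pi$ with the same storage, so BQSM-security of $\Pi$ against $\dB_\circ$ transfers. There is no analogous reduction available here, because the uncompiled protocol $\QID$ (and $\QID^+$ without compilation) is \emph{not} man-in-the-middle secure at all — the paper says so explicitly — so there is no unconditional guarantee to reduce to. Fortunately, no such reduction is needed: the bound $\Hmax(\rho_E) \le \gamma(1-\alpha)m = \gamma n$ is simply the hypothesis of the proposition (``$\gamma(1-\alpha)$-BQSM Eve'' means Eve stores at most $\gamma(1-\alpha)m$ of the $m$ transmitted qubits after the preparation phase), and it continues to hold for the rest of the protocol since storage only ever decreases. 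You should drop the invocation of Theorem~\ref{thm:BQSM} and just use this direct fact; also note your stated bound $\gamma(1-\alpha)n$ should read $\gamma(1-\alpha)m = \gamma n$. With that correction, the entropy chain — syndrome leaks $2\Hmax(\rho_E)$, tag and masked $z$ cost another $2\ell$, conditioning on the password costs $\nu n$ — yields the claimed bound.
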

 
  \begin{proof}
    Here, we can reason similarly to the proof in~\cite{DFSS07}
    against a man-in-the-middle. By the security of the MAC, $\E$
    cannot modify any classical message without being detected (and
    the extractor property guarantees re-usability). Therefore, one
    can show security against $\E$ up to the point {\em before} $\B$
    announces whether to accept the protocol execution or not. 

    In order to show security even after $\B$ has announced his
    decision, one can make the following case distinction. If $\E$
    modifies the quantum communication in such a way that she only
    introduces a few errors in the test set, then she also only
    introduced a few errors in the remaining positions, except with
    small probability. Those positions will be corrected by the error
    correction, and thus, $\B$ accepts---independent of what $w$
    is. In the other case, namely if $\E$ modifies the quantum
    communication in such a way that she introduces many errors in the
    test set, then $\A$ rejects already early in the
    protocol---independent of what $w$ is. Hence, this case
    distinction does not depend on $w$. It follows that $\B$'s
    announcement of whether he accepts or rejects gives away no
    information on $w$.

    Let $w'$ denote $\E$'s guess on the password. Then, if $w' \neq
    w$, $x_{I|_w}$ has $d/4 - \nu$ bits of entropy, given $w$, $w'$
    and $\theta$. Furthermore, given $tag^*$ and $f(x|_{I_w})$, the
    min-entropy is reduced by $2\ell$. By the properties of the code
    family and the privacy amplification property of $MAC^*$ and the
    hash-function, we get that $syn$ as well as $tag^*$ and $f$ are
    essentially random and independent, conditioned on $w \neq w'$,
    for $\gamma < d/4 - \nu - 2\ell$.
  \end{proof}

%%%%%%%%%%%%%%%%%%%%%%%%%%%%%%%%%%%%%%%%%%%%%%%%%%%%%%%%%%%%%%%%%%%%%%%%% 
%%%%%%%%%%%%%%%%%%%%%%%%%%%%%%%%%%%%%%%%%%%%%%%%%%%%%%%%%%%%%%%%%%%%%%%%% 
%%% COIN-FLIPPING %%%%%%%%%%%%%%%%%%%%%%%%%%%%%%%%%%%%%%%%%%%%%%%%%%%%%%%
%%%%%%%%%%%%%%%%%%%%%%%%%%%%%%%%%%%%%%%%%%%%%%%%%%%%%%%%%%%%%%%%%%%%%%%%% 
%%%%%%%%%%%%%%%%%%%%%%%%%%%%%%%%%%%%%%%%%%%%%%%%%%%%%%%%%%%%%%%%%%%%%%%%% 

\clearemptydoublepage
\part{Cryptography in the Quantum World}
\label{part:cryptography.in.quantum.world}

%%%%%%%%%%%%%%%%%%%%%%%%%%%%%%%%%%%%%%%%%%%%%%%%%%%%%%%%%%%%%%%%%%%%%%%%% 
%%% INTRO %%%%%%%%%%%%%%%%%%%%%%%%%%%%%%%%%%%%%%%%%%%%%%%%%%%%%%%%%%%%%%%%
%%%%%%%%%%%%%%%%%%%%%%%%%%%%%%%%%%%%%%%%%%%%%%%%%%%%%%%%%%%%%%%%%%%%%%%%% 

\clearemptydoublepage
\chapter{Introduction}
\label{chap:intro.cryptography.in.quantum.world}

In this part of the thesis, we want to investigate classical
cryptography in the quantum world, which means that we consider the
security of classical protocols subject to quantum attacks. This
scenario is of practical importance and independent of any progress
towards large-scale quantum computing. In the following sections, we
introduce various commitment schemes and extended variants thereof,
which we will use as underlying constructions of the protocols in the
subsequent chapters.

In Chapter~\ref{chap:coin.flip}, we show that a quantum-secure bit
commitment, as discussed in Section~\ref{sec:bit.commit}, implies a
quantum-secure single coin-flip. Then, we will use the mixed
commitments, described in Part~\ref{part:quantum.cryptography},
Section~\ref{sec:mixed.commit}, together with a variation of its
extended construction (described in
Section~\ref{sec:extended.commit.coin}) to equip the underlying
commitment construction with extraction and equivocability such that
we achieve an efficiently simulatable and more general composable
single coin-flip.

In Chapter~\ref{chap:framework}, we propose a framework for the
quantum-secure amplification of the security degree of coins, where we
rely on the mixed commitments of Section~\ref{sec:mixed.commit}. One
step towards a fully simulatable coin-flipping protocol, however,
requires an extended construction allowing for an untypical way of
opening a commitment in that, instead of sending the plaintext, we do
a trapdoor opening (Section~\ref{sec:mixed.commit.trapdoor.opening}).

In Chapter~\ref{chap:coin.flip.applications}, we show different
example applications, where the interactive generation of coins at
the beginning or during outer protocols results in implementations
without set-up assumptions and allows for quantum-secure realizations
of classical schemes.

%%% REGULAR COMMIT %%%%%%%%%%%%%%%%%%%%%%%%%%%%%%%%%%%%%%%%%%%%%%%%%%%%%%%

\section{Regular Bit Commitment}
\label{sec:bit.commit}
\index{commitment}

In Chapter~\ref{chap:coin.flip}, we will show a natural and direct
translation of standard coin-flipping to the quantum world. Recall
from Section~\ref{sec:primitives.commitment} that commitments imply
coin-flipping. More specifically, we require an {\it unconditionally
binding} and {\it quantum-computationally hiding} bit commitment
scheme from $\A$ to $\B$ that takes a bit and some randomness $r$ of
length $\ell$ as input, i.e.\ ${\tt commit}: \set{0,1} \times
\set{0,1}^\ell \rightarrow \set{0,1}^*$. As discussed, the
unconditionally binding property is fulfilled, if it is impossible for
any forger $\Forg$ to open one commitment to both 0 and 1, i.e.\ to
compute $r,r'$ such that $\commitx{0}{r} =
\commitx{1}{r'}$. Quantum-computationally hiding is ensured, if no
quantum distinguisher $\Dist$ can distinguish between $\commitx{0}{r}$
and $\commitx{1}{r'}$ for random $r,r'$ with non-negligible
advantage. Note that we will use this simple notation for the
commitments in the following sections. For a specific scheme, the
precise notation has to be naturally adapted.

For an actual instantiation we can use, for instance, Naor's
commitment based on a pseudorandom generator~\cite{Naor91}. A
pseudorandom generator is a function that maps a short, randomly
chosen seed to a long pseudorandom sequence, which is computationally
indistinguishable from a truly random string for any polynomial-time
bounded adversary. Informally speaking, pseudorandomness ensures
unpredictability of the next bit in the sequence after learning the
previous one. There are two main arguments for commitments based on
pseudorandomness. First, this construction does not require any
initially shared information between the players. This aspect is of
particular importance, when we later propose sequential coin-flipping
for actually implementing the CRS-model assumption, and therewith,
implementing other functionalities from scratch without \emph{any}
set-up assumptions. The second reason relates to our claim of
quantum security. Given any one-way function, pseudorandom generators
can be constructed, where its security parameter is defined by the
length of the seeding key. A brute-force search through the key space
would find all seeds, and thus, all pseudorandom sequences could be
computed. Now, under the assumption of a quantum-secure one-way
function, Grover's optimal quantum search algorithm provides only
quadratic speed-up for brute-searching. More efficient attacks are not
known, and therewith, we claim that for any poly-time bounded quantum
adversary, we achieve quantum-secure schemes.

More formally~\cite{Naor91}, let $f(n)$ denote a function with $f(n) >
n$. Then, $G: \{0,1\}^n \rightarrow \{0,1\}^{f(n)}$ defines a
pseudorandom generator, if for all polynomial-time (quantum)
distinguisher $\Dist$, it holds that
$$|Pr[\Dist(y) = 1] - Pr[\Dist(G(s)) = 1]| \leq \eps \, ,$$ where $y \in_R
\{0,1 \}^{f(n)}$, $s \in_R \{ 0,1 \}^n$, and $\eps$ is negligible in
the security parameter $n$. A bit commitment scheme using
pseudorandomness is now constructed as follows. Let $a$ be the bit to
which Alice wants to commit, and let $G_i(s)$ denote the $i$th bit of
the pseudorandom sequence on seed $s$. To ensure the binding property,
the receiver Bob sends a random vector $R_B = (r_1, \ldots, r_{3n})$
where $r_i \in_R \{ 0,1 \}$ for $1 \leq i \leq 3n$. Alice selects $s
\in_R \{ 0,1 \}^n$ and sends the vector $R_A = (r^\prime_1, \ldots,
r^\prime_{3n})$, where
    $$ r^\prime_i =
      \begin{cases}
	G_i(s) &  \text{if }  r_i = 0 \\
	G_i(s) \oplus a & \text{if } r_i = 1 \, .
      \end{cases}
    $$ 
To open the commitment, Alice sends $s$ and Bob then verifies
that for all $i$, $r^\prime_i = G_i(s)$ for $r_i = 0$ and $r^\prime_i
= G_i(s) \oplus a$ for $r_i = 1$.

Assuming that a dishonest receiver is polynomial-time bounded, he
cannot learn anything about $a$. Otherwise, he could be used to
construct a distinguisher $\Dist$ between pseudorandom and truly
random outputs. This also holds in the quantum world, since the
reduction does not require rewinding. It follows that any
quantum-computationally bounded receiver can only guess $a$ with
probability essentially $1/2$, so the commitment scheme is {\it
quantum-computationally hiding}.

For any (unbounded) dishonest committer, opening a commitment to both
values 0 and 1, requires a seed pair $(s_1, s_2)$, such that sequences
$G_{3n}(s_1)$ and $G_{3n}(s_2)$ agree for all $i$ where $r_i = 0$ and
disagree for all $i$ where $r_i = 1$, i.e.~$r_i = G_i(s_1) \oplus
G_i(s_2)$ for exactly one $R_B$ chosen by the other player. The
probability for the existence of such a pair is at most $2^{2n}/2^{3n}
= 2^{-n}$. It follows that the committer can reveal only one possible
$a$, except with probability less than $2^{-n}$, which satisfies {\it
statistical binding}.

%%% EXTENDED MIXED %%%%%%%%%%%%%%%%%%%%%%%%%%%%%%%%%%%%%%%%%%%%%%%%%%%%%%%

\section{Extended Construction for Mixed Commitments}
\label{sec:extended.commit.coin}
\index{commitment!extended} \index{commitment!extractability}
\index{commitment!equivocability}

We will, also in the context of a single coin-flip, need an extended
construction, which is similar to the extension of
Section~\ref{sec:extended.commit.compiler} but adapted to the case of
an underlying commitment from $\A$ to $\B$ with flavors {\it
unconditionally binding} and {\it quantum-computationally hiding}. We
again aim at providing the respective simulator with a trapdoor for
either extraction to efficiently simulate in case of $\dA$ or
equivocability to avoid rewinding $\dB$. As in
Section~\ref{sec:extended.commit.compiler}, we require a
$\Sigma$-protocol for a (quantumly) hard relation $\Rel = \{(x,w)\}$
with conversations $\tt \big(a^{\Sigma}, c^{\Sigma}, z^{\Sigma}
\big)$. Furthermore, we will also use the keyed dual-mode commitment
scheme described in Section~\ref{sec:mixed.commit.idea}, based on the
multi-bit version of~\cite{PVW08} with keys $\pkH$ and $\pkB$, where
it holds that $\pkH \approxq \pkB$.

In the real protocol, the common reference string consists of
commitment key $\pkB$ and an instance $x'$ for which it holds that
$\nexists \ w'$ such that $(x',w') \in \Rel$, where we assume that $x
\approxq x'$. To commit to bit $a$, $\A$ runs the honest-verifier
simulator to get a conversation $\big( {\tt a^{\Sigma}}, a, {\tt
z^{\Sigma}} \big)$. She then sends $\tt a_{\Sigma}$ and two
commitments $C_0, C_1$ to $\B$, where $C_a = \commitk{{\tt
z^{\Sigma}}}{r_a}{\pkB}$ and $C_{1-a} =
\commitk{0^{z'}}{r_{1-a}}{\pkB}$ with randomness $r_a,r_{1-a}$ and $z'
= |{\tt z^\Sigma}|$. Then, $\big( a, ({\tt z^{\Sigma}}, r_a) \big)$ is
send to open the relevant commitment $C_a$, and $\B$ checks that
$\big( {\tt a^{\Sigma}}, a, {\tt z^{\Sigma}} \big)$ is an accepting
conversation. Assuming that the $\Sigma$-protocol is honest-verifier
zero-knowledge and $\pkB$ leads to unconditionally binding
commitments, the new commitment construction is again unconditionally
binding.

During simulation, $\dhA$ chooses a $\pkB$ such that it knows the
matching decryption key $sk$. Then, it can extract $\dA$'s choice bit
$a$ by decrypting both $C_0$ and $C_1$ and checking which contains a
valid $\tt z^\Sigma$. Again, not both $C_0$ and $C_1$ can contain a
valid reply, since otherwise, $\dA$ would know a $w'$ such that
$(x',w') \in \Rel$. In order to simulate in case of $\dB$, $\dhB$
chooses $\pkH$ and $x$. Hence, the commitment is unconditionally
hiding in this simulation. Furthermore, it can be equivocated, since
now $\exists \ w$ with $(x,w) \in \Rel$ and therefore, $C_0, C_1$ can
both be computed with valid replies, i.e.~$C_0 = \commitk{{\tt
z^\Sigma}_0}{r_0}{\pkH}$ and $C_1 = \commitk{{\tt
z^\Sigma}_1}{r_1}{\pkH}$. Quantum-computational security against $\dB$
follows from the indistinguishability of the keys $\pkB$ and $\pkH$
and the indistinguishability of the instances $x$ and $x'$, and
efficiency of both simulations is ensured, due to extraction and
equivocability.

%%% TRAPDOOR %%%%%%%%%%%%%%%%%%%%%%%%%%%%%%%%%%%%%%%%%%%%%%%%%%%%%%%%%%%%%

\section{Trapdoor Opening for Mixed Commitments}
\label{sec:mixed.commit.trapdoor.opening}
\index{commitment!trapdoor opening}
\index{commitment!equivocability}

The typical notion of mixed commitment schemes is stronger than we
require for our basic construction of mixed commitments, namely, it
postulates trapdoors for both extraction and equivocability. As
previously discussed, it suffices in our basic construction to only
rely on an extraction trapdoor. This aspect is very convenient, since
it allows us to weaken the assumption on its underlying construction,
i.e., we can build it from a public-key crypto-system with regular
keys $pk$ and $sk$ as binding commitment key and extraction key, and
require only an indistinguishable hiding key, generated as a random
string in the key space. This, in turn, offers the possibility of
generating the hiding key solely by a precedent interactive
coin-flipping procedure without any set-up assumptions. For a more
advanced usage of commitments as in our strong coin-flipping notion in
Chapter~\ref{chap:framework}, however, we have (in some sense) the
requirement of equivocability. We want to maintain the interactive
generation of the key at any rate, which means that we do not have
enough control of its generation and even less control to equip it
with a trapdoor (as done in
Sections~\ref{sec:extended.commit.compiler}
and~\ref{sec:extended.commit.coin}).

We therefore develop a special notion of trapdoor opening, where the
ability to do a trapdoor opening is not associated to a special
knowledge of the hiding key, but is rather done by cheating in the
opening phase. Specifically, we do the opening not by sending the
plaintext and the randomness, committed to in the first phase but
instead by sending only the plaintext and then doing an interactive
proof that this plaintext is indeed what was committed to. The ability
to do trapdoor openings will then be associated with being able to
control the challenge in the interactive proof. We will get this
control by using a weak coin-flipping protocol as sub-protocol. This will
be one of the essential steps in bootstrapping fully simulatable
strong coin-flipping from weak coin-flipping.

As before, we denote the mixed string commitment scheme of
Section~\ref{sec:mixed.commit} by $\commit_{pk}$. Let $\kappa$ be the
security parameter defining the key space $\zo^\kappa$ and let
$\sigma$ be the secondary security parameter controlling the soundness
error in the interactive proof, which we want to be negligible in
$\sigma$ when $\commit_{pk}$ is unconditionally binding. We equate the
plaintext space $\zo^\ell$ of $\commit_{pk}$ with the Galois field
$\FF = \FF_{2^\kappa}$. The new extended commitment scheme, equipped
with the possibility to do trapdoor openings, is denoted by
$\Commit_{pk}$. We assume its plaintext space to be $\FF^\sigma$ and
denote by $\sss$ a secret sharing scheme over $\FF$.

Given message $m = (m_1, \ldots, m_\sigma) \in \FF^\sigma$ and
randomizer $s = (s_1, \ldots, s_\sigma) \in \FF^\sigma$, let
$f_{m,s}({\tt X})$ denote the unique polynomial of degree $2\sigma-1$,
for which $f_{m,s}(-i+1) = m_i$ for $i = 1, \ldots, \sigma$ and
$f_{m,s}(i) = s_i$ for $i = 1, \ldots, \sigma$. Furthermore, we ``fill
up'' positions $i = \sigma+1, \ldots, \Sigma$, where $\Sigma = 4
\sigma$, by letting $s_i = f_{m,s}(i)$. The shares are now $s =
(s_1,\ldots,s_\Sigma)$. The new commitment scheme $\Commit_{pk}$ is
described in Figure~\ref{fig:sss.commit}.

We stress two simple facts about this scheme. First, for any message
$m \in \FF^\sigma$ and any subset $S \subset \{ 1, \ldots, \Sigma \}$
of size $\vert S \vert = \sigma$, the shares $s|_S$ are uniformly
random in $\FF^\sigma$, when $S$ is chosen uniformly at random in
$\FF^\sigma$ and independent of $m$. This aspect is trivial for $S =
\{ 1, \ldots, \sigma \}$, as we defined it that way, and it extends to
the other subsets using Lagrange interpolation. And second, if $m^1,
m^2 \in \FF^\sigma$ are two distinct messages, then $\sss(m^1;s^1)$
and $\sss(m^2;s^2)$ have Hamming distance at least $\Sigma - 2
\sigma$. Again, this follows by Lagrange interpolation, since the
polynomial $f_{m^1,s^1}({\tt X})$ has degree at most $2\sigma-1$, and
hence, can be computed from any $2 \sigma$ shares $s_i$ using Lagrange
interpolation. The same holds for $f_{m^2,s^2}({\tt X})$. Thus, if
$2\sigma$ shares are the same, then $f_{m^1,s^1}({\tt X})$ and
$f_{m^2,s^2}({\tt X})$ are the same, which implies that the messages
$m^1 = f_{m^1,s^1}(-\sigma+1),\ldots, f_{m^1,s^1}(0)$ and $m^2 =
f_{m^2,s^2}(-\sigma+1),\ldots, f_{m^2,s^2}(0)$ are the same.

%%%%%%%%%%%%%%%%%%%%%%%%%%%%%%%%%%%%%%%%%%%%%%%%%%%%%%%%%%%%%%%%%%%%%%%%%
\begin{figure}
  \begin{framed}
  \noindent\hspace{-1.5ex} {\sc Commitment Scheme $\Commit_{pk}$:}
  \begin{itemize}
    \item[]{\sc Commitment Phase:}
      \begin{enumerate}
      \item
	Let message $m \in \FF^\sigma$ be the message to get committed
	to. The committer samples uniformly random $s \in \FF^\sigma$
	and computes the shares $\sss(m;s) = (s_1, \ldots, s_\Sigma)$,
	where $s_i \in \FF$.
      \item
	He computes $\Commitk{m}{(s,r)}{pk} = \big(
	M_1,\ldots,M_\Sigma \big)$. In more detail, for $i = 1,
	\ldots, \Sigma$, the committer computes $M_i =
	\commitk{s_i}{r_i}{pk}$ with shares $s =
	(s_1,\ldots,s_\Sigma)$ and randomness $r =
	(r_1,\ldots,r_\Sigma)$.
      \item 
	The committer sends $(M_1,\ldots,M_\Sigma)$.\\[-2ex]
      \end{enumerate}
    \item[]{\sc Opening Phase:}
      \begin{enumerate}
      \item
	The committer sends the shares $s = (s_1, \ldots, s_\Sigma)$
	to the receiver.
      \item
	If the shares are not consistent with a polynomial of degree
	at most $2\sigma-1$, the receiver aborts. Otherwise, he picks
	a uniformly random subset $S \subset \set{1,\ldots,\Sigma}$ of
	size $\vert S \vert = \sigma$ and sends $S$ to the committer.
      \item
	The committer sends $r|_S$.
      \item
	The receiver verifies that $M_i = \commitk{s_i}{r_i}{pk}$ for
	all $i \in S$. If the test fails, he aborts. Otherwise, he
	computes the message $m \in \FF^\sigma$ consistent with $s$.
      \end{enumerate}
  \end{itemize}
    \vspace{-1ex}
  \end{framed}
  \vspace{-2ex}
  %\small	
  \caption{The Commitment Scheme $\Commit_{pk}$.}
  \label{fig:sss.commit}
  \vspace{-1ex}
\end{figure}
%%%%%%%%%%%%%%%%%%%%%%%%%%%%%%%%%%%%%%%%%%%%%%%%%%%%%%%%%%%%%%%%%%%%%%%%%

First note that if the underlying commitment $\commit_{pk}$ is
unconditionally hiding, then so is $\Commit_{pk}$. In the following,
we investigate the extraction property of $\Commit_{pk}$, under the
assumption that we work in the unconditionally binding mode of
$\commit_{pk}$. Given any commitment $M = \big( M_1,\ldots,M_\Sigma
\big)$, we extract
$$ 
\big( \xtr{M_1}{sk},\ldots,\xtr{M_\Sigma}{sk} \big) =
(s_1,\ldots,s_\Sigma) = s \, .
$$ Assume $s' = (s'_1,\ldots,s'_\Sigma)$ is the consistent sharing
closest to $s$. That means that $s'$ is the vector which is consistent
with a polynomial $f_{m',s'}({\tt X})$ of degree at most $2\sigma-1$
and which at the same time differs from $s$ in the fewest
positions. Note that we can find $s'$ in poly-time when using a Reed
Solomon code, which has efficient minimal distance decoding. We then
interpolate this polynomial $f_{m',s'}({\tt X})$, let $m' =
f_{m',s'}(-\sigma+1),\ldots,f_{m',s'}(0)$, and define $m'$ to be the
message committed to by $\Commit_{pk}$. Any other sharing $s''=
(s_1'',\ldots,s_\Sigma'')$ must have Hamming distance at least $2
\sigma$ to $s'$. Now, since $s$ is closer to $s'$ than to any other
consistent sharing, it must, in particular, be closer to $s'$ then to
$s''$. This implies that $s$ is at distance at least $\sigma$ to
$s''$.

We will use this observation for proving soundness of the opening
phase. To determine the soundness error, assume that $\Commit_{pk}$
does not open to the shares $s'$ consistent with $s$. As observed,
this implies that $\big( \xtr{M_1}{sk},\ldots,\xtr{M_\Sigma}{sk}
\big)$ has Hamming distance at least $\sigma$ to $s'$. However, when
$\commit_{pk}$ is unconditionally binding, all $M_i$ can only be
opened to $\xtr{M_i}{sk}$. From the above two facts, we have that
there are at least $\sigma$ values $i \in \set{1,\ldots,\Sigma}$ such
that the receiver cannot open $M_i$ to $s_i$ for $i \in S$. Since
$\Sigma = 4 \sigma$, these $\sigma$ bad indices (bad for a dishonest
sender) account for a fraction of $\frac14$ of all points in
$\set{1,\ldots,\Sigma}$. Thus, the probability that none of the
$\sigma$ points in $S$ is a bad index is at most $(\frac34)^\sigma$,
which is negligible. Lemma~\ref{lemma:soundness.sss} follows.
  \begin{lemma}
    \label{lemma:soundness.sss}
    If $pk$ is unconditionally binding, then the probability that an
    unbounded cheating committer can open $M = \Commitk{m}{(s,r)}{pk}$
    to a plaintext different from $\xtr{M}{sk}$ is at most
    $(\frac34)^\sigma$, assuming that the challenge $S$ is picked
    uniformly at random and independent of $M$.
  \end{lemma}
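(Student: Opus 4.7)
The plan is to formalize the soundness argument that is already sketched in the paragraphs immediately preceding the lemma. The entire proof hinges on two facts about the sharing scheme combined with the unconditionally binding property of $\commit_{pk}$, so I would first state these carefully and then extract the quantitative bound by a straightforward combinatorial counting argument.

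First, I would fix notation. Let $M = (M_1,\ldots,M_\Sigma)$ be an arbitrary commitment and, using that $\commit_{pk}$ is unconditionally binding, define $s^* = (s_1^*,\ldots,s_\Sigma^*) := (\xtr{M_1}{sk},\ldots,\xtr{M_\Sigma}{sk})$. By the binding property, for every $i$, the only value $s_i$ for which the committer can produce valid randomness $r_i$ with $M_i = \commitk{s_i}{r_i}{pk}$ is $s_i^*$. Let $s'$ be a consistent sharing (i.e.\ a sharing induced by some polynomial of degree at most $2\sigma-1$) closest in Hamming distance to $s^*$, and let $m'$ be the message it determines; the extracted plaintext $\xtr{M}{sk}$ is defined to be $m'$.

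Next, I would use the two facts already recorded in the text. Any consistent sharing $s''$ distinct from $s'$ satisfies $d_H(s',s'') \geq \Sigma - 2\sigma = 2\sigma$, by the second fact (two distinct messages yield sharings at Hamming distance at least $\Sigma - 2\sigma$). Combined with the minimality of $s'$, which gives $d_H(s^*,s') \leq d_H(s^*,s'')$, the triangle inequality yields $d_H(s^*,s'') \geq \sigma$. Hence, if the committer wants to open $M$ to any message different from $m'$, he must send shares $s''$ that disagree with $s^*$ in at least $\sigma$ ``bad'' positions; on each bad position he cannot produce valid randomness, so the verification in Step~(4.) of the opening phase must fail on any bad index that lies in $S$.

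Finally, the soundness bound reduces to a sampling estimate: a cheating opening succeeds only if the uniformly random subset $S$ of size $\sigma$ avoids all bad indices. Since there are at least $\sigma$ bad indices among $\Sigma = 4\sigma$, this probability is at most
\[
\frac{\binom{\Sigma-\sigma}{\sigma}}{\binom{\Sigma}{\sigma}} = \prod_{i=0}^{\sigma-1}\frac{\Sigma-\sigma-i}{\Sigma-i} \leq \Bigl(\frac{\Sigma-\sigma}{\Sigma}\Bigr)^{\sigma} = \Bigl(\tfrac{3}{4}\Bigr)^{\sigma},
\]
which is what the lemma claims. The only place needing care is the triangle-inequality step that turns ``$s'$ is nearest'' plus ``consistent sharings are far apart'' into a lower bound on $d_H(s^*,s'')$; everything else is either a direct consequence of unconditional binding or elementary counting, so no serious obstacle is expected.
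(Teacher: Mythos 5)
Your proof is correct and follows essentially the same route the paper sketches in the paragraph preceding the lemma: extract $s$, take the nearest consistent sharing $s'$, use the minimum-distance property of the sharing plus the triangle inequality to get at least $\sigma$ bad indices, and bound the probability that a uniformly random $\sigma$-subset of $\{1,\dots,\Sigma\}$ avoids them all. Your only departure is a minor one in favor of rigor: you carry out the explicit binomial-ratio computation $\binom{\Sigma-\sigma}{\sigma}/\binom{\Sigma}{\sigma} \le (3/4)^\sigma$, whereas the paper simply appeals to the bad indices being a $1/4$-fraction of all points.
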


In the context of simulation, we will use the challenge $S$ as the
simulators trapdoor, allowing him to equivocally open his
commitments. In such a simulation, the ideal-world adversary $\hat{S}$
can---by means discussed later---enforce a specific challenge, i.e.,
it is guaranteed that this will be the challenge in the opening
phase. Thus, for simplicity, we assume here that it simply gets a
fixed challenge $S$ as input. The simulation is described in
Figure~\ref{fig:simulation.sss}. Lemma \ref{lemma:simulation.sss}
follows via a hybrid argument, which relies on the
quantum-computational indistinguishability in switching
unconditionally binding and unconditionally hiding commitment keys. We
omit a proof here but refer to Chapter~\ref{chap:framework}, where the
construction will be explicitly proven within its outer construction.

%%%%%%%%%%%%%%%%%%%%%%%%%%%%%%%%%%%%%%%%%%%%%%%%%%%%%%%%%%%%%%%%%%%%%%%%%%
\begin{figure}
  %\normalsize
  \begin{framed}
  \noindent\hspace{-1.5ex} {\sc Simulating $\Commit_{pk}$ with
  Trapdoor $S$:} \\[-4ex]
  \begin{enumerate}
  \item
    $\hat{S}$ gets as input a uniformly random subset $S \subset
    \set{1,\ldots,\Sigma}$ of size $\sigma$ and an initial message $m
    \in \FF^\sigma$.
  \item
    $\hat{S}$ commits honestly to $m \in \FF^\sigma$ by $M =
    \Commitk{m}{(s,r)}{sk}$, as specified in the commitment phase.
  \item
    $\hat{S}$ is given an alternative message $\tilde{m} \in
    \FF^\sigma$, i.e., the aim is opening $M$ to $\tilde{m}$.
  \item
    $\hat{S}$ lets $s|_S$ be the $\sigma$ messages committed to by
    $M|_S$. Then it interpolates the unique polynomial
    $f_{\tilde{m},s}$ of degree at most $2\sigma-1$ for which
    $f_{\tilde{m},s}(i) = s_i$ for $i \in S$ and for which
    $f_{\tilde{m},s}(-i+1) = \tilde{m}_i$ for $i =
    1,\ldots,\sigma$. Note that this is possible, as we have exactly
    $2\sigma$ points which restrict our choice of
    $f_{\tilde{m},s}$. $\hat{S}$ sends $s = \big(
    f_{\tilde{m},s}(1),\ldots,f_{\tilde{m},s}(\Sigma) \big)$ to the
    receiver.
  \item
    The receiver sends the challenge $S$.
  \item
    For all $i \in S$, the sender opens $M_i$ to
    $f_{\tilde{m},s}(i)$. This is possible, since $f_{\tilde{m},s}(i)
    = s_i$ is exactly the message committed to by $M_i$ when $i \in
    S$.
  \end{enumerate}
  \vspace{-1.5ex}
  \end{framed}
  \vspace{-1.5ex}
  %\small
  \caption{The Ideal-World Simulation of $\Commit_{pk}$.}
  \label{fig:simulation.sss}
\end{figure}
%%%%%%%%%%%%%%%%%%%%%%%%%%%%%%%%%%%%%%%%%%%%%%%%%%%%%%%%%%%%%%%%%%%%%%%%%
   
  \begin{lemma}
    \label{lemma:simulation.sss}
    If $\tilde{m} = m$, then the transcript of the protocol is
    identical to that of an honest commitment to $m$, followed by an
    honest opening phase to $m$, and run with a uniformly random
    challenge $S$.

    If $\tilde{m} \ne m$, then the transcript of the protocol is
    quantum-computationally indistinguishable to that of an honest
    commitment to $\tilde{m}$, followed by an honest opening phase to
    $\tilde{m}$, and run with a uniformly random challenge $S$.
  \end{lemma}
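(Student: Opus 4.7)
The plan is to handle the two cases separately. Begin with $\tilde{m} = m$: I would argue that the polynomial $f_{\tilde{m},s}$ reconstructed by $\hat{S}$ in Step~4 coincides with the original polynomial $f_{m,s}$ that was used to generate the commitment in Step~2. Both polynomials have degree at most $2\sigma - 1$ and agree on $2\sigma$ evaluation points: they take the values $m_i = \tilde{m}_i$ at the $\sigma$ positions $-\sigma+1,\ldots,0$ and take the values $s_i$ at the $\sigma$ positions $i \in S$ (since those $s_i$ are, by construction, the plaintexts committed to by $M_i$). Hence by Lagrange interpolation $f_{m,s} = f_{\tilde{m},s}$, the claimed share vector matches the honestly produced shares, and the openings of $M_i$ for $i \in S$ are literally the honest openings. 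The transcripts are therefore identical as distributions, which is the first claim.

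For $\tilde{m} \ne m$, my plan is a hybrid argument over the unopened positions $i \notin S$, swapping the underlying plaintext of each $M_i$ from $s_i$ (a share of $m$) to $\tilde{s}_i := f_{\tilde{m},\tilde{s}}(i)$ (the corresponding share of $\tilde{m}$) one at a time, using a freshly sampled randomizer $r_i$ in each hybrid step. Before invoking hiding, I first need to check that the visible claimed share vector and the opened shares at $i \in S$ have the same joint distribution in the simulation as in an honest commitment to $\tilde{m}$. This follows because the degree-$(2\sigma-1)$ polynomial interpolated by $\hat{S}$ is uniquely determined by $\tilde{m}$ together with the $\sigma$ values $(s_i)_{i \in S}$, and by the first stated property of $\sss$ the vector $(s_i)_{i \in S}$ is uniformly distributed in $\FF^\sigma$ independently of $m$. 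In an honest commitment to $\tilde{m}$ the polynomial is analogously determined by $\tilde{m}$ and the $\sigma$ uniformly random values $\tilde{s}_1,\ldots,\tilde{s}_\sigma$, and it can equivalently be reparameterized by its values on any $\sigma$-subset. Both experiments therefore induce the same distribution on the reconstructed polynomial, so the complete claimed share vector $(f_{\tilde{m},\cdot}(1),\ldots,f_{\tilde{m},\cdot}(\Sigma))$ as well as the opened values at $i \in S$ are identically distributed across the two experiments.

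It then remains to compare the plaintexts hidden inside the $\Sigma-\sigma$ unopened commitments $M_i$ for $i \notin S$. By the quantum-computational hiding property of the underlying scheme $\commit_{pk}$, each individual swap of such a commitment is indistinguishable to any quantum polynomial-time distinguisher; a standard hybrid over the at most $\Sigma = 4\sigma$ positions yields a total distinguishing advantage that is at most $\Sigma$ times the single-commitment advantage and hence still negligible in $\kappa$. Combined with the identical distribution of the rest of the transcript, this delivers the claimed quantum-computational indistinguishability. Note that since the randomizers $r_i$ for $i \notin S$ are never revealed, the reduction to hiding involves only the challenge $M_i$ itself and remains oblivious of any opening information.

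The main obstacle is the distributional matching argued in the second paragraph: the simulator's polynomial is derived from $(s_i)_{i \in S}$, which is coupled to the commitment to $m$ through shared randomness, whereas in the honest experiment the analogous polynomial is built from truly independent randomness. Establishing that these two sources induce the same polynomial distribution hinges crucially on the ``uniform on any $\sigma$-subset'' property of $\sss$; without it, information about $m$ could leak through the opened shares and obstruct the clean reduction of the unopened commitments to the hiding property of $\commit_{pk}$.
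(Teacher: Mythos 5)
Your proof is correct, and it supplies a self-contained argument for a lemma that the paper itself only sketches in a single sentence and otherwise defers to the outer protocol of Chapter~\ref{chap:framework}. For $\tilde{m}=m$ your observation that $f_{\tilde{m},s}$ and the original sharing polynomial $f_{m,s}$ agree on the $2\sigma$ points $\{-\sigma+1,\ldots,0\}\cup S$ and are both of degree at most $2\sigma-1$, hence coincide by Lagrange interpolation, is exactly what makes the transcripts \emph{identical} rather than merely close. For $\tilde{m}\neq m$ your decomposition correctly isolates the two issues: (i) the public part of the transcript---the claimed share vector and the openings at $S$---has the same joint distribution as in an honest commitment to $\tilde{m}$, which you reduce to the ``uniform on any $\sigma$-subset'' property of $\sss$; and (ii) only the plaintexts inside the $\Sigma-\sigma$ unopened commitments differ, which you handle by a per-commitment hybrid.

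The paper's one-line remark points at a slightly different packaging of step (ii): rather than a per-position hybrid under ``computational hiding of $\commit_{pk}$'', the intended route is a single global switch of $pk$ to an unconditionally hiding key via $\pkH\approxq\pkB$, after which the unopened $M_i$ are \emph{statistically} independent of their plaintexts all at once, followed by a switch back. Your per-position hybrid is equally valid and yields the same negligible loss, but be aware that for a mixed commitment scheme, hiding under a binding key is not a primitive assumption---it is itself derived from $\pkH\approxq\pkB$ together with unconditional hiding under $\pkH$---so your hybrid is, under the hood, re-running the key-switch argument once per commitment. Either way, the piece of content that the paper's remark elides, and which you correctly identify as the crux, is the distributional matching of the claimed share vector in item (i); you have that right.

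One small point of precision: in the hybrid you phrase the target plaintext at position $i\notin S$ as $f_{\tilde m,\tilde s}(i)$ for a freshly sampled $\tilde s$, but the value you should actually land on is $f_{\tilde m,s'}(i)$ for the polynomial $s'$ interpolated by $\hat S$ (so that the unopened commitments are consistent with the already-fixed claimed share vector). Your paragraph on distributional matching is exactly what licenses coupling the two, but it is worth stating the coupling explicitly so the hybrid endpoints line up with the simulated and honest experiments.
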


%%%%%%%%%%%%%%%%%%%%%%%%%%%%%%%%%%%%%%%%%%%%%%%%%%%%%%%%%%%%%%%%%%%%%%%%%
%%% COIN-FLIPPING %%%%%%%%%%%%%%%%%%%%%%%%%%%%%%%%%%%%%%%%%%%%%%%%%%%%%%%
%%%%%%%%%%%%%%%%%%%%%%%%%%%%%%%%%%%%%%%%%%%%%%%%%%%%%%%%%%%%%%%%%%%%%%%%%

\clearemptydoublepage
\chapter{Quantum-Secure Coin-Flipping}
\label{chap:coin.flip}
\index{coin-flipping}

Coin-flipping is introduced in Section~\ref{sec:primitives.coin.flip}
and allows two parties to agree on a uniformly random bit in a fair
way. Security for both parties follows, if neither party can influence
the value of the coin to his advantage. Thus, it enables the parties
to interactively generate true randomness from scratch. The chapter is
based on parts of~\cite{DL09}.

%%% MOTIVATION %%%%%%%%%%%%%%%%%%%%%%%%%%%%%%%%%%%%%%%%%%%%%%%%%%%%%%%%%%%

\section{Motivation and Related Work}
\label{sec:coin.flip.motivation}

We are interested in the standard coin-flipping protocol~\cite{Blum81}
with classical message exchange but we here assume that the adversary
is capable of quantum computing. As already mentioned, bit commitment
implies a secure coin-flipping, but even when basing the embedded
commitment on a computational assumption that withstands quantum
attacks, the security proof of the entire coin-flipping (and its
integration into other applications) could previously not be
translated from the classical to the quantum world.

Typically, security against a classical adversary is argued in such a
context by rewinding the adversary in a simulation. Recall that, in
general, rewinding as a proof technique cannot be directly applied in
the quantum world. Based on a recent result of
Watrous~\cite{Watrous09}, which originally allowed to prove
unconditionally that quantum zero-knowledge of certain interactive
proofs is possible and that the classical definitions can be
translated into the quantum world, we show the most natural and direct
quantum analogue of the classical security proof for standard
coin-flipping.

We want to mention an alternative approach, which was independently
investigated but never published~\cite{Smith09}. They propose a
classical protocol for zero-knowledge proofs of knowledge secure
against quantum adversaries. The protocol consists of a commitment
phase and two zero-knowledge proofs. Instead of opening the
commitment, the committer claims the value of the committed coins and
gives the first zero-knowledge proof that the claim is correct. To
simulate this zero-knowledge proof, Watrous' technique is used. Note
that this approach allows for flipping a string of coins in the
commitments, and thus, arrives at a coin-flipping protocol with round
complexity independent of the length of the flipped string at
first. However, the required zero-knowledge proof has round complexity
depending on the security parameter, i.e.\ how many proofs must be
completed to achieve a negligible soundness error. Finally, the
coin-string is used as key to encode the witness and the second
zero-knowledge proof is given that this statement is actually true. As
encryption scheme, they suggest a scheme with similar properties as in
our mixed commitment constructions---but at least to our best
knowledge, the question of its actual secure implementation was left
open.

We stress that we aim at establishing coin-flipping as a stand-alone
tool that can be used in several contexts and different generic
constructions. Some example applications thereof are discussed in
Chapter~\ref{chap:coin.flip.applications}, including an independently
proposed zero-knowledge proof of knowledge. In order to include
coin-flipping securely in other applications, we conclude this chapter
by proving the basic construction secure under sequential composition and
propose an extended construction for general composability.

%%% PROTOCOL %%%%%%%%%%%%%%%%%%%%%%%%%%%%%%%%%%%%%%%%%%%%%%%%%%%%%%%%%%%%%

\section{The Protocol}
\label{sec:coin.flip.protocol}

The standard coin-flipping protocol $\COIN$ is shown in
Figure~\ref{fig:coin.flip}, allowing players $\A$ and $\B$ to
interactively generate a random and fair $\coin$ in one execution
without any set-up requirements. As underlying commitment scheme, we
use the \emph{unconditionally binding} and
\emph{quantum-computationally hiding} scheme described in
Section~\ref{sec:bit.commit} with security parameter $n$. We will use
its simpler notation here, namely $\commitx{a}{r}$ with input $a \in
\zo$, randomness $r \in \ell$ and output in $\zo^*$. To indicate
the opening phase, where $\A$ sends $a$ and $r$, we will write
$\open{a}{r}$. The corresponding ideal coin-flipping functionality $\cF$
is depicted in Figure~\ref{fig:cF}. Note that dishonest $\dA$ may
refuse to open $\commitx{a}{r}$ in the real world after learning
$\B$'s input. For this case, $\cF$ allows her a second input $\bot$,
modeling the abort of the protocol.

%%%%%%%%%%%%%%%%%%%%%%%%%%%%%%%%%%%%%%%%%%%%%%%%%%%%%%%%%%%%%%%%%%%%%%%%%%
\begin{figure}
  %\normalsize
  \begin{framed}
  \noindent\hspace{-1.5ex} {\sc Protocol $\COIN$:} \\[-4ex]
      \begin{enumerate}\setlength{\parskip}{0.5ex}
      \item $\A$ chooses $a \in_R \set{0,1}$ and computes
      $\commitx{a}{r}$. She sends $\commitx{a}{r}$ to $\B$.
      \item $\B$ chooses $b \in_R \set{0,1}$ and sends $b$ to $\A$.
      \item $\A$ sends $\open{a}{r}$ and $\B$ checks if the opening is
      valid.
    \item Both compute $\coin = a \oplus b$.
    \end{enumerate}
    \vspace{-1.5ex}
    \end{framed}
    \vspace{-1.5ex}
    %\small
  \caption{The Coin-Flipping Protocol.}
  \label{fig:coin.flip}
\end{figure}
%%%%%%%%%%%%%%%%%%%%%%%%%%%%%%%%%%%%%%%%%%%%%%%%%%%%%%%%%%%%%%%%%%%%%%%%%

%%%%%%%%%%%%%%%%%%%%%%%%%%%%%%%%%%%%%%%%%%%%%%%%%%%%%%%%%%%%%%%%%%%%%%%%%
\begin{figure}
  \normalsize
  \begin{framed}
    \noindent\hspace{-1.5ex} {\sc Functionality $\cF:$}\\ Upon
    receiving requests $\start$ from Alice and Bob, $\cF$ outputs
    uniformly random $\coin$ to Alice. It then waits to receive
    Alice's second input $\top$ or $\bot$ and outputs $\coin$ or
    $\bot$ to Bob, respectively.
    \vspace{-1ex}
  \end{framed}
  \vspace{-2ex}
  %\small
  \caption{The Ideal Functionality for a Coin-Flip.}
  \label{fig:cF}
  \vspace{-1ex}
\end{figure}
%%%%%%%%%%%%%%%%%%%%%%%%%%%%%%%%%%%%%%%%%%%%%%%%%%%%%%%%%%%%%%%%%%%%%%%%%%

\begin{proposition}
  Protocol $\COIN$ satisfies correctness, according to
  Definition~\ref{def:correctness}.
\end{proposition}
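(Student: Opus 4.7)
The plan is to verify Definition~\ref{def:correctness} directly for the honest execution of $\COIN$ against the ideal functionality $\cF$. Since both parties' inputs in the real protocol and in the ideal call to $\cF$ are simply $\start$, there is no non-trivial input distribution to consider: I only need to argue that the joint output $(X,Y)$ of an honest execution is statistically (in fact perfectly) indistinguishable from $\cF$'s output.

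First I would observe that when $\A$ and $\B$ are honest, $\A$ samples $a$ uniformly from $\{0,1\}$ and $\B$ samples $b$ uniformly from $\{0,1\}$, independently of $a$ (since $b$ is chosen before receiving any opening information, and the commitment $\commitx{a}{r}$ is generated independently of $b$). Thus $\coin = a \oplus b$ is uniformly distributed in $\{0,1\}$, matching the output distribution of $\cF$. Next I would note that by correctness of the underlying commitment scheme from Section~\ref{sec:bit.commit}, the verification of $\open{a}{r}$ performed by honest $\B$ always accepts; hence honest $\A$ never needs to abort, which corresponds to her sending the continuation input $\top$ to $\cF$. Consequently $\B$ obtains the same value $\coin$ as $\A$, so $X = Y = \coin$ in the real protocol, just as in the ideal world.

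Combining these two observations, the joint distribution of the outputs of $\A$ and $\B$ in the real execution coincides exactly with the joint output of $\cF$ on inputs $(\start,\start,\top)$, so $(U,V,X,Y)$ and $(U,V,\F(U,V))$ are identically distributed and in particular statistically indistinguishable. There is no real obstacle in this proof; the only point worth making explicit is that independence of $a$ and $b$ (guaranteed by the hiding of the commitment, and in fact by the order in which the values are chosen irrespective of hiding) is what ensures uniformity of $\coin$ in the honest execution.
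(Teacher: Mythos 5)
Your proof is correct and takes essentially the same approach as the paper, which simply observes that honest $\A$ and $\B$ independently sample uniform bits $a$ and $b$, so $\coin = a \oplus b$ is uniformly random; your version spells out the small details (acceptance of the honest opening, agreement on the output) that the paper leaves implicit.
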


Correctness is obvious by inspection of the protocol: If both players
are honest, they independently choose random bits $a$ and $b$. These
bits are then combined via exclusive disjunction, resulting in a
uniformly random $\coin$.

\begin{theorem}
  Protocol $\COIN$ is unconditionally secure against any unbounded
  dishonest Alice according to
  Definition~\ref{def:unboundedAliceNiceOrder}, provided that the
  underlying commitment scheme is unconditionally binding.
\end{theorem}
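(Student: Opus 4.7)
The plan is to construct an ideal-world adversary $\dhA$ that internally runs $\dA$ and exploits its unbounded computational resources to extract the committed bit before choosing a response. I would define $\dhA$ as follows: it invokes $\dA$, receives the first-round commitment $\commitx{a}{r}$, and then uses brute-force search to determine the unique bit $a^*$ to which this commitment can be opened. By the unconditional binding property, such an $a^*$ is well-defined whenever any valid opening exists; if no valid opening exists, $\dhA$ picks $a^* \in_R \zo$ arbitrarily (this case is handled trivially because honest $\B$ would reject in the real world as well). Then $\dhA$ sends $\start$ to the ideal functionality $\cF$ to obtain $\coin$, computes $b^* := \coin \oplus a^*$, and forwards $b^*$ to $\dA$. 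If $\dA$ produces a valid opening, $\dhA$ inputs $\top$ to $\cF$; otherwise it inputs $\bot$, so that $\hB$ in the ideal world receives the same accept/reject decision that $\B$ would produce in the real world.

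Next I would verify that $out_{\dA,\B}^{\COIN} \approxs out_{\dhA,\hB}^{\cF}$. The key observation is that, conditioned on the commitment message sent by $\dA$, the bit $b^*$ computed by $\dhA$ is uniformly distributed in $\zo$, since $\coin$ is sampled uniformly and independently by $\cF$. Hence the view of $\dA$ in the simulation is \emph{identically} distributed to its view in a real execution, where honest $\B$ samples $b$ uniformly at random; in particular, $\dA$'s final quantum output $Y'$ is identical in both worlds. For the joint output it then suffices to check consistency with $\B$'s (respectively $\hB$'s) coin value: by unconditional binding, whenever $\dA$ successfully opens the commitment in the real protocol she must open to $a^*$ (except with negligible probability), and then real $\B$ computes $\coin_{\mathrm{real}} = a^* \oplus b^* = \coin$, matching $\cF$'s output; if $\dA$ fails to open, both real $\B$ and $\hB$ output $\bot$. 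The statistical distance between the two joint output distributions is therefore bounded by the binding error of the commitment, which is negligible.

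The main point worth flagging is that there is no genuine obstacle here, precisely because Definition~\ref{def:unboundedAliceNiceOrder} imposes no efficiency constraint on the ideal-world adversary. Security against $\dA$ is the ``easy'' direction: extraction is accomplished by the unboundedness of $\dhA$ together with unconditional binding, and no rewinding (classical or quantum) is invoked. This stands in sharp contrast to the security direction against dishonest Bob in the next proof, where the hiding property is only quantum-computational and where genuinely new techniques (Watrous' quantum rewinding) will be required.
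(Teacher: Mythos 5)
Your proposal is correct and follows essentially the same approach as the paper: construct an unbounded ideal-world adversary $\dhA$ that brute-force extracts $\dA$'s committed bit $a$ from the commitment, computes $b := \coin \oplus a$ for a $\coin$ obtained from $\cF$, and forwards $b$ to $\dA$, relying on the uniformity of $\coin$ and the unconditional binding property for indistinguishability. One small point in your favor: you are more careful than the paper's simulation figure in conditioning the second input $\top$ versus $\bot$ to $\cF$ on whether $\dA$ actually produces a valid opening, whereas the paper's $\dhA$ sends $\top$ immediately upon receiving the commitment, which does not correctly reflect the real-world abort case as literally written.
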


\begin{proof}
  We construct an ideal-world adversary $\dhA$, such that the real
  output of the protocol is statistically indistinguishable from the
  ideal output produced by $\dhA$, $\cF$ and $\dA$. The ideal-world
  simulation is depicted in Figure~\ref{fig:simulationA}.

%%%%%%%%%%%%%%%%%%%%%%%%%%%%%%%%%%%%%%%%%%%%%%%%%%%%%%%%%%%%%%%%%%%%%%%%%
\begin{figure}
  \begin{framed}
  \noindent\hspace{-1.5ex} {\sc Simulation $\dhA$ :} \\[-4ex]
    \begin{enumerate}
    \item Upon receiving $\commitx{a}{r}$ from $\dA$, $\dhA$ sends
    $\start$ and then $\top$ to $\cF$ as first and second input,
    respectively, and receives a uniformly random $\coin$.
    \item\label{step:compute-a} $\dhA$ computes $a$ and $r$ from
    $\commitx{a}{r}$.
    \item\label{step:compute-b} $\dhA$ computes $b = \coin \oplus a$
    and sends $b$ to $\dA$.
    \item $\dhA$ waits to receive $\dA$'s last message and outputs
    whatever $\dA$ outputs.
    \end{enumerate}
    \vspace{-1ex}
  \end{framed}
  \vspace{-2ex}
  %\small	
  \caption{The Ideal-World Simulation against dishonest Alice.}
  \label{fig:simulationA}
  \vspace{-1ex}
\end{figure}
%%%%%%%%%%%%%%%%%%%%%%%%%%%%%%%%%%%%%%%%%%%%%%%%%%%%%%%%%%%%%%%%%%%%%%%%%

  First note that $a, r$ and $\commitx{a}{r}$ are chosen and computed
  as in the real protocol. From the statistically binding property of
  the commitment scheme, it follows that $\dA$'s choice bit $a$ is
  uniquely determined from $\commitx{a}{r} = c$, since for any $c$,
  there exists at most one pair $(a,r)$ such that $c =
  \commitx{a}{r}$, except with probability negligible in the security
  parameter $n$. Hence in the real world, $\dA$ is unconditionally
  bound to her bit before she learns $\B$'s choice bit, which means
  $a$ is independent of $b$. Therefore in
  Step~(\ref{step:compute-a}.), the simulator can correctly (but not
  necessarily efficiently) compute $a$ (and $r$). Note that, in the
  case of unconditional security, we do not have to require the
  simulation to be efficient. However, we show in
  Section~\ref{sec:general.composition.coin} how to extend the
  underlying commitment in order to extract $\dA$'s inputs. This
  extraction requires a extraction trapdoor and yields an efficient
  simulation in the CRS-model. Finally, due to the properties of XOR,
  $\dA$ cannot tell the difference between the random $b$ computed
  from the ideal, random $\coin$ in the simulation in
  Step~(\ref{step:compute-b}.)~and the randomly chosen $b$ of the real
  world. It follows that the simulated output is statistically
  indistinguishable from the output in the real protocol.
\end{proof}

To prove security against any dishonest quantum-computationally
bounded $\dB$, we will follow the lines of argument as in
Section~\ref{sec:security.definition.computational}, in particular
Definition~\ref{def:polyboundedBobCRS}, with slight
modifications. More specifically, we do not require a common reference
string, so we can omit this part of the definition. Thus, we show that
there exists an ideal-world simulation $\dhB$ with output
quantum-computationally indistinguishable from the output of the
protocol in the real world. For the ideal world, we consider the
poly-size input sampler, which takes as input only the security
parameter and produces a valid input state $\rho_{U ZV'} =
\rho_{\MC{U}{Z}{V'}}$ as specified in
Section~\ref{sec:security.definition.computational}.

In a simulation against a \emph{classical} adversary, a classical
poly-time simulator would work as follows. It inquires $\coin$ from
$\cF$, chooses random $a$ and $r$, and computes $b' = \coin \oplus a$
as well as $\commitx{a}{r}$. It then sends $\commitx{a}{r}$ to $\dB$
and receives $\dB$'s choice bit $b$. If $b = b'$, the simulation was
successful. Otherwise, the simulator rewinds $\dB$ and repeats the
simulation. For a security proof against any \emph{quantum}
adversary\index{rewinding!quantum}, we construct a poly-time quantum
simulator proceeding similarly to its classical analogue. However, it
requires quantum registers as work space and relies on Watrous' {\it
quantum rewinding lemma} (see Lemma~\ref{lemma:qrewind}). Recall from
Section~\ref{sec:quantum.rewinding} that Watrous constructs the quantum
simulator for a $\Sigma$-protocol, i.e.\ a protocol in three-move
form, where the verifier flips a single coin in the second step and
sends this challenge to the prover. Since these are the essential
aspects also in our protocol $\COIN$, we can apply Watrous' quantum
rewinding technique (with slight modifications) as a black-box to our
protocol. We also follow his notation and line of argument here. For a
more detailed description and proofs, we refer to~\cite{Watrous09} and
Section~\ref{sec:quantum.rewinding}.

\begin{theorem}
  \label{thm:computational.security.single.coin}
  For $p_0 \geq \frac14$, protocol $\COIN$ is quantum-computationally
  secure against any poly-time bounded dishonest Bob (according to
  Definition~\ref{def:polyboundedBobCRS} but with the modification
  described above), provided that the underlying commitment scheme is
  quantum-computationally hiding.
\end{theorem}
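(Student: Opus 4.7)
The plan is to build a polynomial-time quantum simulator $\dhB$ along the lines of Watrous' construction from Section~\ref{sec:quantum.rewinding}, and invoke Lemma~\ref{lemma:qrewind} in a black-box fashion, exploiting the fact that Bob's only message in $\COIN$ is a single bit. First I would describe $\dhB$ as follows: upon receiving $\start$ from the environment, it queries $\cF$ to obtain $\coin$, initializes quantum work registers together with $\dB$'s input register $W$ containing $\ket{\psi}$, and implements the unitary circuit $Q$ that (i) samples a uniformly random guess $b' \in \zo$, (ii) sets $a \assign \coin \oplus b'$, samples randomness $r$, computes $\commitx{a}{r}$ and writes it into the message register for $\dB$, (iii) runs $\dB$ coherently to produce his reply bit $b$ in a designated register, and (iv) writes $b \oplus b'$ into the single-qubit register $G$ via a $\op{CNOT}$. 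The ``good'' output state on $G=0$ then corresponds exactly to a transcript where $a \oplus b = \coin$, at which point $\dhB$ can complete the opening honestly using the $(a,r)$ it itself chose.

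Next I would identify the probability $p(\psi)$ of measuring $0$ in $G$ with the ideal value $q = \tfrac12$. The key observation is that $b'$ is uniform and independent of everything $\dB$ sees \emph{except} through the commitment $\commitx{a}{r} = \commitx{\coin \oplus b'}{r}$, so the conditional distribution of $b'$ given $\dB$'s view is close to uniform precisely by the quantum-computational hiding property. Concretely, I would argue by reduction: a non-negligible gap $|p(\psi) - \tfrac12| \geq \eps$ for some poly-size family $\{\ket{\psi}\}$ yields a poly-time quantum distinguisher between $\commitx{0}{r}$ and $\commitx{1}{r'}$ by running $\dB$ on the challenge ciphertext and comparing its reply bit against a locally chosen guess. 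Hence $|p - q| < \eps$ for every negligible $\eps > 0$.

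With $q = \tfrac12$ and $p_0 = \tfrac14$, the hypotheses of Lemma~\ref{lemma:qrewind} are satisfied (monotonicity $p_0(1-p_0) = 3/16 \leq 1/4 = q(1-q)$, the lower bound $p_0 \leq p$ for large enough $n$, and the closeness $|p-q| < \eps$ just established). The lemma then yields a rewinding circuit $R$ of size polynomial in $\mathrm{size}(Q)$ and $\log(1/\eps)$ whose output $\rho(\psi)$ has square-fidelity $1 - \eps'$ with the target successful-simulation state $\ket{\phi_{good}(\psi)}$, where $\eps'$ is negligible whenever $\eps$ is. I would then let $\dhB$ measure $R$'s output to extract the classical transcript, deliver $\top$ to $\cF$ (or $\bot$ if $\dB$ aborted), and output $\dB$'s residual quantum state; indistinguishability of $out_{\A,\dB}^{\COIN}$ and $out_{\hA,\dhB}^{\cF}$ follows by combining the closeness of $\rho(\psi)$ to $\ket{\phi_{good}(\psi)}$ with the fact that, conditioned on $b = b'$, the simulated transcript is \emph{identically} distributed to a real one (since $a$ is uniform and the commitment and opening are honestly generated).

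The main obstacle is the second step, namely pinning down $|p - q| < \eps$ uniformly over auxiliary inputs $\ket{\psi}$ via a clean reduction to quantum-computational hiding. The subtlety is that $\dB$ holds a quantum auxiliary state that may be entangled with the environment's register, so the reduction must use $\ket{\psi}$ as non-uniform advice and embed the external commitment challenge into the single position where $\commitx{a}{r}$ is handed to $\dB$; everything else the simulator does is efficient. Once this is done, plugging into Lemma~\ref{lemma:qrewind} and the observation that the honest opening phase reveals nothing beyond $(a,r)$ already known to $\dhB$ closes the argument.
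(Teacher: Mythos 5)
Your proposal follows essentially the same approach as the paper: build the unitary $Q$ that coherently runs $\dB$ on a superposed commitment to $a = \coin \oplus b'$, flag success in a one-qubit register $G$ via a $\op{CNOT}$, argue that the hiding property bounds $|p - \tfrac12|$ by a negligible $\eps$, and invoke Lemma~\ref{lemma:qrewind} with $q = \tfrac12$, $p_0 = \tfrac14$ as a black box. The paper presents the superposition over $(a,r)$ with $b' = \coin \oplus a$ derived from $a$, whereas you superpose over $(b',r)$ and set $a = \coin \oplus b'$ --- these are the same circuit in different coordinates --- and you also spell out the reduction to the hiding game more explicitly than the paper does (correctly noting that $\ket{\psi}$ must be supplied as non-uniform quantum advice); the only small slip is that $\cF$ only lets \emph{Alice} abort, so $\dhB$ always sends $\top$ and there is no ``$\bot$ if $\dB$ aborted'' branch.
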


\begin{proof}
  Let $\ket{\psi}$ denote $\dB$'s $n$-qubit auxiliary input. Let $W$
  denote $\dB$'s auxiliary input register, containing
  $\ket{\psi}$. Let $V$ and $B$ denote $\dB$'s work space, where $V$
  is an arbitrary polynomial-size register and $B$ is a single qubit
  register. $\A$'s classical messages are considered in the following
  as being stored in quantum registers $A_1$ and $A_2$. In addition,
  the quantum simulator uses registers $R$, containing all possible
  choices of a classical simulator, and $G$, representing its guess
  $b'$ on $\dB$'s message $b$ in the second step. Finally, let $X$
  denote a working register of size $k$, which is initialized to the
  state $\ket{0^{\tilde{k}}}$ and corresponds to the collection of all
  registers as described above except $W$.

  The quantum rewinding procedure is implemented by a general quantum
  circuit $R_{\coin}$ with input $(W,X, \dB, \coin)$. As a first step,
  it applies a unitary $(n,k)$-quantum circuit $Q$ to $(W,X)$ to
  simulate the conversation, obtaining registers $(G,Y)$. Then, a test
  takes place to observe whether the simulation was successful. In
  that case, $R_{\coin}$ outputs the resulting quantum
  register. Otherwise, it {\it quantumly rewinds} by applying the
  reverse circuit $Q^\dag$ on $(G,Y)$ to retrieve $(W,X)$ and then a
  phase-flip transformation on $X$ before another iteration of $Q$ is
  applied. Note that $R_{\coin}$ is essentially the same circuit as
  $R$ described in~\cite{Watrous09} (and Section~\ref{sec:quantum.rewinding}),
  but in our application it depends on the value of a given $\coin$,
  i.e., we apply $R_0$ or $R_1$ for $\coin = 0$ or $\coin = 1$,
  respectively. \\

  \noindent In more detail, $Q$ transforms $(W,X)$ to $(G,Y)$ by the
  following unitary operations:
  \begin{enumerate}
  \item[(1.)] 
    It constructs a superposition over all possible random
    choices of values in the real protocol, i.e.,    
    $$ 
    \frac{1}{\sqrt{2^{\ell+1}}} \sum_{a,r} \ket{a,r}_{R}
    \ket{\commitx{a}{r}}_{A_1} \ket{b' = \coin \oplus a}_{G}
    \ket{\open{a}{r}}_{A_2} \ket{0}_{B} \ket{0^{k^*}}_{V}
    \ket{\psi}_{W} \, ,
    $$ 
    where $k^* < k$. Note that the state of registers $\big( A_1, G,
    A_2 \big)$ corresponds to a uniform distribution of possible
    transcripts of the interaction between the players.
  \item[(2.)] 
    For each possible $\commitx{a}{r}$, it simulates $\dB$'s
    possible actions by applying a unitary operator to $\big(
    W,V,B,A_1 \big)$ with register $A_1$ as control, i.e.,
    $$
    \frac{1}{\sqrt{2^{\ell+1}}} \sum_{a,r} 
    \ket{a,r}_{R}
    \ket{\commitx{a}{r}}_{A_1} 
    \ket{b'}_{G} 
    \ket{\open{a}{r}}_{A_2}
    \ket{b}_{B} 
    \ket{\tilde{\phi}}_{V} 
    \ket{\tilde{\psi}}_{W} \, ,
    $$
    where ${\tilde{\phi}}$ and ${\tilde{\psi}}$ describe modified
    quantum states. Note that register $B$ now includes $\dB$'s reply
    $b$. 
  \item[(3.)] 
    Finally, a $\op{CNOT}$-operation is applied to pair $\big( B,G
    \big)$ with $B$ as control to check whether the simulator's guess
    of $\dB$'s choice was correct. The result of the
    $\op{CNOT}$-operation is stored in register $G$.
    $$\frac{1}{\sqrt{2^{\ell+1}}} \sum_{a,r} 
    \ket{a,r}_{R}
    \ket{\commitx{a}{r}}_{A_1} 
    \ket{b' \oplus b}_{G} 
    \ket{\open{a}{r}}_{A_2}
    \ket{b}_{B} 
    \ket{\tilde{\phi}}_{V} 
    \ket{\tilde{\psi}}_{W} \, .$$
  \end{enumerate}
  Note that the qubit in register $G$ gives the information about
  success or failure of the simulated run, and the other registers are
  combined in the residual $n + k - 1$-qubit register $Y$. 

  Since the commitment scheme in the protocol is only
  quantum-computationally hiding, we must allow for small
  perturbations in the quantum rewinding procedure, according to
  Lemma~\ref{lemma:qrewind} : Bound $\eps$ indicates $\dB$'s advantage
  over a random guess on the committed value with $q = 1/2$ (and
  therefore, his advantage to bias the outcome), due to his computing
  power, i.e.~$\eps = |p - 1/2|$. From the hiding property of the
  commitment scheme, it follows that $\eps$ is negligible in the
  security parameter $n$. Thus, we can argue that probability $p$ is
  \emph{close} to independent of the auxiliary input. As a lower bound
  on the success probability, we chose $p_0 \geq 1/4$, which matches
  our setting. 
  
  Thus, we have circuit $Q$ as described above and our setting
  achieves the given bounds. Lemma~\ref{lemma:qrewind} applies. We can
  now construct an ideal-world quantum simulator $\dhB$ (see
  Figure~\ref{fig:simulationB}), interacting with $\dB$ and the ideal
  functionality $\cF$ and executing Watrous' quantum rewinding
  algorithm. We then compare the output states of the real process and
  the ideal process. In case of indistinguishable outputs,
  quantum-computational security against $\dB$ follows.

%%%%%%%%%%%%%%%%%%%%%%%%%%%%%%%%%%%%%%%%%%%%%%%%%%%%%%%%%%%%%%%%%%%%%%%%%
\begin{figure}
  \begin{framed}
    \noindent\hspace{-1.5ex} {\sc Simulation $\dhB$ :} \\[-4ex]   
    \begin{enumerate}
    \item $\dhB$ gets $\dB$'s auxiliary quantum input $W$ and working
    registers $X$.
    \item\label{step:get-coin} $\dhB$ sends $\start$ and then $\top$
    to $\cF$. It receives a uniformly random $\coin$.
    \item Depending on the value of $\coin$, $\dhB$ applies the
    corresponding circuit $R_{\coin}$ with input $W, X ,\dB$ and
    $\coin$.
    \item $\dhB$ receives output register $Y$ with
    $\ket{\phi_{good}(\psi)}$ and ``measures the conversation'' to
    retrieve the corresponding $\big( \commitx{a}{r},b,\open{a}{r}
    \big)$. It outputs whatever $\dB$ outputs.
  \end{enumerate}
  \vspace{-1.5ex}
  \end{framed}
  \vspace{-1.5ex}
  %\small
  \caption{The Ideal-World Simulation against dishonest Bob.}
  \label{fig:simulationB}
\end{figure}
%%%%%%%%%%%%%%%%%%%%%%%%%%%%%%%%%%%%%%%%%%%%%%%%%%%%%%%%%%%%%%%%%%%%%%%%%

  First note that the superposition constructed as described above in
  circuit $Q$ in Step~(1.) corresponds to all possible random choices
  of values in the real protocol. Furthermore, the circuit models any
  possible strategy of quantum $\dB$ in Step~(2.), depending on
  control register $\ket{\commitx{a}{r}}_{A_1}$. The
  $\op{CNOT}$-operation on $(B,G)$ in Step~(3.), followed by a
  standard measurement of $G$, indicate whether the guess $b'$ on
  $\dB$'s choice $b$ was correct. If that was not the case (i.e.~$b
  \neq b'$ and measurement result 1), the system gets quantumly
  rewound by applying reverse transformations (3)-(1), followed by a
  phase-flip operation. The procedure is repeated until the
  measurement outcome is 0 and hence $b=b'$. Watrous' technique then
  guarantees that, for negligible advantage $\eps$ and a lower bound
  $p_0 \geq \frac14$, $\eps'$ is negligible. Thus, the final output of
  the simulation is close to the ``good'' state of a successful
  simulation. More specifically, the output $\rho(\psi)$ of $R_{coin}$
  has square-fidelity close to 1 with state $\ket{\phi_{good}(\psi)}$
  of a successful simulation,
  i.e.~
  $$\bra{\phi_{good}(\psi)}\rho(\psi)\ket{\phi_{good}(\psi)} \geq
  1 - \eps' \, ,
  $$ where $\eps' = 16 \ \eps \log^2(1 / \eps) / (p_0^2 \
  (1-p_0)^2)$. Last, note that all operations in $Q$ (and therewith in
  $R_{coin}$) can be performed by polynomial-size circuits, and thus,
  the simulator has polynomial size (in the worst case). It follows
  that the output of the ideal simulation is indistinguishable from
  the output in the real world for any quantum-computationally bounded
  $\dB$.
\end{proof}

%%% COMPOSABILITY %%%%%%%%%%%%%%%%%%%%%%%%%%%%%%%%%%%%%%%%%%%%%%%%%%%%%%%%

\section{Composability}
\label{sec:composability.coin}

As already discussed in the previous part, there are several
composition frameworks proposed for the quantum setting, but for
sequential composition we will argue along the lines of our security
framework (Section~\ref{sec:sequential.composition.coin}). In
Section~\ref{sec:general.composition.coin}, we will use an extend
commitment construction to achieve a more general composability in the
CRS-model. Note that only sequential composition allows us to do
coin-flipping from scratch.

%%%%%%%%%%%%%%%%%%%%%%%%%%%%%%%%%%%%%%%%%%%%%%%%%%%%%%%%%%%%%%%%%%%%%%%%%%

\subsection{Sequential Composition}
\label{sec:sequential.composition.coin}
\index{composition!sequential}

  Recall that we prove correctness and security for our single
  coin-flip according to the security framework as described in
  Section~\ref{sec:security.definition}, with the one modification
  that we do not assume a common reference string in the simulation
  against a dishonest Bob (see
  Theorem~\ref{thm:computational.security.single.coin}). However, we
  can still apply the Composition Theorems I and II
  (Theorems~\ref{thm:composition.unconditional}
  and~\ref{thm:composition.computational}), where we also omit the
  reference string in the latter. We will state the composition result
  explicitly here.
  
  \begin{corollary}
    \label{cor:sequential.coin.flipping}
    Let $\pi_i = \cPi{\A}{\B}$ and $\F_i = \cF$, and let
    $\Sigma^{\F_1\cdots\F_\ell}$ be a classical two-party hybrid
    protocol which makes at most $\ell=\poly(n)$ calls to the
    functionalities. Then, for every $i \in \set{1,\ldots,\ell}$,
    each protocol $\pi_i$ is a statistically secure implementation of
    $\F_i$ against $\dAlice$ and a computationally secure
    implementation of $\F_i$ against $\dBobPoly$.

    \noindent It holds that there exists an ideal-world adversary
    $\dhA \in \dAlice$ such that
    $$ 
    out_{\dA,\B}^{\Sigma^{\pi_1\cdots\pi_\ell}} \approxs
    out_{\dhA,\dB}^{\Sigma^{\F_1\cdots\F_\ell} } \, ,
    $$ and an ideal-world adversary $\dhB \in \dBobPoly$ such that for
    every efficient input sampler, we have
    $$ 
    out_{\A,\dB}^{\Sigma^{\pi_1\cdots\pi_\ell}} \approxq
    out_{\hA,\dhB}^{\Sigma^{\F_1\cdots\F_\ell} } \, .
    $$
  \end{corollary}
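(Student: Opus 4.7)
The plan is to derive Corollary~\ref{cor:sequential.coin.flipping} as a direct application of the two composition results proved earlier, Theorem~\ref{thm:composition.unconditional} for unconditional security and Theorem~\ref{thm:composition.computational} for computational security, so the work reduces to verifying that the hypotheses of those theorems are met by each sub-protocol $\pi_i = \cPi{\A}{\B}$ implementing $\F_i = \cF$, and to handling the one small deviation of $\COIN$ from the standard framework, namely that its simulator against $\dB$ does not consult a common reference string.

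First I would verify the hypotheses. The correctness of $\COIN$ and its statistical security against an unbounded $\dA$ (with simulator $\dhA$ as constructed in Figure~\ref{fig:simulationA}) are exactly the content of the two preceding statements in this chapter, and Theorem~\ref{thm:computational.security.single.coin} established quantum-computational security against any poly-time $\dB$ via the simulator $\dhB$ of Figure~\ref{fig:simulationB}. The framework of Section~\ref{sec:security.definition} also requires that the hybrid protocol $\Sigma$ be classical in the sense that between successive sub-protocol calls dishonest Bob's quantum state is correlated with honest Alice's inputs only through the classical communication; this follows by construction, since $\COIN$ itself exchanges only classical messages and produces only classical outputs, so every intermediate state before a new call to $\cF$ has the required Markov form $\rho_{\MC{U_j}{Z_j}{V'_j}}$.

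With the hypotheses in place, the first claim on statistical indistinguishability follows immediately from Theorem~\ref{thm:composition.unconditional} by taking $\eps$ to be the statistical security parameter of a single execution of $\COIN$ and using $\ell = \poly(n)$, so that the overall distance remains negligible. For the second claim against $\dBobPoly$, I would inductively replace one call $\pi_i$ at a time by the functionality $\cF$, invoking the computational security of $\COIN$ at each step. The one subtlety is that Definition~\ref{def:polyboundedBobCRS} and the proof of Theorem~\ref{thm:composition.computational} are phrased in the CRS-model, whereas our simulator $\dhB$ requires no $\crs$; this is a strictly weaker situation, so the induction of the composition proof goes through unchanged with an empty reference string $\crs_i$ for each $i$ (the quantum-computational indistinguishability in the inductive step is then simply the indistinguishability guarantee of Theorem~\ref{thm:computational.security.single.coin} applied to the current input state).

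The main obstacle that I expect to encounter is the careful bookkeeping around Alice's abort option, which is modeled by $\cF$ accepting a second input $\top$ or $\bot$: one must check that both $\dhA$ (which extracts $a,r$ by breaking binding, hence is unbounded) and $\dhB$ (which uses quantum rewinding à la Watrous) correctly relay an abort event so that the hybrid and real outputs remain aligned. Modulo this routine check, the corollary is a straightforward instantiation of the composition theorems, and I would conclude by remarking that because $\COIN$ itself uses no set-up, the composed protocol $\Sigma^{\pi_1\cdots\pi_\ell}$ likewise requires no set-up, which is precisely what enables the ``coin-flipping from scratch'' applications discussed in Chapter~\ref{chap:coin.flip.applications}.
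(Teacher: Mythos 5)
Your proposal is correct and takes essentially the same route as the paper: the paper itself presents Corollary~\ref{cor:sequential.coin.flipping} as a direct restatement of Composition Theorems~\ref{thm:composition.unconditional} and~\ref{thm:composition.computational} specialized to $\pi_i = \cPi{\A}{\B}$ and $\F_i = \cF$, with exactly the modification you identify --- omitting the reference string in Theorem~\ref{thm:composition.computational}, which (as you note) is strictly weaker and leaves the inductive argument intact. Your verification of the hypotheses (correctness, statistical security against $\dA$, Watrous-based computational security against $\dB$, and the classical-hybrid Markov condition between calls, which holds trivially since $\COIN$ is entirely classical) is precisely the implicit content the paper relies on, and your remark on the abort bookkeeping, while not an actual obstacle here since the composition theorems compare full output distributions including $\bot$-events, is a reasonable sanity check.
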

 
  The ideal functionality for sequential coin-flipping,
  i.e.~$\cxF{\ell} = \Sigma^{\F_1\cdots\F_\ell}$, is depicted in
  Figure~\ref{fig:clambdaF}. Note that $\cxF{\ell}$ is in fact derived
  from composing the functionality $\cF$ of a single coin-flip
  sequentially but interpreted more directly, e.g.\ it does not output
  the bits one after another but as a string, and thus, does not
  output the precedent coins in case of an intermediate abort.

%%%%%%%%%%%%%%%%%%%%%%%%%%%%%%%%%%%%%%%%%%%%%%%%%%%%%%%%%%%%%%%%%%%%%%%%
\begin{figure}
  \begin{framed}
    \noindent\hspace{-1.5ex} {\sc Functionality} $\cxF{\ell}$
    :\\[-4ex]
    \begin{enumerate}
    \item 
      Upon receiving requests $\start$ from both Alice and Bob,
      $\cxF{\ell}\,$ outputs uniformly random $h \in_R \zo^\ell$ to
      Alice.
    \item
      It then waits to receive her second input $\top$ or $\bot$ and
      outputs $h$ or $\bot$ to Bob, respectively.
    \end{enumerate}
    \vspace{-1ex}
  \end{framed}
  \vspace{-2ex}
  \small
  \caption{The Ideal Functionality for Sequential $\ell$-bit
  Coin-Flipping.}
  \label{fig:clambdaF}
  \vspace{-1ex}
\end{figure}
%%%%%%%%%%%%%%%%%%%%%%%%%%%%%%%%%%%%%%%%%%%%%%%%%%%%%%%%%%%%%%%%%%%%%%%%%%
 
%%%%%%%%%%%%%%%%%%%%%%%%%%%%%%%%%%%%%%%%%%%%%%%%%%%%%%%%%%%%%%%%%%%%%%%%%%

\subsection{General Composition}
\label{sec:general.composition.coin}
\index{composition!general}

  For our coin-flipping protocol without set-up, we cannot claim
  universal composability. We do not require (nor obtain) an efficient
  simulator in case of unconditional security against dishonest Alice
  and furthermore, we allow rewinding in case of dishonest Bob. These
  two aspects contradict the universal composability framework.

  Efficient simulation requires some trapdoor information in the
  commitment construction, which is available only to a simulator, so
  that it is able to extract dishonest Alice's choice bit
  efficiently. Therefore, we have to extend the commitment scheme by
  including an extraction trapdoor. To circumvent the necessity of
  rewinding dishonest Bob, we further extend the scheme with respect
  to equivocability, i.e., the simulator can now construct a valid
  commitment, which can later be opened to both bit values as
  desired. Note that with such requirements, the CRS-model seems
  unavoidable.

  An appropriate extended construction is proposed in
  Section~\ref{sec:extended.commit.coin}. The real-world key consists
  of commitment key $\pkB$ and (invalid) instance $x'$. During
  simulation against $\dA$, $\dhA$ chooses $\pkB$ with matching
  decryption key $\sk$ and therefore, it can extract $\dA$'s choice
  bit $a$ by decrypting both commitments $C_0$ and $C_1$. In both
  worlds, the commitment is unconditionally binding. During simulation
  against $\dB$, $\dhB$ chooses commitment key $\pkH$ and (valid)
  instance $x$. Hence, the commitment is unconditionally hiding and
  can be equivocated by using $w$ to compute two valid replies in the
  underlying $\Sigma$-protocol. Quantum-computational security in real
  life follows from the indistinguishability of the keys $\pkB$ and
  $\pkH$ and the indistinguishability of the instances $x$ and $x'$,
  and efficiency of both simulations is ensured due to extraction and
  equivocability.

  Again, by combining our extended construction in the CRS-model
  providing efficient simulations on both sides with the results of
  Section~\ref{sec:extended.commit.coin} and~\cite[Theorem
  20]{Unruh10}, we get the following
  result\index{composition!quantum-UC} that $\cPi{\A}{\B}$
  \emph{computationally quantum-UC-emulates} its corresponding ideal
  functionality $\cF$ for \emph{both dishonest players}. In the next
  Chapter~\ref{chap:framework}, we will show another method of
  achieving fully simulatability in the plain model without any set-up
  assumption, when both players are poly-time bounded.

%%%%%%%%%%%%%%%%%%%%%%%%%%%%%%%%%%%%%%%%%%%%%%%%%%%%%%%%%%%%%%%%%%%%%%%%% 
%%% COIN-FLIP FRAMEWORK %%%%%%%%%%%%%%%%%%%%%%%%%%%%%%%%%%%%%%%%%%%%%%%%%
%%%%%%%%%%%%%%%%%%%%%%%%%%%%%%%%%%%%%%%%%%%%%%%%%%%%%%%%%%%%%%%%%%%%%%%%% 

\clearemptydoublepage
\chapter{Amplification Framework for Strong Coins}
\label{chap:framework}
\index{coin-flipping!amplification}

Here, we present a framework that amplifies weak security requirements
on coins into very strong properties, with the final result of a
quantum-secure and fully simulatable coin-flipping protocol, which can be
implemented in the plain model from scratch. The results in this
chapter are joint work with Nielsen~\cite{LN10}.

%%% MOTIVATION %%%%%%%%%%%%%%%%%%%%%%%%%%%%%%%%%%%%%%%%%%%%%%%%%%%%%%%%%% 

\section{Motivation}
\label{sec:framework.motivation}

Coin-Flipping of a single coin is in itself an intriguing and prolific
primitive in cryptographic protocol theory. Its full potential is
tapped in the possibility of flipping a string of coins, which opens
up for various applications and implementations without any set-up
assumptions. We will later in
Chapter~\ref{chap:coin.flip.applications} discuss some examples
thereof.

In this chapter, we first investigate the different degrees of
security that a string of coins can acquire. Then, we propose and
prove constructions that allow us to amplify the respective degrees of
security such that weaker coins are converted into very strong ones in
a straightforward way.\footnote{For the sake of clarity, we note that
we use the (intuitive) literal interpretation of ``weak'' and
``strong'' coins related to their degrees of security, which differs
from their definitions in the quantum literature (see also
Section~\ref{sec:primitives.coin.flip}).} Our method only assumes
mixed commitment schemes, which we know how to construct with quantum
security, no other assumptions are put forward. Our final result is a
coin-flipping protocol, which is fully simulatable in polynomial time,
even against poly-sized \emph{quantum} adversaries on both sides, and
which can be implemented with quantum-computational security in the
plain model from scratch.

Our method of amplifying the security of coin-flipping also applies to
potential \emph{constant round} coin-flipping. Such a strong and
efficient construction would require a basic quantum-secure coin-flip
protocol with long outcomes (in constant round), and poly-time
simulatability on one side. Its construction, however, is still a
fascinating open problem in the quantum world.

%%% NOTIONS %%%%%%%%%%%%%%%%%%%%%%%%%%%%%%%%%%%%%%%%%%%%%%%%%%%%%%%%%%%%%%

\section{Security Notions}
\label{sec:notions.coin.flip}

We denote a generic protocol with a $\lambda$-bit coin-string as
output by $\cxPi{\A}{\B}{\lambda}$, corresponding to an ideal
functionality $\cxF{\lambda}$. Recall that the outcome of such a
protocol is $c \in \zo^\lambda \cup \set{\bot}$, i.e., either an
$\lambda$-bit string or an error message.\footnote{We want to stress
that throughout the chapter, a reference to any \emph{coin-flip} is
understood as one run of coin-flipping with a coin-string outcome.} We
will use several security parameters, indicating the length of
coin-strings for different purposes. The length of a coin-flip
yielding a key and a challenge are denoted by $\kappa$ and $\sigma$,
respectively, and the length of a final coin-flip is indicated by
$\ell$, i.e., we allow that $\lambda$ is a function of the respective
parameter, e.g.\ $\lambda(\kappa)$, but we write $\kappa$ instead.

Throughout this chapter, we restrict both players Alice and Bob to the
families $\dAlicePoly$ and $\dBobPoly$ of classical polynomial-time
strategies, i.e.\ for the honest case, we require $\A , \hA \in
\dAlicePoly$ and $\B, \hB \in \dBobPoly$, as well as for possibly
quantum dishonest entities, we demand $\dA, \dhA \in \dAlicePoly$ and
$\dB, \dhB \in \dBobPoly$. We want to stress here that, in contrast to
previous chapters, both players are poly-time bounded. This means, in
particular, that the ideal functionality is defined symmetric such
that always the respective dishonest party has an option to abort. For
clarity, we will explicitly show the ideal functionalities in the
case of both players being honest (Figure~\ref{fig:clambdaF.honest})
and in the case of dishonest Alice and honest Bob
(Figure~\ref{fig:clambdaF.dishonest}). The latter then also applies to
honest Alice and dishonest Bob by simply switching sides and names.

%%%%%%%%%%%%%%%%%%%%%%%%%%%%%%%%%%%%%%%%%%%%%%%%%%%%%%%%%%%%%%%%%%%%%%%%
\begin{figure}[here]
  \begin{framed}
    \noindent\hspace{-1.5ex} {\sc Functionality} $\cxF{\lambda}$ {\sc
    with honest players:}\\ Upon receiving requests $\start$ from both
    Alice and Bob, $\cxF{\lambda}\,$ outputs uniformly random $h \in_R
    \zo^\lambda$ to Alice and Bob.
    \vspace{-1ex}
  \end{framed}
  \vspace{-1.5ex}
  %\small
  \caption{The Ideal Functionality for $\lambda$-bit Coin-Flipping
  (without Corruption).}
  \label{fig:clambdaF.honest}
  \vspace{-1ex}
\end{figure}
%%%%%%%%%%%%%%%%%%%%%%%%%%%%%%%%%%%%%%%%%%%%%%%%%%%%%%%%%%%%%%%%%%%%%%%%%%

%%%%%%%%%%%%%%%%%%%%%%%%%%%%%%%%%%%%%%%%%%%%%%%%%%%%%%%%%%%%%%%%%%%%%%%%
\begin{figure}[here]
  \begin{framed}
    \noindent\hspace{-1.5ex} {\sc Functionality} $\cxF{\lambda}$ {\sc with
    dishonest Alice:}\\[-4ex]
    \begin{enumerate}
    \item 
      Upon receiving requests $\start$ from both Alice and Bob,
      $\cxF{\lambda}\,$ outputs uniformly random $h \in_R \zo^\lambda$
      to Alice.
    \item
      It then waits to receive her second input $\top$ or $\bot$ and
      outputs $h$ or $\bot$ to Bob, respectively.
    \end{enumerate}
    \vspace{-1ex}
  \end{framed}
  \vspace{-1.5ex}
  %\small
  \caption{The Ideal Functionality for $\lambda$-bit Coin-Flipping
  (with Corruption).}
  \label{fig:clambdaF.dishonest}
  \end{figure}
%%%%%%%%%%%%%%%%%%%%%%%%%%%%%%%%%%%%%%%%%%%%%%%%%%%%%%%%%%%%%%%%%%%%%%%%%%

Recall that the \emph{joint output representation} of a protocol
execution is denoted by $out_{\A,\B}^\Pi \,$ with $\Pi =
\cxPi{\A}{\B}{\lambda}$ and given here for the case of honest
players. The same notation with $\F = \cxF{\lambda}$ and $\hA, \hB$
applies in the ideal world as $out_{\hA,\hB}^\F$, where the players
invoke the ideal functionality $\cxF{\lambda}$ and output whatever
they obtain from it. We need an additional notation here, describing
the \emph{outcome} of a protocol run between e.g.\ honest $\A$ and
$\B$, namely $c \la \cxPi{\A}{\B}{\lambda}$.

\index{coin-flipping!uncontrollable} 
\index{coin-flipping!random}
\index{coin-flipping!enforceable} 
We will define three flavors of security for coin-flipping protocols,
namely \defterm{uncontrollable (uncont)}, \defterm{random} and
\defterm{enforceable (force)}. The two sides can have different
flavors. Then, if a protocol $\cxPi{\A}{\B}{\lambda}$ is, for
instance, enforceable against Alice and random against Bob, we write
$\pi^{(\force,\random)}$, and similarly for the eight other
combinations of security. Note that for simplicity of notation, we
will then omit the indexed name as well as the length of the coin, as
they are clear from the context. Similar to the ideal functionality
for the case of dishonest Alice, we define all three flavors for
Alice's side only, as the definitions for Bob are analogue. The
flavors are defined along the lines of the security framework
introduced in Section~\ref{sec:security.definition} but with adaptions
to reflect the particular context here. Recall that $U'$, $Z$, and $V$
denote dishonest Alice's quantum and classical input, and honest Bob's
classical input, respectively. Note that an honest player's input is
empty but models the invocation $\start$. Any input state $\rho_{U' Z
V}$ is restricted to $ \rho_{U' Z V} = \rho_{\MC{U'}{Z}{V}}$, such
that Alice's quantum and Bob's classical part are only correlated via
Alice's classical~$Z$. We assume again a poly-size input sampler,
which takes as input the security parameter, and then produces a valid
input state $\rho_{U' Z V} = \rho_{\MC{U'}{Z}{V}}$ (and analogous
$\rho_{U Z V'}$ in case of dishonest Bob).

We stress that we require for all three security flavors and for all
$c \in \zo^{\lambda}$ that
$$\prob{c \la \cxPi{\A}{\B}{\lambda}} = 2^{-\lambda} \, ,$$ 
which implies that when both parties are honest, then the coin is
unbiased. Below we only define the extra properties required for each
of the three flavors.\\

We call a coin-flip \defterm{uncontrollable} against Alice, if she
cannot force the coin to hit some negligible subset, except with
negligible probability.
  \begin{definition}[Uncontrollability against dishonest Alice]
    \label{def:uncont}
    We say that the protocol $\ \cxPi{\A}{\B}{\lambda} \ $
    implements an \defterm{uncontrollable} coin-flip against dishonest
    Alice, if it holds for any poly-sized adversary $\dA \in
    \dAlicePoly$ with inputs as specified above and all negligible
    subsets $Q \subset \zo^\lambda$ that the probability
    $$ \prob{c \la \cxPi{\dA}{\B}{\lambda}\, : \, c \in Q} \in
    \negl{\kappa} \, .
    $$
  \end{definition}
Note that we denote by $Q \subset \zo^\lambda$ a family of subsets
$\set{Q(\kappa) \subset \zo^{\lambda(\kappa)}}_{\kappa \in \NN}$ for
security parameter $\kappa$. Then we call $Q$ negligible, if $\vert
Q(\kappa) \vert 2^{-\lambda(\kappa)}$ is negligible in $\kappa$. In
other words, we call a subset negligible if it contains a negligible
fraction of the elements in the set in which it lives.\\

We call a coin-flip \defterm{random} against Alice, if she cannot
enforce a non-uniformly random output string in $\zo^\lambda$, except
by making the protocol fail on some chosen runs. That means she can at
most lower the probability of certain output strings compared to the
uniform case.
  \begin{definition}[Randomness against dishonest Alice]
    \label{def:random}
    We say that $ $  protocol $\ \cxPi{\A}{\B}{\lambda} \ $ implements a
    \defterm{random} coin-flip against dishonest Alice, if it holds
    for any poly-sized adversary $\dA \in \dAlicePoly$ with inputs as
    specified above that there exists an event E such that $\prob{E}
    \in \negl{\kappa}$ and for all $x \in \zo^\lambda$ it holds that
    $$
    \prob{c \la
    \cxPi{\dA}{\B}{\lambda} \, : \, c = x \, \vert \, \bar{E}} \leq
    2^{-\lambda} \, .
    $$
  \end{definition}
It is obvious that if a coin-flip is random against Alice, then it is also
an uncontrollable coin-flip against her. We will later discuss a generic
transformation going in the other direction from uncontrollable to
random coin-flipping.\\

We call a coin-flip \defterm{enforceable} against Alice, if it is
possible, given a uniformly random $c$, to simulate a run of the
protocol hitting exactly the outcome $c$, though we still allow that
the corrupted party forces abort on some outcomes.
  \begin{definition}[Enforceability against dishonest Alice]
    \label{def:force}
    We call a protocol $ \ \cxPi{\A}{\B}{\lambda} \ $
    \defterm{enforceable} against dishonest Alice, if it implements
    the ideal functionality $\ \cxF{\lambda} \ $ against her.
  \end{definition}
In more detail, that means that for any poly-sized adversary $\dA \in
\dAlicePoly$, there exists an ideal-world adversary $\dhA \in
\dAlicePoly$ that simulates the protocol with $\dA$ as follows.
$\dhA$ requests output $h \in \zo^\lambda$ from $\cxF{\lambda}$. Then
it simulates a run of the coin-flipping protocol with $\dA$ and tries to
enforced output $h$. If $\dhA$ succeeds, it inputs $\top$ as $\dA$'s
second input to $\cxF{\lambda}$. In that case, $\cxF{\lambda}$ outputs
$h$. Otherwise, $\dhA$ inputs $\bot$ to $\cxF{\lambda}$ as second
input and $\cxF{\lambda}$ outputs $\bot$. The simulation is such that
the ideal output is quantum-computationally indistinguishable from the
output of an actual run of the protocol, i.e.,
$$ 
out_{\dA,\B}^\Pi \approxq out_{\dhA,\hB}^\F \, , 
$$ where $\Pi = \cxPi{\dA}{\B}{\lambda}$ and $\F = \cxF{\lambda}$.

Note that an enforceable coin-flip is not necessarily a random
coin-flip, as it is allowed that the outcome of an enforceable
coin-flip is only quantum-computationally indistinguishable from
uniformly random, whereas a random coin-flip is required to produce
truly random outcomes on the non-aborting runs.\\

We defined an enforceable coin-flip against dishonest Alice to be a
coin-flip, simulatable on her side and implementing the corresponding
ideal functionality against her. The same result with switched sides
also holds for any poly-time bounded Bob. Thus, we obtain a coin-flip
protocol, for which we can simulate both
sides\index{coin-flipping!fully simulatable} in polynomial
time. Corollary~\ref{cor:double.simulatable} follows.
  \begin{corollary}
    \label{cor:double.simulatable}
    Let $\cxPi{\A}{\B}{\lambda}$ be an enforceable coin-flip against
    both parties Alice and Bob with $\A \in \dAlicePoly$ and $\B \in
    \dBobPoly$, i.e.~$\cxPi{\A}{\B}{\lambda} =
    \pi^{(\force,\force)}$. Then $\pi^{(\force,\force)}$ is a
    \defterm{fully poly-time simulatable} coin-flipping protocol for the
    ideal functionality $\ \cxF{\lambda} \, $ with
    quantum-computational indistinguishability between the real and
    the ideal output.
  \end{corollary}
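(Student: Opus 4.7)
The plan is to observe that the corollary is essentially a syntactic consequence of Definition~\ref{def:force} applied on both sides, together with the global requirement $\prob{c \la \cxPi{\A}{\B}{\lambda}} = 2^{-\lambda}$ imposed on every coin-flip protocol considered in the security notions of Section~\ref{sec:notions.coin.flip}. The only nontrivial content is to check that the two enforceability guarantees, each designed for a single corruption pattern, assemble into a single simulator in the sense of Definition~\ref{def:polyboundedBobCRS} (adapted symmetrically to both parties).

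Concretely, I would handle the three corruption patterns separately. First, if only Alice is corrupt, enforceability against Alice directly supplies a polynomial-time simulator $\dhA \in \dAlicePoly$ such that $out_{\dA,\B}^{\Pi} \approxq out_{\dhA,\hB}^{\F}$ for $\Pi = \pi^{(\force,\force)}$ and $\F = \cxF{\lambda}$; take this $\dhA$ as the ideal-world adversary. Symmetrically, if only Bob is corrupt, enforceability against Bob supplies $\dhB \in \dBobPoly$ with $out_{\A,\dB}^{\Pi} \approxq out_{\hA,\dhB}^{\F}$, and we take this $\dhB$. The honest-honest case is the one that is not immediately a definition-chase: here I would argue directly that the requirement $\prob{c \la \cxPi{\A}{\B}{\lambda}} = 2^{-\lambda}$ built into all three flavors forces the real output $out_{\A,\B}^{\Pi}$ to be a uniformly random $\lambda$-bit string delivered to both players, which is exactly what $\cxF{\lambda}$ outputs on honest inputs (Figure~\ref{fig:clambdaF.honest}), giving the stronger statement $out_{\A,\B}^{\Pi} \approxp out_{\hA,\hB}^{\F}$.

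The last thing to verify is that these three cases together constitute a bona fide full simulation in polynomial time with quantum-computational indistinguishability. Polynomial time on each side is inherited verbatim from $\dhA \in \dAlicePoly$ and $\dhB \in \dBobPoly$, and the honest-honest simulation is trivial. Quantum-computational indistinguishability is the weakest of the three guarantees we obtain (perfect in the honest case, $\approxq$ in each corrupted case), so globally we obtain $\approxq$. No composition argument is needed because coin-flipping is a one-shot non-reactive functionality and each corruption pattern is treated independently.

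The main obstacle I would anticipate is purely one of bookkeeping: making sure that the input-sampler formalism of Definition~\ref{def:polyboundedBobCRS} is applied symmetrically on both sides, and that the input-state constraint $\rho_{U'ZV} = \rho_{\MC{U'}{Z}{V}}$ (respectively $\rho_{UZV'} = \rho_{\MC{U}{Z}{V'}}$) is the one implicitly used in each half of Definition~\ref{def:force}. Once this is pinned down, there is no further technical work: the corollary really is just a restatement of Definition~\ref{def:force} read on both sides at once.
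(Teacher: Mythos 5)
Your proof is correct and takes essentially the same approach as the paper: the paper justifies this corollary in a single sentence ("We defined an enforceable coin-flip against dishonest Alice to be a coin-flip, simulatable on her side and implementing the corresponding ideal functionality against her. The same result with switched sides also holds for any poly-time bounded Bob. Thus, we obtain a coin-flip protocol, for which we can simulate both sides in polynomial time."), and your case split over the three corruption patterns is the explicit version of that definitional chase, with the honest-honest case discharged by the global uniformity requirement imposed on all three flavors, exactly as you observe.
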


Combining the part regarding simulatability in
Corollary~\ref{cor:sequential.coin.flipping}, where we again omit the
common reference string, in contrast to the original Composition
Theorem II (Theorem~\ref{thm:composition.computational}), with the
results of Corollary~\ref{cor:double.simulatable}, we can show that
each protocol $\pi^{(\force,\force)}$ is a computationally secure
implementation of $\cxF{\lambda}$ against both $\dAlicePoly$ and
$\dBobPoly$.
  \begin{corollary}
    \label{cor:double.simulatable.sequential.composition}
    Protocol $\pi^{(\force,\force)}$ composes sequentially. 
  \end{corollary}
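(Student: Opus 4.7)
The plan is to reduce the claim directly to the Composition Theorem II (Theorem~\ref{thm:composition.computational}), using that enforceability, by definition, is exactly the property of implementing the ideal functionality $\cxF{\lambda}$ in the simulation-based sense required by the framework of Section~\ref{sec:security.definition}. First I would unpack Definition~\ref{def:force} on both sides: since $\pi^{(\force,\force)}$ is enforceable against dishonest Alice, there exists, for every $\dA\in\dAlicePoly$, an ideal-world adversary $\dhA\in\dAlicePoly$ such that $out_{\dA,\B}^{\pi}\approxq out_{\dhA,\hB}^{\F}$; symmetrically on Bob's side. This is precisely the statement of computational security against $\dAlicePoly$ and against $\dBobPoly$ in the sense of Definition~\ref{def:polyboundedBobCRS} (adapted as in the chapter to the no-CRS setting, since our protocols operate in the plain model).

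Next I would invoke the version of Composition Theorem II stated in the chapter (i.e., the variant of Theorem~\ref{thm:composition.computational} without a common reference string, as used in Corollary~\ref{cor:sequential.coin.flipping}). That theorem says: if every sub-protocol $\pi_i$ in a classical two-party hybrid $\Sigma^{\F_1\cdots\F_\ell}$ is a computationally secure implementation of the corresponding $\F_i$, then replacing the ideal calls by the sub-protocols yields a real protocol whose output is quantum-computationally indistinguishable from the hybrid's output. Applied with every $\pi_i=\pi^{(\force,\force)}$ and $\F_i=\cxF{\lambda}$, and for at most $k=\poly(n)$ sequential calls, this immediately yields an ideal-world adversary in $\dBobPoly$ (resp.\ $\dAlicePoly$) that simulates the composed execution, so that
$$
out_{\A,\dB}^{\Sigma^{\pi_1\cdots\pi_\ell}}\approxq out_{\hA,\dhB}^{\Sigma^{\F_1\cdots\F_\ell}}\, ,
$$
and symmetrically against dishonest Alice.

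The only slightly delicate point, and the one I expect to be the main obstacle to a truly clean presentation, is checking that the hypotheses of the composition theorem are really met by $\pi^{(\force,\force)}$. Concretely: (i) $\cxF{\lambda}$ is a classical non-reactive two-party functionality (by inspection of Figures~\ref{fig:clambdaF.honest} and~\ref{fig:clambdaF.dishonest}, inputs are the $\start$-signals and the abort bit, outputs are classical bit-strings), so the framework of Section~\ref{sec:security.definition} applies; (ii) the outer protocol in which we compose is classical and two-party, so the input-state decomposition $\rho_{U Z V'}=\rho_{\MC{U}{Z}{V'}}$ required before each call is preserved; and (iii) efficiency of the ideal-world adversaries is given by enforceability, so the composed simulator remains in $\dBobPoly$ (resp.\ $\dAlicePoly$) as polynomial concatenations of poly-time strategies. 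With these verifications in place, the conclusion that $\pi^{(\force,\force)}$ composes sequentially is a direct application of the composition theorem; no new probabilistic argument or rewinding analysis is required, since all of the hard work has already been done in establishing double-sided poly-time simulatability in Corollary~\ref{cor:double.simulatable}.
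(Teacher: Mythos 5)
Your proposal matches the paper's argument: the paper likewise establishes the corollary by combining Corollary~\ref{cor:double.simulatable} (double-sided poly-time simulatability, i.e.\ enforceability on both sides realizing $\cxF{\lambda}$) with the no-CRS variant of Composition Theorem~II already used in Corollary~\ref{cor:sequential.coin.flipping}. Your added verification of the framework's side conditions (classical non-reactive functionality, preserved input-state decomposition, poly-time concatenation of simulators) is exactly the bookkeeping the paper leaves implicit.
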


%%% THEOREMS %%%%%%%%%%%%%%%%%%%%%%%%%%%%%%%%%%%%%%%%%%%%%%%%%%%%%%%%%%%%%

\section{Amplification Theorems}
\label{sec:amplification.coin.flip}

We now propose and prove theorems, which allow us to amplify the
security strength of coins. Ultimately, we aim at constructing a
strong coin-flipping protocol $\pi^{(\force,\force)}$ with outcomes of any
polynomial length $\ell$ in $\lambda$ from any weaker coin-flip
protocol, i.e., either from a protocol $\pi^{(\force,\random)}$
producing one-bit outcomes
(Section~\ref{sec:amplification.short.long}), or from a protocol
$\pi^{(\force,\uncont)}$ giving outcomes of length $\kappa$, as
described in Section~\ref{sec:amplification.uncont.random}. In both
cases, the first step towards $\pi^{(\force,\force)}$ is to build a
protocol $\pi^{(\force,\random)}$ with outcomes of length $\ell$. 

We want to stress that if the underlying protocol already produces
$\ell$-bit outcomes and is constant round, then the resulting protocol
$\pi^{(\force,\force)}$ will also be constant round. If we start from
a protocol only producing constant-sized outcomes, then
$\pi^{(\force,\force)}$ will use $O(\ell)$ times the number of rounds
used by the underlying scheme.

We note here that we do not know of any candidate protocol with flavor
$(\force,\uncont)$ but not $(\force,\random)$. However, we consider it
as a contribution in itself to find the weakest security notion for
coin-flipping that allows to amplify to the final strong
$(\force,\force)$ notion using a constant round reduction.

%%%%%%%%%%%%%%%%%%%%%%%%%%%%%%%%%%%%%%%%%%%%%%%%%%%%%%%%%%%%%%%%%%%%%%%%%%

\subsection{From Short Outcomes to Long Outcomes}
\label{sec:amplification.short.long}

  To obtain long coin-flip outcomes, we can repeat a given protocol
  $\pi^{(\force,\random)}$ with one-bit outcomes $\ell$ times in
  sequence to get a protocol $\pi^{(\force,\random)}$ with $\ell$-bit
  outcomes. A candidate for $\pi^{(\force,\random)}$ with one-bit
  outcomes is the protocol of Chapter~\ref{chap:coin.flip}, which
  is---in terms of this context---enforceable against one side in
  poly-time and random on the other side, with empty event $E$
  according to Definition~\ref{def:random}, and the randomness
  guarantee even withstanding an unbounded adversary. The protocol was
  argued to be sequentially composable according to
  Corollary~\ref{cor:sequential.coin.flipping}.

  Note that this protocol is previously described and proven as
  $\pi^{(\random,\force)}$. However, due to the symmetric coin-flip
  definitions here and the restriction of entities to families of
  classical polynomial-time strategies, we can easily switch sides
  between $\A$ and $\B$.

%%%%%%%%%%%%%%%%%%%%%%%%%%%%%%%%%%%%%%%%%%%%%%%%%%%%%%%%%%%%%%%%%%%%%%%%%%

\subsection{From $(\force,\uncont)$ to $(\force,\random)$}
\label{sec:amplification.uncont.random}

  Assume that we are given a protocol $\pi^{(\force,\uncont)}$, that
  only guarantees that Bob cannot force the coin to hit a negligible
  subset (except with negligible probability). We now amplify the
  security on Bob's side from $\defterm{uncontrollable}$ to
  $\defterm{random}$ and therewith obtain a protocol
  $\pi^{(\force,\random)}$, in which Bob cannot enforce a
  non-uniformly random output string, except by letting the protocol
  fail on some occasions. The stronger protocol
  $\pi^{(\force,\random)}$ is given in
  Figure~\ref{fig:force.random}. The underlying commitment $\commit$
  denotes the commitment algorithm of the keyed mixed string
  commitment scheme as described in
  Section~\ref{sec:mixed.commit}. Recall that $\commit$ does not
  require actual unconditionally hiding keys, but rather it suffices
  to use uniformly random strings from $\zo^\kappa$, which
  unconditionally hide the plaintext, except with negligible
  probability. The possibility of using random strings ensures that
  most keys of the given domain are in that sense unconditionally
  hiding keys.

%%%%%%%%%%%%%%%%%%%%%%%%%%%%%%%%%%%%%%%%%%%%%%%%%%%%%%%%%%%%%%%%%%%%%%%%%
\begin{figure}
  \begin{framed}
    \noindent\hspace{-1.5ex} {\sc Protocol} $\pi^{(\force,\random)}$:
    \\[-4ex]
    \begin{enumerate}
    \item
      $\A$ and $\B$ run $\pi^{(\force,\uncont)}$ to produce a public
      key $pk \in \zo^\kappa$.
    \item
      $\A$ samples $a \in_R \zo^\ell$, commits to it with $A =
      \commitk{a}{r}{pk}$ and randomizer $r \in_R \zo^\ell$, and sends
      $A$ to $\B$.
    \item
      $\B$ samples $b \in_R \zo^\ell$ and sends $b$ to $\A$.
    \item
      $\A$ opens $A$ towards $\B$.
    \item
      The outcome is $c = a \oplus b$.
    \end{enumerate}
    \vspace{-1.5ex}
  \end{framed}
  \vspace{-1.5ex}
  \small
  \caption{Amplification from $(\force,\uncont)$ to
  $(\force,\random)$.}
  \label{fig:force.random}
  \end{figure}
%%%%%%%%%%%%%%%%%%%%%%%%%%%%%%%%%%%%%%%%%%%%%%%%%%%%%%%%%%%%%%%%%%%%%%%%%

  \begin{proposition}
    Protocol $\pi^{(\force,\random)}$ satisfies correctness, according
    to Definition~\ref{def:correctness}.
  \end{proposition}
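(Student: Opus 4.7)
The plan is to verify the three ingredients required for correctness (Definition~\ref{def:correctness}): that honest executions of $\pi^{(\force,\random)}$ terminate without abort (except with negligible probability), that both parties produce the same $c$, and that $c$ is (statistically close to) uniformly distributed over $\zo^\ell$, matching the output of $\cxF{\lambda}$ when both players are honest.

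First I would unfold the sub-protocol. When both $\A$ and $\B$ are honest, the call to $\pi^{(\force,\uncont)}$ is by assumption a correct coin-flip protocol implementing its functionality, so it produces a $pk \in \zo^\kappa$ distributed statistically close to uniform and delivered identically to both players. Since $pk$ is a uniformly random string in $\zo^\kappa$, it is---except with negligible probability in $\kappa$---an unconditionally hiding key of the mixed commitment scheme $\commit$ of Section~\ref{sec:mixed.commit}, so in particular it is a well-formed commitment key and the commit/open procedure satisfies completeness.

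Next I would trace the remaining steps. $\A$ samples $a \in_R \zo^\ell$ and sends $A = \commitk{a}{r}{pk}$; by completeness of $\commit$, her subsequent opening is accepted by $\B$, who then recovers $a$. Both parties hold the same $a$ and the same $b$ (the latter was sent in the clear), so both compute the same value $c = a \oplus b$, establishing that $\A$'s and $\B$'s outputs agree. Uniformity of $c$ then follows because $a$ is sampled independently and uniformly in $\zo^\ell$ by honest $\A$ and is independent of $b$; hence $a \oplus b$ is uniform in $\zo^\ell$ regardless of how $b$ was generated, in particular matching the distribution produced by $\cxF{\lambda}$ in Figure~\ref{fig:clambdaF.honest}.

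Finally I would combine these observations: the joint output $(c,c)$ of $(\A,\B)$ is statistically indistinguishable from $(h,h)$ with $h$ uniform in $\zo^\ell$, which is exactly the output of $\cxF{\lambda}$ on input $\start,\start$. The only source of statistical (rather than perfect) closeness is the negligible probability that the key $pk$ delivered by the sub-protocol is not a valid (hiding) key; since correctness only requires statistical indistinguishability, this is harmless. I do not expect a real obstacle here: the argument is essentially the standard ``commit-then-XOR'' correctness check, with the only mild subtlety being to keep track of the fact that the key-generation sub-protocol is itself only statistically correct, so that the overall statement is statistical rather than perfect indistinguishability of the real and ideal outputs.
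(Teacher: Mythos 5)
Your proof is correct and follows essentially the same approach as the paper, which simply observes that honest players choose $a$ and $b$ independently and uniformly at random so that $c = a \oplus b$ is uniform. You flesh this out by also tracking the key-generation sub-protocol and the commit/open completeness, which is harmless but slightly more than the paper bothers with; one small remark is that you conflate ``$pk$ is an unconditionally hiding key'' with ``commit/open satisfies completeness''---for correctness the hiding property is irrelevant, only completeness of the commit/open procedure matters, and that holds for any well-formed key, so the negligible slack you point out really comes only from the correctness of the $\pi^{(\force,\uncont)}$ sub-call.
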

  
  Correctness is obvious by inspection of the protocol. If both
  players are honest, they independently choose random strings $a$ and
  $b$. The result of these strings combined by the XOR-operation gives
  a uniformly random coin $c$ of length $\ell$.

  \begin{theorem}
    \label{thm:force.random}
    If $\pi^{(\force,\uncont)}$ is enforceable against Alice and
    uncontrollable against Bob, then protocol $\pi^{(\force,\random)}$
    is enforceable against Alice and random for Bob.
  \end{theorem}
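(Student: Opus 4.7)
The plan is to handle the two security flavors separately by two hybrid arguments, each of which reduces to (i) the assumed security of the inner coin-flip $\pi^{(\force,\uncont)}$ and (ii) the properties of the mixed commitment scheme $\commit$ (namely $\pkH \approxq \pkB$ together with extraction under $\pkB$ and statistical hiding under $\pkH$).

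For enforceability against dishonest Alice, I would construct the ideal-world adversary $\dhA$ as follows. Upon receiving a target outcome $c \in \zo^\ell$ from $\cxF{\ell}$, $\dhA$ invokes the Alice-side simulator for $\pi^{(\force,\uncont)}$ (which exists by assumption) to force the first phase to produce an unconditionally binding key $pk = \pkB$ together with its extraction trapdoor $\sk$. When $\dA$ subsequently sends her commitment $A$, $\dhA$ extracts the committed value $a$ using $\sk$, and then sends $b := c \oplus a$ to $\dA$. If $\dA$ opens $A$ correctly to $a$, the real protocol outputs $c$; otherwise $\dhA$ forwards the abort to $\cxF{\ell}$. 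Indistinguishability is then shown by a two-step hybrid: first replace the real $\pi^{(\force,\uncont)}$ execution by the ideal one (indistinguishable by enforceability on Alice's side), then switch the enforced key from $\pkB$ back to the randomly chosen key in the real protocol (indistinguishable by $\pkB \approxq \pkH$ and the fact that a random key is statistically a $\pkH$-style key, except with negligible probability). In the real execution, $b$ is uniformly random independent of $a$, so $c = a \oplus b$ is uniform; in the simulation, $b = c \oplus a$ is uniform by the uniformity of $c$. The joint output distribution on both sides therefore matches up to negligible error.

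For randomness against dishonest Bob, I would proceed by defining the bad event $E$ to be the union of (a) the event that the first phase outputs a key $pk$ that is \emph{not} statistically hiding, and (b) the event that Alice aborts the opening. By the construction of $\commit$, the set of non-hiding keys is a negligible subset of $\zo^\kappa$, so by uncontrollability of $\pi^{(\force,\uncont)}$ against Bob, event (a) occurs only with negligible probability; event (b) is subsumed by the error tolerance in the definition. Conditioned on $\bar{E}$, the commitment $A$ is statistically independent of $a$ from $\dB$'s view, so the bit-string $b$ produced by any (even quantum) $\dB$ after seeing $A$ is statistically independent of $a$; therefore for each $x \in \zo^\ell$, $\prob{c = x \mid \bar E} = \prob{b = x \oplus a \mid \bar E} \leq 2^{-\ell} + \negl{\kappa}$, which by absorbing the negligible slack into $E$ gives the bound required by Definition~\ref{def:random}.

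The main obstacle is the clean interleaving of the two reductions needed for Alice's case: we must switch the inner coin-flip from its real execution (where $pk$ is uniform and hiding) to an ideal execution enforcing a binding $pk$, while preserving extraction of $a$ and the joint distribution with $\dA$'s quantum state. This step crucially uses both the Alice-side enforceability of $\pi^{(\force,\uncont)}$ (to install $\pkB$) and the computational indistinguishability $\pkB \approxq \pkH$ of the mixed commitment (to argue the switch is invisible to $\dA$); all other steps, including the analysis of Bob's side, are then essentially information-theoretic.
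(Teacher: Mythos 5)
Your construction of $\dhA$ --- sample $(pk,sk)\la\GB$, enforce $pk$ as the outcome of $\pi^{(\force,\uncont)}$, extract $a$ from the commitment with $sk$, and reply with $b = h \oplus a$ --- and your hybrid argument for indistinguishability (replace the real inner coin-flip by the enforced ideal one, then switch $\pkB$ back to a uniform key) are exactly the paper's proof for the Alice side; your use of uncontrollability against Bob to bound the probability that the interactively produced $pk$ falls in the negligible set of non-hiding keys is likewise the paper's argument for the Bob side.

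One wrinkle to clean up on the Bob side: Definition~\ref{def:random} requires the bad event $E$ to have negligible probability, but a dishonest Bob can always refuse to send $b$ and abort with probability close to $1$, so abort events cannot be folded into $E$. The correct handling --- and what the paper implicitly does --- is that aborting runs set $c = \bot$, which simply does not contribute to $\prob{c = x}$ for any $x \in \zo^\ell$; conditioned only on $\bar E$ being the good-key event, $a$ is uniformly random and independent of Bob's entire view (including whether he aborts and which $b$ he sends), so $\prob{c = x \mid \bar E} \leq 2^{-\ell}$ holds without needing a separate abort event in $E$. With that adjustment, your plan coincides with the paper's proof.
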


  \begin{proof}[ (\emphbf{Enforceability against Alice})] 
    In case of corrupted $\dA$, $\dhA$ samples $(pk,sk) \la \GB$ as
    input. It then requests a uniformly random value $h$ from
    $\clF$. It runs $\pi^{(\force,\uncont)}$ with $\dA$, in which
    $\dhA$ enforces the outcome $pk$ in the first step. When $\dA$
    sends commitment $A$, $\dhA$ uses $sk$ to decrypt $A$ to learn the
    unique string $a$ that $A$ can be opened to. $\dhA$ computes $b =
    h \oplus a$ and sends $b$ to $\dA$. If $\dA$ opens commitment $A$
    correctly, then the result is $c = a \oplus b = a \oplus (h \oplus
    a) = h$ as desired. In case she does not open correctly, $\dhA$
    aborts with result $\bot$. Otherwise, $\dhA$ outputs whatever
    $\dA$ outputs.
  
    Since $h$ is uniformly random and independent of $A$ and $a$, it
    follows that $b = h \oplus a$ is uniformly random and independent
    of $A$, exactly as in the protocol. Therefore, the transcript of
    the simulation has the same distribution as the real protocol,
    except that $pk$ is uniform in $\X$ and not in $\zo^\kappa$. This
    is, however, quantum-computationally indistinguishable, as
    otherwise, $\dA$ could distinguish random access to samples from
    $\X$ from random access to samples from $\zo^\kappa$. The formal
    proof proceeds through a series of hybrids as described in full
    detail in the proof for Theorem~\ref{thm:force.force} in the next
    Section~\ref{sec:amplification.force.force}.

    The above two facts, that first we hit $h$ when we do not abort,
    and second that the transcript of the simulation is
    quantum-computationally indistinguishable from the real protocol,
    show that the resulting protocol is enforceable against Alice and
    simulatable on Alice's side for functionality $\clF$, according to
    Definition~\ref{def:force} combined with Theorem~\ref{def:force}.
  \end{proof}

  \begin{proof}[ (\emphbf{Randomness against Bob})] 
    For any $\dB$, $pk$ is uncontrollable, i.e.~$pk \in \zo^\kappa
    \setminus \X$, except with negligible probability, as $\X$ is
    negligible in $\zo^\kappa$. This, in particular, means that the
    commitment $A$ is perfectly hiding the value $a$. Therefore, $a$
    is uniformly random and independent of $b$, and thus, $h = a
    \oplus b$ is uniformly random. This proves that the resulting
    coin-flip is random against Bob, according to
    Definition~\ref{def:random}.
  \end{proof}

%%%%%%%%%%%%%%%%%%%%%%%%%%%%%%%%%%%%%%%%%%%%%%%%%%%%%%%%%%%%%%%%%%%%%%%%%%

\subsection{From $(\force,\random)$ to $(\force,\force)$}
\label{sec:amplification.force.force}

  We now show how to obtain a coin-flipping protocol, which is enforceable
  against both parties. Then, we can also claim by
  Corollary~\ref{cor:double.simulatable} that this protocol is a
  strong coin-flipping protocol, poly-time simulatable on both sides for
  the natural ideal functionality $\clF$. The protocol
  $\pi^{(\force,\force)}$ is described in Figure~\ref{fig:force.force}.

  Note that the final protocol makes two calls to a subprotocol with
  random flavor on one side and enforceability on the other side, but
  where the sides are interchanged for each instance,
  i.e.~$\pi^{(\force,\random)}$ and $\pi^{(\random,\force)}$. That
  means that we switch the players' roles as well as the direction of
  the messages. Furthermore, note that we use here the possibility of
  trapdoor openings in our extended commitment construction $\Commit$,
  based on secret sharing and mixed commitments, as described in
  detail in Section~\ref{sec:mixed.commit.trapdoor.opening}.

%%%%%%%%%%%%%%%%%%%%%%%%%%%%%%%%%%%%%%%%%%%%%%%%%%%%%%%%%%%%%%%%%%%%%%%%%
\begin{figure}
  \begin{framed}
    \noindent\hspace{-1.5ex} {\sc Protocol} $\pi^{(\force,\force)}$:
    \\[-4ex]
    \begin{enumerate}
    \item
      $\A$ and $\B$ run $\pi^{(\force,\random)}$ to produce a random
      public key $pk \in \zo^\kappa$.
    \item
      $\A$ computes and sends commitments $\Commitk{a}{(s,r)}{pk} =
      (A_1,\ldots,A_\Sigma) $ to $\B$. In more detail, $\A$ samples
      uniformly random $a, s \in \FF^\sigma$. She then computes
      $\sss(a;s) = (a_1,\ldots,a_\Sigma)$ and $A_i =
      \commitk{a_i}{r_i}{pk}$ for all $i = 1, \ldots, \Sigma$.
    \item
      $\B$ samples uniformly random $b \in \zo^\ell$ and sends $b$ to
      $\A$.
    \item
      $\A$ sends secret shares $(a_1,\ldots,a_\Sigma)$ to $\B$. If
      $(a_1, \ldots, a_\Sigma)$ is not consistent with a polynomial of
      degree at most $(2\sigma-1)$, $\B$ aborts.
    \item
      $\A$ and $\B$ run $\pi^{(\random,\force)}$ to produce a
      challenge $S \subset \set{1,\ldots,\Sigma}$ of length $\vert S
      \vert = \sigma$.
    \item
      $\A$ sends $r|_S$ to $\B$.
    \item
      $\B$ checks if $A_i = \commitk{a_i}{r_i}{pk}$ for all $i \in
      S$. If that is the case, $\B$ computes message $a \in
      \FF^\sigma$ consistent with $(a_1, \ldots, a_\Sigma)$ and the
      outcome of the protocol is $c = a \oplus b$. Otherwise, $\B$
      aborts and the outcome is $c = \bot\,$.
    \end{enumerate}
    \vspace{-1.5ex}
  \end{framed}
  \vspace{-1.5ex}
  \small
  \caption{Amplification from $(\force,\random)$ to
  $(\force,\force)$.}
  \label{fig:force.force}
\end{figure}
%%%%%%%%%%%%%%%%%%%%%%%%%%%%%%%%%%%%%%%%%%%%%%%%%%%%%%%%%%%%%%%%%%%%%%%%%

  \begin{proposition}
    Protocol $\pi^{(\force,\force)}$ satisfies correctness, according
    to Definition~\ref{def:correctness}.
  \end{proposition}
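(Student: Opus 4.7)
My plan is to verify correctness by direct inspection of the protocol, exactly as for the earlier correctness propositions in this chapter. Concretely, I need to check that when both $\A$ and $\B$ follow the protocol honestly, the execution does not trigger any abort step (except possibly with negligible probability inherited from the sub-protocols), and the final output $c$ is uniformly distributed over $\zo^\ell$.

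First I would check that each invocation of a sub-protocol behaves correctly. By the correctness of $\pi^{(\force,\random)}$ (and similarly $\pi^{(\random,\force)}$), which holds by Theorem~\ref{thm:force.random} and its symmetric counterpart, an honest execution in Step~(1.) produces a uniformly random $pk \in \zo^\kappa$ (in particular an unconditionally hiding key, except with negligible probability), and the invocation in Step~(5.) produces a uniformly random subset $S \subset \set{1,\ldots,\Sigma}$ of size $\sigma$. Neither invocation aborts when both parties are honest.

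Next I would check the two abort conditions in the outer protocol. In Step~(4.), $\A$ honestly computes $(a_1,\ldots,a_\Sigma) = \sss(a;s)$, so by construction these shares lie on a polynomial of degree at most $2\sigma-1$ and $\B$'s consistency test passes. In Step~(7.), $\A$ honestly sends $r|_S$ corresponding to the commitments $A_i = \commitk{a_i}{r_i}{pk}$ computed in Step~(2.), so $\B$'s verification $A_i = \commitk{a_i}{r_i}{pk}$ for $i \in S$ succeeds and $\B$ reconstructs the unique $a \in \FF^\sigma$ consistent with the shares.

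Finally, for the output distribution, $\A$ samples $a$ uniformly at random in $\FF^\sigma$ and $\B$ samples $b$ uniformly at random and independently. Since at least one of the two summands is uniform and independent of the other, $c = a \oplus b$ is uniformly distributed, matching the output of $\cxF{\ell}$ with honest parties. The only deviation from perfect correctness is the negligible error contributed by the two sub-protocol invocations, giving statistical indistinguishability as required by Definition~\ref{def:correctness}. I do not expect any real obstacle here: the proof is essentially a bookkeeping argument, and the nontrivial content of the construction lies in the simulatability claims that will follow in the next theorem.
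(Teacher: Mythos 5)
Your proof is correct and takes essentially the same approach as the paper, which simply observes that honest parties input uniformly random $a$ and $b$ and that their XOR is uniform; you additionally verify the non-abort conditions in Steps~(4.) and~(7.) and the correctness of the sub-protocol invocations, which the paper leaves implicit.
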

 
  Again, correctness can be trivially checked, first by observing that
  honest players independently input uniformly random strings $a$ and
  $b$, and second by verifying that these strings combined by XOR
  result in a uniformly random coin $c$ of length $\ell$.

  \begin{theorem}
    \label{thm:force.force}
    If $\pi^{(\force,\random)}$ is enforceable against Alice and
    random against Bob, then protocol $\pi^{(\force,\force)}$ is
    enforceable against both Alice and Bob.
  \end{theorem}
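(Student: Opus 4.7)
The plan is to construct two separate ideal-world simulators $\dhA$ and $\dhB$, each using the enforceability of the appropriate sub-protocol ($\pi^{(\force,\random)}$ in step 1 for Alice, $\pi^{(\random,\force)}$ in step 5 for Bob) as a lever, and to combine this with the two trapdoors of the extended commitment $\Commit$: extraction via $\sk$ for the Alice simulator (as in Section~\ref{sec:mixed.commit.trapdoor.opening}), and trapdoor opening via the challenge $S$ for the Bob simulator (as in Lemma~\ref{lemma:simulation.sss}). In both cases, the simulator first queries $\clF$ to obtain a target outcome $h \in \zo^\ell$ and then drives the transcript so that the non-aborting outcome is exactly $h$.

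For enforceability against dishonest $\dA$, the simulator $\dhA$ samples $(\pkB,\sk) \la \GB$ and enforces $pk = \pkB$ in step 1 by invoking the ideal-world simulator guaranteed by the enforceability of $\pi^{(\force,\random)}$ against Alice. Upon receiving $\dA$'s commitments $(A_1,\ldots,A_\Sigma)$, $\dhA$ uses $\sk$ to extract shares $(\xtr{A_1}{sk},\ldots,\xtr{A_\Sigma}{sk})$, decodes the unique message $a$ consistent with these (by Reed--Solomon decoding), sends $b = h \oplus a$, and then honestly runs $\pi^{(\random,\force)}$ to obtain a uniformly random challenge $S$. By Lemma~\ref{lemma:soundness.sss}, the probability that $\dA$ can produce accepting openings inconsistent with $a$ is at most $(3/4)^\sigma$, so the non-aborting outcome equals $a \oplus b = h$ except with negligible probability. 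Indistinguishability of the simulated transcript from the real one is shown by a standard hybrid argument over two game hops: (i) replacing the enforced $pk$ in step 1 by an honest execution of $\pi^{(\force,\random)}$ is indistinguishable by enforceability of that sub-protocol against Alice, and (ii) switching $pk$ from $\pkB$ to a uniformly random hiding key $\pkH$ is quantum-computationally indistinguishable by the mixed-commitment key-indistinguishability $\pkH \approxq \pkB$.

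For enforceability against dishonest $\dB$, the simulator $\dhB$ lets step 1 execute honestly (it will produce a uniformly random $pk$, which is an unconditionally hiding key except with negligible probability) and commits to an arbitrary dummy $a' \in \FF^\sigma$ in step 2 via honest $\Commit_{pk}$. After receiving $b$ from $\dB$ in step 3, $\dhB$ sets $a := h \oplus b$ and must now open the commitments to $a$ rather than $a'$. This is done by invoking the trapdoor-opening simulator of Figure~\ref{fig:simulation.sss}: $\dhB$ enforces the challenge $S$ of step 5 (using enforceability of $\pi^{(\random,\force)}$ against Bob) to be exactly the uniformly random subset $S$ used as trapdoor, and then follows the simulator of Lemma~\ref{lemma:simulation.sss} to produce consistent shares and openings for $a$. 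By Lemma~\ref{lemma:simulation.sss}, the joint transcript is quantum-computationally indistinguishable from an honest commit-and-open to $a$, which is in turn distributed exactly as the real protocol conditioned on outcome $a \oplus b = h$. A hybrid argument over the enforcement of $S$ and the trapdoor simulation of $\Commit$ concludes the indistinguishability.

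The main obstacle is chaining these hybrids cleanly in the Bob case: the trapdoor opening of Lemma~\ref{lemma:simulation.sss} only guarantees quantum-computational indistinguishability when the underlying $\commit_{pk}$ is used in the unconditionally hiding mode, which in turn is only guaranteed when $pk$ lies in the hiding subset of $\zo^\kappa$; and the randomness guarantee of $\pi^{(\force,\random)}$ against Bob must be invoked (with negligible failure event $E$) to ensure that this is indeed the case before we are allowed to enforce $S$ via $\pi^{(\random,\force)}$. Once these two events are excluded, the rest of the argument is a routine hop from (enforce-$pk$, honest commit, enforce-$S$, trapdoor open) to (honest $\pi^{(\force,\random)}$, honest commit to $a$, honest $\pi^{(\random,\force)}$, honest open), matching the real distribution. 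Combined with Corollaries~\ref{cor:double.simulatable} and~\ref{cor:double.simulatable.sequential.composition}, this yields a fully poly-time simulatable, sequentially composable $\ell$-bit coin-flip.
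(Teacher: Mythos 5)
Your overall strategy matches the paper's: extraction via $sk$ when simulating against Alice, trapdoor opening via the enforced challenge $S$ when simulating against Bob, and both simulators drive the non-aborting outcome to the target $h$ obtained from $\clF$. The Bob case is essentially correct, including your observation that the $\random$ flavor of $\pi^{(\force,\random)}$ against Bob must first be invoked so that $pk$ is outside the negligible binding set, which is the precondition for trapdoor opening.

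The Alice case, however, has a gap in the hybrid argument for transcript indistinguishability. You describe the argument as two hops: (i) replacing the enforced $pk$ by an honest run of $\pi^{(\force,\random)}$, and (ii) switching $pk$ from $\pkB$ to a random hiding key. But the simulator you defined computes $b = h \oplus a$ where $a$ is obtained by decrypting $(A_1,\ldots,A_\Sigma)$ with $sk$. Neither hop can be executed while the simulator still calls $\xtrx(\cdot,sk)$: after hop (ii), $pk$ is a random string and no matching $sk$ exists; and hop (i) produces a random $pk$ directly, so the same problem arises (additionally, the enforceability guarantee for $\pi^{(\force,\random)}$ is a statement about the transcript and Alice's output, not about any private state such as $sk$ that the outer simulator happens to hold, so the reduction needs the continuation to be computable without it). The missing first hop is to replace ``compute $b = h \oplus a$ from the extracted $a$'' by ``send uniformly random $b$ and derive the target coin from the transcript.'' Since $h$ from $\clF$ is uniform and independent of $a$, this change does not alter the distribution at all, and after it the simulation no longer needs $sk$. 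Only then can one switch $pk \in \X$ to $pk$ uniform in $\zo^\kappa$ (key indistinguishability, preserving polynomial size of the continuation), and finally replace the enforced run of $\pi^{(\force,\random)}$ by an honest one (enforceability against Alice). Your soundness argument via Lemma~\ref{lemma:soundness.sss} and the $\random$ flavor of $\pi^{(\random,\force)}$ against Alice is fine; it is the three-stage ordering of hybrids that needs repair.
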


  \begin{proof}[ (\emphbf{Enforceability against Alice})] 
    If $\dA$ is corrupted, $\dhA$ samples $(pk,sk) \leftarrow \GB$ as
    input and enforces $\pi^{(\force,\random)}$ in the first step to
    hit the outcome $pk$. It then requests value $h$ from $\clF$. When
    $\dA$ sends commitments $(A_1,\ldots,A_\Sigma)$, $\dhA$ uses $sk$
    to extract $a'$ with $\big( a'_1,\ldots,a'_\Sigma \big) = \big(
    \xtr{A_1}{sk},\ldots,\xtr{A_\Sigma}{sk} \big)$. $\dhA$ then sets
    $b = h \oplus a'$, and sends $b$ to $\dA$. Then $\dhA$ finishes
    the protocol honestly. In the following, we will prove that the
    transcript is quantum-computationally indistinguishable from the
    real protocol and that if $c \neq \bot$, then $c = h$, except with
    negligible probability.

    First, we show indistinguishability. The proof proceeds via a
    hybrid\index{hybrid argument} argument.\footnote{Briefly, a hybrid
    argument is a proof technique to show that two (extreme)
    distributions are computationally indistinguishable via proceeding
    through several (adjacent) hybrid distributions. If all adjacent
    distributions are pairwise computationally indistinguishability,
    it follows by transitivity that the two end points are so as
    well. We want to point out that we are not subject to any
    restrictions in how to obtain the hybrid distributions as long as
    we maintain indistinguishability.} Let $\D{0}$ denote the
    distribution of the output of the simulation as described. We now
    change the simulation such that, instead of sending $b = h \oplus
    a'$, we simply choose a uniformly random $b \in \zo^\ell$ and then
    output the corresponding $h = a' \oplus b$. Let $\D{1}$ denote the
    distribution of the output of the simulation after this
    change. Since $h$ is uniformly random and independent of $a'$ in
    the first case, it follows that then $b = h \oplus a'$ is
    uniformly random. Therefore, the change to choose a uniformly
    random $b$ in the second case actually does not change the
    distribution at all, and it follows that $\D{0} = \D{1}$.

    By sending a uniformly random $b$, we are in a situation where we
    do not need the decryption key $sk$ to produce $\D{1}$, as we no
    longer need to know $a'$. So we can now make the further change
    that, instead of forcing $\pi^{(\force,\random)}$ to produce a
    random public key $pk \in \X$, we force it to hit a random public
    key $pk \in \zo^\kappa$. This produces a distribution $\D{2}$ of
    the output of the simulation. Since $\D{1}$ and $\D{2}$ only
    differ in the key we enforce $\pi^{(\force,\random)}$ to hit and
    the simulation is quantum poly-time, there exists a poly-sized
    circuit $Q$, such that $Q(\U(\X)) = \D{1}$ and $Q(\U(\zo^\kappa))
    = \D{2}$, where $\U(\X)$ and $\U(\zo^\kappa)$ denote the uniform
    distribution on $\X$ and the uniform distribution on $\zo^\kappa$,
    respectively. As $\U(\X)$ and $\U(\zo^\kappa)$ are
    quantum-computationally indistinguishable, and $Q$ is poly-sized,
    it follows that $Q(\U(\X))$ and $Q(\U(\zo^\kappa))$ are
    quantum-computationally indistinguishable, and therewith, $\D{1}
    \approxq \D{2}$.

    A last change to the simulation is applied by running
    $\pi^{(\force,\random)}$ honestly instead of enforcing a uniformly
    random $pk \in \zo^\kappa$. Let $\D{3}$ denote the distribution
    obtained after this change. As given in
    Definition~\ref{def:force}, real runs of $\pi^{(\force,\random)}$
    and runs enforcing a uniformly random value are
    quantum-computationally indistinguishable. Using a similar
    argument as above, where $Q$ is the part of the protocol following
    the run of $\pi^{(\force,\random)}$, we get that $\D{2} \approxq
    \D{3}$. Finally by transitivity, it follows that $\D{0} \approxq
    \D{3}$. The observation that $\D{0}$ is the distribution of the
    simulation and $\D{3}$ is the actual distribution of the real
    protocol concludes the first part of the proof.

    We now argue the second part, i.e., if $c \neq \bot$, then $c =
    h$, except with negligible probability. This follows by arguing
    soundness of the commitment scheme $\Commit$, according to
    Lemma~\ref{lemma:soundness.sss}. Recall that, if $pk \in \X$, then
    the probability that $\dA$ can open any $A$ to a plaintext
    different from $\xtr{A}{sk}$ is at most $(\frac34)^\sigma$ when
    $S$ is picked uniformly at random and independent of $A$. The
    requirement on $S$ is however guaranteed (except with negligible
    probability) by the $\random$ flavor of the underlying protocol
    $\pi^{(\random,\force)}$ producing $S$. This concludes the proof
    of enforceability against Alice, as given in
    Definition~\ref{def:force}.
  \end{proof}

  \begin{proof}[ (\emphbf{Enforceability against Bob})] 
    To prove enforceability against corrupted $\dB$, we construct a
    simulator $\dhB$ as shown in Figure~\ref{fig:force.force.dhB}. It
    is straightforward to verify that the simulation always ensures
    that $c = h$, if $\dB$ does not abort. However, we must explicitly
    argue that the simulation is quantum-computationally
    indistinguishable from the real protocol.

%%%%%%%%%%%%%%%%%%%%%%%%%%%%%%%%%%%%%%%%%%%%%%%%%%%%%%%%%%%%%%%%%%%%%%%%%
\begin{figure}
  \begin{framed}
    \noindent\hspace{-1.5ex} {\sc Simulation} $\dhB$ for
    $\pi^{(\force,\force)}$: \\[-4ex]
    \begin{enumerate}
    \item
      $\dhB$ requests $h$ from $\clF$ and runs
      $\pi^{(\force,\random)}$ honestly with $\dB$ to produce a
      uniformly random public key $pk \in \zo^\kappa$.
    \item
      $\dhB$ computes $\Commitk{a'}{(s,r)}{pk} =
      (A_1,\ldots,\A_\Sigma)$ for uniformly random $a',s \in
      \FF^\sigma$ and sends $(A_1,\ldots,A_\Sigma)$ to $\dB$.
    \item
      $\dhB$ receives $b$ from $\dB$.
    \item\label{step:trapdoor}
      $\dhB$ computes $a = b \oplus h$. It then picks a uniformly
      random subset $S \subset \set{1,\ldots,\Sigma}$ with $|S| =
      \sigma$, and lets $a'|_S$ be the $\sigma$ messages committed to
      by $A|_S$. Then, it interpolates the unique polynomial $f$ of
      degree at most $(2\sigma-1)$ for which $f(i) = a'_i$ for $i \in
      S$ and for which $f(-i+1) = a_i$ for $i \in
      \set{1,\ldots,\Sigma} \setminus S$. Finally, it
      sends $(f(1), \ldots, f(\Sigma))$ to $\dB$.
    \item
      During the run of $\pi^{(\random,\force)}$, $\dhB$ enforces the
      challenge $S$.
    \item\label{step:test.on.S}
      $\dhB$ sends $r|_S$ to $\dB$.
    \item
      $\dhB$ outputs whatever $\dB$ outputs.
    \end{enumerate}
    \vspace{-1.5ex}
  \end{framed}
  \vspace{-1.5ex}
  %\small
  \caption{Simulation for Bob's $\force$ in $\pi^{(\force,\force)}$.} 
  \label{fig:force.force.dhB}
\end{figure}
%%%%%%%%%%%%%%%%%%%%%%%%%%%%%%%%%%%%%%%%%%%%%%%%%%%%%%%%%%%%%%%%%%%%%%%%%
   
      Indistinguishability follows by first arguing that the
      probability for $pk \notin \zo^\kappa \setminus \X$ is
      negligible. This follows from $\X$ being negligible in
      $\zo^\kappa$ and $pk$ produced with flavor $\random$ against
      $\dB$ by $\pi^{(\force,\random)}$ being uniformly random in
      $\zo^\kappa$, except with negligible probability. 

      Second, we have to show that if $pk \in \zo^\kappa \setminus
      \X$, then the simulation is quantum-computationally close to the
      real protocol. This can be shown via the following hybrid
      argument. Let $\D{0}$ be the distribution of the output of the
      simulation and let $\D{1}$ be the distribution of the output of
      the simulation where we send all $a'_i$ for all $i =
      \set{1,\ldots,\Sigma}$ at the end of
      Step~(\ref{step:trapdoor}.). Since commitments by
      $\commitk{\cdot}{\cdot}{pk}$ are unconditionally hiding in case
      of $pk\in \zo^\kappa \setminus \X$, commitments by
      $\Commitk{\cdot}{\cdot}{pk}$ are unconditionally hiding as well.
      Furthermore, both $a'$ and $a$ are uniformly random, so we
      obtain statistical closeness between
      $(a',\Commitk{a'}{(s,r)}{pk})$ and
      $(a,\Commitk{a'}{(s,r)}{pk})$. Note further that distributions
      $\D{0}$ and $\D{1}$ can be produced by a poly-sized circuit
      applied to either $(a',\Commitk{a'}{(s,r)}{pk})$ or
      $(a,\Commitk{a'}{(s,r)}{pk}$, it holds that $\D{0} \approxq
      \D{1}$.

      Now, let $\D{2}$ be the distribution obtained by not simulating
      the opening via the trapdoor, but instead doing it honestly to
      the value committed to, i.e.~$(a',r)$. We still use the
      challenge $S$ from the forced run of $\pi^{(\random,\force)}$
      though. However, for uniformly random challenges, real runs are
      quantum-computationally indistinguishable from simulated runs,
      and we get $\D{1} \approxq \D{2}$.

      Next, let $\D{3}$ be the distribution of the output of the
      simulation where we run $\pi^{(\random,\force)}$ honestly
      instead of enforcing outcome $S$. We then use the honestly
      produced $S'$ in the proof in
      Step~(\ref{step:test.on.S}.)~instead of the enforced $S$. We can
      do this, as we modified the process leading to $\D{2}$ towards
      an honest opening without any trapdoor, so we no longer need to
      enforce a particular challenge. Under the assumption that
      $\pi^{(\random,\force)}$ is enforceable against $\dB$, and
      observing that real runs are quantum-computationally
      indistinguishable from runs enforcing uniformly random outcomes,
      we obtain $\D{2} \approxq \D{3}$.
      
      Finally, we get by transitivity that $\D{0} \approxq \D{3}$ and
      conclude the proof by observing that after our changes, the
      process producing $\D{3}$ is the real protocol. This concludes
      the proof of enforceability against Bob, according to
      Definition~\ref{def:force} with switched sides.
  \end{proof}

%%%%%%%%%%%%%%%%%%%%%%%%%%%%%%%%%%%%%%%%%%%%%%%%%%%%%%%%%%%%%%%%%%%%%%%%%
%%% APPLICATIONS %%%%%%%%%%%%%%%%%%%%%%%%%%%%%%%%%%%%%%%%%%%%%%%%%%%%%%%%
%%%%%%%%%%%%%%%%%%%%%%%%%%%%%%%%%%%%%%%%%%%%%%%%%%%%%%%%%%%%%%%%%%%%%%%%%

\chapter{Applications}
\label{chap:coin.flip.applications}

Coin-flipping as a stand-alone tool allows us to use it rather freely
in several contexts. Shared randomness is a crucial ingredient in many
cryptographic implementations. Applications in the
common-reference-string-model, that assumes a random public string
before communication, achieve great efficiency and composability, and
many protocols have been proposed in the model. In this chapter, we
will discuss example applications that rely on shared randomness. Two
applications relate to the context of zero-knowledge. First, we show a
simple transformation from non-interactive zero-knowledge to
interactive quantum zero-knowledge. This result appeared
in~\cite{DL09}. Then, we propose a quantum-secure zero-knowledge proof
of knowledge, which is interesting also in that the construction
relies not only on initial randomness but also on enforceable
randomness as discussed in Chapter~\ref{chap:framework}. This
construction is part of the results in~\cite{LN10}. Last, we discuss
the interactive generation of a common reference string for the
proposed lattice-based instantiation of the compiler construction,
proposed in Chapter~\ref{chap:hybrid.security} and applied in
Chapter~\ref{chap:hybrid.security.applications}. This result appeared
in~\cite{DFLSS09} and~\cite{DL09}.

%%% IQZK %%%%%%%%%%%%%%%%%%%%%%%%%%%%%%%%%%%%%%%%%%%%%%%%%%%%%%%%%%%%%%%%%

\section{Interactive Quantum Zero-Knowledge}
\label{sec:coin.iqzk}
\index{zero-knowledge!proofs}

Zero-knowledge proofs, as described in
Section~\ref{sec:primitives.zk}, are an important building block for
larger cryptographic protocols, capturing the definition of convincing
the verifier of the validity of a statement with no information beyond
that.

%%%%%%%%%%%%%%%%%%%%%%%%%%%%%%%%%%%%%%%%%%%%%%%%%%%%%%%%%%%%%%%%%%%%%%%%%%

\subsection{Motivation and Related Work}
\label{sec:iqzk.motivation}

  As in the classical case, where ZK protocols exist if one-way
  functions exist, quantum zero-knowledge (QZK) is possible under the
  assumption that quantum one-way functions exist. In~\cite{K03},
  Kobayashi showed that a common reference string or shared
  entanglement is necessary for non-interactive quantum
  zero-knowledge. Interactive quantum zero-knowledge protocols in
  restricted settings were proposed by Watrous in the honest-verifier
  setting~\cite{Watrous02} and by Damg{\aa}rd \emph{et al.} in the
  CRS-model~\cite{DFS04}, where the latter introduced the first
  $\Sigma$-protocols for QZK withstanding even active quantum
  attacks. In~\cite{Watrous09}, Watrous then proved that several
  interactive protocols are zero-knowledge against general quantum
  attacks.

  It has also been shown that any honest-verifier zero-knowledge
  protocol can be made zero-knowledge against any classical and
  quantum verifier~\cite{HKSZ08}. In more detail, they showed how to
  transform a $\Sigma$-protocol with stage-by-stage honest-verifier
  zero-knowledge into a new $\Sigma$-protocol that is zero-knowledge
  against all verifiers. Special bit commitment schemes are proposed
  to limit the number of rounds, and each round is viewed as a stage
  in which an honest-verifier simulator is assumed. Then, by using a
  technique of~\cite{DGW94}, each stage can be converted to obtain
  zero-knowledge against any classical verifier. Finally, Watrous'
  quantum rewinding lemma is applied in each stage to prove
  zero-knowledge also against any quantum verifier. We now show a
  simple transformation from non-interactive (quantum) zero-knowledge
  to interactive quantum zero-knowledge by combining the coin-flip
  protocol with any non-interactive ZK protocol. Note that a
  non-interactive zero-knowledge proof system can be trivially turned
  into an interactive honest-verifier zero-knowledge proof system by
  just letting the verifier choose the reference string, and
  therefore, this consequence of our result also follows
  from~\cite{HKSZ08}. However, our proof is much simpler and the
  coin-flipping is not restricted to a specific zero-knowledge
  construction. In addition, we obtain the corollary that if there
  exist mixed commitments, then we can achieve interactive quantum
  zero-knowledge against any poly-sized quantum adversary without any
  set-up assumptions.

%%%%%%%%%%%%%%%%%%%%%%%%%%%%%%%%%%%%%%%%%%%%%%%%%%%%%%%%%%%%%%%%%%%%%%%%%%

\subsection{Formal Definition of Zero-Knowledge Proofs}
\label{sec:iqzk.definitions}

  In Section~\ref{sec:primitives.zk}, we gave an intuitive
  introduction to zero-knowledge proof systems. Here, we make this
  description formal. Recall that a zero-knowledge proof for set $\cal
  L$ on common input $x$ yields no other knowledge than the validity
  of membership $x \in \cal{L}$. An interactive proof system must
  fulfill completeness and soundness, as given in
  Definitions~\ref{def:iqzk.complete} and~\ref{def:iqzk.sound}, and is
  quantum zero-knowledge (IQZK), if in addition
  Definition~\ref{def:iqzk.zero-knowledge} holds. Note that in the
  following, we let $\A$ be the prover and let $\B$ denote the
  verifier.
  \index{zero-knowledge!proofs!completeness}
  \index{zero-knowledge!proofs!soundness}

  \begin{definition}[Completeness]
    \label{def:iqzk.complete}
    If $x \in \cal L$, the probability that $(\A,\B)$ rejects $x$ is
    negligible in the length of $x$.
  \end{definition}

  \begin{definition}[Soundness]
    \label{def:iqzk.sound}
    If $x \notin \cal L$, then for any unbounded prover $\dA$, the
    probability that $(\dA,\B)$ accepts $x$ is negligible in the
    length of $x$.
  \end{definition}

  \begin{definition}[Zero-Knowledge]
    \label{def:iqzk.zero-knowledge}
    An interactive proof system $(\A,\dB)$ for language $\cal L$ is
    quantum zero-knowledge, if for any quantum verifier $\dB$, there
    exists a simulator $\hat{S}$ with output quantum-computationally
    indistinguishable from the real output, i.e.,
    $$
    out^{\hat{S}} \approxq out^{\pi(x,\crs)}_{\A,\dB} \, ,
    $$ 
    on common input $x \in \cal L$ and arbitrary additional
    (quantum) input to $\dB$.
  \end{definition}

  According to~\cite{BFM88}, the interaction between prover and
  verifier can be replaced by a common reference string. Then, there
  is only a single message sent from prover to verifier, who makes the
  final decision weather to accept or not. More precisely, both
  parties $\A$ and $\B$ get common input $x$. A common reference
  string $\crs$ of size $\kappa$ allows the prover $\A$, who knows a
  witness $w$, to give a non-interactive zero-knowledge proof
  $\pi(\crs,x)$ to a (possibly quantum) verifier, poly-time bounded in
  $\kappa$. For simplicity, we consider the proof of a single theorem
  of size smaller than $n$ (and $n \leq \kappa$, i.e.~${\cal L}_\kappa
  = \set{ x \in {\cal L} \; \vert \; |x| \leq \kappa}$. The extension
  to a more general notion is rather straightforward (see~\cite{BFM88}
  for details).\\
  
  Completeness and soundness hold as defined above, but we explicitly
  state the definitions as given in~\cite{BFM88} and adapted to our
  context.
  \index{zero-knowledge!non-interactive}

  \begin{definition}[Completeness in NIZK]
    \label{def:nizk.complete}
    There exists a constant $c > 0$ such that for all $x \in {\cal
    L}_\kappa$, the acceptance probability is overwhelming, i.e.,
    $$
    \pro{complete} = \prob{\crs \la \zo^{n^c}, \pi(x,\crs) \la
    \A(\crs,x,w): \B(\crs,x,\pi(x,\crs)) = 1} > 1 - \eps
    $$ where $\eps$ is negligible in $n$ (and $\kappa$).
  \end{definition}

  \begin{definition}[Soundness in NIZK]
    \label{def:nizk.sound}
    There exists a constant $c > 0$ such that for all $x \notin
    {\cal L}_\kappa$ and for all provers $\dA$, the acceptance
    probability is negligible, i.e.,
    $$ 
    \pro{sound} = \prob{\crs \la \zo^{n^c}, \pi(x,\crs) \la
    \dA(\crs,x): \B(\crs,x,\pi(x,\crs)) = 1} \leq \eps'
    $$ where $\eps'$ is negligible in $n$ (and $\kappa$). 
  \end{definition}

  The non-interactive zero-knowledge requirement is simpler than for
  general zero-knowledge for the following reason. Since all
  information is communicated mono-directional from prover to verifier
  in the protocol, the verifier does not influence the distribution in
  the real world. Thus, in the ideal world, we require a simulator
  that only outputs pairs that are (quantum) computationally
  indistinguishable from the distribution of pairs
  $(\crs,\pi(x,\crs))$ in the real world, where $\pi$ is generated
  with uniformly chosen $\crs$ and random
  $x$.\footnote{Indistinguishability, in turn, implies that the proof
  construction withstands quantum-computationally bounded verifiers.}
  In other words, we can eliminate the quantification over all $\dB$
  in the zero-knowledge definition.
  
  \begin{definition}[Non-Interactive Zero-Knowledge]
    \label{def:nizk.zero-knowledge}
    There exist a constant $c > 0$ and a simulator $\hat{S}$ with
    output quantum-computationally indistinguishable from the real
    output, i.e.,
    $$ 
    out^{\hat{S}(x)} \approxq out^{\pi(x,\crs)}_{\A,\dB} \, ,
    $$ 
    where $out^{\hat{S}(x)} = \set{ \crs \la \zo^{|x|^c}, \pi(x,\crs)
    \la \A(x,\crs): (\crs,\pi(x,\crs))}$.
  \end{definition}

%%%%%%%%%%%%%%%%%%%%%%%%%%%%%%%%%%%%%%%%%%%%%%%%%%%%%%%%%%%%%%%%%%%%%%%%%%

\subsection{The Transformation}
\label{sec:iqzk.transformation}

  We obtain a generic transformation of non-interactive zero-knowledge
  into interactive quantum zero-knowledge as follows. In each
  invocation, protocol $\COIN$ generates a truly random $\coin$ even
  in the case of a malicious quantum $\dB$. A string of such coins,
  obtained by sequential composition as described in
  Section~\ref{sec:sequential.composition.coin} by the ideal
  functionality in Figure~\ref{fig:clambdaF}, is then used as reference
  string in any ($\NIZK$)-subprotocol with properties as defined
  previously.

  The final protocol $\IQZK$ is shown in Figure~\ref{fig:iqzk}. To
  prove that it is an interactive quantum zero-knowledge protocol, we
  first construct an intermediate protocol $\IQZKF$ (see
  Figure~\ref{fig:iqzkf}) that runs with the ideal functionality
  $\cxF{\kappa}$. Then we prove that $\IQZKF$ satisfies completeness,
  soundness and zero-knowledge according to
  Definitions~\ref{def:iqzk.complete}
  -~\ref{def:iqzk.zero-knowledge}. To complete the proof, the calls to
  $\cxF{\kappa}$ are replaced with actual invocations of
  $\cxPi{\A}{\B}{\kappa}$, and we arrive at $\IQZK$.

%%%%%%%%%%%%%%%%%%%%%%%%%%%%%%%%%%%%%%%%%%%%%%%%%%%%%%%%%%%%%%%%%%%%%%%%%
  \begin{figure}
    %\normalsize
    \begin{myprotocol}{\sc protocol $\IQZKF$}
      \item[$(\COIN)$]
      \item\label{step:coin} $\A$ and $\B$ invoke $\cxF{\kappa}$. If
	$\A$ aborts by sending $\bot$ as second input, $\B$ aborts the
	protocol. Otherwise, $\A$ and $\B$ set $\crs = h$. 
      %\item\label{step:crs} 
      \item[$(\NIZK)$]
      \item $\A$ sends $\pi(x,\crs)$ to $\B$. $\B$ checks the proof and
        accepts or rejects accordingly.
    \end{myprotocol}
    \vspace{1ex} 
    %\small
    \caption{Intermediate Protocol for IQZK.}
    \label{fig:iqzkf}
  \end{figure}
%%%%%%%%%%%%%%%%%%%%%%%%%%%%%%%%%%%%%%%%%%%%%%%%%%%%%%%%%%%%%%%%%%%%%%%%%%

  \begin{claim}
    \label{claim:zk.complete} 
    Protocol $\IQZKF$ satisfies completeness, according to
    Definition~\ref{def:iqzk.complete}.
  \end{claim}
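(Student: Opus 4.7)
The plan is to reduce completeness of $\IQZKF$ directly to the completeness of the underlying $\NIZK$ proof system, using the specification of the ideal functionality $\cxF{\kappa}$ to show that the distribution of the reference string seen by the $\NIZK$ subprotocol is exactly the one required by Definition~\ref{def:nizk.complete}.

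First, I would analyze Step~(\ref{step:coin}.)~of $\IQZKF$ under the assumption that both $\A$ and $\B$ are honest. By the definition of $\cxF{\kappa}$ in Figure~\ref{fig:clambdaF.honest} (equivalently, the honest-case behavior of Figure~\ref{fig:clambdaF}), upon receiving $\start$ from both parties the functionality samples $h \in_R \zo^\kappa$ uniformly at random and delivers $h$ to both players. An honest $\A$ has no reason to abort and, as prescribed, forwards $\top$ as her second input, so $\B$ obtains $h$ and the protocol proceeds to the second stage. Hence $\crs := h$ is distributed uniformly in $\zo^\kappa$ and is independent of the instance $x$ held by $\A$ and $\B$.

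Next, I would invoke the completeness of the underlying $\NIZK$ system. For $x \in {\cal L}_\kappa$ (with $|x| \leq \kappa$), Definition~\ref{def:nizk.complete} guarantees the existence of a constant $c>0$ such that, when $\crs$ is sampled uniformly from $\zo^{|x|^c}$ and $\pi(x,\crs)$ is generated honestly by $\A$ from a valid witness $w$, the verifier $\B$ accepts except with probability negligible in $|x|$. Choosing the length parameter of the coin-flip to match (i.e.\ $\kappa = |x|^c$, which is permitted since $\cxF{\kappa}$ is parameterized by the desired output length), the string $\crs$ produced in Step~(\ref{step:coin}.)~has exactly this distribution. Therefore $\B$ accepts $\pi(x,\crs)$ in the second stage except with negligible probability.

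Combining the two steps, the overall probability that honest $\A$ and $\B$ conclude with $\B$ rejecting is bounded by the completeness error of the $\NIZK$, which is negligible in $|x|$ (and hence in $\kappa$), establishing Definition~\ref{def:iqzk.complete}. There is no real obstacle here: the argument is a one-line composition once we have observed that $\cxF{\kappa}$ in the all-honest case perfectly samples a uniform reference string and triggers no abort. The only minor care needed is to fix the relationship between the coin-flip length $\kappa$ and the reference-string length $|x|^c$ demanded by the $\NIZK$, which is a choice of parameters rather than a technical difficulty.
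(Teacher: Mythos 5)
Your argument matches the paper's proof exactly: observe that $\cxF{\kappa}$ delivers a uniformly random $\crs$ in the all-honest case (with no abort since honest $\A$ sends $\top$), then invoke the completeness guarantee of the $\NIZK$ subprotocol for $x \in {\cal L}_\kappa$ to conclude that $\B$ accepts except with negligible probability. The paper's version is terser and leaves the length-parameter bookkeeping implicit, but the substance is identical.
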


  \begin{proof}
    From the ideal functionality $\cxF{\kappa}$ it follows that $\crs$
    is uniformly random. Then by Definition~\ref{def:nizk.complete} of
    any ($\NIZK$)-subprotocol, we know that, for $x \in {\cal
    L}_\kappa$, $\B$ accepts, except with negligible probability (in
    the length of $x$. Thus, completeness for the $\IQZKF$ follows.
  \end{proof}

  \begin{claim}
    \label{claim:sound}
    Protocol $\IQZKF$ satisfies soundness, according to
    Definition~\ref{def:iqzk.sound}.
  \end{claim}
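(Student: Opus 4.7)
The plan is to reduce the soundness of $\IQZKF$ directly to the soundness of the underlying NIZK proof system (Definition~\ref{def:nizk.sound}). The essential observation is that the common reference string used by $\B$ in $\IQZKF$ is produced entirely by the ideal functionality $\cxF{\kappa}$: whenever $\B$ does not abort in Step~(\ref{step:coin}.), $\crs$ equals a uniformly random $h \in \zo^\kappa$ that no cheating prover $\dA$ can influence, independent of $x$ and of any value $\dA$ has computed.

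Concretely, I would fix any $x \notin {\cal L}_\kappa$ and any (even unbounded) cheating prover $\dA$, and let $p(h)$ denote the probability (over $\dA$'s internal coins) that, on receiving $h$ from $\cxF{\kappa}$, $\dA$ subsequently sends $\top$ as its second input and then produces a proof $\pi$ that makes $\B$ accept. Since the abort option can only decrease the chance of $\B$ accepting, $p(h) \leq q(h)$, where $q(h)$ is the probability that $\dA$, run as a stand-alone non-interactive prover on input $(x,h)$, produces a proof accepted by $\B$. Averaging over the uniform choice of $h$ yields $\prob{(\dA,\B) \text{ accepts } x} \leq \mathbb{E}_h[q(h)]$, which is precisely the quantity bounded in Definition~\ref{def:nizk.sound}.

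Applying NIZK soundness then gives the required negligible bound in the length of $x$. The only mild subtlety worth flagging is that $\dA$ sees $h$ before deciding whether to send $\top$ or $\bot$, so conditioning on non-abort could in principle bias the effective distribution of $\crs$; the averaging argument above sidesteps this by bounding the joint probability of non-abort and acceptance rather than conditioning on non-abort. Because no rewinding, simulation, or computational reduction is invoked in this step, the argument carries through unchanged even when $\dA$ is a quantum prover, and no assumption beyond the soundness of the NIZK scheme is needed.
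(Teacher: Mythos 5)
Your proof is correct and follows essentially the same reduction to NIZK soundness as the paper's: since the reference string is delivered by the ideal functionality $\cxF{\kappa}$, it is uniformly random and beyond $\dA$'s control, so a cheating prover in $\IQZKF$ is no more powerful than a cheating stand-alone NIZK prover. Your averaging argument that bounds the joint probability of non-abort and acceptance is a slightly more careful rendering of the paper's informal ``best strategy is to play the entire game'' step, and cleanly handles the point that $\dA$ sees $h$ before choosing whether to abort.
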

  
  \begin{proof}
    Assume that $x \notin {\cal L}_\kappa$. Any dishonest $\dA$ might
    stop $\IQZKF$ at any point during execution. For example, she can
    block the output in Step~(\ref{step:coin}.)~or she can refuse to
    send a proof $\pi$ in $(\NIZK)$. Furthermore, $\dA$ can use an
    invalid $\crs$ (or $x$) for $\pi$. In all of these cases, $\B$
    will abort without even checking the proof.

    Therefore, $\dA$'s best strategy is to ``play the entire game'',
    i.e.\ to execute $\IQZKF$ without making obvious cheats. $\dA$
    can only convince $\B$ in the $(\NIZK)$-subprotocol of a $\pi$ for
    any given (i.e.\ normally generated) $\crs$ with a probability
    that is negligible in the length of $x$ (see
    Definition~\ref{def:nizk.sound}). Therefore, the probability that
    $\dA$ can convince $\B$ in the full $\IQZKF$ in case of $x \notin
    {\cal L}_\kappa$ is also negligible and its soundness follows.
  \end{proof}

  \begin{claim}
    \label{claim:qzk}
    Protocol $\IQZKF$ is an interactive zero-knowledge proof,
    according to Definition~\ref{def:iqzk.zero-knowledge}.
  \end{claim}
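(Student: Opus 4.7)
The plan is to construct a simulator $\SIQZKF$ against any quantum verifier $\dB$ by black-box use of the simulator $\SNIZK$ guaranteed by Definition~\ref{def:nizk.zero-knowledge} for the underlying non-interactive zero-knowledge subprotocol. On common input $x\in{\cal L}_\kappa$, $\SIQZKF$ first invokes $\SNIZK(x)$ to obtain a pair $(\crs,\pi(x,\crs))$ which, by assumption, is quantum-computationally indistinguishable from the distribution produced by choosing $\crs$ uniformly at random and running the honest prover on $(x,\crs)$. Then $\SIQZKF$ emulates the interaction between $\dB$ and the ideal coin-flip functionality in Step~(\ref{step:coin}.): it acts as $\cxF{\kappa}$ in $\dB$'s view and simply hands the value $\crs$ to $\dB$ as the coin output. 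Finally, $\SIQZKF$ plays the honest prover's role in the $(\NIZK)$-phase by forwarding $\pi(x,\crs)$ to $\dB$ and outputs whatever $\dB$ outputs.

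First I would argue that this simulation is a perfect emulation of the real interaction, \emph{conditioned on} the event that the $\crs$ used in the $(\NIZK)$-phase was generated by $\SNIZK$ rather than uniformly at random. Indeed, in the real protocol $\dB$ receives $\crs$ from $\cxF{\kappa}$ and then a proof $\pi(x,\crs)$ produced by the honest prover, while in the simulation $\dB$ receives $\crs$ followed by $\pi(x,\crs)$ directly from $\SIQZKF$. The only place where the distributions could differ is in how the pair $(\crs,\pi(x,\crs))$ is produced.

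Next I would reduce the indistinguishability of the overall simulator output to that of $\SNIZK$ by a straightforward hybrid argument: any quantum distinguisher between $out^{\SIQZKF}$ and $out^{\IQZKF(x)}_{\A,\dB}$ can be turned into a quantum distinguisher against $\SNIZK$ of essentially the same size, contradicting Definition~\ref{def:nizk.zero-knowledge}. Here I rely on the fact that the remaining steps of $\SIQZKF$ (delivering $\crs$ to $\dB$ through the emulated call to $\cxF{\kappa}$, forwarding $\pi$, and running $\dB$) can be implemented by a polynomial-size quantum circuit applied to the input pair $(\crs,\pi(x,\crs))$, so $\approxq$-closeness is preserved.

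The main subtlety, rather than an obstacle, is that $\dB$ is allowed to be an arbitrary quantum verifier whose internal state may be entangled with an auxiliary register; this is unproblematic here because $\SIQZKF$ never has to rewind $\dB$: it already knows $\crs$ and the proof $\pi(x,\crs)$ before emulating $\cxF{\kappa}$, so the simulation is strictly straight-line. Having established zero-knowledge for $\IQZKF$, the claim for $\IQZK$ itself follows by replacing each call to $\cxF{\kappa}$ with an execution of $\cxPi{\A}{\B}{\kappa}$, invoking the composition guarantee of Corollary~\ref{cor:sequential.coin.flipping} and the quantum-computational security of $\COIN$ against dishonest Bob proven in Theorem~\ref{thm:computational.security.single.coin}, which only adds a negligible quantum-computational deviation to the simulated transcript.
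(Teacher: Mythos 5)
Your simulator is the same one the paper constructs: invoke $\SNIZK(x)$ to obtain $(\crs,\pi(x,\crs))$, hand $\crs$ to $\dB$ in place of the output of $\cxF{\kappa}$, forward $\pi$, and output $\dB$'s output; your subsequent reduction to the quantum-computational indistinguishability guaranteed by Definition~\ref{def:nizk.zero-knowledge} is a spelled-out version of the paper's one-line appeal to the same assumption. The only thing to note is that your last paragraph, which replaces $\cxF{\kappa}$ by the protocol $\cxPi{\A}{\B}{\kappa}$ via Corollary~\ref{cor:sequential.coin.flipping} and Theorem~\ref{thm:computational.security.single.coin}, belongs to the proof of Theorem~\ref{thm:iqzk} rather than to this claim, which concerns $\IQZKF$ only.
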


 \begin{proof}
    We construct a simulator $\SIQZKF$, interacting with dishonest
    $\dB$ and a simulator $\SNIZK$. As given in
    Definition~\ref{def:nizk.zero-knowledge}, such a simulator
    generates, on input $x \in \cal L$, a randomly looking $\crs$
    together with a valid proof $\pi$ for $x$ (without knowing witness
    $w$). $\SIQZKF$, described in Figure~\ref{fig:simulationZKF},
    receives a random string $\tilde{\crs}$ from $\SNIZK$, which now
    replaces the coin-string $h$ produced by $\cxF{\kappa}$ in
    protocol $\IQZKF$. By assumption on $\SNIZK$, this is
    quantum-computationally indistinguishable for $\dB$. Thus, the
    simulated proof $\pi(\crs,x)$ is indistinguishable from a real
    proof, which proves that the $\IQZKF$ is zero-knowledge.
  \end{proof}

%%%%%%%%%%%%%%%%%%%%%%%%%%%%%%%%%%%%%%%%%%%%%%%%%%%%%%%%%%%%%%%%%%%%%%%%%
  \begin{figure}
    \begin{framed}
      \noindent\hspace{-1.5ex} {\sc Simulation $\SIQZKF$:}\\[-4ex]
      \begin{enumerate}
        \item 
	  $\SIQZKF$ gets input $x$ and invokes $\SNIZK$ with $x$ to
	  receives $\pi(\crs, x)$.
	\item Let $\crs = h$. $\SIQZKF$ sends $h$ to $\dB$.
	\item $\SIQZKF$ sends $\pi(\crs, x)$ to $\dB$ and outputs
	  whatever $\dB$ outputs.
      \end{enumerate}
      \vspace{-1ex}
    \end{framed}
    \vspace{-1.5ex}
    %\small
    \caption{The Simulation of the Intermediate Protocol for IQZK.}
    \label{fig:simulationZKF}
   \end{figure}
%%%%%%%%%%%%%%%%%%%%%%%%%%%%%%%%%%%%%%%%%%%%%%%%%%%%%%%%%%%%%%%%%%%%%%%%%%

  It would be natural to think that $\IQZK$ could be proven secure
  simply by showing that $\IQZKF$ implements some appropriate
  functionality and then use a composition theorem from
  Section~\ref{sec:security.definition}. Recall, however, that a
  zero-knowledge protocol---which is not necessarily a proof of
  knowledge---cannot be modeled by a functionality in a natural
  way. Instead, we prove the standard properties of a zero-knowledge
  proof system explicitly and therewith the following
  Theorem~\ref{thm:iqzk}.

  \begin{theorem}[Interactive Quantum Zero-Knowledge]
    \label{thm:iqzk}
    Protocol $\IQZK$ is an interactive proof system, satisfying
    completeness and soundness. Since, for any quantum verifier $\dB$,
    there exists a simulator $\SIQZK$ with output
    quantum-computationally indistinguishable from the real output, we
    additionally achieve quantum zero-knowledge.
  \end{theorem}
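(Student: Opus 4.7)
The plan is to transfer the three properties already established for the intermediate protocol $\IQZKF$ (Claims~4.1, 4.2, 4.3) to the actual protocol $\IQZK$, by invoking the fact that $\cxPi{\A}{\B}{\kappa}$ securely implements $\cxF{\kappa}$ in the relevant senses (statistically against dishonest Alice, quantum-computationally against dishonest Bob), together with the appropriate composition statements from Section~\ref{sec:security.definition}. We do \emph{not} try to package $\IQZK$ as a single functionality: zero-knowledge (as opposed to proof of knowledge) is not naturally captured by one, so we argue the three defining properties separately.

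For completeness, both parties are honest, so the sequentially composed coin-flip outputs a uniformly random $\crs \in \zo^\kappa$ with probability $1$ (since correctness of $\cxPi{\A}{\B}{\kappa}$ holds except with negligible probability by Corollary~\ref{cor:sequential.coin.flipping}). From that point on, the protocol proceeds identically to $\IQZKF$, so completeness follows from Claim~\ref{claim:zk.complete} with only a negligible additional loss.

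For soundness, I would invoke unconditional security of $\cxPi{\A}{\B}{\kappa}$ against dishonest Alice. By Corollary~\ref{cor:sequential.coin.flipping}, for every $\dA$ there is an ideal-world $\dhA$ such that the joint output is statistically close to that of an execution in which $\crs$ is drawn by the ideal functionality $\cxF{\kappa}$. In this ideal hybrid, $\crs$ is either uniformly random or the protocol aborts on Alice's side (in which case $\B$ rejects). Conditional on not aborting, Claim~\ref{claim:sound} applied to $\IQZKF$ gives negligible acceptance probability whenever $x \notin {\cal L}_\kappa$; statistical closeness transfers the bound to $\IQZK$.

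The main obstacle is quantum zero-knowledge, where I have to produce a polynomial-time simulator $\SIQZK$ against an arbitrary quantum $\dB$. I would build $\SIQZK$ as follows: first run $\SNIZK$ on input $x$ to obtain a pair $(\tilde{\crs}, \pi(\tilde{\crs}, x))$ with $\tilde{\crs}$ indistinguishable from uniform and $\pi$ a valid NIZK proof; then play the role of honest $\A$ in the coin-flip phase against $\dB$, but using the ideal-world simulator $\dhB$ from Theorem~\ref{thm:computational.security.single.coin} (sequentially composed via Corollary~\ref{cor:sequential.coin.flipping}) to \emph{force} the outcome of $\cxPi{\A}{\B}{\kappa}$ to equal $\tilde{\crs}$; finally, send $\pi(\tilde{\crs}, x)$ to $\dB$ and output whatever $\dB$ outputs. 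Indistinguishability from the real interaction is then established by a two-step hybrid: (i) replacing the real coin-flip with a simulated one that enforces a freshly sampled uniform $h$ is quantum-computationally indistinguishable by the enforceability property of $\cxPi{\A}{\B}{\kappa}$ against $\dB$ (this is where Watrous' rewinding, already absorbed into Theorem~\ref{thm:computational.security.single.coin}, is used); (ii) replacing the uniformly sampled $h$ with $\tilde{\crs}$ produced by $\SNIZK$ together with its proof is quantum-computationally indistinguishable by Definition~\ref{def:nizk.zero-knowledge} of the underlying NIZK. Transitivity of $\approxq$ yields $out^{\SIQZK} \approxq out^{\pi(x,\crs)}_{\A,\dB}$, and polynomial running time of $\SIQZK$ follows since both $\SNIZK$ and $\dhB$ are polynomial-time. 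The delicate point, which I would be careful about, is that in hybrid~(i) the coin-flip simulator requires its target outcome to be chosen independently of $\dB$'s auxiliary input, which is satisfied because $\tilde{\crs}$ (resp.\ the freshly sampled $h$) is generated independently of $\dB$ before the interaction begins.
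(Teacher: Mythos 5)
Your proof is correct and follows essentially the same route as the paper: establish the three properties for the intermediate protocol $\IQZKF$ (Claims~\ref{claim:zk.complete}--\ref{claim:qzk}), then transfer them to $\IQZK$ using the ideal-world adversaries $\dhA$ and $\dhB$ of $\COIN$ for soundness and zero-knowledge, respectively, and compose $\SNIZK$ with the sequential coin-flip simulator to build $\SIQZK$. The paper packages the soundness step as a proof by contradiction and leaves the zero-knowledge hybrid argument implicit, whereas you spell both out as explicit reductions and two-step hybrids, but the underlying reasoning is the same; your closing remark about the coin target being fixed independently of $\dB$'s auxiliary input is a valid and worthwhile precision the paper glosses over.
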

 
%%%%%%%%%%%%%%%%%%%%%%%%%%%%%%%%%%%%%%%%%%%%%%%%%%%%%%%%%%%%%%%%%%%%%%%%%%
  \begin{figure}
    \begin{myprotocol}{\sc protocol $\IQZK$}
      \item[($\COIN$)] 
	$\A$ and $\B$ run $\cxPi{\A}{\B}{\kappa}$ and
	set $\crs = h$.\medskip
      \item[$(\NIZK)$]
	$\A$ sends $\pi(\crs,x)$ to $\B$. $\B$ checks the proof and
	accepts or rejects accordingly.
    \end{myprotocol}
    \vspace{1ex} 
    %\small
    \caption{Interactive Quantum Zero-Knowledge.}
    \label{fig:iqzk}
   \end{figure}
%%%%%%%%%%%%%%%%%%%%%%%%%%%%%%%%%%%%%%%%%%%%%%%%%%%%%%%%%%%%%%%%%%%%%%%%%%

  \begin{proof}
    From the analysis of protocol $\COIN$, its sequential
    composability, and the indistinguishability from the ideal
    functionality $\cxF{\kappa}$, it follows that if both players are
    honest $\crs$ is a random common reference string of size $\kappa$
    and the acceptance probability of the $(\NIZK)$-subprotocol as
    given previously holds. Completeness of $\IQZK$ follows.

    To show soundness, we again only consider the case where $\dA$
    executes the entire protocol without making obvious cheats, since
    otherwise, $\B$ immediately aborts. Assume that $\dA$ could cheat
    in $\IQZK$, i.e., $\B$ would accept an invalid proof with
    non-negligible probability. Then we could combine $\dA$ with
    simulator $\dhA$ of protocol $\COIN$ (Figure~\ref{fig:simulationA})
    to show that $\IQZKF$ was not sound. This, however, is
    inconsistent with the previously given soundness argument in the
    proof of Claim~\ref{claim:sound}, and thus proves by contradiction
    that $\IQZK$ is sound.

    To further prove that the interactive proof system is also quantum
    zero-knowledge, we compose a simulator $\SIQZK$ of simulator
    $\SIQZKF$ (Figure~\ref{fig:simulationZKF}) and simulator $\dhB$ of
    protocol $\COIN$ (Figure~\ref{fig:simulationB}). In more detail,
    $\SIQZK$ gets classical input $x$ as well as quantum input $W$ and
    $X$. It then receives a valid proof $\pi$ and a random string
    $\crs$ from $\SNIZK$. $\crs$ is split into $coin_1 \ldots
    coin_k$. For each $coin_i$, it will then invoke $\dhB$ to simulate
    one coin-flip execution with $\coin = coin_i$ as result. In other
    words, whenever $\dhB$ asks $\cF$ to output a bit
    (Step~(\ref{step:get-coin}.), Figure~\ref{fig:simulationB}), it
    instead receives this $coin_i$. We see that the transcript of the
    simulation is indistinguishable from the transcript of the
    protocol $\IQZK$ for any quantum-computationally bounded
    $\dB$. This concludes the proof.
  \end{proof}\\

  We conclude this section by the corollary, immediately following
  from the previous proof and stating that quantum-secure commitments,
  as defined in Section~\ref{sec:bit.commit}, imply interactive
  quantum zero-knowledge.
  \begin{corollary}
    If there exist quantum-secure commitment schemes, then we can
    obtain interactive quantum zero-knowledge against any quantum
    adversary $\dP \in \dPlayerPoly$ without any set-up assumptions.
  \end{corollary}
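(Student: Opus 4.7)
The plan is to observe that the corollary is a direct consequence of Theorem~\ref{thm:iqzk} combined with the setup-freeness of our coin-flipping construction. First, I would recall that protocol $\IQZK$ in Figure~\ref{fig:iqzk} has only two components: a run of $\cxPi{\A}{\B}{\kappa}$ to produce the reference string $\crs$, followed by the $(\NIZK)$-subprotocol using $\crs$. Theorem~\ref{thm:iqzk} already establishes that this construction satisfies completeness, soundness and quantum zero-knowledge against any $\dP \in \dPlayerPoly$. What remains is to argue that the entire construction can be realized from scratch, i.e.\ without any set-up assumptions, whenever quantum-secure commitment schemes exist.

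Next, I would trace the assumptions used by the two components. The $(\NIZK)$-subprotocol is taken as a generic building block satisfying Definitions~\ref{def:nizk.complete}, \ref{def:nizk.sound} and~\ref{def:nizk.zero-knowledge}; its only ``setup'' requirement is the common reference string $\crs$, which in $\IQZK$ is not given externally but produced interactively. For the coin-flipping part, I would invoke the results of Chapter~\ref{chap:coin.flip}: the single-bit protocol $\cPi{\A}{\B}$ is correct and secure (statistically against $\dA$ and quantum-computationally against $\dB$) whenever the underlying scheme is unconditionally binding and quantum-computationally hiding, and by Corollary~\ref{cor:sequential.coin.flipping} it composes sequentially to give $\cxPi{\A}{\B}{\kappa}$ for any polynomial $\kappa$.

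Then I would note that, by the discussion in Section~\ref{sec:bit.commit}, a quantum-secure commitment of the required flavor (Naor-type, based on a pseudorandom generator from a quantum-secure one-way function) has no set-up phase: the only initial message is the receiver's random string $R_B$, which is part of the protocol itself, not a trusted pre-shared resource. Since quantum-secure commitment schemes trivially imply quantum-secure one-way functions, and the latter suffice for a quantum-secure PRG and hence for the Naor commitment, the hypothesis of the corollary yields a concrete instantiation of $\cPi{\A}{\B}$ entirely from scratch. Plugging this into $\cxPi{\A}{\B}{\kappa}$ and then into $\IQZK$ gives a fully set-up-free interactive quantum zero-knowledge proof system, and the quantum-computational indistinguishability of the simulator constructed in the proof of Theorem~\ref{thm:iqzk} (via $\SIQZKF$ composed with $\dhB$) carries over unchanged.

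The only delicate point, and thus the main thing to argue carefully, is that replacing the externally provided $\crs$ of the $(\NIZK)$-subprotocol by the output of $\cxPi{\A}{\B}{\kappa}$ does not damage soundness or zero-knowledge. For soundness this is handled exactly as in Claim~\ref{claim:sound} lifted to $\IQZK$: a dishonest $\dA$ capable of breaking soundness of $\IQZK$ could be combined with the ideal-world simulator $\dhA$ of $\COIN$ to contradict the soundness of $\IQZKF$, hence of $(\NIZK)$ on a uniformly random $\crs$. For zero-knowledge, I would stitch together $\SNIZK$ (which produces an indistinguishable $(\crs,\pi)$ pair) with $\dhB$ (which, by Watrous' rewinding, enforces any target coin against a poly-time quantum $\dB$) coin by coin, as already done in the proof of Theorem~\ref{thm:iqzk}. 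Both reductions are quantum-computational and poly-time, so no rewinding issues beyond those already resolved by the single-coin simulator arise, and the corollary follows.
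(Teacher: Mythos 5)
Your proof is correct and follows essentially the same route the paper intends: the paper states the corollary as "immediately following" from Theorem~\ref{thm:iqzk}, relying implicitly on exactly the observations you spell out---that $\cxPi{\A}{\B}{\kappa}$ can be instantiated from a Naor-type commitment (hence from any quantum-secure one-way function, hence from any quantum-secure commitment) with no trusted setup, and that the soundness and zero-knowledge reductions in the proof of Theorem~\ref{thm:iqzk} (combining $\dhA$ respectively $\SNIZK$ with $\dhB$ coin by coin) already handle the replacement of the external $\crs$. Your write-up is simply a more explicit version of what the paper leaves to the reader.
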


%%% ZKPK %%%%%%%%%%%%%%%%%%%%%%%%%%%%%%%%%%%%%%%%%%%%%%%%%%%%%%%%%%%%%%%%

\section{Zero-Knowledge Proof of Knowledge}
\label{sec:coin.zkpk}
\index{zero-knowledge!proofs of knowledge}

A zero-knowledge proof of knowledge is a special case of
zero-knowledge proof systems, introduced in
Section~\ref{sec:primitives.zk}. Here, we propose a quantum-secure
construction based on witness encoding, which we define in the context
of simulation.

%%%%%%%%%%%%%%%%%%%%%%%%%%%%%%%%%%%%%%%%%%%%%%%%%%%%%%%%%%%%%%%%%%%%%%%%%%

\subsection{Motivation and Related Work}
\label{sec:zkpk.motivation}

  Recall that the purpose of a zero-knowledge proof of knowledge is to
  verify in classical poly-time in the length of the instance, whether
  $w$ is a valid witness for instance $x$ in relation $\Rel$,
  i.e.~$(x,w) \in \Rel$. We call $\Rel$ an \NP-relation, as the
  language ${\cal L(R)} = \set{ x \in \zo^* \vert \ \exists \, w \
  \text{s.t.}  \ (x,w) \in \Rel }$ is seen to be an
  \NP-language. Interestingly, such a zero-knowledge proof of
  knowledge, in contrast to zero-knowledge proofs, can be modeled by
  an ideal functionality.

  Our protocol is based on a witness encoding scheme, providing a
  certain degree of extractability and simulatability, defined in
  Section~\ref{sec:zkpk.witness.encoding}. We want to stress that the
  extractability requirement resembles special soundness in proof
  systems, which are secure in the classical world and typically come
  along with a knowledge error negligible in the length of the
  challenge. We have to reformulate this aspect in stronger terms in
  the quantum world, since special soundness seems to be impossible to
  use in the quantum realm, due to the restrictions within
  rewinding. However, we obtain a similar result also with knowledge
  error negligible in the length of the challenge.

  Furthermore, our construction requires a mixed bit commitment (see
  Section~\ref{sec:mixed.commit}) and two calls to the coin-flip
  protocol $\pi^{(\force,\force)}$, described in
  Figure~\ref{fig:force.force}, Chapter~\ref{chap:framework}, which is
  poly-time simulatable for both sides even against quantum
  adversaries. Since this protocol only assumes mixed commitments as
  well, we get the corollary that if there exists a mixed commitment
  scheme, then we can construct a classical zero-knowledge proof of
  knowledge against any poly-sized quantum adversary. This is of
  particular interest, as the problems of rewinding in the quantum
  realm complicate implementing proofs of knowledge from scratch.

  As already mentioned in Chapter~\ref{chap:coin.flip}, the
  unpublished approach of~\cite{Smith09} suggest another solution for
  this concept. Instead of composing the coin-string from single
  coins, they use a string commitment with special opening and compose
  the subsequent zero-knowledge proof. The coin-string is used as key
  to encode the witness and the second zero-knowledge proof is given
  to prove it.

%%%%%%%%%%%%%%%%%%%%%%%%%%%%%%%%%%%%%%%%%%%%%%%%%%%%%%%%%%%%%%%%%%%%%%%%%%

\subsection{Simulatable Witness Encodings of \NP}
\label{sec:zkpk.witness.encoding}
\index{simulatable witness encoding}

  We first specify a simulatable encoding scheme for binary relation
  $\Rel \subset \zo^* \times \zo^*$, which consists of five classical
  poly-time algorithms $(E,D,S,J,\hat{E})$. Then, we define
  completeness, extractability and simulatability for such a scheme in
  terms of the requirements of our zero-knowledge proof of knowledge.

  Let $E: \Rel \times \zo^m \ra \zo^n$ denote an \defterm{encoder},
  such that for each $(x,w) \in \Rel$, the $n$-bit output $e \la
  E(x,w,r')$ is a random encoding of $w$, with randomness $r' \in
  \zo^m$ and polynomials $m(\vert x \vert)$ and $n(\vert x
  \vert)$. The corresponding \defterm{decoder} $D: \zo^* \times \zo^n
  \ra \zo^*$ takes as input an instance $x \in \zo^*$ and an encoding
  $e \in \zo^n$ and outputs $w \la D(x,e)$ with $w \in \zo^*$. 

  Next, let $S$ denote a \defterm{selector} with input $s \in
  \zo^\sigma$ (with polynomial $\sigma(\vert x \vert)$) specifying a
  challenge, and output $S(s)$ defining a poly-sized subset of
  $\set{1,\ldots,n}$ corresponding to challenge $s$. We will use
  $S(s)$ to select which bits of an encoding $e$ to reveal to the
  verifier. For simplicity, we use $e_s$ to denote the collection of
  bits $e|_{S(s)}$. 

  We denote with $J$ the \defterm{judgment} that checks a potential
  encoding $e$ by inspecting only bits $e_s$. In more detail, $J$
  takes as input instance $x \in \zo^*$, challenge $s \in \zo^\sigma$
  and the $\vert S(s) \vert$ bits $e_s$, and outputs a judgment $j \la
  J(x,s,e_s)$ with $j \in \set{\abort, \success}$. 

  Finally, the \defterm{simulator} is called $\hat{E}$. It takes as
  input instance $x \in \zo^*$ and challenge $s \in \zo^\sigma$ and
  outputs a random collection of bits $t|_{S(s)} \la
  \hat{E}(x,s)$. Again for simplicity, we let $t|_{S(s)} = t_s$. Then,
  if this set has the same distribution as bits of an encoding $e$ in
  positions $S(s)$, the bits needed for the judgment to check an
  encoding $e$ can be simulated given just instance $x$ (see
  Definition~\ref{def:simulate}).
  \index{simulatable witness encoding!completeness}

  \begin{definition}[Completeness]
    \label{def:zkpk.complete}
    If an encoding $e\la~E(x,w,r)$ is generated correctly, then
    $\success \la J(x,s,e_s)$ for all $s \in_R \zo^\sigma$.
  \end{definition}

  We will call an encoding $e$ \defterm{admissible} for $x$, if there
  \emph{exist} two distinct challenges $s,s' \in \zo^\sigma$ for which
  $\success \la J(x,s,e_s)$ and $\success \la J(x,s',e_{s'})$.
  \index{simulatable witness encoding!extractability}
  \index{simulatable witness encoding!admissible}

  \begin{definition}[Extractability]
    \label{def:extract}
    If an encoding $e$ is admissible for $x$, then
    $\big( x,D(x,e) \big)\in \Rel$.
  \end{definition}
  We want to stress that extractability is similarly defined to the
  special soundness property of a classical $\Sigma$-protocol, which
  allows to extract $w$ from two accepting conversations with
  different challenges. Such a requirement would generally be
  inapplicable in the quantum setting, as the usual rewinding
  technique is problematic and in particular in the context here, we
  cannot measure two accepting conversations during rewinding in the
  quantum world. Therefore, we define the stronger requirement that if
  there \emph{exist} two distinct answerable challenges for one
  encoding $e$, then $w$ can be extracted given only $e$. This
  condition works nicely in the quantum world, since we can obtain $e$
  without rewinding, as we will show in our quantum-secure proof
  construction.
  \index{simulatable witness encoding!simulatability}

  \begin{definition}[Simulatability]
    \label{def:simulate}
    For all $(x,w) \in \Rel$ and all $s \in_R \zo^\sigma$, the
    distribution of $e \leftarrow E(x,w,r')$ restricted to positions
    $S(s)$ is identical to the distribution of $t_s \leftarrow
    \hat{E}(x,s)$, i.e.,
    $$
    {\cal D}(e_s) = {\cal D}(t_s) \, .
    $$
  \end{definition}
   
  There are several commit\&open proofs for $\NP$. One can, for
  instance, start from the commit\&open protocol for circuit
  satisfiability, where the bits of the randomized circuit committed
  to by the sender is easy to see as a simulatable encoding of a
  witness being a consistent evaluation of the circuit to output
  $1$. The challenge in the protocol is one bit $e$ and the prover
  replies by showing either the bits corresponding to some positions
  $S'(0)$ or positions $S'(1)$. The details can be found
  in~\cite{BCC88}. This gives us a simulatable witness encoding for
  any $\NP$-relation $\Rel$ with $\sigma = 1$, using a reduction from
  $\NP$ to circuit simulatability. By repeating it $\sigma$ times in
  parallel we get a simulatable witness encoding for any $\sigma$. For
  $i = 1, \ldots, \sigma$, compute an encoding $e^i$ of $w$ and let $e
  = (e^1, \ldots, e^\sigma)$. Then for $s \in \zo^\sigma$, let $S(s)$
  specify that the bits $S'(s_i)$ should be shown in $e^i$ and check
  these bits. Note, in particular, that if two distinct $s$ and $s'$
  passes this judgment, then there exists $i$ such that $s_i \ne
  s_i'$, so $e^i$ passes the judgment for both $s_i = 0$ and $s_i =
  1$, which by the properties of the protocol for circuit
  satisfiability allows to compute a witness $w$ for $x$ from
  $e^i$. One can find $w$ from $e$ simply by trying to decode each
  $e^j$ for $j = 1, \ldots, \sigma$ and check if $(x,w_j) \in \Rel$.

%%%%%%%%%%%%%%%%%%%%%%%%%%%%%%%%%%%%%%%%%%%%%%%%%%%%%%%%%%%%%%%%%%%%%%%%%%

\subsection{The Protocol}
\label{sec:zkpk.protocol}

  We now construct a quantum-secure zero-knowledge proof of knowledge
  from prover $\A$ to verifier $\B$. Recall that we are interested in
  the \NP-language ${\cal L(R)} = \set{ x \in \zo^* \, \vert \,
  \exists \, w \ \text{s.t.} \ (x,w) \in \Rel }$, where $\A$ has input
  $x$ and $w$, and both $\A$ and $\B$ receive positive or negative
  judgment of the validity of the proof as output. We assume in the
  following that on input $(x,w) \notin \Rel$, honest $\A$ aborts. The
  final protocol $\ZKPK$ is describe in Figure~\ref{fig:zkpk}.

  As already mentioned, unlike zero-knowledge proofs, proofs of
  knowledge can be modeled by an ideal functionality, given as
  $\zkpkF$ in Figure~\ref{fig:zkpkF}. $\zkpkF$ can be thought of as a
  channel which only allows to send messages in the language $\cal
  L(R)$. It models \emph{zero-knowledge}, as it only leaks instance
  $x$ and judgment $j$ but not witness $w$. Furthermore, it models a
  \emph{proof of knowledge}, since Alice has to know and input a valid
  witness $w$ to obtain output $j = \success$.

%%%%%%%%%%%%%%%%%%%%%%%%%%%%%%%%%%%%%%%%%%%%%%%%%%%%%%%%%%%%%%%%%%%%%%%%%
  \begin{figure}
    \begin{framed}
      \noindent\hspace{-1.5ex} {\sc Functionality $\zkpkF$:}\\[-4ex]
      \begin{enumerate}
      \item
	On input $(x,w)$ from Alice, $\zkpkF$ sets $j = \success$ if
	$(x,w) \in \Rel$. Otherwise, it sets $j = \abort$.
      \item
	$\zkpkF$ outputs $(x,j)$ to Alice and Bob.
      \end{enumerate}
      \vspace{-1.5ex}
    \end{framed}
    \vspace{-1.5ex}
    %\small
    \caption{The Ideal Functionality for a Zero-Knowledge Proof of
      Knowledge.}\label{fig:zkpkF}
   \end{figure}
%%%%%%%%%%%%%%%%%%%%%%%%%%%%%%%%%%%%%%%%%%%%%%%%%%%%%%%%%%%%%%%%%%%%%%%%%

  Protocol $\ZKPK$ is based on our fully simulatable coin-flip
  protocol $\pi^{(\force,\force)}$, which we analyze here in the
  hybrid model by invoking the ideal functionality of sequential
  coin-flipping twice (but with different output lengths). Note that
  in the hybrid model, a simulator can enforce a particular outcome to
  hit also when invoking the ideal coin-flipping functionality. We can
  then use Definition~\ref{def:force} to replace the ideal
  functionality by the actual protocol $\pi^{(\force,\force)}$.

  One call to the ideal functionality $\cxF{\kappa}$ with output
  length $\kappa$ is required to instantiate a mixed bit commitment
  scheme $\Commit$ as discussed in
  Section~\ref{sec:mixed.commit.trapdoor.opening}. Recall that it is
  therewith possible to sample an unconditionally binding key $pk \in
  \zo^\kappa$ along with an extraction key $sk$. Since such keys are
  quantum-computationally indistinguishable from random values in
  $\zo^\kappa$, the latter serves us as unconditionally hiding
  instantiations of $\Commit$. The second call to the functionality
  $\cxF{\sigma}$ produces $\sigma$-bit challenges for a simulatable
  witness encoding scheme with $(E,D,S,J,\hat{E})$ as specified in the
  previous Section~\ref{sec:zkpk.witness.encoding}.

%%%%%%%%%%%%%%%%%%%%%%%%%%%%%%%%%%%%%%%%%%%%%%%%%%%%%%%%%%%%%%%%%%%%%%%%%
  \begin{figure}
    \begin{framed}
      \noindent\hspace{-1.5ex} {\sc Protocol $\ZKPK$ :}\\[-4ex]
      \begin{enumerate}
      \item
	$\A$ and $\B$ invoke $\cxF{\kappa}$ to get a commitment key
	$pk \in \zo^\kappa$.
      \item
	$\A$ samples $e \leftarrow E(x,w,r')$ with randomness $r' \in
	\zo^m$ and commits position-wise to all $e_i$ for $i =
	1,\ldots,n$, by computing random commitments $E_i =
	\Commitk{e_i}{r_i}{pk}$ with randomness $r \in \zo^n$. She
	sends $x$ and all $E_i$ to $\B$.
      \item
	$\A$ and $\B$ invoke $\cxF{\sigma}$ to flip a challenge $s
	\in_R \zo^\sigma$.
      \item  
	$\A$ opens her commitments to all $e_s$.
      \item
	If any opening is incorrect, $\B$ outputs $\abort$. Otherwise, he
	outputs $j \la J(x,s,e_s)$ with $j \in \set{\success,\abort}$.
      \end{enumerate}
      \vspace{-1ex}
    \end{framed}
    \vspace{-1.5ex}
    %\small
    \caption{Zero-Knowledge Proof of Knowledge.}
    \label{fig:zkpk}
  \end{figure}
%%%%%%%%%%%%%%%%%%%%%%%%%%%%%%%%%%%%%%%%%%%%%%%%%%%%%%%%%%%%%%%%%%%%%%%%%

  \begin{theorem}[Zero-Knowledge Proof of Knowledge]\label{thm:zkpk}
    For any simulatable witness encoding scheme $(E,D,S,J,\hat{E})$,
    satisfying completeness, extractability, and simulatability
    according to Definitions~\ref{def:zkpk.complete}
    -~\ref{def:simulate}, and for negligible knowledge error
    $2^{-\sigma}$, protocol $\ZKPK$ is a zero-knowledge proof of
    knowledge and securely implements $\zkpkF$.
  \end{theorem}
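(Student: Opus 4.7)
The plan is to prove the theorem in the $(\cxF{\kappa}, \cxF{\sigma})$-hybrid model and then invoke Corollary~\ref{cor:double.simulatable.sequential.composition} to replace the ideal coin-flip functionalities with the protocol $\pi^{(\force,\force)}$, yielding quantum-computational security of the resulting protocol. Correctness is immediate: when both players are honest, $\cxF{\kappa}$ delivers a uniformly random $pk$, $\cxF{\sigma}$ delivers a uniformly random challenge $s$, and completeness of the encoding scheme (Definition~\ref{def:zkpk.complete}) ensures that $\B$ always outputs $\success$.

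For the proof-of-knowledge side, the plan is to construct an ideal-world simulator $\dhA$ that, when the first coin-flip is invoked, enforces its outcome to be a binding key $pk$ sampled from $\GB$ together with its extraction trapdoor $sk$. When $\dA$ sends her commitments $E_1, \ldots, E_n$, $\dhA$ sets $e_i = \xtr{E_i}{sk}$, so by the unconditional binding property $\dA$ can only open $E_i$ to the extracted $e_i$. The simulator then computes $w \leftarrow D(x,e)$ and calls $\zkpkF$ on $(x,w)$. To argue correctness of the simulation, I would use extractability (Definition~\ref{def:extract}) contrapositively: if $(x,w) \notin \Rel$ then $e$ is \emph{not} admissible, so there is at most one challenge $s$ with $J(x,s,e_s) = \success$. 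Since the challenge comes from $\cxF{\sigma}$ uniformly and independently of the extracted $e$, the honest verifier in the real protocol accepts with probability at most $2^{-\sigma}$, which matches the $\abort$ output of $\zkpkF$ up to the stated negligible knowledge error. Finally, indistinguishability between $pk \leftarrow \GB$ and $pk \in_R \zo^\kappa$ (as produced in the real protocol) transfers the bound to the real transcript via a standard hybrid argument, as already used in the proof of Theorem~\ref{thm:force.force}.

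For zero-knowledge against dishonest Bob, the plan is to construct $\dhB$ that first learns $(x,j)$ from $\zkpkF$ and, if $j = \success$, samples a challenge $s \in_R \zo^\sigma$, runs the first coin-flip honestly to obtain a uniformly random $pk$, and then forms commitments $E_i = \Commitk{t_i}{r_i}{pk}$ for $i \in S(s)$, where $t_s \leftarrow \hat{E}(x,s)$, and $E_i = \Commitk{0}{r_i}{pk}$ elsewhere. When the second coin-flip is invoked, $\dhB$ enforces its outcome to be exactly this pre-chosen $s$; the opening at positions $S(s)$ is then honest on the simulated bits $t_s$. By simulatability (Definition~\ref{def:simulate}), the opened bits $t_s$ are distributed identically to the honest $e_s$; by unconditional hiding (which holds except with negligible probability over $pk \in_R \zo^\kappa$), the unopened commitments reveal nothing about the underlying plaintexts. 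Hence the simulated transcript is statistically close to the real one for any quantum $\dB$.

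The principal obstacle is exactly the one the statement of the theorem is designed to circumvent: in the quantum setting we cannot extract a witness by rewinding $\dA$ to obtain two accepting conversations on different challenges, which is the standard classical route through special soundness. The formulation of extractability in Definition~\ref{def:extract} sidesteps this by declaring that a witness can be read off from a \emph{single} admissible encoding, which pairs perfectly with the extraction trapdoor of the mixed commitment: $\dhA$ recovers $e$ once and in a straight-line manner, never needing to rewind. The corresponding delicate point in the zero-knowledge direction is that $\dhB$ must fix its commitments before learning the challenge; this is resolved by enforcing the challenge in the second coin-flip, which is precisely what the $\force$ flavor of $\pi^{(\force,\force)}$ provides in the hybrid model.
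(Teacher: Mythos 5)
Your proposal is correct and follows essentially the same route as the paper's proof: security against $\dA$ via the extraction trapdoor for $\GB$-keys, with the contrapositive of admissibility bounding the knowledge error by $2^{-\sigma}$; security against $\dB$ via $\hat{E}$-simulated bits at $S(s)$, challenge enforcement of the second coin-flip, and hiding for the unopened positions, followed by a hybrid argument and the composition machinery. The only deviations are cosmetic — you commit to $0$ rather than fresh random bits at positions outside $S(s)$, and you assert statistical rather than quantum-computational closeness in the hybrid step (acceptable in the hybrid model for a hiding $pk$, though the end-to-end guarantee after replacing $\cxF{\kappa},\cxF{\sigma}$ by $\pi^{(\force,\force)}$ is necessarily only quantum-computational, as the paper is careful to state).
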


  Completeness is obvious. A honest party $\A$, following the protocol
  with $(x,w) \in \Rel$ and any valid encoding $e$, will be able to
  open all commitments in the positions specified by any challenge
  $s$. Honest Bob then outputs $J(x,s,e_s) =
  \success$.\\

  \begin{proof}[ (\emphbf{Security against dishonest Alice})]
    To prove security in case of corrupted $\dA$, we construct a
    simulator $\dhA$ that simulates a run of the actual protocol with
    $\dA$ and $\zkpkF$. The proof is then twofold. First, we show
    indistinguishability between the distributions of simulation and
    protocol. And second, we verify that the extractability property
    of the underlying witness encoding scheme (see
    Definition~\ref{def:extract}) implies a negligible knowledge
    error. Note that if $\dA$ sends $\abort$ at any point during the
    protocol, $\dhA$ sends some input $(x',w') \notin \Rel$ to $\zkpkF$
    to obtain output $(x,j)$ with $j = \abort$, and the simulation
    halts. Otherwise, the simulation proceeds as shown in
    Figure~\ref{fig:simulation.zkpk.dhA}.

%%%%%%%%%%%%%%%%%%%%%%%%%%%%%%%%%%%%%%%%%%%%%%%%%%%%%%%%%%%%%%%%%%%%%%%%%
  \begin{figure}
    \begin{framed}
      \noindent\hspace{-1.5ex} {\sc Simulation $\dhA$ for
      $\ZKPK$ :}\\[-4ex]
      \begin{enumerate}
      \item 
	$\dhA$ samples a random key $pk$ along with the extraction key
	$sk$. Then it enforces $pk$ as output from $\cxF{\kappa}$
      \item
	When $\dhA$ receives $x$ and $(E_1,\ldots,E_n)$ from $\dA$, it
	extracts $e = (\xtr{E_1}{sk},\ldots,\xtr{E_n}{sk})$.
      \item
	$\dhA$ completes the simulation by following the protocol
	honestly. If any opening of $\dA$ is incorrect, $\dhA$
	aborts. Otherwise, $\dhA$ inputs $\big( x,D(x,e) \big)$ to
	$\zkpkF$ and receives $(x,j)$ back. $\dhA$ outputs the final
	state of $\dA$ as output in the simulation.
      \end{enumerate}
      \vspace{-1.5ex} 
    \end{framed}
    \vspace{-1.5ex}
    %\small
    \caption{Simulation against dishonest Alice.}
    \label{fig:simulation.zkpk.dhA}
   \end{figure}
%%%%%%%%%%%%%%%%%%%%%%%%%%%%%%%%%%%%%%%%%%%%%%%%%%%%%%%%%%%%%%%%%%%%%%%%%

    Note that the only difference between the real protocol and the
    simulation is that $\dhA$ uses a random public key $pk$ sampled
    along with an extraction key $sk$, instead of a uniformly random
    $pk \in \zo^\kappa$. It then enforces $\cxF{\kappa}$ to hit
    $pk$. However, by assumption on the commitment keys and by the
    properties of the ideal coin-flipping functionality, the transcripts
    of simulation and protocol remain quantum-computationally
    indistinguishable under these changes.

    Next, we analyze the output in more detail. It is clear that
    whenever honest $\B$ would output $\abort$ in the actual protocol,
    also $\dhA$ aborts, namely, if $\dA$ does deviate in the last
    steps of protocol and simulation, respectively. Furthermore,
    $\dhA$ accepts if and only if $(x,D(x,e)) \in \Rel$ or in other
    words, the judgment of the functionality is positive, denoted by
    $j_\F = \success$.

    It is therefore only left to prove that the case of $j_\F =
    \abort$ but $j_J = \success$ is negligible, where the later
    denotes the judgment of algorithm $J(x,s,e_s)$ as in the
    protocol. In that case, we have $(x,D(x,e)) \notin \Rel$. This means
    that $w$ is not extractable from $D(x,e)$, which in turn implies
    that $(\xtr{E_1}{sk},\ldots,\xtr{E_n}{sk}) = e$ is not
    admissible. Thus, there are no two distinct challenges $s$ and
    $s'$, in which $\dA$ could correctly open her commitment to $e$.
    It follows by contradiction that there exists at most one
    challenge $s$ which $\dA$ can answer. We produce $s \in
    \zo^\sigma$ uniformly at random, from which we obtain an
    acceptance probability of at most $2^{-\sigma}$. Thus, we conclude
    the proof with negligible knowledge error, as desired.\\
  \end{proof}

  \begin{proof}[ (\emphbf{Security against dishonest Bob})]
    To prove security in case of corrupted $\dB$, we construct
    simulator $\dhB$ as shown in
    Figure~\ref{fig:simulation.zkpk.dhB}. Our aim is to verify that
    this simulation is quantum-computationally indistinguishable from
    the real protocol. The key aspect will be the simulatability
    guarantee of the underlying witness encoding scheme, according to
    Definition~\ref{def:simulate}.
    
%%%%%%%%%%%%%%%%%%%%%%%%%%%%%%%%%%%%%%%%%%%%%%%%%%%%%%%%%%%%%%%%%%%%%%%%%
  \begin{figure}
    \begin{framed}
      \noindent\hspace{-1.5ex} {\sc Simulation $\dhB$ for $\ZKPK$
      :}\\[-4ex]
      \begin{enumerate}
      \item
	$\dhB$ invokes $\cxF{\kappa}$ to receive a uniformly random
	$pk$.
      \item
	$\dhB$ samples a uniformly random challenge $s \in \zo^\sigma$
	and computes $t_s \la \hat{E}(x,s)$. $\dhB$ then computes
	commitments $E_i$ as follows: For all $i \in S(s)$, it commits
	to the previously sampled $t_s$ via $E_i =
	\Commitk{t_i}{r_i}{pk}$. For all other positions $i \in
	\bar{S}$ (where $\bar{S} = \set{1,\ldots,n} \setminus S(s)$),
	it commits to randomly chosen values $t'_i \in_R \zo$,
	i.e.~$E_i = \Commitk{t'_i}{r_i}{pk}$. It sends $x$ and all
	$E_i$ to $\dB$.
      \item
	$\dhB$ forces $\cxF{\sigma}$ to hit $s$.
      \item
	$\dhB$ opens $E_i$ to $t_i$ for all $i \in S(s)$, i.e.\ to all
	$t_s$.
      \item $\dhB$ outputs whatever $\dB$ outputs.
      \end{enumerate}
      \vspace{-1.5ex} 
    \end{framed}
    \vspace{-1.5ex}
    %\small
    \caption{Simulation against dishonest Bob.}
    \label{fig:simulation.zkpk.dhB}
  \end{figure}
%%%%%%%%%%%%%%%%%%%%%%%%%%%%%%%%%%%%%%%%%%%%%%%%%%%%%%%%%%%%%%%%%%%%%%%%%

    The proof proceeds via a hybrid argument. Let $\D{0}$ be the
    distribution of the simulation as described in
    Figure~\ref{fig:simulation.zkpk.dhB}. Let $\D{1}$ be the
    distribution obtained from the simulation but with the following
    change: We inspect $\zkpkF$ to get a valid witness $w$ for
    instance $x$, and let $e \la E(x,w,r')$ be the corresponding
    encoding. Note that this is possible as a thought experiment for
    any adjacent distribution in a hybrid argument. From $e$ we then
    use bits $e_s$ for the same $S(s)$ as previously, instead of bits
    $t_s$ sampled by $\hat{E}(x,s)$. All other steps are simulated as
    before. By the simulatability of the encoding scheme
    (Definition~\ref{def:simulate}), it holds that the bits $t_s$ in
    $\D{0}$ and the bits $e_s$ in $\D{1}$ have the same
    distribution. Thus, we obtain $\D{0} = \D{1}$.

    We further change the simulation in that we compute the bits in
    all positions $i \in \bar{S}$ by $e_i$ of the encoding $e$ defined
    in the previous step. Again, all other steps of the simulation
    remain unchanged. Let $\D{2}$ denote the new distribution. The
    only difference now is that for $i \in \bar{S}$, the commitments
    $E_i$ are to the bits $e_i$ of a valid $e$ and not to uniformly
    random bits $t'_i$. This, however, is quantum-computationally
    indistinguishable to $\dB$ for $pk \in_R \zo^\kappa$, as $\Commit$
    is quantum-computationally hiding towards $\dB$. Note that $pk$ is
    guaranteed to be random by an honest call to $\cxF{\kappa}$ and
    recall that we do not have to open the commitments in these
    positions. Hence, we get that $\D{1} \approxq \D{2}$.\\

    Note that after the two changes, leading to distributions $\D{1}$
    and $\D{2}$, the commitment step and its opening now proceed as in
    the actual protocol, namely, we commit to the bits of $e \la
    E(x,e,r')$ and open the subset corresponding to $S(s)$. The
    remaining difference to the real protocol is the enforcement of
    challenge $s$, whereas $s$ is chosen randomly in the
    protocol. Now, let $\D{3}$ be the distribution of the modified
    simulation, in which we implement this additional change of
    invoking $\cxF{\sigma}$ honestly and then open honestly to the
    resulting $s$. Note that both processes, i.e., first choosing a
    random $s$ and then enforcing it from $\cxF{\sigma}$, or invoking
    $\cxF{\sigma}$ honestly and receiving a random $s$, result in a
    uniformly random distribution on the output of
    $\cxF{\sigma}$. Thus, we obtain $\D{2} = \D{3}$.

    By transitivity, we conclude that $\D{0} \approxq \D{3}$, and
    therewith, that the simulation is quantum-computationally
    indistinguishable from the actual protocol.
  \end{proof}\\

  We conclude this section by the corollary that follows
  straightforward from the above construction and proof and states
  that mixed commitments, as defined in
  Section~\ref{sec:mixed.commit.trapdoor.opening}, imply classical
  zero-knowledge proofs of knowledge against any poly-sized quantum
  adversary.
    \begin{corollary}
      If there exist mixed commitment schemes, then we can construct a
      classical zero-knowledge proof of knowledge against any quantum
      adversary $\dP \in \dPlayerPoly$ without any set-up assumptions.
    \end{corollary}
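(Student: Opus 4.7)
The plan is to assemble the corollary by composing the two main constructions already established: the protocol $\ZKPK$ of Theorem~\ref{thm:zkpk}, which realises $\zkpkF$ in the hybrid model with oracle access to the ideal coin-flipping functionalities $\cxF{\kappa}$ and $\cxF{\sigma}$, and the fully simulatable coin-flipping protocol $\pi^{(\force,\force)}$ of Theorem~\ref{thm:force.force}, which realises $\cxF{\lambda}$ from scratch assuming only mixed commitments. Neither ingredient imposes any set-up, so the resulting protocol will inherit this property.

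First I would fix the building blocks. From the assumed mixed commitment scheme $\commit$, I would construct the extended scheme $\Commit$ with trapdoor opening via secret sharing, exactly as in Section~\ref{sec:mixed.commit.trapdoor.opening}; this gives precisely the commitment primitive required inside $\ZKPK$. For the witness encoding, I would invoke the generic construction sketched at the end of Section~\ref{sec:zkpk.witness.encoding}: starting from the commit\&open proof for circuit satisfiability and repeating it $\sigma$ times in parallel, one obtains a simulatable witness encoding $(E,D,S,J,\hat{E})$ for any $\NP$-relation $\Rel$ satisfying completeness, extractability, and simulatability, with knowledge error $2^{-\sigma}$.

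Next I would instantiate $\ZKPK$ with these building blocks. Theorem~\ref{thm:zkpk} then yields a protocol $\ZKPK^{\cxF{\kappa},\cxF{\sigma}}$ that securely implements $\zkpkF$ against any poly-time quantum adversary in the hybrid model where the two calls to $\cxF{\kappa}$ and $\cxF{\sigma}$ are made to the ideal coin-flipping functionality. To remove these calls, I would replace each of them by an invocation of the protocol $\pi^{(\force,\force)}$ with the corresponding output length. By Theorem~\ref{thm:force.force} (and Corollary~\ref{cor:double.simulatable}), $\pi^{(\force,\force)}$ is a quantum-computationally secure implementation of $\cxF{\lambda}$ against both $\dAlicePoly$ and $\dBobPoly$, and critically, it itself only assumes the existence of mixed commitments, no CRS or other set-up. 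Security of the composed protocol then follows from the sequential composition guarantee of Corollary~\ref{cor:double.simulatable.sequential.composition} (itself an instance of the Composition Theorem of Section~\ref{sec:security.definition.computational}): the outputs of the hybrid protocol $\ZKPK^{\cxF{\kappa},\cxF{\sigma}}$ and of the fully classical protocol $\ZKPK^{\pi^{(\force,\force)},\pi^{(\force,\force)}}$ are quantum-computationally indistinguishable, so the latter still securely implements $\zkpkF$.

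The main obstacle I anticipate is essentially bookkeeping rather than a new proof: ensuring that the simulators used in the proof of Theorem~\ref{thm:zkpk}, which freely \emph{enforce} particular outcomes from $\cxF{\kappa}$ and $\cxF{\sigma}$, really do translate through the composition step. This is exactly what the $\force$-flavor of $\pi^{(\force,\force)}$ guarantees on both sides, so the simulators $\dhA$ and $\dhB$ of Figures~\ref{fig:simulation.zkpk.dhA} and~\ref{fig:simulation.zkpk.dhB} can be lifted verbatim: wherever they enforce an outcome in the hybrid model, they now invoke the enforceability simulator of $\pi^{(\force,\force)}$ to enforce the same value. Putting everything together, the composed protocol is classical, makes no set-up assumption, and is a zero-knowledge proof of knowledge for $\Rel$ secure against any $\dP \in \dPlayerPoly$, which is the claim.
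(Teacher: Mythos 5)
Your proof is correct and follows the same decomposition the paper has in mind: instantiate $\ZKPK$ with the extended commitment $\Commit$ (built from the assumed mixed commitment), the circuit-satisfiability-based simulatable witness encoding, and two calls to $\pi^{(\force,\force)}$ in place of $\cxF{\kappa}$ and $\cxF{\sigma}$, then invoke sequential composition. The paper itself treats this corollary as an immediate consequence of Theorem~\ref{thm:zkpk} together with the fact that $\pi^{(\force,\force)}$ (Theorem~\ref{thm:force.force}, Corollary~\ref{cor:double.simulatable.sequential.composition}) is built entirely from mixed commitments without set-up, which is precisely the composition you spell out, and you correctly identify that the enforceability of $\pi^{(\force,\force)}$ on both sides is exactly what lets the hybrid-model simulators of Theorem~\ref{thm:zkpk} carry over unchanged.
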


%%% COMMITMENT KEYS %%%%%%%%%%%%%%%%%%%%%%%%%%%%%%%%%%%%%%%%%%%%%%%%%%%%%%

\section{Generation of Commitment Keys}
\label{sec:key.generation.coin}

Here, we briefly describe the initial generation of a common reference
string for the proposed lattice-based instantiation of the generic
compiler, introduced in Chapter~\ref{chap:hybrid.security}, according
to the specific requirements of its underlying mixed commitment
scheme, discussed in Section~\ref{sec:mixed.commit}.

%%%%%%%%%%%%%%%%%%%%%%%%%%%%%%%%%%%%%%%%%%%%%%%%%%%%%%%%%%%%%%%%%%%%%%%%%%

\subsection{Motivation}
\label{sec:generation.motivation}
  
  The compiler is constructed in the CRS-model to achieve high
  efficiency. We now aim at circumventing the CRS-assumption to
  achieve the potential of allowing the implementation of
  \emph{complete} protocols in the quantum world without any set-up
  assumptions. More specifically, we integrate the generation of a
  common reference string from scratch based on our quantum-secure
  coin-flipping, which will then be used during compilation as
  commitment key. We want to stress, however, that implementing the
  entire process comes at the cost of a non-constant round
  construction, added to otherwise very efficient protocols under the
  CRS-assumption.

%%%%%%%%%%%%%%%%%%%%%%%%%%%%%%%%%%%%%%%%%%%%%%%%%%%%%%%%%%%%%%%%%%%%%%%%%%

\subsection{The Generation}
\label{sec:generation.generation}
\index{commitment!key generation}

  Recall that the argument for computational security in
  Section~\ref{sec:compiler} proceeds along the following lines. After
  the preparation phase $\B$ commits to all his measurement bases and
  outcomes. The keyed dual-mode commitment scheme has the special
  properties that the key can be generated by one of two possible
  key-generation algorithms $\GH$ or $\GB$. Depending on the key in
  use, the scheme provides both flavors of security. Namely, with key
  $\pkH$ generated by $\GH$, respectively $\pkB$ produced by $\GB$,
  the commitment scheme is unconditionally hiding respectively
  unconditionally binding. Furthermore, the commitment is secure
  against a quantum adversary and it holds that $\pkH \approxq \pkB$.
  In the real-world protocol, $\B$ uses the unconditionally hiding key
  $\pkH$ to maintain unconditional security against any unbounded
  $\dA$. To argue security against a computationally bounded $\dB$, an
  information-theoretic argument involving the simulator $\dhB$ is
  given (in the proof of Theorem~\ref{thm:compiler}) to prove that
  $\dB$ cannot cheat with the unconditionally binding key
  $\pkB$. Security in real life then follows from the
  quantum-computational indistinguishability of $\pkH$ and $\pkB$.

  We want to repeat that we can even weaken the assumption on the
  hiding key in that we do in fact not require an actual
  unconditionally hiding key, if the public-key encryption scheme
  guarantees that a random public key looks pseudo-random to poly-time
  quantum circuits. As discussed in Section~\ref{sec:mixed.commit}, the
  lattice-based crypto-system of Regev~\cite{Regev05}, which is
  considered to withstand quantum attacks, is a good candidate to
  construct such a dual-mode commitment scheme. The public key of a
  regular key pair can be used as the unconditionally binding key
  $\pkB'$ in our commitment scheme for the ideal-world simulation, and
  for the real protocol, an unconditionally hiding commitment key
  $\pkH'$ can simply be constructed by uniformly choosing numbers in
  the same domain.

  The idea is now the following. Let $k$ denote the length of a
  regular key $\pkB'$. We add (at least) $k$ executions of our
  protocol $\COIN$ as a first step to the compiler-construction to
  generate a uniformly random sequence $coin_1 \ldots coin_k$. These
  $k$ random bits produce a $\pkH'$ as sampled by
  $\mathcal{G}_\mathtt{H}$, except with negligible probability. Hence,
  in the real world, Bob can use key $coin_1 \ldots coin_k = \pkH'$
  for committing with $c_i =
  \commitk{\hat{\theta}_i,\hat{x}_i)}{r_i}{\pkH'}$ on all positions
  $i$. Since an ideal-world adversary $\dhB$ is free to choose any
  key, it can generate $(\pkB', \sk')$, i.e., a regular public key
  together with a secret key according to Regev's crypto-system. For
  the security proof, write $\pkB' = coin_1 \ldots coin_k$. In the
  simulation, $\dhB_{\tt compile}$ (as described in the proof of
  Theorem~\ref{thm:compiler}) first invokes $\dhB_{\tt coin}$
  (Figure~\ref{fig:simulationB}) for each $coin_j$ to simulate one
  coin-flip with $coin_j$ as result. Whenever $\dhB_{\tt coin}$ asks
  $\cF$ to output a bit, it instead receives this $coin_i$. Then
  $\dhB_{\tt compile}$ has the possibility to decrypt dishonest
  $\dB$'s commitments $c_i =
  \commitk{(\hat{\theta}_i,\hat{x}_i)}{r_i}{\pkB'}$ during simulation,
  which binds $\dB$ unconditionally to his committed measurement bases
  and outcomes. Finally, since we proved in the analysis of protocol
  $\COIN$ that $\pkH'$ is a uniformly random string, Regev's proof of
  semantic security applies, namely that a random public key, chosen
  independently from a secret key, is indistinguishable to a regular
  key and that such encodings carry essentially no information about
  the message. Thus, we obtain $\pkH' \approxq \pkB'$ and
  quantum-computational security in real life follows.

%%%%%%%%%%%%%%%%%%%%%%%%%%%%%%%%%%%%%%%%%%%%%%%%%%%%%%%%%%%%%%%%%%%%%%%%% 
%%% BIBLIOGRAPHY %%%%%%%%%%%%%%%%%%%%%%%%%%%%%%%%%%%%%%%%%%%%%%%%%%%%%%%%
%%%%%%%%%%%%%%%%%%%%%%%%%%%%%%%%%%%%%%%%%%%%%%%%%%%%%%%%%%%%%%%%%%%%%%%%% 

\clearemptydoublepage
\addcontentsline{toc}{chapter}{Bibliography}
\bibliographystyle{alpha} 
\bibliography{crypto,qip,procs,personal}

%%%%%%%%%%%%%%%%%%%%%%%%%%%%%%%%%%%%%%%%%%%%%%%%%%%%%%%%%%%%%%%%%%%%%%%%% 
%%% INDEX %%%%%%%%%%%%%%%%%%%%%%%%%%%%%%%%%%%%%%%%%%%%%%%%%%%%%%%%%%%%%%%
%%%%%%%%%%%%%%%%%%%%%%%%%%%%%%%%%%%%%%%%%%%%%%%%%%%%%%%%%%%%%%%%%%%%%%%%% 
\index{BQSM|see{bounded-quantum-\\storage model}}
\index{CRS-model|see{common-reference-\\string-model}}
\index{OT|see{oblivious transfer}}
\index{ID|see{identification}}
\index{ZK|see{zero-knowledge}}
\index{HVZK|see{honest-verifier}}
\index{commitment!dual-mode|see{mixed}}

\clearemptydoublepage
\addcontentsline{toc}{chapter}{Index}
\printindex

%%%%%%%%%%%%%%%%%%%%%%%%%%%%%%%%%%%%%%%%%%%%%%%%%%%%%%%%%%%%%%%%%%%%%%%%% 
\end{document}